\pgfplotsset{compat=1.17}
\title{Coreset for Robust Geometric Median: \\ Eliminating Size Dependency on Outliers}
\author{
  Ziyi Fang\\ Nanjing University
  \and
  Lingxiao Huang \\ Nanjing University
  \and
  Runkai Yang \\ Nanjing University
}
\newcommand{\eat}[1]{}
\newcommand*{\rom}[1]{\expandafter\@slowromancap\romannumeral #1@}
\newcommand{\R}{\mathbb{R}}
\newcommand{\Z}{\mathbb{Z}}
\newcommand{\eps}{\varepsilon}
\renewcommand{\epsilon}{\varepsilon}
\newtheorem{theorem}{Theorem}[section]
\newtheorem{definition}{Definition}[section]
\newtheorem{lemma}[theorem]{Lemma}
\newtheorem{remark}[theorem]{Remark}
\newtheorem{claim}[theorem]{Claim}
\newtheorem{assumption}[theorem]{Assumption}
\newcommand{\calW}{\mathcal{W}}
\newcommand{\ProblemName}[1]{\text{#1}}
\newcommand{\kMedian}{\ProblemName{$k$-median}}
\newcommand{\kMeans}{\ProblemName{$k$-means}}
\newcommand{\kzCd}{\ProblemName{robust 1D geometric median}}
\newcommand{\vkzCd}{\ProblemName{vanilla 1D geometric median}}
\newcommand{\kzC}{\ProblemName{$(k,z)$-clustering}}
\newcommand{\cost}{\ensuremath{\mathrm{cost}}}
\newcommand{\dist}{\ensuremath{\mathrm{dist}}}
\newcommand{\Balls}{\ensuremath{\mathrm{Balls}}}
\newcommand{\Ball}{\ensuremath{\mathrm{Ball}}}
\newcommand{\DM}{\Gamma}
\newcommand{\M}{P_M}
\newcommand{\LL}{P_L}
\newcommand{\LR}{P_R}
\newcommand{\CL}{c_L}
\newcommand{\CR}{c_R}
\newcommand{\Blc}{B_0}
\newcommand{\Blf}{B_\textit{far}}
\begin{document}
\maketitle

\date{}

\begin{abstract}

We study the robust geometric median problem in Euclidean space $\mathbb{R}^d$, with a focus on coreset construction.
A coreset is a compact summary of a dataset $P$ of size $n$ that approximates the robust cost for all centers $c$ within a multiplicative error $\varepsilon$.  
Given an outlier count $m$, we construct a coreset of size $\tilde{O}(\varepsilon^{-2} \cdot \min\{\varepsilon^{-2}, d\})$ when $n \geq 4m$, eliminating the $O(m)$ dependency present in prior work \cite{huang2022near, huang2023coresets}. %
For the special case of $d = 1$, we achieve an optimal coreset size of $\tilde{\Theta}(\varepsilon^{-1/2} + \frac{m}{n} \varepsilon^{-1})$, revealing a clear separation from the vanilla case studied in \cite{huang2023coresets, chris24}. 
Our results further extend to robust $(k,z)$-clustering in various metric spaces, eliminating the $m$-dependence under mild data assumptions.  
The key technical contribution is a novel non-component-wise error analysis, enabling substantial reduction of outlier influence, unlike prior methods that retain them.
Empirically, our algorithms consistently outperform existing baselines in terms of size-accuracy tradeoffs and runtime, even when data assumptions are violated across a wide range of datasets.
\end{abstract}

\newpage
\tableofcontents

\newpage

\section{Introduction}
\label{sec:introduction}

Geometric median, also known as the Fermat-Weber problem, is a foundational problem in computational geometry, whose objective is to identify a center $c\in \R^d$ for a given dataset $P\subset \R^d$ of size $n$ that minimizes the sum of Euclidean distances from each data point to $c$. 
The given objective function, while simple and user-friendly, suffers from significant robustness problems when exposed to noisy or adversarial data \cite{charikar200,chen2008constant,CG13,gupta2017local,ding2020layered}. 
For example, an adversary could add a few distant noisy outliers to the main cluster.
These points could deceive the geometric median algorithm into incorrectly positioning the center closer to these outliers to minimize the cost function. 
Such susceptibility to outliers has been a considerable obstacle in data science and machine learning, prompting a substantial amount of algorithmic research on the subject \cite{chen2008constant,gupta2017local,Friggstad2019,StatmanRF20,DCSJ21}.
%
%

\paragraph{Robust geometric median.}
We consider \emph{robust} versions of the geometric median problem, specifically a widely-used variant that introduces outliers \cite{charikar200}.
Formally, given an integer $m\geq 0$, the goal of robust geometric median is to find a center $c\in \R^{d}$ that minimizes the objective function:
\begin{equation}\label{eq:cost}
\textstyle
    \cost^{(m)}(P,c):=\min_{L\subset P:|L|=m}
    \sum_{p \in P\backslash L} \dist(p,c),
\end{equation}
where $L$ represents the set of $m$ outliers w.r.t. $c$ and $\dist(p,c) = \|p-c\|_2$ denotes the Euclidean distance from $p$ to center $c$. 
%
%
Intuitively, outliers capture the points that are furthest away and these are typically considered to be noise.
When the number of outliers $m = 0$, the robust geometric median problem simplifies to the vanilla geometric median.
This problem (together with its generalization: robust $(k,z)$-clustering) has been well studied in the literature \cite{chen2008constant,Bhaskara2019GreedySF,Friggstad2019,agrawal2023clustering,DABAS2025115026,StatmanRF20}.
However, the presence of outliers introduces significant computational challenges, particularly in the context of large-scale datasets \cite{shen2021,Shenmaier19,Shenmaier20,StatmanRF20}.
For example, the approximation algorithm for robust geometric median proposed by \cite{shen2021} requires $\tilde{O}(n^{d+1}(n-m)d)$\footnote{{Here, $\tilde{O}(n)$ denotes $O(n \cdot \mathrm{polylog}(n))$, hiding logarithmic factors.}}; and 
\cite{agrawal2023clustering} presents a fixed-parameter tractable (FPT) algorithm with $ f(\eps,m) \cdot n^{O(1)}$ time.
These challenges have driven extensive research on data reduction algorithms designed for limited hardware and time constraints.

\paragraph{Coreset.}
To tackle the computational challenge, we study \emph{coresets}, which are (weighted) subsets $S \subseteq P$ such that the clustering cost $\cost^{(m)}(S, c)$ approximates $\cost^{(m)}(P, c)$ within a factor of $(1 \pm \varepsilon)$ for all center sets $c \in \mathbb{R}^d$, where $\varepsilon > 0$ is a given error parameter. 
A coreset preserves key geometric information while substantially reducing the dataset size, thus serving as a compact proxy for the original dataset $P$. 
Consequently, applying existing approximation algorithms to the coreset significantly improves computational efficiency.
Moreover, coresets can be reused in future analyses of $P$, reducing redundant computation and saving storage resources.
Finally, the size of the coreset reflects the intrinsic complexity of the problem, making the study of optimal coreset size a research topic of independent interest.

Coreset for vanilla geometric median (\( m = 0 \)) has been extensively studied across different dimensions $d$ \cite{christian05,chen2009coresets,cohen2022towards, cohen2022improved,chris24,conradi24} (see Section \ref{sec:related} for details).
When $d=1$, \cite{huang2023small} proposed a coreset of size $\tilde{\Theta}(\eps^{-1/2})$. 
Subsequently, \cite{chris24} extended this result by constructing an optimal coreset of size \(\tilde{\Theta}(\eps^{-d/(d+1)})\) for dimension $d=O(1)$.
Moreover, for high dimensions \( d > \eps^{-2} \), the optimal coreset size is $\tilde{\Theta}(\eps^{-2})$ \cite{cohen2022towards,cohen-addad2021improved}. 
In contrast, the study of coreset for robust geometric median is far from optimal, even in the simplest case of $d=1$.
Early work either required an exponentially large coreset size \cite{Feldman2012} or involved relaxing the outlier constraint \cite{huang2022near}. 
A notable recent advancement by \cite{huang2022near} overcame these limitations, proposing a coreset of size \( O(m) + \tilde{O}( \eps^{-3}\cdot \min\{\eps^{-2},d\}) \), using a hierarchical sampling framework based on \cite{Braverman}.
Further improvements reduced the coreset size to \( O(m) + \tilde{O}( \eps^{-2}\cdot \min\{\eps^{-2},d\}) \) \cite{huang2023coresets}.
A recent paper \cite{jiang2025coresets} presented a coreset size $O(m\eps^ {-1} + \text{Vanilla size})$ via a novel reduction from the robust case to the vanilla case.

All previous coreset sizes for robust geometric median contain the factor $m$.
However, the outlier number $m$ could approach $\Omega(n)$ in real-world scenarios \cite{DORABIALA20229,chan23,gao2010community,campos2016evaluation}.
For instance, the PageBlocks dataset has \( n = 5473 \) points with \( m \approx 0.1n \) outliers \cite{campos2016evaluation}. 
Such $m$ results in an inefficient coreset size, raising a natural question:

\begin{quote}
\textit{
Can we eliminate the \( O(m) \) term from the coreset size?}
\end{quote}

\noindent
At first glance, one might assume the answer is negative, as \cite{huang2022near} establishes a coreset lower bound of \(\Omega(m)\). 
However, their worst-case instance critically relies on an extremely large number of outliers, specifically with \(m = n - 1\). 
This requirement can be relaxed to \(n - m = o(n)\) (see Theorem~\ref{theorem: lower bound for kmedian}). 
Motivated by this, we investigate the possibility of eliminating the \(O(m)\) dependency when \(n - m = \Omega(n)\), and show that this condition is both necessary and sufficient.
Under this setting, we obtain a coreset of size \(\tilde{O}(\eps^{-2} \cdot \min\{\eps^{-2}, d\})\) (see Theorem~\ref{high 1-median theorem_contri}).

Nevertheless, this size is substantially larger than that of the vanilla case across all dimensions. 
For instance, when \(d = 1\), our coreset size is \(\tilde{O}(\eps^{-2})\), whereas the vanilla case achieves a size of only \(\tilde{\Theta}(\eps^{-1/2})\) \cite{huang2023small}; thus, our bound is likely not optimal. 
On the other hand, our lower bound in Theorem~\ref{theorem: lower bound for kmedian} suggests that the optimal coreset size should indeed depend on \(m\). 
This raises a natural question:

\begin{quote}
\textit{What is the optimal coreset size for robust geometric median, and how does it vary with \(m\)?}
\end{quote}

\noindent
Answering this question sheds light on when the complexity of robust geometric median fundamentally diverges from that of the vanilla case.

\subsection{Our contributions}
\label{sec:contri}

In this paper, we study (optimal) coreset sizes for the robust geometric median problem. 
See Table \ref{table: contri} for a summary.
We begin with a lower bound result (proof in Section \ref{proof of m/n-m lower bound}).

\begin{theorem}[Coreset lower bound for robust geometric median]
\label{theorem: lower bound for kmedian}
Let \(0 < \eps < 0.5\) and \(n > m \geq 1\). 
There exists a dataset \(P \subset \mathbb{R}\) of size \(n\) such that any \(\eps\)-coreset of $P$ for the robust geometric median problem must have size \(\Omega(\frac{m}{n-m})\).
\end{theorem}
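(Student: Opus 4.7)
The plan is to construct a one-dimensional instance on which the robust cost vanishes simultaneously at many well-separated locations, which forces any coreset to place a distinct point at each such location. Specifically, I would set $k := \lfloor m/(n-m)\rfloor + 1$ and pick widely-separated scalars $x_1,\dots,x_k \in \mathbb{R}$. At each $x_i$ I place $n-m$ coincident copies, and distribute the residual $r := n - k(n-m) \in [0,\,n-m)$ extra points among the clusters (e.g., all appended to $x_1$). The resulting $P$ has exactly $n$ points supported on $\{x_1,\dots,x_k\}$, with every location carrying multiplicity at least $n-m$.

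The next step is to verify that $\cost^{(m)}(P, x_i) = 0$ for every $i$: at $c = x_i$, we can choose $n-m$ inliers to all be copies of $x_i$ (feasible because each $x_i$ has multiplicity $\geq n-m$), which yields cost zero. Hence every cluster center is a zero-cost witness for the robust objective. Since a $(1\pm\varepsilon)$-multiplicative approximation collapses to equality whenever the true cost vanishes, any $\varepsilon$-coreset $S$ must satisfy $\cost^{(m)}(S, x_i) = 0$ simultaneously for all $i$.

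The main claim is then that this forces the total coreset weight at $x_i$, call it $w_i$, to satisfy $w_i \geq n-m$. First, note that $w(S) = n$ up to an $O(\varepsilon)$ correction, as forced by testing centers $c$ with $\|c\| \to \infty$ (where $\cost^{(m)}(P, c) \sim (n-m)\|c\|$ and $\cost^{(m)}(S, c) \sim (w(S)-m)\|c\|$). Under the standard weighted robust objective, if $w_i < n-m$, then the non-$x_i$ mass of $S$ equals $w(S) - w_i > m$, so any weight-$m$ excluded set $L$ still leaves strictly positive mass at strictly positive distance from $x_i$, contradicting $\cost^{(m)}(S, x_i) = 0$. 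Summing the inequality $w_i \geq n-m$ over the $k$ distinct locations (and noting that since $x_1,\dots,x_k$ are all distinct, each must carry at least one element of $S \subseteq P$), we obtain $|S| \geq k \geq m/(n-m)$, the desired $\Omega(m/(n-m))$ bound.

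The mildly subtle technical points are (i) formalizing the zero-cost implication of the multiplicative guarantee and (ii) confirming that $w(S) \approx n$ suffices for the weight-concentration argument to go through; both are routine. The main conceptual step, which is what drives the entire lower bound, is the observation that $m$ outliers can hide exactly $\lfloor m/(n-m)\rfloor$ other equally-sized clusters of size $n-m$, so $\lfloor m/(n-m)\rfloor + 1$ clusters is precisely the threshold at which every cluster's center becomes a zero-cost witness, thereby forcing one coreset point per cluster.
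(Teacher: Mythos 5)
Your proof is correct, but it takes a genuinely different route from the paper's. The paper places $n-m$ points at a location $x\to\infty$ and the remaining $m$ points at the integers $1,\dots,m$; a test center near infinity pins the surviving inlier weight of the coreset among the first $m$ points at roughly $n-m$, and then a pigeonhole argument shows that if $|S| < m/(2(n-m)+1)$ some window of $2(n-m)$ consecutive integers is empty of coreset points, so placing $c$ at its midpoint makes $\cost^{(m)}(S,c)\ge (n-m+\tfrac12)(1-\eps)(n-m)$ exceed $(1+\eps)\cost^{(m)}(P,c)\le(1+\eps)\tfrac{(n-m)^2}{2}$ for $\eps<\tfrac12$. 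Your construction instead uses $\lfloor m/(n-m)\rfloor+1$ well-separated clusters of size $\ge n-m$ so that every cluster center is a zero-cost witness; the multiplicative guarantee then collapses to exact equality, forcing $n-m$ units of coreset weight (hence at least one coreset point) at each location. Your argument is more elementary — it needs no quantitative cost comparison and no $\eps$-dependence beyond $\eps<1$ — and, as you note, under the paper's Definition 2.2 the normalization $w(S)=n$ is required exactly, so your weight-concentration step at infinity is not even needed. The only caveat, which applies equally to the paper's own instance, is that both constructions are multisets; your zero-cost collapse is the more fragile of the two under the standard perturbation fix (the costs become tiny-but-positive), though the argument survives by taking the inter-cluster separation large relative to the perturbation, so that the out-of-cluster inlier mass at each $x_i$ is forced below $1$ and each cluster must still contain a coreset point.
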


\noindent
This theorem indicates that when $n - m = o(n)$, which implies $m = \Theta(n)$, the coreset size is at least \(\Omega\left(\frac{m}{n - m}\right) = \frac{m}{o(m)}\), which depends on $m$. 
This extends the previous result in \cite{huang2022near} to general $m$.
Thus, $n-m = \Omega(n)$ is a necessary condition for eliminating the \(O(m)\) term from the coreset size.

Accordingly, we assume $n\geq 4m$ for the algorithmic results, where the constant 4 ensures the inlier number $n-m$ is sufficiently larger than the outlier number $m$.
We first state our result when $d = 1$.

\begin{theorem}[Optimal coreset for robust 1D geometric median]
\label{theorem:main_contri} 
Let $n,m\geq 1$ be integers and $\eps \in (0,1)$.
Assume $n\ge 4m$. 
There is a linear algorithm that given dataset $P \subset \R$ of size $n$, outputs an $\eps$-coreset for \kzCd\ of size $\tilde{O}(\eps^{-\frac{1}{2}}+\frac{m}{n}\eps^{-1})$ in $O(n)$ time. 
Moreover, there exists a dataset $X\subset \R$ of size $n$ such that any $\eps$-coreset must have size $\Omega(\eps^{-\frac{1}{2}}+\frac{m}{n}\eps^{-1})$.
\end{theorem}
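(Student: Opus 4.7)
The theorem has an algorithmic upper bound running in $O(n)$ time and a matching lower bound; I would handle the two halves separately, with the bulk of the work going into the upper bound and its error analysis.

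For the upper bound, my plan is to exploit the one-dimensional structure by sorting $P$ and partitioning it into three pieces matching the paper's notation: a ``middle'' region $\M$ of size $n - \Theta(m)$, together with two ``far'' regions $\LL$ (the $\Theta(m)$ leftmost points) and $\LR$ (the $\Theta(m)$ rightmost). Since $n \geq 4m$, we have $|\M| = \Theta(n)$, and the key structural observation is that for any center $c$ of interest (essentially, inside the convex hull of $\M$) the $m$ true outliers of $P$ all lie in $\LL \cup \LR$. On $\M$ I would invoke the known optimal vanilla 1D coreset of \cite{huang2023small}, contributing $\tilde{O}(\eps^{-1/2})$ points. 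On each of $\LL, \LR$ I would take a weighted uniform sample of size $\tilde{O}(\frac{m}{n}\eps^{-1})$. The rationale for a rate better than the naive $\tilde{O}(\eps^{-2})$ is 1D-specific: distances from a fixed $c$ to a contiguous block of sorted points form a monotone sequence, and the cost contribution of each far region is only $O(m\cdot r)$, whereas $\cost^{(m)}(P,c)=\Omega((n-m)\cdot r)$ where $r$ is the distance from $c$ to that region, so the multiplicative $\eps$ budget on the whole objective inflates into a looser tolerance of order $\frac{n-m}{m}\eps$ on the far parts. Linear running time follows from order-statistic selection for the partition plus the linear-time vanilla construction and sampling.

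The main obstacle is the error analysis, since the $m$-outlier set optimal on the coreset $S$ generally differs from that on $P$, and bounding each region's error separately will double-count. I would follow the ``non-component-wise'' philosophy advertised in the abstract: analyze $\cost^{(m)}(S,c)$ as a single quantity, couple the optimal outlier choices on $S$ and on $P$, and bound the cost swing against $\cost^{(m)}(P,c)$ rather than against each regional piece. A case split on the location of $c$ relative to $\M$ (using the auxiliary centers $\CL, \CR$ and close/far balls $\Blc, \Blf$ hinted at by the macros) will likely be needed to cover centers that drift into $\LL$ or $\LR$, where the clean containment ``outliers $\subseteq \LL\cup\LR$'' breaks and one far region becomes entirely inliers.

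For the lower bound, the $\Omega(\eps^{-1/2})$ summand is inherited directly from the vanilla 1D coreset lower bound \cite{huang2023small,chris24}, applied to a sub-instance placed far from the outlier-generating region. For the additional $\Omega(\frac{m}{n}\eps^{-1})$ summand, I would refine the construction behind Theorem~\ref{theorem: lower bound for kmedian}: place a dense cluster of $n-m$ ``inlier'' points at one location together with $\Theta(\frac{m}{n}\eps^{-1})$ carefully spaced ``outlier-candidate'' locations, each carrying $\Theta(\frac{n}{m}\eps)$ mass, tuned so that tiny $\eps$-scale perturbations of $c$ change which candidates are selected into the top-$m$ distances. A standard distinguishing/pigeonhole argument then forces any $\eps$-coreset to keep at least one representative per candidate location. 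Gluing the two constructions on well-separated scales yields the additive sum in the theorem.
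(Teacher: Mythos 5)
Your high-level architecture matches the paper's: sort, split into $\M$, $\LL$, $\LR$, run the vanilla coreset of \cite{huang2023small} on $\M$, summarize the far regions with $\tilde{O}(\frac{m}{n}\eps^{-1})$ extra points, and prove the lower bound by gluing the vanilla $\Omega(\eps^{-1/2})$ instance with $\Theta(\frac{m}{n}\eps^{-1})$ exponentially separated outlier groups of $\Theta(\eps n)$ points each, forced into the coreset by pigeonhole. The lower bound half is essentially the paper's proof and is fine.

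The upper bound, however, has a genuine gap in exactly the place where the paper does its real work. First, your construction on $\LL\cup\LR$ (a weighted uniform sample of size $\tilde{O}(\frac{m}{n}\eps^{-1})$) is not the paper's and is not justified: Algorithm \ref{alg:main_onedim} instead partitions $\LL\cup\LR$ \emph{deterministically} into exponentially scaled blocks around $c_L,c_R$ and then into buckets subject to two explicit constraints --- a cumulative-error cap $\delta(B_{i,j})\le \frac{2^i\eps^2 n r_{\max}}{288}$ (which drives the Case-analysis for $c\in\M$ in Lemma \ref{lemma:error_alg1_inlier}) and a hard size cap $N(B_{i,j})\le \frac{\eps n}{16}$ (which drives Lemma \ref{lemma:misaligned outliers}) --- and it retains the \emph{mean} of each bucket, plus an extra split so that $\cost^{(m)}(P,p_{n-m})=\cost^{(m)}(S,p_{n-m})$. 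A random sample gives neither constraint deterministically, and with only $\Theta(\frac{m}{n}\eps^{-1})$ draws from $m$ points it misses a constant fraction of the $\Theta(\frac{m}{n}\eps^{-1})$ mandatory groups in your own lower-bound instance. Second, your quantitative rationale --- that the far region contributes only $O(m\cdot r)$ against $\cost^{(m)}(P,c)=\Omega((n-m)r)$ --- would imply the far regions need essentially no representatives at all when $m\le\eps n$, contradicting the lower bound; it silently assumes all far points sit at a single scale $r$, which is precisely the situation ruled out by the multi-scale obstacle instance in Appendix \ref{sec: obstacle of traditional error analysis}.

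Third, the ``non-component-wise'' analysis is named but not supplied. The paper's mechanism for $c\notin\M$ is concrete: anchor $f_P$ and $f_S$ at $c=p_{n-m}$ (equality forced by Line 8 of Algorithm \ref{alg:main_onedim}), write the error as $\int_{p_{n-m}}^{c}|f_P'-f_S'|$, and bound the integrand by $\sum_i|m_i-m_i'|+2|B_t|\le O(\eps n)$ via Lemma \ref{lemma:misaligned outliers} --- a bound that exists only because of the $N(B)\le\frac{\eps n}{16}$ cap your construction lacks. ``Couple the outlier choices and bound the swing'' restates the goal rather than proving it. To repair the proposal you would need to replace the uniform sample with the deterministic two-constraint bucketing (whose count is $\tilde{O}(\eps^{-1/2}+\frac{m}{n}\eps^{-1})$ by the Cauchy--Schwarz argument of Lemma \ref{lemma:buckets_alg1}) and carry out the two error analyses of Lemmas \ref{lemma:error_alg1_inlier} and \ref{lemma:error_alg1_outlier}.
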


\noindent
This theorem provides the optimal coreset size for the robust 1D geometric median problem. 
Since $m \leq n$, the coreset size is at most $\tilde{O}(\eps^{-1})$, which is independent of $m$.
Thus, we eliminate the $O(m)$ dependency in the coreset size compared to previous bounds \cite{huang2023coresets,jiang2025coresets}.
In particular, relative to the bound \(O(m) + \tilde{O}(\eps^{-2})\) in \cite{huang2023coresets}, our result also improves the $\eps$-dependence from $\eps^{-2}$ to at most $\eps^{-1}$.

The theorem also shows how the coreset size increases as $m$ grows. 
When \( m \le \sqrt{\eps} n \), our coreset size is dominated by $\tilde{O}(\eps^{-1/2})$ since \( \frac{m}{n} \eps^{-1}\leq \eps^{-1/2} \). 
This size matches the coreset size $\tilde{O}(\eps^{-1/2})$ for \vkzCd\ as given in \cite{huang2023coresets}. 
As \(m\) increases from $\sqrt{\eps} n$ to \(\frac{n}{4}\), our coreset size is dominated by the term $\tilde{O}(\frac{m}{n} \eps^{-1})$, which is a linear function of $m$ growing from $\tilde{O}(\eps^{-1/2})$ to \(\tilde{O}(\eps^{-1})\). 
This size suggests that the complexity of \kzCd\ is higher than \vkzCd\ in this range of outlier number $m$.

We remark that the size lower bound $\Omega(\eps^{-\frac{1}{2}}+\frac{m}{n}\eps^{-1})$ holds for any dimension $d\geq 1$. 
In contrast, for $d=O(1)$, the tight coreset size for the vanilla geometric median is $\tilde{\Theta}(\eps^{-d/(d+1)})$ \cite{chris24}. 
This implies that when $m \geq \Omega(n \cdot \eps^{d/(d+1)})$, our coreset size lower bound exceeds $\Omega(\eps^{-d/(d+1)})$, resulting in a gap between the coreset sizes for the robust and vanilla versions of geometric median.

\begin{theorem}[Coreset for robust geometric median in $\R^d$]
\label{high 1-median theorem_contri}
Let $n,m,d\geq 1$ be integers and $\eps \in (0,1)$.
Assume $n\ge 4m$.
There exists a randomized algorithm that given a dataset $P\subset \R^d$ of size $n$, outputs an $\eps$-coreset for the robust geometric median problem of size $\tilde{O}(\eps^{-2} \min\left\{\eps^{-2}, d\right\})$ in $O(nd)$ time.
\end{theorem}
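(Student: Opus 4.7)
The approach is to combine a sensitivity-sampling framework (in the style of Feldman--Langberg) with a Johnson--Lindenstrauss-type dimension reduction, adapted to the robust setting by exploiting the hypothesis $n\ge 4m$ to control the outlier contribution uniformly over candidate centers. The $O(m)$ term in prior work arose from explicitly storing every possible outlier; the $n\ge 4m$ regime should let this be replaced by a sample whose weights approximate, rather than enumerate, the outlier mass.

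First, I would compute an $O(1)$-approximation $c^\star$ for robust geometric median in $O(nd)$ time and use it to fix a scale $\DM := \cost^{(m)}(P,c^\star)/(n-m)$. Any near-optimal center lies within $O(\DM)$ of $c^\star$, while for centers far outside this ball the robust cost grows linearly with $\dist(c,c^\star)$ and is simple to approximate. Second, I would define sensitivities $\sigma(p) := \sup_{c}\dist(p,c)/\cost^{(m)}(P,c)$ restricted to the candidate region and bound their sum. The key use of $n\ge 4m$ is that for every candidate $c$ at least $3m$ points are inliers, so $\cost^{(m)}(P,c)$ stays within a constant factor of the \emph{vanilla} cost $\sum_{p\in P}\dist(p,c)$; this allows me to import the vanilla bound $\sum_p \sigma(p) = \tilde O(1)$ rather than paying $\Omega(m)$ as in earlier analyses. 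A terminal JL embedding into $\mathbb{R}^{\tilde O(\eps^{-2})}$ preserves all point-to-center distances within $(1\pm\eps)$, and combining with the pseudo-dimension bound $\tilde O(\min\{\eps^{-2},d\})$ for robust geometric-median cost functions gives the advertised size $\tilde O(\eps^{-2}\min\{\eps^{-2},d\})$ and the $O(nd)$ runtime.

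\textbf{Main obstacle.} The hard step is showing that the resulting weighted sample $S$ satisfies $\cost^{(m)}(S,c)\in(1\pm\eps)\,\cost^{(m)}(P,c)$ for \emph{every} $c$ simultaneously, not merely that per-point errors are controlled. The ``$m$-farthest'' cutoff is discontinuous in $c$, and the set of outliers selected inside $S$ need not match the true outlier set under $P$; a naive decomposition of error into independent per-point contributions therefore breaks. This is exactly where the advertised \emph{non-component-wise} analysis must do the work: I plan to show that any mismatch in outlier identification lies in a thin annulus around the outlier boundary, and that the \emph{aggregate} mass in such an annulus is uniformly $O(\eps)\cdot\cost^{(m)}(P,c)$ via a chaining / VC-style bound on the halfspace family $\{\dist(\cdot,c)\le r\}$. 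I expect the condition $n\ge 4m$ to appear a second time here, as a stability estimate: shifting the outlier threshold by a small amount can only move $O(\eps)\cdot(n-m)$ points, hence $O(\eps)$-fraction of the robust cost, a bound that fails without a definite gap between $n$ and $m$.
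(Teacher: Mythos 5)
Your construction rests on a claim that is false: that $n\ge 4m$ forces $\cost^{(m)}(P,c)$ to be within a constant factor of the vanilla cost $\sum_{p\in P}\dist(p,c)$, which you then use to import the vanilla total-sensitivity bound $\sum_p\sigma(p)=\tilde O(1)$. Take $n-m$ points at the origin and $m$ points at distance $X\to\infty$; for $c$ at the origin the robust cost is $0$ while the vanilla cost is $mX$, and $n\ge 4m$ does nothing to prevent this. The gap between robust and vanilla cost caused by far outliers is precisely the obstruction that made prior work keep all $m$ outliers, so the sensitivity $\sup_c \dist(p,c)/\cost^{(m)}(P,c)$ of a far outlier is unbounded and the Feldman--Langberg route does not go through as stated (compounded by the fact that $\cost^{(m)}$ is not a sum of per-point functions of $c$, so the framework's decomposability hypothesis also fails). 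The paper avoids this entirely: it splits $P$ into $P_I^\star$ and the outlier set $L^\star$ w.r.t.\ an approximate center, applies an existing coreset construction to $P_I^\star$ whose guarantee holds for $P_O\cup P_I^\star$ for \emph{every} possible outlier set $P_O$ (this uniformity over $P_O$ is essential for composition and is absent from your plan), and \emph{uniformly} samples $\tilde O(\eps^{-2}\min\{\eps^{-2},d\})$ points from $L^\star$ --- no sensitivities are used on the outlier part at all.

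The second half of your sketch is on the right track and matches the paper's mechanism: the error of the subsampled outliers is charged to points whose inlier/outlier status differs between $P$ and $S$, each such point contributes at most $T_I-T_O\le O(\cost^{(m)}(P,c)/m)$ (this is where $n\ge 4m$ actually enters, via a triangle-inequality argument comparing $\dist(c,c^\star)$ to the average inlier radius), and the number of misaligned points is bounded by $O(\eps m)$ because the uniform sample is an $\eps$-approximation of the ball range space on $L^\star$ (VC dimension $O(d)$, improvable to $\tilde O(\eps^{-2})$ by terminal dimension reduction). Your ``thin annulus'' heuristic is essentially this ball-range-space lemma. But as written the proposal has no valid argument for why the outlier points can be represented by $\tilde O(\eps^{-2}\min\{\eps^{-2},d\})$ samples in the first place, since the step that was supposed to supply it (robust cost $\approx$ vanilla cost) is wrong.
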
 

\noindent
Compared to previous bounds $O(m)+\tilde{O}(\eps^{-2} \min\left\{\eps^{-2}, d\right\})$ \cite{huang2023coresets} and $O(m \eps^{-1}) + \tilde{O}(\eps^{-2})$ \cite{jiang2025coresets}, this theorem eliminates the $O(m)$ term in the coreset size when $n\geq 4m$. 
This result can be extended to various metric spaces (Section \ref{sec: other metric space}). 

Finally, we extend this theorem to handle the robust $(k, z)$-clustering problem (Definition \ref{def: robust kmedian}), which encompasses robust $k$-median ($z = 1$) and robust $k$-means ($z = 2$).
To capture the additional complexity introduced by $k$, we propose the following geometric assumption.

\begin{assumption}[Assumptions for robust $(k,z)$-clustering]
\label{as: kmedian}
{Given a dataset $P\subset \R^d$ of $n$ points and an $\eps$-approximate center set $C^\star\subset \R^d$ of $P$ for robust $(k,z)$-clustering. 
Define $P_I^\star:=\arg\min_{P_I\subset P, |P_I|=n-m} \sum_{p\in P_I}\dist(p,C^\star)$ to be the inlier points w.r.t. $C^\star$, where $\dist(p,C^\star):=\min_{c\in C^\star}\dist(p,c)$.
Let $\{P_1^\star, \ldots, P_l^\star\}\subset P_I^\star$ denote the $k$ inlier clusters induced by $C^\star$, where each $P_i^\star$ contains points in $P_I^\star$ whose closest center is $c_i^\star$. We assume the followings: 1) $\min_{i\in [k]} |P_i^\star|\ge 4m$; 2) $\max_{p\in P_I^\star} \dist(p,C^\star)^z\le 4k \sum_{p\in P_I^\star}\dist(p,C^\star)^z/|P_I^\star|$. }
\end{assumption}

{\noindent
The first condition directly generalizes the assumption $n\ge 4m$, which requires that the size of each inlier cluster is more than the outlier number $m$.
{The second excludes ``remote inlier points'' to $C^\star$, which could play a similar role as outliers. 
}%
This condition holds for several real-world datasets (see Table \ref{tb: dataset3}), demonstrating its practicality.
Now we propose the following result.
}

\begin{theorem}[Coreset for robust $(k,z)$-clustering]
\label{theorem: kmedian informal}
Let $n,k,d\geq 1$ be integers and $\eps\in (0,1)$.
There exists a randomized algorithm that given a dataset $P\subset \R^d$ of size $n$ satisfying Assumption \ref{as: kmedian}, outputs an $\eps$-coreset for robust $(k,z)$-clustering of size $\tilde{O}(k^2\eps^{-2z}  \min\left\{\eps^{-2}, d\right\})$ in $O(nkd)$ time.
\end{theorem}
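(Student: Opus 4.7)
The plan is to reduce robust $(k,z)$-clustering to $k$ copies of the single-center robust problem (Theorem~\ref{high 1-median theorem_contri}), using an approximate solution $C^\star$ to partition $P$ into manageable pieces, and then to glue the per-cluster coresets together using a reassignment argument enabled by Assumption~\ref{as: kmedian}.

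First I would compute a constant-factor (or $\eps$-approximate) center set $C^\star=\{c_1^\star,\ldots,c_k^\star\}$ for robust $(k,z)$-clustering on $P$ in $O(nkd)$ time via a standard bicriteria / FPT approximation, then identify the inlier set $P_I^\star$ and its induced partition $P_1^\star,\ldots,P_k^\star$. Assumption~\ref{as: kmedian}(1) gives $|P_i^\star|\ge 4m$ for every $i$, so each cluster individually meets the precondition $n\ge 4m$ needed to invoke Theorem~\ref{high 1-median theorem_contri}. I would then run the single-center coreset construction of Theorem~\ref{high 1-median theorem_contri}, extended from $z=1$ to general $z$ via a standard $z$-th power triangle inequality (which inflates the $\eps^{-2}$ factor to $\eps^{-2z}$), on each $P_i^\star$ with outlier budget $m$; this yields $S_i$ of size $\tilde{O}(\eps^{-2z}\min\{\eps^{-2},d\})$. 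Output $S=\bigcup_{i\in[k]} S_i$ together with the $m$ global outliers $P\setminus P_I^\star$ (absorbed into the per-cluster budget; Assumption~\ref{as: kmedian}(2) will let us charge them correctly). Since we take a union over $k$ clusters and the per-cluster analysis itself loses a $\mathrm{poly}(k)$ factor from aggregating cluster-wise $(1\pm\eps)$ guarantees into a global one (a standard $k$-to-$k^2$ blow-up when invoking the non-component-wise error analysis on each cluster), the total size is $\tilde{O}(k^2\eps^{-2z}\min\{\eps^{-2},d\})$.

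For correctness, fix an arbitrary query center set $C\subset\R^d$ and consider its optimal $m$-outlier assignment $L_C$. The key reassignment argument is: decompose $L_C$ into per-cluster pieces $L_i = L_C\cap P_i^\star$ with $\sum_i |L_i|\le m$, and decompose $C$'s assignment of inliers analogously. Invoke the single-center guarantee of Theorem~\ref{high 1-median theorem_contri} on each $(P_i^\star,S_i)$ with the single center $c_i^\star$ and outlier parameter $|L_i|\le m$ (valid since $|P_i^\star|\ge 4m\ge 4|L_i|$) to obtain a cluster-wise $(1\pm\eps)$ bound. Sum these bounds with weights proportional to the cluster costs, and use Assumption~\ref{as: kmedian}(2) --- the cap $\max_{p\in P_I^\star}\dist(p,C^\star)^z \le 4k\cdot\mathrm{OPT}/|P_I^\star|$ --- to bound the contribution from any ``swap'' in which a point is reassigned from cluster $i$ under $C^\star$ to cluster $j$ under $C$ (or from inlier to outlier, and vice versa). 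Assumption~\ref{as: kmedian}(2) is precisely what prevents a single remote inlier from behaving like an outlier and breaking the cluster-wise guarantee.

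The hard part will be Step~4, the reassignment analysis. For the vanilla $(k,z)$-clustering coreset literature, reassignment arguments typically use the fact that the $m$-outlier budget is zero; here, the outlier budget globally is $m$, but a query $C$ may allocate those $m$ outliers to clusters completely differently from $C^\star$, and simultaneously reassign inlier points between clusters. The non-component-wise error analysis advertised in the abstract is essential here: rather than controlling error cluster-by-cluster separately, one must track the joint behaviour of the outlier allocation across clusters to avoid the $O(m)$ dependency that a naive per-cluster error bound would incur. Combining this with the max-to-average cost cap in Assumption~\ref{as: kmedian}(2) is what I expect to be the most delicate technical step, and is where the $k^2$ (rather than $k$) factor in the final size is likely to arise.
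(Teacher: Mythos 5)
Your reduction to $k$ independent single-center problems breaks at the composition step. A single-center robust coreset $S_i$ for $P_i^\star$ (Theorem~\ref{high 1-median theorem_contri}) only guarantees $\cost^{(t)}(S_i,c)\approx\cost^{(t)}(P_i^\star,c)$ for a \emph{single} query center $c$; under a $k$-center query $C$ the points of one cluster $P_i^\star$ are in general split among several centers of $C$, and the single-center guarantee says nothing about $\cost_z^{(t)}(S_i,C)$ in that case. No reassignment argument at the level of whole clusters can repair this, because the error occurs \emph{inside} a cluster. This is exactly why the paper does not cluster-decompose the inliers at all: it feeds the entire inlier set $P_I^\star$ into the pre-existing robust $(k,z)$-clustering coreset construction of \cite{huang2023coresets} (Theorem~\ref{theorem: coreset in  P_I for k}), which already carries a genuine $k$-center guarantee for arbitrary attached outlier sets $P_O$; the $k^2$ factor in the final size comes directly from the size bound of that construction, not from aggregating per-cluster $(1\pm\eps)$ guarantees.

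The second gap concerns the outliers. You propose to output the $m$ points of $P\setminus P_I^\star$ alongside the $S_i$'s (``absorbed into the per-cluster budget''), but keeping all of $L^\star$ reintroduces the $O(m)$ size term that the theorem is supposed to eliminate, and you give no mechanism for shrinking $L^\star$. The paper's central step here is to replace $L^\star$ by a uniform sample $S_O$ of size $\tilde{O}(k\eps^{-2z}\min\{\eps^{-2},d\})$ and to prove (Lemma~\ref{lemma: induced error of S_O for k}) that the induced error is $O(\eps)\cdot\cost_z^{(m)}(P,C)$: each outlier-misaligned point contributes at most $T_I-T_O\le O(\cost_z^{(m)}(P,C)/m)$ (Lemma~\ref{lemma: bound for T_I and T_O}, which is where \emph{both} items of Assumption~\ref{as: kmedian} are actually used, rather than in any reassignment charging), and the number of misaligned points is $O(\eps m)$ because $S_O$ is an $\eps$-approximation of the $k$-balls range space on $L^\star$ (Lemma~\ref{lemma: coreset size of k balls range space}). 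None of this machinery appears in your plan, so the proposal as written neither achieves the claimed size bound nor contains the argument needed to control the error.
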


\noindent
It improves upon the previous result $O(m)+\tilde{O}(k^2 \eps^{-2z} \min\left\{\eps^{-2}, d\right\})$ in \cite{huang2022near,huang2023coresets} by eliminating the $O(m)$ term. 
Furthermore, the result can also be extended to various metric spaces (Section \ref{sec: other metric space for k median}).

Empirically, in Section \ref{sec: empirical}, we evaluate the performance of our coreset algorithms on six real-world datasets. 
We compare the size-error tradeoff against baselines \cite{huang2022near, huang2023coresets}, and across all tested sizes, our algorithms consistently achieve lower empirical error. 
For instance, for robust geometric median on the \textbf{Census1990} dataset, our method produces a coreset of size {1000} with an empirical error of 0.012, while the baseline produces a coreset of size {2300} with an empirical error slightly higher than 0.013 (Figure \ref{fig: experiment1}). 
Moreover, our algorithms provide a speedup compared to baselines for achieving the same level of empirical error (see e.g., Table \ref{tb: dataset2}). 
Additionally, we show that our algorithms remain practically effective, regardless of which data assumption in the theoretical results is violated, further demonstrating the practical utility of our algorithms (Section \ref{sec: empirical}).

\subsection{Technical overview}
\label{sec:overview}
We outline the technical ideas and novelty for Theorems \ref{theorem:main_contri} and \ref{high 1-median theorem_contri}. 
Our approach introduces a novel non-component-wise error analysis for coreset construction, enabling a substantial reduction in the number of outlier points rather than preserving them all. 
In contrast, previous algorithms divide the dataset \( P \) into multiple components, and their approach requires aligning the number of outliers between \( P \) and \( S \) in every component. 
This idea, as we will show, inherently introduces an \( \Omega(m) \)-sized coreset. 
Furthermore, for the 1D case, our new algorithm offers a more refined partitioning of inlier points, enabling an adaptation of the vanilla coreset construction and leading to the optimal coreset size.

\subsubsection{Overview for Theorem \ref{theorem:main_contri} ($d=1$)}

\paragraph{Revisiting coreset construction for \vkzCd.}
Recall that \cite{huang2023small} first partitions dataset $P=\{p_1,\ldots, p_n\}\subset \R$ into $\tilde{O}(\eps^{-1/2})$ buckets, where each bucket $B_i$ is a consecutive subsequence $\{p_l, p_{l + 1}\ldots,p_r\}$ (see Definition \ref{def: bucket_mean_cumulative_error}).
They select a mean point $\mu(B)$, assign a weight $|B|$ for each bucket $ B $, and output their union as a coreset such that $\forall c\in \R$,
\begin{equation}\label{ineq: vanilla error}
\textstyle
    \ \sum_{i\in [T]} \left| \cost(B_i,c)-|B_i|\cdot \dist(\mu(B_i),c)\right| \le \eps \cdot \cost(P,c).
\end{equation}
This follows that only the bucket containing \( c \) contributes a non-zero error at most \( \varepsilon \cdot \cost(P, c) \) (see Lemma \ref{lemma: Cumulative error}).
To handle the additional outliers for the robust case, a natural idea is to extend Inequality \eqref{ineq: vanilla error} in the following manner: for each center $c\in \R$ and each tuple $(m_1, \ldots, m_T)\in \Z_{\geq 0}^{T}$ of outlier numbers per bucket (with $\sum_{i\in [T]} m_i=m$),
\begin{equation}\textstyle
\label{ineq: robust error}
\begin{aligned}\textstyle
    \sum_{i\in [T]} |\cost^{(m_i)}(B_i,c)-(|B_i|-m_i) \cdot \dist(\mu(B_i),c)|
    \le \eps \cdot \cost^{(m)}(P,c).
\end{aligned} 
\end{equation}  
This bounds the induced error \( |\cost^{(m_i)}(B_i, c) - (|B_i| - m_i) \cdot \dist(\mu(B_i), c)| \) of each bucket \( B_i \) when aligning the outlier numbers within this bucket for dataset \( P \) and coreset \( S \).

Prior work \cite{huang2022near, huang2023coresets} utilized this component-wise error analysis for robust coreset construction. 
They show that at most three buckets \( B_i \) could induce a non-zero error, including a bucket that contains \( c \) and two ``partially intersected'' buckets with \( 0 < m_i < |B_i| \) that contain both inlier and outlier points relative to \( c \).
Thus, to prove Inequality \eqref{ineq: robust error}, it suffices to ensure that the maximum induced error \( |\cost^{(m_i)}(B_i, c) - (|B_i| - m_i) \cdot \dist(\mu(B_i), c)| \) of these three buckets is at most \( \eps \cdot \cost^{(m)}(P, c) / 3 \).
However, it is possible that \( \cost^{(m)}(P, c) \ll \cost(P, c) \), which presents a significant challenge for this guarantee compared to the vanilla case.
To overcome this obstacle, prior work \cite{huang2022near, huang2023coresets} includes the ``outmost'' \( m \) points of \( P \) into the coreset to ensure zero induced error from them, resulting in an \( O(m) \) size dependency.

\paragraph{First attempt to eliminate the $O(m)$ dependency.}
We first observe that a subset $P_M = \left\{p_{m+1},\ldots, p_{n-m}\right\}\subset P$ of size \(n-2m\) acts as inliers w.r.t. any center \(c\in \R\).
The condition $n\geq 4m$ guarantees the existence of this $P_M$ with $|P_M| = n-2m \geq 2m$. 
Since $\cost^{(m)}(P,c) \geq \cost(P_M, c)$ by the construction of $P_M$, $\cost^{(m)}(P,c)$ is intuitively not ``too small'', which is useful for eliminating the $O(m)$ dependency in analysis.
A natural idea is to apply the vanilla coreset construction method given by \cite{huang2023small} to \(P_M\) and also include \(P-P_M\) in the coreset $S$. 
This attempt yields a coreset of size $2m + \tilde{O}(\eps^{-1/2})$, which already improves the previous bound of \(O(m) + \tilde{O}(\eps^{-2})\) by \cite{huang2023coresets}.
To eliminate the \(O(m)\) term, it is essential to significantly reduce points in \(P - P_M\).

Our first attempt is to further decompose $P - P_M$ into buckets and utilize the component-wise error analysis in Inequality \eqref{ineq: robust error}. 
As discussed earlier, the critical step is to ensure $|\cost^{(m_i)}(B_i, c) - (|B_i| - m_i) \cdot \dist(\mu(B_i), c)| \leq \frac{\eps}{3} \cdot \cost^{(m)}(P, c)$ for two partially intersecting buckets induced by $c$ with $0 < m_i < |B_i|$. 
To achieve this, we study the relative location of such buckets with respect to $c$ (Lemma \ref{lemma:location}), enabling us to partition the buckets based on the scale of $\cost^{(m)}(P, c)$, similar to the vanilla method (see Lines 1-5 of Algorithm \ref{alg:main_onedim}). 
However, this partitioning approach is applicable only for $c \in P_M$, where the scale of $\cost^{(m)}(P, c)$ is well-controlled (see Lemma \ref{lemma:error_alg1_inlier}). 
Consequently, the challenge remains to control the induced error of these buckets for $c \notin P_M$.

\paragraph{Obstacle for $c\notin P_M$.}
Unfortunately, to ensure Inequality \eqref{ineq: robust error} holds for $c \notin P_M$, the number of constructed buckets in $P - P_M$ must be at least $\Omega(m)$. 
To see this, we provide an illustrative example in Section \ref{sec: obstacle of traditional error analysis}, where there is a collection $P_1$ of $(n - m)$ points condensed into a small interval, and a collection $P_2$ of $m$ points that are exponentially far from each other. 
We show that for any bucket containing two points $p, q \in P_2$ with $p < q$, the induced error of this bucket could be much larger than $\cost^{(m)}(P, c)$. 
Thus, to ensure Inequality \eqref{ineq: robust error} holds, all points in $P_2$ must be included in the coreset, leading to a size of $\Omega(m)$. 
Therefore, a new error analysis beyond Inequality \eqref{ineq: robust error} is required.

\paragraph{Key approach.}
W.L.O.G., we assume $c > p_{n-m}$, i.e., $c$ is to the right of \(P_M\). 
The key idea is to develop a novel non-component-wise error analysis beyond Inequality \eqref{ineq: robust error}: we regard $|\cost^{(m)}(P, c) - \cost^{(m)}(S, c)|$ as a single entity and study how it changes as $c$ shifts to the right from $p_{n-m}$.
Let $f_P'(c)$ and $f_S'(c)$ denote the derivative of $\cost^{(m)}(P, c)$ and $\cost^{(m)}(S, c)$ respectively.
Then the induced error can be rewritten as
$ |\cost^{(m)}(P, c) - \cost^{(m)}(S, c)|\le |\cost^{(m)}(P, p_{n-m})  - \cost^{(m)}(S, p_{n-m})|+\int_{p_{n-m}}^{c} |f_P'(x)-f_S'(x)| dx$.
We can ensure $\cost^{(m)}(P, p_{n-m})=\cost^{(m)}(S, p_{n-m})$ by an extra bucket-partitioning step (see Line 8 of Algorithm \ref{alg:main_onedim}). 
Thus, it suffices to ensure $\int_{p_{n-m}}^{c} |f_P'(x)-f_S'(x)| dx \le \eps \cdot \cost^{(m)}(P,c)$.

A key geometric observation is that \( f'_P(c) \) equals the difference between the number of inliers in \( P \) located to the left of \( c \) and those to the right of \( c \); a similar relationship holds for \( f'_S(c) \). 
For $c$, let $m_i$ and $m_i'$ denote the number of outliers in bucket $B_i$ relative to dataset $P$ and coreset $S$, respectively. 
By the geometric observation, we conclude that $|f_P'(c)-f_S'(c)|\le \sum_{i}|m_i-m_i'|+2|B_c|$, where $B_c$ denotes the bucket containing $c$.
Therefore, $\int_{p_{n-m}}^{c} |f_P'(x)-f_S'(x)| dx \le (\sum_{i}|m_i-m_i'|+2|B_c|)\cdot \dist(p_{n-m},c)$ (see Lemma \ref{lemma:error_alg1_outlier}). 
Thus, it suffices to ensure $(\sum_{i}|m_i-m_i'|+2|B_c|)\cdot \dist(p_{n-m},c) \leq \eps \cdot \cost^{(m)}(P,c)$.
This desired property can be achieved by limiting the size of each bucket in $P-P_M$ to be at most $O(\eps n)$, which ensures $\sum_i|m_i-m_i'|\leq O(\eps n)$ for all $c\in P_M$ (see Lemma \ref{lemma:misaligned outliers}).

We remark that, due to the misalignment of outlier counts, a single bucket may induce an arbitrarily large error in our analysis. 
However, these bucket-level errors can cancel out, and the overall error remains well controlled.
To the best of our knowledge, this represents the first non-component-wise error analysis in the coreset literature, which may be of independent research interest.
To demonstrate the power of our new error analysis, we apply it to the aforementioned obstacle instance—a case that previous component-wise analyses cannot solve (Appendix \ref{sec: obstacle of traditional error analysis}).

\paragraph{Coreset size analysis.}
Note that the coreset size equals the number of buckets.
We partition \(P_M\) into \(\tilde{O}(\eps^{-1/2})\) buckets using the vanilla coreset construction method \cite{huang2023small}. 
The analysis for $c \in P_M$ and $c \notin P_M$ results in at most \(\tilde{O}(\eps^{-1/2})\) buckets for the former and \(O\left(\frac{m}{n}\eps^{-1}\right)\) buckets for the latter.
Therefore, the total coreset size is \(\tilde{O}(\eps^{-1/2} + \frac{m}{n}\eps^{-1})\) (see Lemma \ref{lemma:buckets_alg1}).

This coreset size is shown to be tight. 
To see this, we construct a worst-case example in Section \ref{sec:lower bound for d=1}, where the $m$ outliers are partitioned into $\frac{m}{n} \varepsilon^{-1}$ intervals, each containing $\varepsilon n$ points, with the interval scales increasing exponentially. 
If the coreset omits all points from any such interval, each point within it contributes an error of $\frac{2 \cdot \cost^{(m)}(P, c)}{n}$, resulting in a total error of $2 \varepsilon \cdot \cost^{(m)}(P, c)$—which is unacceptably large. 
Thus, to control the error, the coreset must include at least one point from each interval, yielding a lower bound of $\frac{m}{n} \varepsilon^{-1}$ on the coreset size.

\subsubsection{Overview for Theorem \ref{high 1-median theorem_contri} (general $d$)}
The obstacle of the traditional component-wise analysis remains in the high-dimensional case, motivating us to adopt the non-component-wise analysis framework.
However, due to the increased complexity of candidate center distribution, a straightforward extension of the 1D analysis would involve using an $\eps$-net to partition the center space, as in \cite{har2004coresets}, but this introduces an $O(\eps^{-d})$ factor in the coreset size.
To overcome this, we leverage the concept of the ball range space, as explored in previous works \cite{Braverman, huang2022near}, which allows us to effectively describe high-dimensional spaces. 

Our Algorithm \ref{alg3} takes a uniform sample $S_O$ of size $\tilde{O}(\eps^{-2} \min\{\eps^{-2}, d\})$ from the ``outmost'' $m$ points of $P$ to include in the coreset, in contrast to including them all as in \cite{huang2022near, huang2018epsilon}. 
To analyze the error induced by $S_O$, we examine errors caused by ``outlier-misaligned'' points—those that act as outliers (or inliers) in $P$ but as inliers (or outliers) in $S$ with respect to a fixed $c$. 
The induced error for each such point is bounded by $\frac{\cost^{(m)}(P, c)}{m}$ when $n \geq 4m$ (see Lemmas \ref{ob:comparision}, \ref{lemma: Error upper bound}, and \ref{lemma: induced error of each point}). 
To ensure the total error remains within $O(\eps) \cdot \cost^{(m)}(P, c)$, it suffices for the number of such outlier-misaligned points to be $O(\eps m)$. 
The key geometric insight is that this condition holds when $S_O$ serves as an $\eps$-approximation for the ball range space on $L^\star$ (see Lemma \ref{refined outlier gap}). 
This approximation is guaranteed by ensuring $|S_O| = \tilde{O}(\eps^{-2} \min\{\eps^{-2}, d\})$, as established in Lemma \ref{lemma: coreset size of ball range space}.

To our knowledge, this analysis is the first to apply the range space argument to outlier points. 
The range space argument leverages the fact that the VC dimension of $\mathbb{R}^d$ is at most $O(d)$ and can be further reduced to $\tilde{O}(\eps^{-2})$ by dimension reduction. 
This enables us to generalize results to various metric spaces via the notion of VC dimension (or doubling dimension), as well as to robust $(k,z)$-clustering; see Section \ref{sec: other metric space} and \ref{sec:generalization}.

\begin{table*}[t]
\caption{Comparison of the state-of-the-art coreset size and our results for robust geometric median and robust \kMedian\ in $\R^d$. Robust \kMedian\ is a generalization of robust geometric median; see Definition \ref{def: robust kmedian} when $z=1$.}
\label{table: contri}
\begin{center}
\begin{small}
\begin{sc}
\begin{tabular}{llll}
\toprule
\multicolumn{2}{l}{Parameters $d$,$k$}                 & Prior Results        & Our Results (assuming $n\ge 4m$)            \\ \hline
\multicolumn{1}{l}{$d=1$}                   & $k=1$    & \begin{tabular}[c]{@{}l@{}} $O(m)+\tilde{O}(\eps^{-2})$\cite{huang2023coresets}\\
$\tilde{O}(m \eps^{-1})+\text{Vanilla size}\cite{jiang2025coresets}$\\
$\Omega(m)$ \cite{huang2022near}
\end{tabular} 
    & \begin{tabular}[c]{@{}l@{}}$\tilde{O}(\eps^{-\frac{1}{2}}+\frac{m}{n}\eps^{-1})$ (Theorem \ref{theorem:main_contri})\\ $\Omega{(\eps^{-\frac{1}{2}}+\frac{m}{n}\eps^{-1})}$ (Theorem \ref{theorem:main_contri})\\
    \end{tabular} 
\\ \hline
\multicolumn{1}{l}{$d>1$} & $k= 1$ & 
\begin{tabular}[c]{@{}l@{}}
$O(m)+\tilde{O}(\eps^{-2}\cdot \min\{\eps^{-2},d\})$\cite{huang2023coresets}\\
$\tilde{O}(m \eps^{-1})+\text{Vanilla size}\cite{jiang2025coresets}$\\
$\Omega(m)$ \cite{huang2022near}
\end{tabular}
& \begin{tabular}[c]{@{}l@{}} $\tilde{O}( \eps^{-2}\cdot \min\{\eps^{-2},d\})$  (Theorem \ref{high 1-median theorem_contri}) \\ $\Omega{(\eps^{-\frac{1}{2}}+\frac{m}{n}\eps^{-1})}$ (Theorem \ref{theorem:main_contri})\end{tabular}                      \\ \hline
\multicolumn{1}{l}{$d>1$} & $k> 1$ & \begin{tabular}[c]{@{}l@{}} $O(m)+\tilde{O}(k^2\eps^{-2}\cdot \min\{\eps^{-2},d\})$\cite{huang2023coresets}\\
$\tilde{O}(\min\{km\eps^{-1}, m\eps^{-2}\})+\text{Vanilla size}$\cite{jiang2025coresets}\\
$\Omega(m)$\cite{huang2022near}
\end{tabular}  & \begin{tabular}[c]{@{}l@{}} $\tilde{O}(k^2 \eps^{-2}\cdot \min\{\eps^{-2},d\})$ \\ (Theorem \ref{theorem: kmedian informal}, under Assumption \ref{as: kmedian}) \end{tabular}                       \\ \bottomrule
\end{tabular}\end{sc}
\end{small}
\end{center}

\end{table*}

\subsection{Other related work}
\label{sec:related}

\paragraph{Coreset for robust clustering.}
A natural extension of robust geometric median is called robust $(k,z)$-Clustering, attracting considerable interest for its coreset construction techniques in the literature \cite{Feldman2012,huang2018epsilon,huang2022near,huang2023on,StatmanRF20}.
In early work, \cite{Feldman2012} proposed a coreset construction method for the robust \kMedian\ problem, which requires an exponentially large size $(k + m)^{O(k + m)} (\eps^{-1} d \log n)^2$.
Recently, \cite{huang2022near} improved the coreset size to $O(m) + \tilde{O}(k^3 \eps^{-3z-2})$ via a hierarchical sampling framework proposed by \cite{Braverman}.
Following this, \cite{huang2023coresets} further improved the size to $O(m) + \tilde{O}(k^2 \eps^{-2z-2})$.
More recently, \cite{jiang2025coresets} proposed a new coreset of size $O(\min\{km\eps^{-1},m\eps^{-2z}\})+\text{Vanilla size}$.
We give Table \ref{table: contri} to compare our theoretical results with prior work.

\paragraph{Coreset for other clustering problems.}
Coreset construction for other variants of $(k,z)$-Clustering problems has also been extensively studied, including vanilla clustering \cite{har2004coresets, chen2009coresets, Feldman2011, Braverman, cohen2022towards,cohen2022improved,huang2023small,huang2023on}, capitalized clustering \cite{Braverman,cohen2022fixed}, fair clustering \cite{chierichetti2017fair,bandyapadhyay2024coresets} and fault-tolerant clustering \cite{khuller2000fault,hajiaghayi2016constant}. 
Specifically, for vanilla \kzC\, recent advancements by \cite{cohen2022improved,cohen2022towards,huang2023on}, produced a coreset of size $\tilde{O}(\min\left\{ k\eps^{-z-2}, k^{\frac{2z+2}{z+2}}\eps^{-2}\right\})$. 
When $\eps \geq k^{-\frac{1}{z+2}}$, the coreset upper bound $k\eps^{-z-2}$ is shown to be optimal by a recent breakthrough \cite{huang2023on}.
Furthermore, a recent study \cite{huang2023small} has investigated the coreset bounds when $d$ is small.

\section{Preliminaries}
\label{sec: pre}
In this section, we define the coreset for robust geometric median.
For preparation, we first generalize the cost function $\cost^{(m)}$ to handle weighted datasets.

\begin{definition}[Cost function for weighted dataset]
\label{def: weighted cost function}
    Let $m$ be an integer.
    Let \(S \subseteq \mathbb{R}^d\) be a weighted dataset with weights \(w(p)\) for each point $p\in S$. 
    Let $w(S) := \sum_{p\in S} w(p)$.
    Define a collection of weight functions \(\mathcal{W}:=\{w': S\rightarrow \mathbb{R}^+ \mid \sum_{p \in S} w'(p) = w(S) - m \wedge \forall p\in S, w'(p)\leq w(p)\}\). 
    Moreover, we define the following cost function on $S$:
    $
    \forall c\in \R^d, \ \cost^{(m)}(S,c):=\min_{w'\in \mathcal{W}} \sum_{p \in S} w'(p) \cdot \dist(p, c).
    $
\end{definition}

\noindent
Intuitively, to compute $\cost^{(m)}(S,c)$, we find a weighted subset $S'$ of $S$ with total weight $w(S) - m$ that minimizes its vanilla cost to $c$. 
Thus, this $\cost^{(m)}(S,c)$ serves as the cost function for robust geometric median on a weighted set. 
Note that for the unweighted case where $w(p) = 1$ for all $p\in S$, this cost function reduces to that in Equation \eqref{eq:cost}.

Next, we define the notion of coreset for robust geometric median.

\begin{definition}[Coreset for robust geometric median]
\label{def: coreset}
    Given a point set \(P \subset \mathbb{R}^d\) of size $n\geq 1$, integer $m\geq 1$ and \(\eps \in (0,1)\), we say a weighted subset $S\subseteq P$ together with a weight function \(w: S \rightarrow \mathbb{R}^+\) is an \(\eps\)-coreset of $P$ for robust geometric median if $w(S) = n$ and for any center \(c\in \mathbb{R}^d\), 
    $
    \cost^{(m)}(S,c) \in (1 \pm \eps) \cdot \cost^{(m)}(P,c).
    $
\end{definition}
\noindent
This formulation ensures that the weighted coreset $S$ provides an accurate approximation of the original dataset $P$'s cost for all centers $c$, within a tolerance specified by \(\eps\).

\section{Optimal coreset size when $d = 1$}
\label{sec:1D}
Let $P =\{p_1,\ldots, p_n\}\subset \R$ with $p_1 < p_2 < \ldots < p_n$. 
Denote $L^{(c)}:=\arg\min_{L\subseteq P, |L|=m}\cost(P-L,c)$ to be the set of outliers of $P$  w.r.t. a center $c$, and $P_I^{(c)} = P- L^{(c)}$ to be the set of inliers.
Let $c^\star := \arg\min_{c \in \R} \cost^{(m)}(P, c)$ be an optimal center. 
Let $L^\star = L^{(c^\star)}$ and $P_I^\star = P_I^{(c^\star)}$.

\paragraph{Buckets and cumulative error.}
We introduce the definition of bucket proposed by \cite{har2004coresets} associated with related notions, which is useful for coreset construction when $d = 1$ \cite{har2005smaller,huang2023small}.

\begin{definition}[Bucket and associated statistics]
\label{def: bucket_mean_cumulative_error}
A bucket $B$ is a continuous subset $\{p_l,p_{l+1},\ldots, p_r\}$ of $P$ for some $1\leq l\leq r\leq n$. 
Let $N(B) := r - l + 1$ represents the number of points within $B$, $\mu(B) := \frac{\sum_{p \in B} p}{N(B)}$ represents the mean point of $B$, and
$\delta(B) := \sum_{p \in B} |p - \mu(B)|$ represents the cumulative error of $B$.
\end{definition}

\noindent
A basic idea for vanilla coreset construction in 1D case is to partition $P$ into multiple buckets $B$ and then retain a point $\mu(B)$ with weight $N(B)$ as the representative point of $B$ in coreset.
This idea works since each bucket induces an error of at most $\delta(B)$; see Lemma \ref{lemma: Cumulative error}.
\begin{lemma}[Error analysis for buckets~\cite{har2005smaller}]
\label{lemma: Cumulative error}
Let B = $\{p_l,\ldots, p_r\} \subseteq P$ for $1\leq l\leq r\leq n$ be a bucket and $c \in \R$ be a center. We have 

1. if $c \in (p_l, p_r)$, $|\cost(B, c) - N(B)\cdot\dist(\mu(B), c)|\leq \delta(B)$; 

2. if $c \notin (p_l, p_r)$, $|\cost(B, c) - N(B)\cdot\dist(\mu(B), c)| = 0$.
 \end{lemma}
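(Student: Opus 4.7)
The plan is to handle the two cases separately, relying on the observation that $x \mapsto |x - c|$ is affine on each of the half-lines $(-\infty, c]$ and $[c, \infty)$ and convex globally.

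For case 2, I would use linearity directly. Assume without loss of generality that $c \leq p_l$; the case $c \geq p_r$ is symmetric. Then $|p - c| = p - c$ for every $p \in B$, and since $\mu(B) \in [p_l, p_r]$ we also have $|\mu(B) - c| = \mu(B) - c$. Summing the first identity over $p \in B$ gives $\cost(B, c) = \sum_{p \in B} p - N(B)\cdot c$, while the definition of $\mu(B)$ gives $N(B) \cdot |\mu(B) - c| = \sum_{p \in B} p - N(B) \cdot c$. The two expressions match exactly, so the error is $0$.

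For case 1, where $c \in (p_l, p_r)$, I would invoke the reverse triangle inequality pointwise. For each $p \in B$ we have $\bigl||p - c| - |\mu(B) - c|\bigr| \leq |p - \mu(B)|$. Summing over $p \in B$ and applying the ordinary triangle inequality to pull the absolute value outside the sum would yield
$$\bigl|\cost(B, c) - N(B) \cdot \dist(\mu(B), c)\bigr| = \Bigl|\sum_{p \in B}\bigl(|p - c| - |\mu(B) - c|\bigr)\Bigr| \leq \sum_{p \in B}|p - \mu(B)| = \delta(B),$$
which is exactly the desired bound.

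There is no real obstacle here; the argument is essentially a two-line calculation once the cases are separated. The only conceptual point worth flagging is that case 2 requires the sharper linearity observation rather than merely the reverse-triangle bound, since the latter would only recover the weaker estimate $\delta(B)$ in the regime where the lemma actually promises zero error.
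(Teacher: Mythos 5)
Your proof is correct. The paper does not supply its own proof of this lemma --- it is cited directly from prior work --- but your argument is the standard one: exact cancellation by linearity of $x \mapsto |x-c|$ on the half-line containing $B$ when $c \notin (p_l, p_r)$, and the reverse triangle inequality summed over the bucket otherwise. Your closing remark correctly identifies the one point of substance, namely that case 2 genuinely needs the linearity observation rather than the generic $\delta(B)$ bound.
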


\noindent
Thus, we have the following theorem that provides the optimal coreset for vanilla 1D geometric median \cite{huang2023small}
%

\begin{theorem}[Coreset for vanilla 1D geometric median \cite{huang2023small}]
\label{theorem:vallinaCoreset}
There exists algorithm $\mathcal{A}$, that given an input data set $P \subset \R$ and $\eps\in(0,1)$, $\mathcal{A}(P, \eps)$ outputs an $\eps$-coreset of $P$ for \vkzCd\ with size $\tilde{O}(\eps^{-\frac{1}{2}})$ in $O(|P|)$ time. 
\end{theorem}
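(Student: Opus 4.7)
The plan is a bucket-based construction. Assuming $P$ is sorted, I would partition $P$ into consecutive buckets $B_1,\dots,B_T$ and emit the coreset $S:=\{(\mu(B_t),N(B_t)):t\in[T]\}$. Lemma~\ref{lemma: Cumulative error} immediately reduces correctness to a local guarantee: for any $c\in\R$, only the unique bucket $B_c$ containing $c$ contributes a nonzero error, and that error is at most $\delta(B_c)$. Hence correctness follows once the partition satisfies $\delta(B)\le \eps\cdot \inf_{c\in\mathrm{conv}(B)}\cost(P,c)$ for every bucket $B$, and the theorem becomes a counting problem on the minimum number of such buckets.

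For the algorithm, I would first compute the median $c^\star$ and $\opt:=\cost(P,c^\star)$ in $O(n)$ time via linear selection followed by one sweep. Centered at $c^\star$, I would maintain a running lower bound $\ell(c)$ on $\cost(P,c)$ derived from the piecewise-linear slope structure of $\cost(P,\cdot)$ (the slope at $c$ equals the signed inlier imbalance across $c$, so $\cost(P,c)\ge \opt$ globally and grows at a known, explicit rate as $c$ moves away from $c^\star$). Then I would run a single left-to-right scan that greedily extends the current bucket while $\delta(B)\le \eps\cdot \ell(\text{right endpoint of }B)$ and closes the bucket otherwise; the running $\mu(B)$, $\delta(B)$, and $\ell$ are all maintainable incrementally from precomputed prefix sums, yielding $O(n)$ total time.

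For the size bound $\tilde{O}(\eps^{-1/2})$, the key structural fact I would exploit is that $\cost(P,\cdot)$ is a convex piecewise-linear function of $c$ whose slope jumps by $2$ each time $c$ crosses a point of $P$, so locally the excess $\cost(P,c)-\opt$ grows at least quadratically in the distance from $c^\star$ (for roughly uniform data, exactly so). The bucketization replaces contiguous points by a single weighted mean, which is equivalent to approximating the convex curve $\cost(P,\cdot)$ by a coarser convex piecewise-linear curve; approximating a locally quadratic curve by piecewise-linear with $(1\pm\eps)$ multiplicative error is a classical problem requiring $\Theta(\eps^{-1/2})$ pieces per scale (as for approximating $\sqrt{x}$ on a bounded range). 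Summed over the $O(\log(n/\eps))$ natural cost scales around $c^\star$, this gives $\tilde{O}(\eps^{-1/2})$ buckets.

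The main obstacle is precisely this refined counting: showing that the greedy bucketing attains the $\eps^{-1/2}$ rate rather than the naive $\eps^{-1}$. This requires a charging/telescoping argument that bounds $\sum_t \delta(B_t)^{1/2}$ against an integral of the local slope of $\cost(P,\cdot)$, exploiting the convex-quadratic growth together with the constraint $\delta(B_t)\le \eps\cdot\ell(B_t)$ via a Cauchy-Schwarz style step. The remainder (error correctness, linear running time) is routine once this packing bound is established.
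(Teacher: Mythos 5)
First, note that the paper does not prove this theorem at all: it is imported verbatim from \cite{huang2023small} and used as a black box (the algorithm $\mathcal{A}$ invoked in Line~2 of Algorithm~\ref{alg:main_onedim}). The closest the paper comes to the underlying argument is the bucket-counting machinery reproduced inside the proof of Lemma~\ref{lemma:buckets_alg1}. Your architecture matches that construction: buckets summarized by $(\mu(B), N(B))$, correctness reduced via Lemma~\ref{lemma: Cumulative error} to the local condition $\delta(B)\le\eps\cdot\inf_{c\in\mathrm{conv}(B)}\cost(P,c)$, dyadic distance scales around $c^\star$, and a per-scale packing bound of $O(\eps^{-1/2})$ multiplied by $O(\log(n/\eps))$ scales. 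So the route is the right one; the problem is that the decisive step is announced rather than carried out, and the intuition you offer for it is not the actual mechanism. The claim that $\cost(P,\cdot)-\opt$ grows quadratically in $\dist(c,c^\star)$ is false in general: the slope of $\cost(P,\cdot)$ is bounded by $n$, so the growth in distance is at most linear, and the quadratic picture holds only in the number of points crossed (i.e., for roughly uniform spacing). What actually yields $\eps^{-1/2}$ is the elementary inequality $\delta(B)\le N(B)\cdot\mathrm{len}(B)$ (with $\mathrm{len}(B)$ the diameter of the bucket) combined with Cauchy--Schwarz: within a block at distance scale $2^j\sigma$ from $c^\star$ one has $\sum_B N(B)\le n$ and $\sum_B \mathrm{len}(B)\le O(2^j\sigma)$, while maximality of the greedy buckets forces (pairs of) buckets to satisfy $\delta>\Delta_j=\Theta(\eps^2 n\, 2^j\sigma)\cdot\eps^{-1}=\Theta(\eps\, n\, 2^j\sigma)$, so $t^2\Delta_j\le\bigl(\sum_B\delta(B)^{1/2}\bigr)^2\le\bigl(\sum_B N(B)\bigr)\bigl(\sum_B\mathrm{len}(B)\bigr)\le O(n\,2^j\sigma)$ and hence $t=O(\eps^{-1/2})$. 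This is exactly Inequality~\eqref{ineq:bucketsize_LI} in the paper, and without it your proposal does not establish the size bound.

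There is also a smaller but real bug in the greedy rule: you cap $\delta(B)$ by $\eps\cdot\ell(\text{right endpoint of }B)$. For a bucket lying to the right of $c^\star$, $\cost(P,\cdot)$ is increasing on the bucket, so the binding instance of your own correctness condition is the \emph{left} endpoint (the one nearest $c^\star$); budgeting against the right endpoint lets the budget grow as the bucket extends and can produce buckets violating $\delta(B)\le\eps\cdot\inf_{c\in\mathrm{conv}(B)}\cost(P,c)$. The fix is to budget against the endpoint of $B$ closest to $c^\star$, or equivalently to fix the threshold per dyadic block as in Line~5 of Algorithm~\ref{alg:main_onedim}. With that correction and the Cauchy--Schwarz packing argument supplied, the rest of your plan (the $O(\log(n/\eps))$ scale count, prefix-sum maintenance, $O(n)$ time) is indeed routine.
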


\noindent
We will apply $\mathcal{A}$ to the ``inlier subset'' of $P$, say the set of middle $n-2m$ points $\M = \{p_{m+1},\ldots,p_{n-m}\}$.
Let $\LL=\{p_1,\ldots,p_m\}$ and $\LR=\{p_{n-m+1},\ldots,p_n\}$. 
Any continuous subsequence of length $n-m$ of $P$ must contain $\M$, implying that all points in $\M$ must be inliers w.r.t. any center $c \in \R$. 
This motivates us to apply the vanilla method $\mathcal{A}$ to $\M$.

As discussed in Section \ref{sec:overview}, we then partition $P_L$ an $P_R$ into buckets, which requires an understanding of the relative position between the inlier subset $P_I^{(c)}$ and $c$.
Define $r_{\max}:= \max_{p \in P_I^\star} |p-c^\star|$, $\CL := c^\star - r_{\max}$ and $\CR := c^\star + r_{\max}$. 
We present the following lemma. 

\begin{lemma}[Location of $P_I^{(c)}$]
\label{lemma:location}
Let $c \in \M$ be a center with $c<c^\star$ and $P_I^{(c)} \neq P_I^\star$. Let $p_l$ be the leftmost point of $P_I^{(c)}$. Then $\dist(p_l,\CL) \leq 2\cdot\dist(c,c^\star)$.
\end{lemma}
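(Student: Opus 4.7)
The plan is to describe both optimal inlier sets as contiguous length-$(n-m)$ windows of $P$ and then squeeze $p_l$ into the interval $[c_L - 2(c^\star - c),\,c_L]$ by combining two optimality conditions. By the standard swapping argument for closest points in 1D, I would write $P_I^\star = \{p_{a+1},\ldots,p_{a+n-m}\}$ for some $a \in [0,m]$ and $P_I^{(c)} = \{p_l,\ldots,p_{l+n-m-1}\}$ for some $l \in [1,m+1]$. The hypothesis $c \in P_M$ gives $p_m < c \le p_{n-m}$, so every point of $P_L = \{p_1,\ldots,p_m\}$ is strictly less than $c$ and every point of $P_R = \{p_{n-m+1},\ldots,p_n\}$ is strictly greater than $c$. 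Moreover $c^\star$ is the 1D median of $P_I^\star$, so $c^\star \in [p_{a+1}, p_{a+n-m}]$.

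The first key step is to show $l \le a$. Since $P_I^{(c)} \ne P_I^\star$, $l \ne a+1$, and I would rule out $l \ge a+2$ by contradiction. Setting $k' = l-a-1$, the two windows differ by removing $\{p_{a+1},\ldots,p_{l-1}\}\subset P_L$ and adding $\{p_{a+n-m+1},\ldots,p_{l+n-m-1}\}\subset P_R$. Using the sign information above, all absolute values simplify (added points exceed $c^\star$, removed points are below $c$), and the two optimality inequalities $\cost(P_I^{(c)},c)\le\cost(P_I^\star,c)$ and $\cost(P_I^\star,c^\star)\le\cost(P_I^{(c)},c^\star)$ reduce to $\sum p_j^{(L)} + \sum p_j^{(R)} \le 2k'c$ and $2k'c^\star \le \sum p_j^{(L)} + \sum p_j^{(R)}$ respectively. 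Combining gives $c^\star \le c$, contradicting $c<c^\star$; hence $l \le a$.

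The second key step is the two-sided squeeze on $p_l$. For the upper bound: since $l \le a$, $p_l$ is a left outlier of $c^\star$, so $|p_l-c^\star| \ge r_{\max}$, and $p_l \le p_a < c^\star$ yields $p_l \le c^\star - r_{\max} = c_L$. For the matching lower bound: optimality of the window at $l$ against the window at $l+1$ (swapping $p_l$ out for $p_{l+n-m}$) gives $|p_l-c| \le |p_{l+n-m}-c|$; since $p_l<c<p_{l+n-m}$, this rearranges to $p_l \ge 2c - p_{l+n-m}$. The index $l+n-m$ lies in $[a+1, a+n-m]$ (using $1\le l\le a$ and $n\ge 4m$), so $p_{l+n-m}\in P_I^\star$ and hence $p_{l+n-m} \le c^\star + r_{\max} = c_R$. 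Substituting gives $p_l \ge 2c - c_R = c_L - 2(c^\star-c)$, so $|p_l-c_L| \le 2(c^\star-c) = 2\cdot\dist(c,c^\star)$. The main obstacle is establishing $l \le a$; without this direction fixed, neither the outlier-status of $p_l$ with respect to $c^\star$ (driving the upper bound) nor the inlier-status of $p_{l+n-m}$ with respect to $c^\star$ (driving the lower bound) is available, and both sides of the squeeze collapse.
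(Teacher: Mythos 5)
Your proof is correct, and it takes a genuinely different route from the paper's. The paper argues by contradiction in one shot: assuming $\dist(p_l,\CL)>2\dist(c,c^\star)$ (and that $p_l<\CL$), it shows every $p\in P_I^\star$ satisfies $\dist(p,c)<\dist(p_l,c)$, which forces $P_I^{(c)}\supseteq P_I^\star$ and hence $P_I^{(c)}=P_I^\star$ by a cardinality count, contradicting $p_l\notin P_I^\star$. You instead work structurally with the two contiguous windows: you first prove the monotonicity fact $l\le a$ (the window for $c<c^\star$ cannot shift right) via a pair of exchange inequalities at $c$ and at $c^\star$, then obtain the upper bound $p_l\le \CL$ from $p_l$'s outlier status and the lower bound $p_l\ge 2c-p_{l+n-m}\ge 2c-\CR$ from a single adjacent-window swap, and the two bounds squeeze out exactly $2\dist(c,c^\star)$. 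A notable benefit of your route is that it explicitly justifies the direction of the shift, i.e., why $p_l<\CL$ rather than $p_l$ lying far to the right of $\CL$ --- a step the paper dispatches with only the phrase ``by the assumption of $c$.'' The cost is length and some index bookkeeping (e.g., verifying $l+n-m\in[a+1,a+n-m]$, which uses $a\le m\le n-m$); the paper's argument is shorter because the single global exchange simultaneously handles both directions. Both arguments use the same underlying 1D facts (contiguity of optimal inlier windows and exchange optimality), so the difference is one of organization rather than of essential ingredients.
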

\begin{proof}
We prove the lemma by contradiction.
Suppose there exists a center $c\in \M$ satisfying that $c<c^\star$ and $P_I^{(c)} \neq P_I$, and the leftmost point $p_l$ of $P_I^{(c)}$ satisfies $\dist(p_l,\CL) > 2\dist(c,c^\star)$.  
Then by the assumption of $c$, we have $p_l < \CL$, thus $p_l \notin P_I^\star$.
For any points $p$ in $P_I^\star$, we have
\begin{align*}
    \dist(p,c) &\leq \dist(p,c^\star) + \dist(c,c^\star) &(\text{Triangle Inequality}) \\
    &\leq r_{\max} + \dist(c,c^\star) &(\text{Definition of }r_{\max})\\
    &< r_{\max} + \dist(p_l,\CL) - \dist(c,c^\star) &(\dist(p_l,\CL) > 2\dist(c,c^\star))\\
    &=\dist(p_l,c)
\end{align*}
This implies that any point in $P_I^\star$ must be in $P_I^{(c)}$.
However, since $|P_I^\star|=|P_I^{(c)}|=n-m$, $P_I^{(c)}$ cannot contain any other point not in $P_I^\star$.
This contradicts $p_l \notin P_I^\star$.
Thus, $\dist(p_l,\CL) \leq 2\dist(c,c^\star)$ holds.
\end{proof}

A symmetric observation can be made for \(c > c^\star\). Note that 
$
\cost^{(m)}(P,c) > \cost(P_M,c) \geq O(n) \cdot \dist(c, c^\star) \geq O(n) \cdot \dist(p_l, c_L).
$
This lower bound for \(\cost^{(m)}(P,c)\) motivates us to select the cumulative error bound for the bucket containing point \(p_l\) as \(\eps n \cdot \dist(p_l, c_L)\). 
Thus, we partition \(\LL \cup \LR\) into disjoint \emph{blocks} according to points' distance to \(c_L\) and \(c_R\), a concept inspired by \cite{huang2023small}. 
Concretely, we partition the four collections 
$
\LL \cap (-\infty,\CL), \quad \LL \cap [\CL, \infty), \quad \LR \cap (-\infty, \CR], \quad \LR \cap (\CR, \infty)
$
into blocks, respectively. 
Due to symmetry, we visualize only the partition of \(\LL \cap (-\infty, c_L)\) in Figure~\ref{fig:onedim_stage2}. 
%
%

%
%

\paragraph{Outer blocks} ($\Blf$): Define $\Blf^{(L)}$ and $\Blf^{(R)}$ as the set of points that are far from $\CL$ and $\CR$, where

\begin{equation}
\label{eq:B_outer}
\begin{aligned}
    \Blf^{(L)}& := \left\{\, p \in \LL \cap (-\infty,\CL) \mid \dist(p, \CL) \geq r_{\max} \,\right\}, \\
    \Blf^{(R)}& := \left\{\, p \in \LR \cap (\CR,\infty)\mid \dist(p, \CR) \geq r_{\max} \,\right\}.
\end{aligned}
\end{equation}

\paragraph{Inner blocks} ($B_i$): Define $\Blc^{(L)}$, $\Blc^{(LR)}$, $\Blc^{(R)}$, and $\Blc^{(RL)}$ as the set of points that are close to $\CL$ or $\CR$, where
\begin{equation}
\begin{aligned}
\label{eq:B_inner_0}
    \Blc^{(L)} :=& \left\{\, p \in \LL \cap (-\infty,\CL)\mid  \dist(p, \CL) < 2\eps r_{\max} \,\right\}, \\
    \Blc^{(LR)} :=& \left\{\, p \in \LL \cap [\CL,\infty) \mid \dist(p, \CL) < 2\eps r_{\max} \,\right\}, \\
    \Blc^{(R)} :=& \left\{\, p \in \LR \cap (\CR,\infty) \mid  \dist(p, \CR) < 2\eps r_{\max} \,\right\},\\
    \Blc^{(RL)} :=& \left\{\, p \in \LR \cap (-\infty,\CR] \mid \dist(p, \CR) < 2\eps r_{\max} \,\right\}.
\end{aligned}
\end{equation}

For the remaining points, partition them into blocks $B_i^{(L)}$, $B_i^{(LR)}$, $B_i^{(R)}$ and $B_i^{(RL)}$ based on exponentially increasing distance ranges for $i = 1, \ldots, \left\lceil \log_2 (\eps^{-1}) \right\rceil$, where
\begin{equation}
\label{eq:B_inner_i}    
\begin{aligned}
B_i^{(L)} &:=\{\, p \in \LL \cap (-\infty,\CL) \mid
 2^i \eps r_{\max} \leq \dist(p, \CL) < 2^{i+1} \eps r_{\max} \,\}, \\
B_i^{(LR)} 
&:= \{\, p \in \LL \cap [\CL,\infty) \mid  
2^i \eps r_{\max} \leq \dist(p, \CL) < 2^{i+1} \eps r_{\max} \,\},\\
B_i^{(R)} 
&:= \{\, p \in \LR \cap (\CR,\infty) \mid  
 2^i \eps r_{\max} \leq \dist(p, \CR) < 2^{i+1} \eps r_{\max} \,\},\\
B_i^{(RL)} 
&:= \{\, p \in \LR \cap (-\infty,\CR] \mid  
 2^i \eps r_{\max} \leq \dist(p, \CR) < 2^{i+1} \eps r_{\max} \,\}.
\end{aligned}
\end{equation}

\begin{figure*}[t]
    \centering
  \subfigure[]{\label{sub_left_block}
		\begin{minipage}[b]{0.4\textwidth}\includegraphics[width=1\textwidth]{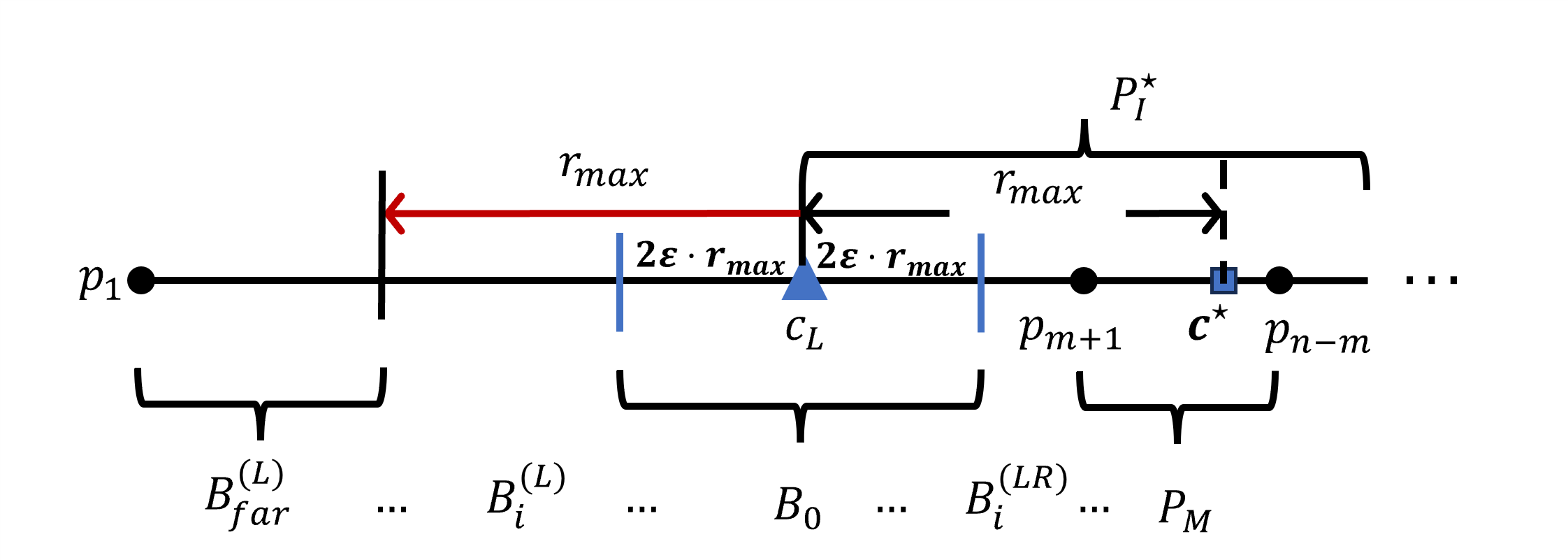} 
		\end{minipage}
	}
    \hspace{0.25in}
 	\subfigure[]{\label{sub_right_block} 		\begin{minipage}[b]{0.32\textwidth}	 	\includegraphics[width=1\textwidth]{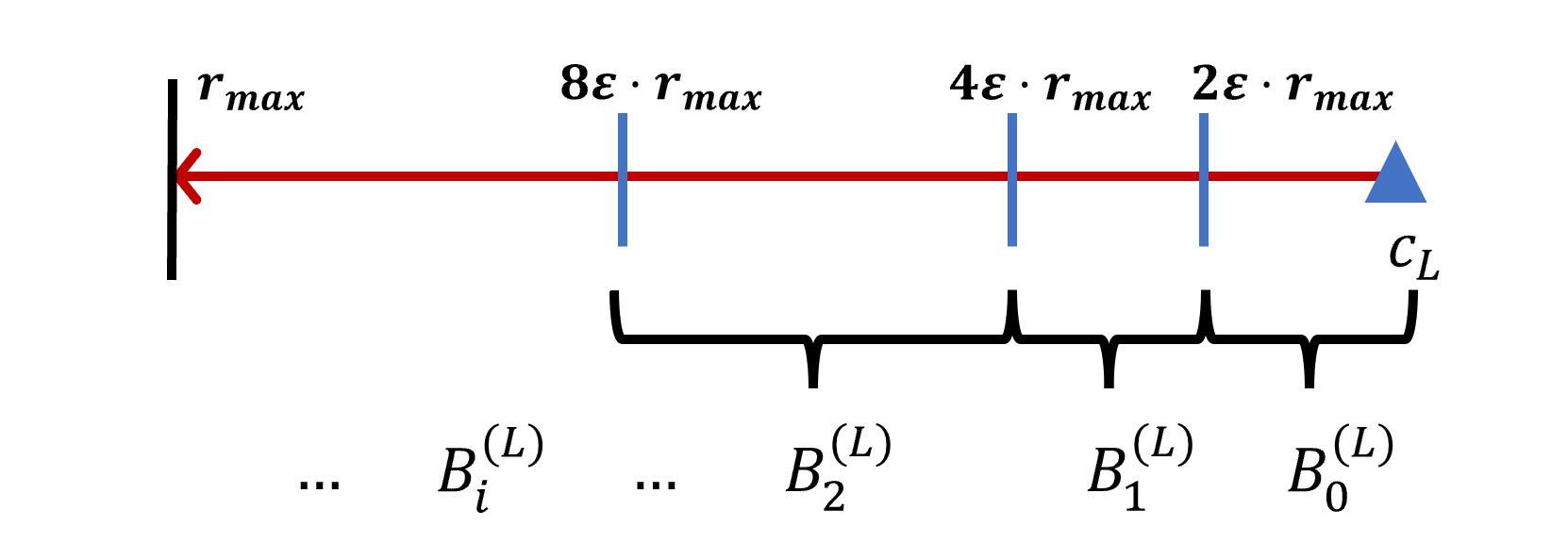}\end{minipage}}

    \caption{
   {Illustration of the block partition.
The blue square marks the optimal solution \(c^\star\), and the blue triangle \(c_L\) denotes the left boundary of the inlier set \(P_I^\star\), with distance \(r_{\max} = \mathrm{dist}(c_L, c^\star)\). Figure~\ref{sub_left_block} partitions the one-dimensional space left of \(P_M\) into disjoint blocks based on each point’s position relative to \(c_L\): points farther than \(r_{\max}\) form \(B_{\mathrm{far}}\), and those within \(2\varepsilon r_{\max}\) form \(B_0\). Figure~\ref{sub_right_block} shows the logarithmic subdivision of inner blocks \(B_i^{(L)}\) within distance \(r_{\max}\) from \(c_L\).}
}
    \label{fig:onedim_stage2}
\end{figure*}

\paragraph{The algorithm.}
Our algorithm (Algorithm \ref{alg:main_onedim}) consists of three stages. 
In Stage 1, we construct a coreset $S_M$ for $P_M$ using Algorithm $\mathcal{A}$ for vanilla 1D geometric median (Theorem~\ref{theorem:vallinaCoreset}), ensuring that $\cost(S_M, c) \in (1 \pm \varepsilon) \cdot \cost(P_M, c)$ for any center $c \in \mathbb{R}$. 
In Stage 2, we divide sets $P_L$ and $P_R$ into outer and inner blocks by Equations \eqref{eq:B_outer}-\eqref{eq:B_inner_i}, and greedily partition these blocks into disjoint buckets $B$ with bounded $\delta(B)$ and $N(B)$ in Lines 3-6. 
In Stage 3, we ensure that $\cost^{(m)}(P, p_{n-m}) = \cost^{(m)}(S, p_{n-m})$ and $\cost^{(m)}(P, p_{m+1}) = \cost^{(m)}(S, p_{m+1})$ to control induced error when $c \notin P_M$. 
Finally, we add the mean point $\mu(B)$ of each bucket $B$ with weight $N(B)$ into $S_O$, and return $S_O \cup S_M$ as the coreset of $P$.

\begin{algorithm}[t]
\caption{Coreset construction for 1D}  
\label{alg:main_onedim} 
{\bfseries Input:} A dataset $P = \{p_1,\ldots, p_n\} \subset \R$ with $p_1 <\ldots< p_n$, and $\eps\in (0,1)$. 

{\bfseries Output:} An $\eps$-coreset $S$
\begin{algorithmic}[1]  
\STATE Set $\M \leftarrow \{p_{m+1},\ldots,p_{n-m}\}$, $\LL \leftarrow \{p_1,\ldots,p_m\}$, $\LR \leftarrow \{p_{n-m+1},\ldots,p_n\}$.
%
\STATE Construct $S_M\leftarrow \mathcal{A}(\M,\frac{\eps}{3})$ by Theorem~\ref{theorem:vallinaCoreset}.
%
%
\STATE Compute an optimal center $c^\star$ of $P$ for \kzCd. Let $P_I^\star$ be the set of inliers w.r.t. $c^\star$, and $r_{\max}:= \max_{p \in P_I^\star}\dist(p,c^\star)$, $\CL \leftarrow c^\star - r_{\max}$, $\CR \leftarrow c^\star + r_{\max}$.  
\STATE Given $\CL$ and $\CR$, divide $\LL$ and $\LR$ into outer blocks $\Blf^{(L)}$ and $\Blf^{(R)}$ by Equation (\ref{eq:B_outer}), and inner blocks $B_i^{(L)}$, $B_i^{(LR)}$, $B_i^{(RL)}$ and $B_i^{(R)}$ ($0\leq i\leq \left\lceil \log_2 (\eps^{-1}) \right\rceil$) by Equations (\ref{eq:B_inner_0})-(\ref{eq:B_inner_i}).
\STATE For each non-empty inner block $B_i$, divide $B_i$ into disjoint buckets $\{B_{i,j}\}_{j\geq0}$ in a greedy way: each bucket $B_{i,j}$ is a maximal set with $\delta(B_{i,j}) \leq \frac{2^i\cdot\eps^2nr_{\max}}{288}$ and $N(B_{i,j}) \leq \frac{\eps n}{16}$.

\STATE If $\Blf^{(L)}$ is non-empty, divide $\Blf$ into disjoint buckets $\{B_{\textit{far},j}^{(L)}\}_{j\geq0}$ in a greedy way: each bucket $B_{\textit{far},j}^{(L)}$ is a maximal set with $N(B_{\textit{far},j}^{(L)}) \leq \frac{\eps n}{16}$.
The same for $\Blf^{(R)}$.
%
\STATE Compute the inlier set $P_I^{(p_{m+1})}$ with respect to center $c = p_{m+1}$ and the inlier set $P_I^{(p_{n-m})}$ with respect to center $c = p_{n-m}$ respectively.
\STATE If there exists some bucket $B$ such that both $B \cap P_I^{(p_{m+1})}$ and $B \setminus P_I^{(p_{m+1})}$ are non-empty, divide $B$ into two buckets $B \cap P_I^{(p_{m+1})}$ and $B \setminus P_I^{(p_{m+1})}$.
Do the same thing for $P_I^{(p_{n-m})}$.
\STATE For every $B$, add $\mu(B)$ with weight $N(B)$ into $S_O$.
\STATE Return $S\leftarrow S_O \cup S_M$.
\end{algorithmic}  
\end{algorithm}

By construction, the coreset size $|S|$ is exactly the number of buckets.
Therefore, we have the following lemma that proves the coreset size in Theorem \ref{theorem:main_contri}.

\begin{lemma}[Number of buckets]
\label{lemma:buckets_alg1} 
Algorithm~\ref{alg:main_onedim} constructs at most $\tilde{O}(\eps^{-\frac{1}{2}}+\frac{m}{n}\eps^{-1})$ buckets.
\end{lemma}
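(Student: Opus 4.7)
The plan is to decompose $|S| = |S_M| + |S_O|$ and bound each part separately. By Theorem~\ref{theorem:vallinaCoreset}, Stage~1 produces $|S_M| = \tilde{O}(\eps^{-1/2})$ points, so the main task is to show that the rest, $|S_O|$, is at most $\tilde{O}(m\eps^{-1}/n)$.

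Stage~2 first carves $\LL \cup \LR$ into $O(\log \eps^{-1})$ inner blocks in each of the four orientations $L, LR, R, RL$, plus the two outer blocks $B_{\mathrm{far}}^{(L)}$ and $B_{\mathrm{far}}^{(R)}$. Next I would count how many buckets the greedy procedure of Lines~5--6 extracts from each block. For an inner block $B_i$ at level $i$, Equations~\eqref{eq:B_inner_0}--\eqref{eq:B_inner_i} confine every point to an interval of width $O(2^i \eps r_{\max})$, so $\delta(B_i) \le O(N(B_i)\cdot 2^i \eps r_{\max})$. In the greedy subdivision, every non-terminal bucket $B_{i,j}$ is maximal: adding one more point would violate either the size bound $N \le \eps n/16$ or the cumulative-error bound $\delta \le 2^i \eps^2 n r_{\max}/288$. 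Buckets terminated by the $N$-constraint satisfy $N(B_{i,j}) = \Omega(\eps n)$, giving at most $O(N(B_i)/(\eps n))$ of them. For buckets terminated by the $\delta$-constraint I would use the one-point-shift estimate $\delta(B\cup\{p\})-\delta(B)\le 3\,|p-\mu(B)|$, which is $O(2^i \eps r_{\max})$ here; comparing this against the threshold $2^i \eps^2 n r_{\max}/288$ (and using $\eps n = \Omega(1)$, else $P$ itself is a coreset) forces $\delta(B_{i,j}) = \Omega(2^i \eps^2 n r_{\max})$, and summing against $\delta(B_i) \le O(N(B_i)\cdot 2^i \eps r_{\max})$ again yields $O(N(B_i)/(\eps n))$ such buckets. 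Hence $B_i$ contributes $O(N(B_i)/(\eps n) + 1)$ buckets. The outer blocks are handled identically but with only the $N$-constraint active, giving $O(|B_{\mathrm{far}}|/(\eps n) + 1)$ buckets each.

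Summing the $N$-dependent terms over all blocks, disjointness together with $|\LL|+|\LR|=2m$ yields $O(m/(n\eps))$; the residual $O(1)$ per block sums to $O(\log \eps^{-1})$, which is $\tilde{O}(1)$ and absorbed into the final notation. Stage~3 (Line~8) splits at most one bucket at each endpoint of the contiguous inlier ranges $P_I^{(p_{m+1})}$ and $P_I^{(p_{n-m})}$, an additive $O(1)$. Combining, $|S_O| = \tilde{O}(m\eps^{-1}/n)$, and adding $|S_M|$ yields $|S| = \tilde{O}(\eps^{-1/2} + m\eps^{-1}/n)$, as claimed.

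The main obstacle I expect is the careful bookkeeping inside the greedy partition---specifically, verifying that every bucket terminated by the $\delta$-constraint already almost saturates the threshold \emph{before} the would-be insertion. This requires the single-point-shift bound on $\delta$ together with the tight diameter estimate for each level $B_i$ furnished by \eqref{eq:B_inner_i}; once these are in place, the remaining argument reduces to a routine disjoint-sum over the $\tilde{O}(1)$ blocks.
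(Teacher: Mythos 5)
Your proposal is correct and follows the paper's overall decomposition ($S_M$ via Theorem~\ref{theorem:vallinaCoreset}, $O(\log\eps^{-1})$ blocks, separate accounting of buckets terminated by the $N$- and $\delta$-constraints, $O(1)$ extra splits from Line~8), but it handles the $\delta$-terminated buckets by a genuinely different argument. The paper pairs consecutive maximal buckets $C_j=B_{i,2j-1}\cup B_{i,2j}$, each with $\delta(C_j)$ exceeding the threshold, and applies Cauchy--Schwarz against $N(B_i)\cdot len(B_i)\le m\cdot 2^i\eps r_{\max}$ to get $t\le O(\sqrt{m/(\eps n)})\le O(\eps^{-1/2})$ per block, charging these buckets to the $\eps^{-1/2}$ term. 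You instead show each $\delta$-terminated bucket nearly saturates its threshold (via the one-point-shift bound $\delta(B\cup\{p\})\le\delta(B)+2|p-\mu(B)|$ and the $O(2^i\eps r_{\max})$ block width) and charge linearly against $\sum_j\delta(B_{i,j})\le N(B_i)\cdot len(B_i)$, obtaining $O(N(B_i)/(\eps n))$ per block and hence $O(\frac{m}{n}\eps^{-1})$ in total; you charge these buckets to the $\frac{m}{n}\eps^{-1}$ term instead. Both land inside the claimed bound, and your charging argument is arguably more elementary; the price is the extra hypothesis $\eps n\ge C$ for the saturation step, which the Cauchy--Schwarz route does not need.

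One small repair: your fallback for $\eps n=O(1)$ (``$P$ itself is a coreset'') does not give the claimed \emph{size}, since $n\le C\eps^{-1}$ can exceed $\tilde{O}(\eps^{-1/2}+\frac{m}{n}\eps^{-1})$ when $m/n$ is small. The correct observation is that when $\eps n\le C$ the bound is vacuously satisfied for $\LL\cup\LR$: these sets contain only $2m$ points in total, and $2m\le 2C\cdot\frac{m}{\eps n}=O(\frac{m}{n}\eps^{-1})$, so even one bucket per point is within budget. With that substitution your argument is complete.
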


\begin{proof}
Note that in Line 2, the number of buckets is $\tilde{O}(\eps^{-\frac{1}{2}})$ in $S_M$ by Theorem~\ref{theorem:vallinaCoreset}. 
For Line 4, there are at most $O(\log(\frac{1}{\eps}))$ non-empty blocks. 
For Line 6, the constraint on $N(B_{i,j})$ generates at most $O(\frac{m}{n}\eps^{-1})$ buckets.
For Lines 7-8, there are $O(1)$ new one-point buckets.

What remains is to show that each non-empty block contains $\tilde{O}(\eps^{-\frac{1}{2}}+\frac{m}{n}\eps^{-1})$ buckets in Line 5.
Note that each block contains at most $m$ points, thus the constraint on $N(B_{i,j})$ generates at most $O(\frac{m}{n}\eps^{-1})$ buckets.
Thus we only have to consider the constraint on $\delta(B_{i,j})$.

Suppose we divide an inner block $B_i$ into $t$ buckets $\{B_{i,j}\}_{1\leq j \leq t}$ due to controlling $\delta(B_{i,j})$.
Since each $B_{ij}$ is the maximal bucket with $\delta(B_{i,j}) \leq \frac{2^i\cdot\eps^2nr_{\max}}{288}$ , we have $\delta(B_{i,j} \cup B_{i+1,j}) > \frac{2^i\cdot\eps^2nr_{\max}}{288}$. Denote $B_{i,2j-1} \cup B_{i,2j}$ by $C_j$ for $j \in \{1,\ldots,\lfloor \frac{t}{2}\rfloor\}$.
Let $len(B):= \max_{p \in B}p - \min_{p \in B} p$ be the length of $B$.
Note that $\delta(B) \leq N(B)\cdot len(B)$ holds for every bucket $B$.
Thus we have:
\begin{equation}\label{ineq:bucketsize_LI}
\begin{aligned}
m2^{i}\eps r_{\max} &\geq N(B_i)\cdot len(B_i)&(N(B_i)< m, len(B_i)<2^i\eps r_{\max}) \\ 
& \geq \sum_{j=1}^{\lfloor \frac{t}{2}\rfloor} N(C_j)\sum_{j=1}^{\lfloor \frac{t}{2}\rfloor} |C_j|\\
& \geq (\sum_{j=1}^{\lfloor \frac{t}{2}\rfloor} N(C_j)^\frac{1}{2}|C_j|^\frac{1}{2})^2 &\text{(Cauchy-Schwarz Inequality)}\\
& \geq (\sum_{j=1}^{\lfloor \frac{t}{2}\rfloor} \delta(C_j)^\frac{1}{2})^2\\
& > (\lfloor \frac{t}{2}\rfloor)^2 \cdot \frac{2^i\cdot\eps^2nr_{\max}}{288}.
\end{aligned}
\end{equation}

\noindent
So we have $(\lfloor \frac{t}{2}\rfloor)^2 \cdot \frac{2^i\cdot\eps^2nr_{\max}}{288} < m2^{i}\eps r_{\max}$ , which implies $t \leq O(\eps^{-\frac{1}{2}})$.
The proof above is similar to Lemma 2.8 in~\cite{huang2023coresets}.
Similarly, it is trivial to prove that $\Blc$ also satisfies the above inequality.
Since there are $O(\log(\eps^{-1}))$ non-empty blocks, the constraint on $\delta(B_{i,j})$ generates at most $\tilde{O}(\eps^{-\frac{1}{2}})$ buckets.
Thus Lemma~\ref{lemma:buckets_alg1} holds.
\end{proof}

\subsection{Proof of the upper bound in Theorem \ref{theorem:main_contri}}
\label{sec: proof of the upper bound 1d}
Now we show that our coreset $S$ obtained by Algorithm~\ref{alg:main_onedim} is an $\eps$-coreset of $P$.

In the following discussion, we define $S_I^{(c)}$ and $S_I^{\star}$ as follows.
Given a weighted set $S$ of total weight $n$ and a center $c \in \R$, let $S_I^{(c)}:=\arg\min_{S_I^{(c)}\subseteq S, \sum_{s \in S_I^{(c)}}w(s)=n-m}\cost(S_I^{(c)},c)$ be the set of inliers of $S$, where $w(s)$ represents the weight of $s$ in $S_I^{(c)}$ and is at most the weight of $s$ in $S$.
Moreover, let $S_I^{\star}$ represents $S_I^{(c^\star)}$, which exactly contains every $\mu(B)$ of each bucket $B$ in $P_I^\star$ with weight $|B|$.

First, we consider the case that $c \in \M$.
Note that regardless of the position of $c$, at most three buckets can induce the error: the bucket containing $c$, and the buckets containing the endpoints of $P_I^{(c)}$ on either side.
Actually, the coreset error in $P_M$ is already controlled by the vanilla coreset construction algorithm $\mathcal{A}$.
Thus, we only have to consider the cumulative error of the buckets that partially intersect with $P_I^{(c)}$ on either side.
We will show that this error is controlled by the carefully selected cumulative error bound of each inner block.
%
Moreover, we adapt the analysis for the vanilla case in \cite{huang2023small} to the robust case, as shown in Lemma \ref{lemma:error_alg1_inlier}.
This allows us to avoid considering the error caused by the misaligned outliers in $P$ and $S$, which still suffice to ensure that the coreset error is bounded.
The proof of Lemma \ref{lemma:error_alg1_inlier} is deferred to Section \ref{sec: Proof of Lemma error_alg1_inlier}.

\begin{lemma}[Error analysis for $c \in \M$]
\label{lemma:error_alg1_inlier}
When center $c \in (p_{m+1},p_{n-m})$, $
    |\cost^{(m)}(P,c)-\cost^{(m)}(S,c)| \leq \eps \cdot \cost^{(m)}(P,c).$
\end{lemma}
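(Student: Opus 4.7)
The plan is to prove both directions $\cost^{(m)}(S,c)\le(1+\eps)\cost^{(m)}(P,c)$ and $\cost^{(m)}(S,c)\ge(1-\eps)\cost^{(m)}(P,c)$ via explicit weight constructions that mirror the inlier structure across $P$ and $S$. The key structural fact I will exploit is that for $c\in(p_{m+1},p_{n-m})$ the inlier window $P_I^{(c)}$ is a contiguous subsequence of length $n-m$ that necessarily contains $P_M$, so outlier selection is only free on $P_L\cup P_R$. For the upper bound I will define a candidate $w'\in\mathcal{W}$ on $S=S_M\cup S_O$ by $w'(s)=w(s)$ for $s\in S_M$ and $w'(\mu(B))=|B\cap P_I^{(c)}|$ for each bucket $B$ represented in $S_O$. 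The total weight is $(n-2m)+m=n-m$, so $w'$ is a valid fractional outlier assignment; the reverse direction proceeds symmetrically by converting any optimal fractional outlier assignment on $S$ into an actual $(n-m)$-subset of $P$ through the same bucket bookkeeping.

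The resulting error decomposes as an $S_M$-term plus an $S_O$-term. The $S_M$-term is at most $\tfrac{\eps}{3}\cost(P_M,c)\le\tfrac{\eps}{3}\cost^{(m)}(P,c)$ by Theorem \ref{theorem:vallinaCoreset}. For the $S_O$-term, the crucial observation is that $c\notin B$ for every bucket $B\subseteq P_L\cup P_R$ (since $c\in P_M$ and $P_L,P_M,P_R$ are disjoint intervals in the sorted order), so Lemma \ref{lemma: Cumulative error}(case 2) forces the per-bucket error to vanish whenever $B$ lies entirely inside or entirely outside $P_I^{(c)}$. Only the at most two boundary buckets $B_L\subseteq P_L$ and $B_R\subseteq P_R$ straddling the inlier window contribute, and a direct calculation (rewriting $|B\cap P_I^{(c)}|\,\mu(B)-\sum_{p\in B\cap P_I^{(c)}}p=\sum_{p\in B\setminus P_I^{(c)}}(p-\mu(B))$) bounds each contribution by the cumulative error $\delta(B_L)$ or $\delta(B_R)$.

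The main obstacle is to bound $\delta(B_L)+\delta(B_R)$ by $\tfrac{2\eps}{3}\cost^{(m)}(P,c)$. I argue WLOG for the left boundary with $c<c^\star$. By Lemma \ref{lemma:location} the leftmost inlier $p_l$ of $P_I^{(c)}$ satisfies $\dist(p_l,c_L)\le 2\dist(c,c^\star)$; combined with $c,c^\star\in[c_L,c_R]$ (which follows from $P_M\subseteq P_I^\star\subseteq[c_L,c_R]$) this bounds $\dist(p_l,c_L)$ by $2r_{\max}$. A case analysis on which block of the decomposition \eqref{eq:B_outer}--\eqref{eq:B_inner_i} contains $p_l$ places $B_L$ inside some inner block $B_i^{(L)}$ with $2^i\eps r_{\max}\le\dist(p_l,c_L)<2^{i+1}\eps r_{\max}$; Line 5 of Algorithm \ref{alg:main_onedim} then ensures $\delta(B_L)\le 2^i\eps^2 n r_{\max}/288$, while Lemma \ref{lemma:location} yields $\dist(c,c^\star)\ge 2^{i-1}\eps r_{\max}$. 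The task thus reduces to showing $\cost^{(m)}(P,c)\ge\Omega(n\cdot\dist(c,c^\star))$.

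This last bound, the heart of the argument, follows from a displacement count on $P_M$. Since $c^\star$ is the $L_1$-median of $P_I^\star$, at least $(n-m)/2$ points of $P_I^\star$ lie on each side of $c^\star$; subtracting the at most $m$ members of $P_I^\star\setminus P_M$ leaves at least $(n-3m)/2\ge n/8$ points of $P_M$ on the side of $c^\star$ opposite to $c$ (using $n\ge 4m$), and each such point $p$ contributes $|p-c|\ge\dist(c,c^\star)$. Hence $\cost^{(m)}(P,c)\ge\cost(P_M,c)\ge\tfrac{n}{8}\dist(c,c^\star)$, and combining this with the inner-block bound on $\delta(B_L),\delta(B_R)$ and the $S_M$-term yields the claimed $\eps\cdot\cost^{(m)}(P,c)$ total for both directions.
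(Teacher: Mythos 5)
Your argument reproduces, in essence, the paper's ``Case 1'' ($\dist(c,c^\star)>\frac{\eps}{6}r_{\max}$): align the outlier counts of $S$ with those of $P$ bucket by bucket, reduce the error to $\delta(B_L)+\delta(B_R)$ for the two straddling buckets, use Lemma~\ref{lemma:location} to locate $B_L,B_R$ in an inner block of index $i$, and absorb $2^i\eps^2 nr_{\max}/288$ into $\cost^{(m)}(P,c)\ge\frac{n}{8}\dist(c,c^\star)\ge\frac{n}{16}2^{i}\eps r_{\max}$. That chain is fine, but it has a genuine gap: it silently assumes $i\ge 1$. If the boundary inlier $p_l$ falls in the innermost block $B_0^{(L)}$ (i.e.\ $\dist(p_l,c_L)<2\eps r_{\max}$), the block definition provides no lower bound on $\dist(p_l,c_L)$, so Lemma~\ref{lemma:location} gives no lower bound on $\dist(c,c^\star)$; indeed $P_I^{(c)}\neq P_I^\star$ can occur for $\dist(c,c^\star)$ arbitrarily small (an outlier sitting just outside $c_L$ swaps in after an infinitesimal shift of $c$). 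In that regime $\delta(B_L)$ can still be of order $\eps^2 nr_{\max}$ (the bucket may span the full width $2\eps r_{\max}$ of $B_0^{(L)}$), while your only lower bound $\frac{n}{8}\dist(c,c^\star)$ on the cost tends to zero, so the final comparison fails.

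The lemma is still true in that regime, but it needs a different cost lower bound, which is exactly the paper's ``Case 2'' ($\dist(c,c^\star)\le\frac{\eps}{6}r_{\max}$). The points of $B_L\cap P_I^{(c)}$ lie to the left of $c_L$, hence at distance at least $r_{\max}-\dist(c,c^\star)$ from $c$; writing $w_l:=|P_I^{(c)}\setminus P_I^\star|$ this gives $\cost^{(m)}(P,c)\ge w_l(r_{\max}-\dist(c,c^\star))=\Omega(w_l r_{\max})$, whereas the boundary-bucket error is at most $w_l\cdot \mathrm{len}(B_L)\le 2\eps w_l r_{\max}$ — i.e.\ the quantity that makes the error large (many misaligned inliers) simultaneously makes the cost large. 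The paper implements this by comparing both $\cost^{(m)}(P,c)$ and $\cost^{(m)}(S,c)$ to $\cost(P_I^\star,c)$ and $\cost(S_I^\star,c)$ (no bucket straddles $c_L$ or $c_R$ by construction of $B_0$) and paying only $2w_l\dist(c,c^\star)\le\frac{\eps}{3}w_l r_{\max}$ for the swap. You need to add this case (or the equivalent $w_l$-based bound) to close the argument; the rest of your proposal — the explicit weight construction, the reduction to $\delta(B_L)+\delta(B_R)$, and the $\frac{n}{8}\dist(c,c^\star)$ displacement count for large $\dist(c,c^\star)$ — matches the paper and is correct.
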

\noindent
Now we analyze the case that $c$ is not in $\M$.
Assume $P$ is divided into disjoint buckets $B_1,\ldots,B_q$, from left to right.
Fix a center $c\in \R$, for each bucket $B_i$ ($i\in[q]$), define $m_i:= |B_i \setminus P_I^{(c)}|$ and $m_i':= |B_i \setminus S_I^{(c)}|$, which represent the number of outliers in $B_i$ with respect to $c$.
Obviously we have $\sum_{i}m_i=\sum_{i}m_i'=m$.
The following lemma shows that the number of inliers in each bucket for $P_I^{(c)}$ and $S_I^{(c)}$ remains roughly consistent, which indicates that $\cost^{(m)}(P,c)$ and $\cost^{(m)}(S,c)$ increase almost equally.

\begin{lemma}[Total misalignment]
\label{lemma:misaligned outliers}
For any center $c\in \R$, $\sum\limits_{i \in [q]}|m_i-m_i'| \leq \frac{\eps n}{4}$.
\end{lemma}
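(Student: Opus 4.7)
My plan is to partition the $q$ buckets into those contained in $\M=\{p_{m+1},\dots,p_{n-m}\}$ and those contained in $\LL\cup\LR$, and handle them separately. For any bucket $B_j\subseteq \M$, every length-$(n-m)$ contiguous subsequence of $P$ contains all of $\M$, so each point of $\M$ is an inlier of $P$ w.r.t.\ every center $c$, yielding $m_j=0$. For $m_j'$, note $\mu(B_j)\in[p_{m+1},p_{n-m}]$: when $c\ge \mu(B_j)$, every $p\in \LL$ satisfies $p\le p_m<p_{m+1}\le \mu(B_j)\le c$, so $c-p>c-\mu(B_j)=|\mu(B_j)-c|$, and the same strict inequality holds for the $\LL$-bucket mean points (which lie in $[p_1,p_m]$). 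Hence $m$ units of weight in $S$ lie strictly farther from $c$ than $\mu(B_j)$, so the greedy optimal removal of weight $m$ never touches $\mu(B_j)$, giving $m_j'=0$; the symmetric argument using $\LR$ covers $c<\mu(B_j)$. Therefore inner buckets contribute nothing to $\sum_i|m_i-m_i'|$.

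\textbf{Step~2 (CDF closeness for outer buckets).} The remaining buckets $B_i\subseteq \LL\cup \LR$ each satisfy $N(B_i)\le \eps n/16$ by Lines~5--6 of Algorithm~\ref{alg:main_onedim}, and Line~8 only subdivides further. I would define the distance CDFs
\[
F_P(t):=|\{p\in \LL\cup \LR : |p-c|\le t\}|,\qquad F_S(t):=\sum_{\substack{B_i\subseteq \LL\cup \LR\\ |\mu(B_i)-c|\le t}}N(B_i).
\]
A bucket's contribution to $F_P(t)-F_S(t)$ vanishes unless the bucket straddles the level set $\{p:|p-c|=t\}$. In 1D, at most two outer buckets straddle at any $t$ (one containing $c-t$ and one containing $c+t$), so $|F_P(t)-F_S(t)|\le 2\cdot \eps n/16=\eps n/8$ for all $t\ge 0$.

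\textbf{Step~3 (Bounding the sum).} Let $t_P$ be the $m$-th largest value of $|p-c|$ over $\LL\cup \LR$, and $t_S$ the analogous weighted cutoff for the outer-bucket means, so $F_P(t_P)=F_S(t_S)=m$ up to partial boundary weight. The CDF bound implies the total weight of outer buckets whose mean lies strictly between $t_P$ and $t_S$ in distance to $c$ is at most $\eps n/8$. Using the 1D fact $|\mu(B_i)-c|\in[\min_{p\in B_i}|p-c|,\max_{p\in B_i}|p-c|]$ whenever $B_i$ lies entirely on one side of $c$, any outer bucket with $|m_i-m_i'|>0$ must be either (a)~a \emph{$P$-boundary bucket} straddling $c-t_P$ or $c+t_P$ (at most two, contributing at most $2\cdot \eps n/16=\eps n/8$), or (b)~a non-straddling bucket whose mean lies in the strip between $t_P$ and $t_S$ (contributing at most the strip weight $\eps n/8$). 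Summing yields $\sum_i|m_i-m_i'|\le \eps n/4$.

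The main technical obstacle will be the dichotomy in Step~3: one must carefully verify that any non-straddling outer bucket whose mean lies outside the strip $[\min(t_P,t_S),\max(t_P,t_S)]$ automatically satisfies $m_i=m_i'$, which follows from the mean-range inequality together with the greedy structure of the optimal weighted outlier removal. A single bucket containing $c$ (if any) is a minor edge case whose contribution is absorbed into the straddling-bucket bound of $\eps n/16$.
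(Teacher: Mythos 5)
Your argument is sound and reaches the stated bound, but by a genuinely different route than the paper. The paper's proof is built on the 1D pairing structure: because $P_I^{(c)}$ is a contiguous block of $n-m$ indices, exactly one of $p_j$ and $p_{j+n-m}$ is an inlier for each $j\in[m]$, and the proof counts the pairs whose two bucket means are ordered inconsistently with the points themselves, showing that every such pair must involve one of the two buckets containing the endpoints of $P_I^{(c)}$; this localizes all misalignment to those two buckets and yields $2(|B_L|+|B_R|)\le \eps n/4$. You instead prove a deterministic sup-norm bound $|F_P(t)-F_S(t)|\le \eps n/8$ between the distance CDFs of the outer points and of the outer bucket means, and then compare the two level-$m$ quantiles. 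This is, in effect, the ``$\eps$-approximation of the ball range space'' argument that the paper deploys only in the general-$d$ section (Lemma \ref{refined outlier gap}), here obtained deterministically from the cap $N(B_i)\le \eps n/16$ rather than from random sampling; it avoids the index-pairing trick entirely, which is special to 1D, and so unifies the 1D analysis with the high-dimensional one.

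Two spots in Step~3 need tightening to land exactly on $\eps n/4$. First, for the non-straddling buckets you should bound only the \emph{misaligned} weight in the strip, not the gross weight of strip buckets: taking $t_P$ to be the largest inlier distance so that $F_P(t_P)=m$, when $t_S>t_P$ the offending buckets are fully $P$-outlier buckets whose means land in $(t_P,t_S]$, and their total contribution is precisely their $S$-inlier weight, which equals $m-F_S(t_P)\le F_P(t_P)-F_S(t_P)\le \eps n/8$ (symmetrically with $S$-outlier weight when $t_S<t_P$); bounding instead by the total $N$-weight of all buckets with mean in the closed strip would also pick up the partially-assigned threshold bucket and overshoot $\eps n/4$. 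Second, the bucket whose convex hull contains $c$ is not an extra case to absorb: such a bucket must intersect $P_I^{(c)}$ (otherwise the contiguous inlier window would lie entirely on one side of it, contradicting that the inliers are the $n-m$ closest points), so it is either fully inlier---where only the upper half of the mean-range inequality is needed---or it is already one of the two straddling buckets. With these repairs the accounting closes at $2\cdot\frac{\eps n}{16}+\frac{\eps n}{8}=\frac{\eps n}{4}$.
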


\noindent
Let $\DM:= \sup_{c \in \R}\sum\limits_{i \in [q]}|m_i-m_i'|$, 
then we have $\DM \leq \frac{\eps n}{4}$.
For $c\notin \M$, we consider the derivative of the cost value with respect to $c$, and gives an upper bound of the induced error in Lemma \ref{lemma:error_alg1_outlier}.
Combined with the upper bound of $\DM$, we conclude that the induced error is $O(\eps n \cdot \dist(c,c^\star))$, which is bounded by $O(\eps)\cdot  \cost^{(m)}(P,c)$ obviously.
The main idea is similar to that of the proof of Theorem 2.1 in \cite{huang2023small}.
We defer the proof of Lemma \ref{lemma:misaligned outliers} and \ref{lemma:error_alg1_outlier} to Sec \ref{sec: misaligned outliers} and \ref{sec: Proof of error-1d-outliers}, respectively.


\begin{lemma}[Error analysis for $c\notin \M$]
\label{lemma:error_alg1_outlier}
When the center $c \leq p_{m+1}$ or $c \geq p_{n-m}$, 
$|\cost^{(m)}(P,c) - \cost^{(m)}(S,c)| \leq (\DM + \frac{\eps n}{8} ) \cdot \dist(c,\M)\leq \eps \cdot \cost^{(m)}(P,c).$
\end{lemma}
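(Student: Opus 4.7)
The plan is to reduce the error outside $\M$ to a one-dimensional integration argument, exploiting the piecewise-linear structure of $\cost^{(m)}(P,\cdot)$ together with the equality $\cost^{(m)}(P,p_{n-m})=\cost^{(m)}(S,p_{n-m})$ that Line~8 of Algorithm~\ref{alg:main_onedim} is designed to enforce. By symmetry, I would assume $c\geq p_{n-m}$.

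First, I would verify the boundary equality. Line~8 splits every bucket $B$ for which both $B\cap P_I^{(p_{n-m})}$ and $B\setminus P_I^{(p_{n-m})}$ are non-empty, so afterwards each bucket is entirely inlier or entirely outlier at $p_{n-m}$. Moreover, every point of $\M\cup \LL$ lies weakly left of $p_{n-m}$ while every point of $\LR$ lies strictly right, so each bucket sits on a single side of $p_{n-m}$; consequently $\sum_{p\in B}\dist(p,p_{n-m})=N(B)\cdot\dist(\mu(B),p_{n-m})$. Since outlier-bucket representatives have mean distance $>d^\star$ (the $(n-m)$-th smallest distance to $p_{n-m}$) and inlier-bucket ones have mean distance $\leq d^\star$, the optimal weighted-outlier assignment in $S$ agrees with the bucket partition, and summing over inlier buckets gives $\cost^{(m)}(P,p_{n-m})=\cost^{(m)}(S,p_{n-m})$.

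Next, I would analyze the derivative difference. Both $f_P(x):=\cost^{(m)}(P,x)$ and $f_S(x):=\cost^{(m)}(S,x)$ are continuous and piecewise-linear in $x$; at a non-transition point, $f_P'(x)$ equals the (weighted) number of inliers of $P$ strictly left of $x$ minus those strictly right, with an analogous formula for $f_S'(x)$ that treats each $\mu(B)$ as a weighted atom. Comparing bucket-by-bucket: a bucket $B_i$ lying entirely on one side of $x$ contributes $\pm(|B_i|-m_i)$ to $f_P'(x)$ and $\pm(N(B_i)-m_i')$ to $f_S'(x)$ with matching signs, so its net contribution to $|f_P'(x)-f_S'(x)|$ is at most $|m_i-m_i'|$; the unique bucket $B_x$ containing $x$ contributes at most $2N(B_x)$ in absolute value. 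Since Line~5 enforces $N(B_x)\leq \frac{\eps n}{16}$, this yields $|f_P'(x)-f_S'(x)|\leq \DM+\frac{\eps n}{8}$ for almost every $x\in [p_{n-m},c]$.

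Integrating this bound from $p_{n-m}$ to $c$ and using the boundary equality yields
\begin{align*}
|\cost^{(m)}(P,c)-\cost^{(m)}(S,c)|\;\leq\;(\DM+\tfrac{\eps n}{8})\cdot \dist(c,\M).
\end{align*}
Since $c\geq p_{n-m}$ and $n\geq 4m$, the $m$ farthest points from $c$ all lie in $\LL$, so $\cost^{(m)}(P,c)\geq \cost(\M,c)\geq |\M|\cdot \dist(c,\M)\geq \tfrac{n}{2}\dist(c,\M)$; combined with $\DM\leq \tfrac{\eps n}{4}$ from Lemma~\ref{lemma:misaligned outliers}, the right-hand side is at most $\tfrac{\eps n}{2}\dist(c,\M)\leq \eps\cdot \cost^{(m)}(P,c)$, as desired. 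I expect the main obstacle to be the bookkeeping in the derivative comparison, particularly for the straddling bucket $B_x$: its contribution cannot be controlled via $|m_i-m_i'|$ alone and forces explicit use of the size cap $N(B_x)\leq \frac{\eps n}{16}$ from Line~5, which is precisely the feature that makes the non-component-wise analysis go through. A secondary technical point is rigorously justifying the boundary equality in the weighted-coreset formalism of Definition~\ref{def: weighted cost function}, which relies on the monotone ordering of bucket representatives preserved by Line~8.
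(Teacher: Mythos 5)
Your proposal is correct and follows essentially the same route as the paper: enforce $\cost^{(m)}(P,p_{n-m})=\cost^{(m)}(S,p_{n-m})$ via Line~8, bound $|f_P'(x)-f_S'(x)|\le \DM+2N(B_x)\le \DM+\frac{\eps n}{8}$, integrate, and finish with $\cost^{(m)}(P,c)\ge (n-2m)\dist(c,\M)$. One small slip: your claim that the $m$ farthest points from $c$ all lie in $\LL$ need not hold (they may include points of $\LR$ far to the right of $c$), but the inequality $\cost^{(m)}(P,c)\ge\cost(\M,c)$ you draw from it is valid anyway since $\M\subseteq P_I^{(c)}$ for every center.
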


\noindent
Now we are ready to prove the upper bound in Theorem \ref{theorem:main_contri}.
\begin{proof}[Proof of Theorem \ref{theorem:main_contri} (upper bound)]

Given a dataset $P$ of size $n$, we apply Algorithm \ref{alg:main_onedim} to $P$ and obtain the output weighted set $S$.
By Lemma \ref{lemma:buckets_alg1}, Algorithm \ref{alg:main_onedim} divides $P$ into $\tilde{O}(\eps^{-\frac{1}{2}}+\frac{m}{n}\eps^{-1})$ buckets.
Note that $S$ contains only the mean point $\mu(B)$ of each bucket $B$.
Thus we have $|S| = \tilde{O}(\eps^{-\frac{1}{2}}+\frac{m}{n}\eps^{-1})$.
Combined with Lemma \ref{lemma:error_alg1_inlier} and \ref{lemma:error_alg1_outlier}, we conclude that $S$ is an $\eps$-coreset of $P$.

For the runtime, Line 3 cost $O(n)$ time to obtain the optimal center. 
Recall that $P_I^{\star}=\{p_i,\ldots,p_j\}$ is a continuous subsequence of $P$ and $c^\star=p_{\lfloor\frac{j+i}{2}\rfloor}$, since we only consider the one-dimensional case.
Thus, we can compute $P_I^{\star}$ and $c^{\star}$ in $O(n)$ time, since we can sequentially replace the leftmost point in the current inliers with the next point from the outliers, and the resulting cost difference can be computed in $O(1)$ time.
Lines 4-6 cost $O(n)$ time, since we can sequentially check whether the next point can be added to the current bucket, otherwise, we place it into a new bucket. Obviously Lines 7-8 also cost $O(n)$ time.
This completes the proof.  
\end{proof}

%
%

\subsection{Proof of Lemma \ref{lemma:error_alg1_inlier}: Error analysis for $c \in \M$}
\begin{proof}[Proof of Lemma~\ref{lemma:error_alg1_inlier}]
\label{sec: Proof of Lemma error_alg1_inlier}
Let $L_O := \LL \cap L^\star$, $R_O := \LR \cap L^\star$.
Recall that $L^\star$ denote the set of outliers w.r.t. the optimal center $c^\star$.
W.L.O.G, assume $c > c^\star$, thus $P_I^{(c)} \cap L_O = \emptyset$.
Next, we analyze the induced error in two cases, based on the scale of $\dist(c,c^\star)$.
When $\eps r_{\max}> \dist(c,c^\star)$, the center $c$ is sufficiently close to $c^\star$, so $\cost^{(m)}(P,c)\approx\cost^{(m)}(P,c^\star)$, resulting in a small error. When $\eps r_{\max}\le \dist(c,c^\star)$, we have $\cost^{(m)}(P,c)>\Omega(n\cdot\dist(c,c^\star))>\Omega(\eps n r_{\max})$,  which matches the error from any outlier-misaligned bucket $B$, whose error is at most $O(\eps n) (\dist(c,c^\star)+\eps r_{\max})$ by Lemma \ref{lemma:location}. 

\paragraph{Case 1: $\dist(c,c^\star)> \frac{\eps}{6}\cdot r_{\max}$.}

If $P_I^{(c)} \cap R_O = \emptyset$, then we have $P_I^{(c)}=P_I^\star$, $S_I^{(c)}=S_I^\star$.
In this case, we directly have $\cost^{(m)}(S,c) \in (1\pm\eps)\cost^{(m)}(P,c)$ by Theorem \ref{theorem:vallinaCoreset}.

Next we assume $P_I^{(c)} \cap R_O \neq \emptyset$.
Same as the above lemma, denote the leftmost and rightmost buckets intersecting $P_I^{(c)}$ as $B_L$, $B_R$, respectively. 
Recall that the coreset constructed by $\mathcal{A}(\M,\frac{\eps}{3})$ is $S_M$.
By Theorem~\ref{theorem:vallinaCoreset}, Algorithm $\mathcal{A}$ ensures that  
\begin{align*}
|\cost(S_M,c)-\cost(\M,c)| \leq \frac{\eps}{3} \cdot \cost(\M,c)< \frac{\eps}{3} \cdot\cost^{(m)}(P,c).
\end{align*}

Next, we bound the cumulative error of $B_L$ and $B_R$.
Define $\gamma := \max(0,\lceil \log(\frac{\dist(c,c^\star)}{2\eps\cdot r_{\max}})\rceil)$.
Obviously $\frac{\eps}{6}r_{\max}\leq \dist(c,c^\star)<r_{\max}$, thus $\gamma \in [0,\lceil \log(\eps^{-1})\rceil-1]$.
By the definition of $\gamma$, we have $\dist(c,c^\star) \leq 2^{\gamma+1}\eps r_{\max}$.
Denote the rightmost point of $B_R$ as $p_r$, then it follows that 
\begin{align*}
\dist(p_r,\CR) &\leq\ 2\dist(c,c^\star) &(\text{Lemma }\ref{lemma:location})\\
               &\leq\ 2^{\gamma+2}\eps r_{\max}. &(\dist(c,c^\star) \leq 2^{\gamma+1}\eps r_{\max})
\end{align*}

Recall that block $B_i^{(R)} = \{\, p \in \LR \mid  p>\CR,2^i \eps r_{\max} \leq \dist(p, \CR) < 2^{i+1} \eps r_{\max} \}$.
So the block $B_i$ which contains bucket $B_R$ satisfies $i\leq \gamma+1$. 
By Line 5 in Algorithm \ref{alg:main_onedim}, any bucket $B_{i,j}$ in inner block $B_i$ satisfies $\delta(B_{i,j}) \leq \frac{2^i\cdot\eps^2nr_{\max}}{288}$.
Thus,
\begin{align*}
\delta(B_R) \leq \frac{2^{\gamma+1}\cdot\eps^2nr_{\max}}{288}.
\end{align*}
Similarly,
\begin{align*}
\delta(B_L) \leq \frac{2^{\gamma+1}\cdot\eps^2nr_{\max}}{s288}.
\end{align*}

Note that at least $\lfloor\frac{n-m}{2}\rfloor$ points are on the left of $c^\star$, among which there are at least $(\lfloor\frac{n-m}{2}\rfloor - m)$ inliers w.r.t center $c$. 
Moreover, when $\gamma = 0$, we simply have $\dist(c,c^\star) > \frac{2^\gamma\eps r_{\max}}{6}$ by the assumption;
when $\gamma > 0$, we have $\dist(c,c^\star) > 2^\gamma\eps r_{\max}$ by the definition of $\gamma$.
Thus,
\begin{equation}
\label{align:case1.1}
\begin{aligned}
\cost^{(m)}(P,c) &\geq (\lfloor\frac{n-m}{2}\rfloor - m) \cdot \dist(c,c^\star) \\
&> \frac{n}{8}\cdot \dist(c,c^\star) &(n\geq 4m)\\
&> \frac{2^\gamma\eps nr_{\max}}{48}. &(\dist(c,c^\star) > \frac{2^\gamma\eps r_{\max}}{6})
\end{aligned}    
\end{equation}

Now we are ready to prove our goal $|\cost^{(m)}(P,c)-\sum_{j\in [q]}(|B_j|-m_j)\cdot \dist(\mu(B_j),c)| \le \eps \cdot \cost^{(m)}(P,c)$.
Recall that $m_j:= |B_j \setminus P_I^{(c)}|$ for each bucket $B_j$.
Thus for each bucket $B_j$ that is between $B_L$ and $B_R$, we have $m_j = 0$;
for $B_L$ and $B_R$, we have $|B_L|-m_L = |B_L \cap P_I^{(c)}|$ and $|B_R|-m_R = |B_R \cap P_I^{(c)}|$.
Then we have
\begin{align*}
&|\cost^{(m)}(P,c) - \sum_{j\in [q]}(|B_j|-m_j)\cdot \dist(\mu(B_j),c)| \\
\leq \ & |\cost(S_M,c) - \cost(\M,c)| + \sum_{p \in B_L \cap P_I^{(c)}}\dist(p,\mu(B_L)) + \sum_{p \in B_R \cap P_I^{(c)}}\dist(p,\mu(B_R)) \\
&\qquad\qquad\qquad\qquad\qquad\qquad\qquad\qquad\qquad\qquad\qquad\qquad\qquad\qquad\qquad(\text{Triangle Inequality})\\
\leq \ & |\cost(S_M,c) - \cost(\M,c)| + \delta(B_L) + \delta(B_R) \qquad\qquad\qquad\qquad\qquad\quad(\text{Definition of } \delta(B))\\
< \ & \frac{\eps}{3} \cdot\cost^{(m)}(P,c) + \frac{2^{\gamma}\cdot\eps^2nr_{\max}}{72}\\
\leq \ & \eps\cdot \cost^{(m)}(P,c). \qquad\qquad\qquad\qquad\qquad\qquad\qquad\qquad\qquad\qquad\qquad\qquad\quad(\text{Inequality }(\ref{align:case1.1}))
\end{align*}

Thus by the definition of $S_{I}^{(c)}$,
\begin{align*}
\cost^{(m)}(S,c) \leq \sum_{j\in [q]}(|B_j|-m_j)\cdot \dist(\mu(B_j),c) < (1+\eps)\cost^{(m)}(P,c).
\end{align*}

Moreover, it is easy to prove that the rightmost block $B_{i'}$ intersecting with $S_I^{(c)}$ still satisfies $i' \leq \gamma + 1 $.
Thus, similarly to the previous discussion, we have:
\begin{align*}
\cost^{(m)}(P,c) < (1+\eps)\cost^{(m)}(S,c).
\end{align*}

Thus $\cost^{(m)}(S,c) \in (1\pm\eps)\cost^{(m)}(P,c)$.

\paragraph{Case 2: $\dist(c,c^\star) \leq \frac{\eps}{6}\cdot r_{\max}$.}

Let $w_l:=\sum_{p\in P_I^{(c)} \setminus P_I^\star}w(p)$.
Consider the points in $P_I^{(c)} \setminus P_I^\star$, we have: 
\begin{align}
\label{align:case2.1}
\cost^{(m)}(P,c) > w_l(r_{\max}-\dist(c,c^\star)) \geq (1-\frac{\eps}{6})w_lr_{\max} > \frac{5}{6}w_lr_{\max}.
\end{align}
Thus,
\begin{equation}
\label{ineq:case2.2}    
\begin{aligned}
\cost^{(m)}(P,c) =\ & \cost(P_I^{(c)} \cap P_I^\star,c) + \cost(P_I^{(c)} \setminus P_I^\star,c) \\
\geq\ & \cost(P_I^\star,c) - 2w_l \cdot \dist(c,c^\star) &(\text{Definition of } w_l)\\
\geq\ & \cost(P_I^\star,c) - \frac{\eps}{3}w_l r_{\max} \\
>\ & \cost(P_I^\star,c) - \frac{2\eps}{5}\cdot \cost^{(m)}(P,c) 
    &(\text{Inequality } (\ref{align:case2.1})).
\end{aligned}
\end{equation}

By the definition of $\cost^{(m)}(P,c)$, we have
\begin{align}
\label{align:case2.3}
\cost^{(m)}(P,c) \leq \cost(P_I^\star,c).
\end{align}
Similarly, we have 
\begin{align}
\label{align:case2.4}
\cost(S_I^\star,c)  - \frac{2\eps}{5}\cdot \cost^{(m)}(S,c) \leq \cost^{(m)}(S,c) < \cost(S_I^\star,c).
\end{align}

By Theorem \ref{theorem:vallinaCoreset}, Algorithm $\mathcal{A}$ ensures that $|\cost(S_M,c)-\cost(\M,c)|<\frac{\eps}{3}\cdot \cost(\M,c)$.
According to the definition of the block $\Blc$, we can obtain that there is no bucket that partially intersects with $P_I^\star$.
Thus,
\begin{equation}
\label{eq:case2.5}
\begin{aligned}
|\cost(S_I^\star,c) - \cost(P_I^\star,c)| =\ &
|\cost(S_M,c)-\cost(\M,c)|\\
<\ &\frac{\eps}{3}\cdot\cost(\M,c) \\
<\ & \frac{\eps}{3}\cdot\cost(P_I^\star,c).
\end{aligned}
\end{equation}
Combining the above equations, we have
\begin{align*}
\cost^{(m)}(S,c) <\ & \cost(S_I^\star,c) &(\text{Inequality }(\ref{align:case2.4}))\\
<\ &(1+\frac{\eps}{3})\cost(P_I^\star,c) &(\text{Inequality }(\ref{eq:case2.5}))\\
<\ &(1+\frac{\eps}{3})(1+\frac{2\eps}{5})\cost^{(m)}(P,c) &(\text{Inequality }(\ref{ineq:case2.2}))\\
<\ &(1+\eps)\cost^{(m)}(P,c).
\end{align*}

Similarly we have $\cost^{(m)}(P,c) < (1+\eps) \cost^{(m)}(S,c)$, thus $\cost^{(m)}(S,c) \in (1\pm\eps)\cdot \cost^{(m)}(P,c)$.
\end{proof}

\subsection{Proof of Lemma \ref{lemma:misaligned outliers}: Number of misaligned outliers}
\label{sec: misaligned outliers}
\begin{proof}[Proof of Lemma \ref{lemma:misaligned outliers}]
%
Note that for any center $c$, $P_I^{(c)}$ is a continuous subset of size $n-m$.
Let $P_I^{(c)}=\{p_l,\ldots,p_{l+n-m-1}\}$.
Assume $p_l \in B_L$ and $p_{l+n-m-1} \in B_R$.
%
%
Consider a point $p_j \in \LL$ that satisfies $\dist(p_j,c) < \dist(p_{j+n-m},c)$ and $\dist(\mu(B),c) > \dist(\mu(B'),c)$, where $p_j \in B$ and $p_{j+n-m} \in B'$.
Next we show that the total number of points satisfying the above conditions is at most $O(\eps n)$, which also provides an upper bound for $\sum\limits_{i}|m_i-m_i'|$.

In this case $p_j \in P_I^{(c)}$.
By the definition of $P_I^{(c)}$, we have $j\geq l$.
If $B \neq B_L$ and $B' \neq B_R$, then by $j\geq l$, we have $\mu(B) > \max(B_L)$ and $\mu(B') > \max(B_R)$.
Then we have 
\begin{align*}
    \dist(p_j,c) &\geq 
\dist(\max(B_L),c) > \dist(\mu(B),c) > \dist(\mu(B'),c) > \dist(\max(B_R),c) \\ &\geq \dist(p_{j+n-m},c),
\end{align*}
which contradicts the inequality $\dist(p_j,c) < \dist(p_{j+n-m},c)$.
Thus, either $B = B_L$ or $B' = B_R$ holds.
This indicates that the total number of points satisfying the above conditions is at most $|B_L|+|B_R|$.
Formally, denote $B^{(i)}$ as the bucket containing $p_i$, we have 
\begin{equation}
\label{eq:misaligned_1d_1}
\mathbb{I}_1 = \sum\limits_{j \in [m]} \left|\mathbb{I}(p_j \in P_I^{(c)},\dist(\mu(B^{(j)}),c) > \dist(\mu(B^{(j+n-m)}),c))\right| \leq |B_L|+|B_R|.
\end{equation}

Symmetrically, consider the points in $\LR$, we have \begin{equation}
\label{eq:misaligned_1d_2}
\sum\limits_{j \in [n-m+1,n]} \left|\mathbb{I}(p_j \in P_I^{(c)},\dist(\mu(B^{(j-n+m)}),c) < \dist(\mu(B^{(j)}),c))\right| \leq |B_L|+|B_R|.
\end{equation}

Since $|P_I^{(c)}| = n-m$, for any $i\in [1,m]$, exactly one point in $\{p_i,p_{i+n-m}\}$ is in $P_I^{(c)}$.
This means that $p_i \in P_I^{(c)}$ if and only if $p_{i+n-m}\notin P_I^{(c)}$.
Thus, Inequality (\ref{eq:misaligned_1d_2}) is equivalent to the following form:
\begin{equation}
\label{eq:misaligned_1d_3}
\mathbb{I}_2 = \sum\limits_{j \in [m]} \left|\mathbb{I}(p_j \notin P_I^{(c)},\dist(\mu(B^{(j)}),c) < \dist(\mu(B^{(j+n-m)}),c))\right| \leq |B_L|+|B_R|.
\end{equation}

Moreover, we'll show that $\mathbb{I}_1$ and $\mathbb{I}_2$ cannot both be greater than 0.
Suppose $\mathbb{I}_1>0$.
In this case, there exists a point $p_j \in \LL \cap P_I^{(c)}$ such that $\dist(\mu(B^{(j)}),c) > \dist(\mu(B^{(j+n-m)}),c)$.
Consider a point $p_{j'} \in \LL \setminus P_I^{(c)}$, obviously $j' < j$.
Then we have 
\begin{equation*}
\dist(\mu(B^{(j')}),c) \geq \dist(\mu(B^{(j)}),c)  > \dist(\mu(B^{(j+n-m)}),c) \geq \dist(\mu(B^{(j'+n-m)}),c).
\end{equation*}
Thus $\mathbb{I}_2=0$.
Conversely, it still holds due to symmetry.
Based on the above analysis, we have 
\begin{equation}
\label{eq:misaligned_1d}
\mathbb{I}_1 + \mathbb{I}_2 \leq |B_L| + |B_R|.
\end{equation}

By $p_i \in P_I^{(c)} \iff p_{i+n-m} \notin P_I^{(c)}$, it is easy to prove that $\sum\limits_{i}|m_i-m_i'|$ in $\LL$ and $\LR$ are exactly the same.
Then it follows that
\begin{align*}
\sum\limits_{i \in [q]}|m_i-m_i'| &=
2\sum\limits_{i \in [q],B_i \subseteq \LL}|m_i-m_i'|\\
&= 2\sum\limits_{i\in [q],B_i \subseteq \LL}\left|\sum\limits_{p_j \in B_i}\mathbb{I}(p_j \in P_I^{(c)}) - \mathbb{I}(\dist(\mu(B_i),c) < \dist(\mu(B^{(j+n-m)}),c))\right| \\
&\leq 2\sum\limits_{j \in [m]} \left|\mathbb{I}(p_j \in P_I^{(c)}) - \mathbb{I}(\dist(\mu(B^{(j)}),c) < \dist(\mu(B^{(j+n-m)}),c))\right|   \\
&\leq 2(\mathbb{I}_1 + \mathbb{I}_2) \\
&\leq 2(|B_L|+|B_R|) \qquad\qquad\qquad\qquad\qquad\qquad\qquad\qquad\qquad\quad(\text{Inequality } (\ref{eq:misaligned_1d}))\\
&\leq \frac{\eps n}{4}. \qquad\qquad\qquad\qquad\qquad\qquad\qquad\qquad\qquad\qquad\qquad\qquad\qquad(|B| \leq \frac{\eps n}{16})
\end{align*}

\end{proof}

\subsection{Proof of Lemma \ref{lemma:error_alg1_outlier}: Error analysis for $c \notin \M$}
\label{sec: Proof of error-1d-outliers}
\begin{proof}[Proof of Lemma \ref{lemma:error_alg1_outlier}]
W.L.O.G., we assume that $c \geq p_{n-m}$.
Recall that $P$ is divided into buckets $B_1,\ldots,B_q$, from left to right.
Suppose $B_{t}$ ($t \in [q]$) contains the center $c$, i.e., $\min_{p \in B_{t}}p \leq c \leq \max_{p \in B_{t}}p$.
We define function $f_P(c) = \cost^{(m)}(P,c)$ and $f_S(c) = \cost^{(m)}(S,c)$ for every $c \in \R$.

Note that $f_P(c) = f_P(p_{n-m}) + \int_{p_{n-m}}^{c} f_P'(x)$ and $f_S(c) = f_S(p_{n-m}) + \int_{p_{n-m}}^{c} f_S'(x)$ holds for any $c>p_{n-m}$.
Next, we first show that $f_P(p_{n-m}) = f_S(p_{n-m})$ by Line 8 in Algorithm \ref{alg:main_onedim}.
Then we verify that $|f'_P(c) - f_S'(c)|$ is bounded by $\sum\limits_{i \in [q]}|m_i-m_i'|$, which suffices to prove the lemma.
The main idea is similar to that of the proof of Theorem 2.1 in \cite{huang2023small}.

By Line 8 in Algorithm \ref{alg:main_onedim}, there is no bucket that partially intersect with $P_I^{(p_{n-m})}$.
In this case, for each bucket $B \subset P_I^{(p_{n-m})}$, $\dist(\mu(B),c) \leq \max_{p \in P_I^{(p_{n-m})}} \dist(p,c)$; for each bucket $B \not\subset P_I^{(p_{n-m})}$, $\dist(\mu(B),c) > \max_{p \in P_I^{(p_{n-m})}} \dist(p,c)$.
This indicates that $S_I^{(p_{n-m})}$ contains $\mu(B)$ with weight $|B|$ for each $B \subset P_I^{(p_{n-m})}$, which maintains consistent weights with $P_I^{(p_{n-m})}$.
Then by Lemma~\ref{lemma: Cumulative error}, we have $f_P(p_{n-m}) = f_S(p_{n-m})$.
Note that $f_p(c)$ is a linear function of $c$, and we have
\begin{align*}
f'_P(c) &= |\{p \mid p \in P_{I}^{(c)},p\leq c\}| - |\{p \mid p \in P_{I}^{(c)},p>c\}| \\
&= \sum_{i < t} (|B_i|-m_i) + |B_t| \cap (-\infty,c]| - |B_t \cap (c,\infty)| - \sum_{i > t} (|B_i|-m_i).
\end{align*}

Similarly, when $\mu(B_t) \le c$, we have $f'_S(c) = \sum_{i < t} (|B_i|-m_i') + |B_t| - \sum_{i > t} (|B_i|-m_i')$;
when $\mu(B_t) > c$, we have $f'_S(c) = \sum_{i < t} (|B_i|-m_i') - |B_t| - \sum_{i > t} (|B_i|-m_i')$.
Thus
\begin{align*}
|f'_P(c) - f'_S(c)| &\leq \sum_{i \in [q]}|m_i-m_i'|+2|B_t| \\
&\leq \DM + 2|B_t|  &(\text{Definition of } \DM) \\
&\leq \DM + \frac{\eps n}{8}. &(|B| \leq \frac{\eps n}{16})
\end{align*}

Then by $f_P(c)= f_P(p_{n-m}) +  \int_{p_{n-m}}^{c} f_P'(x) \,dx $ and $f_S(c)= f_S(p_{n-m}) +  \int_{p_{n-m}}^{c} f_S'(x) \,dx $,
we have
\begin{align*}
|f_P(c) - f_S(c)| &= \left|f_P(p_{n-m}) +  \int_{p_{n-m}}^{c} f_P'(x) \, dx - f_S(p_{n-m}) -  \int_{p_{n-m}}^{c} f_S'(x) \, dx \right| \\
&= \left|\int_{p_{n-m}}^{c} f_P'(x)-f_S'(x) \, dx\right| \qquad\qquad\qquad\qquad \left(f_P(p_{n-m}) = f_S(p_{n-m})\right)\\
&\leq \int_{p_{n-m}}^{c} |f_P'(x)-f_S'(x)| \, dx\\
&\leq \int_{p_{n-m}}^{c} (\DM + \frac{\eps n}{8}) \, dx \\
&= (c - p_{n-m})(\frac{\eps n}{8} + \DM) \\
&=(\frac{\eps n}{8} + \DM) \cdot \dist(c,\M) .  \qquad\qquad\qquad\qquad\qquad\qquad\quad (\text{Definition of }\M)
\end{align*}

Moreover, since $|\LR|=m$, at least $n-2m$ inliers w.r.t. $c$ are on the left of $p_{n-m}$.
These points satisfy that $\dist(p,c) \geq \dist(p_{n-m},c)$, which implies that $\cost^{(m)}(P,c) \geq (n-2m)\cdot\dist(p_{n-m},c)$.
Then we have 
\begin{align*}
|f_P(c) - f_S(c)| &\leq (c - p_{n-m})(\frac{\eps n}{8} + \DM) \\
&\leq (\frac{3\eps n}{8}) (c - p_{n-m}) &(\text{Lemma } \ref{lemma:misaligned outliers})\\
&< \eps (n-2m) (c - p_{n-m}) &(n \geq 4m) \\
&\leq \eps \cdot \cost^{(m)}(P,c).
\end{align*}
This completes the proof.
\end{proof}

\subsection{Proof of the lower bound in Theorem \ref{theorem:main_contri}}
\label{sec:lower bound for d=1}
Next we show that for $n\geq4m$, the size lower bound of $\eps$-coreset for robust 1D geometric median is $\Omega{(\eps^{-\frac{1}{2}}+\frac{m}{n}\eps^{-1})}$.
This lower bound matches the upper bound in the above discussion, which completes the proof of Theorem \ref{theorem:main_contri}.
In the following discussion, we assume that the size of dataset $P$ is  sufficiently large such that $\eps n>1$, which holds in nearly all practical scenarios.

\begin{figure}[!t]
    \centering
    \includegraphics[width=\textwidth]{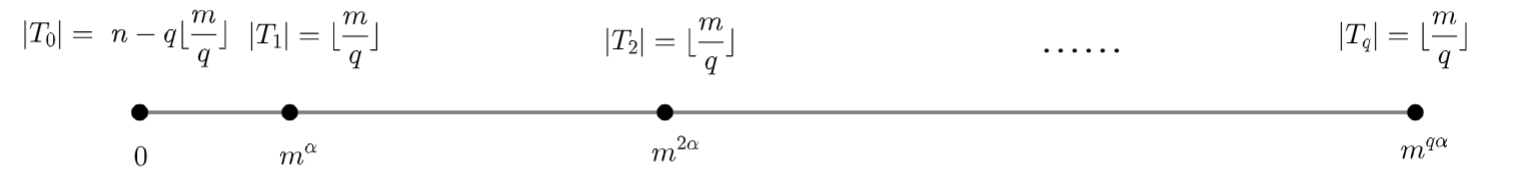}
    \caption{A case for demonstrating the coreset lower bound for \kzCd. $T_i$ contains $\lfloor\frac{m}{q}\rfloor$ points where each point $p \in T_i$ satisfies $p=m^{i\alpha}$. $T_0$ contains the remaining points where each point $p \in T_0$ satisfies $p=0$.  }
    \label{fig:onedim_lowerbound}
\end{figure}

\begin{proof}[Proof of Theorem \ref{theorem:main_contri} (lower bound)]
For vanilla 1D geometric median, \cite{Feldman2011} shows that the size lower bound of $\eps$-coreset is $\Omega{(\eps^{-\frac{1}{2}})}$, which is obviously also the coreset size lower bound for robust 1D geometric median.
Thus, it remains to show the coreset size is $\Omega{(\frac{m}{n}\eps^{-1})}$ when $\frac{m}{n} > \eps^{\frac{1}{2}}$.

We first construct the dataset $P$ of size $n$.
Let $q = \lfloor \frac{m}{2n\eps} \rfloor$.
The dataset $P$ is a union of $1+q$ disjoint sets $\{T_0,T_1,\ldots,T_{q}\}$.
For each $i \in \{1,\ldots,q\}$, $T_i$ contains $\lfloor\frac{m}{q}\rfloor$ points, and every point $p \in T_i$ satisfies $p=m^{i\alpha}$, where $\alpha = 2 + \log_{m}(\eps^{-2})$.
$T_0$ contains $n - q\lfloor\frac{m}{q}\rfloor$ points where each point $p \in T_0$ satisfies $p=0$. 
Correspondingly, define $q$ disjoint intervals 
$\{I_1,\ldots,I_{q}\}$, where $I_i=\left[m^{(i-1)\alpha+1},m^{i\alpha + 1}\right]$.

Suppose $S$ is a $\eps$-coreset of $P$ with size $|S|<q$.
Then by the pigeonhole principle, there exists an interval $I_j$ ($j \in [1,q]$) such that $S$ does not include any points located in $I_j$.
This implies that for any point $p \in S$, we have $p \leq m^{(j-1)\alpha+1}$ or $p \geq m^{j\alpha + 1}$. 
Fix the center $c=m^{j\alpha}$, then we have
\begin{align*}
\cost^{(m)}(S,c) &\geq (n-m)\cdot \min_{p \in S} \dist(p,c) \\
&\geq (n-m)\cdot(m^{j\alpha}-m^{(j-1)\alpha+1}) &(S \cap I_j = \emptyset)\\
&> [(1-m^{1-\alpha})n - m]m^{j\alpha} \\
&> [(1 -\eps^{2})n - m]m^{j\alpha} &(\alpha = 2 + \log_{m}(\eps^{-2}))\\
&> [(1-\eps -2\eps^2)n +\eps n - m]m^{j\alpha} \\
&> (1+\eps)[(1-2\eps)n - (m-1)]m^{j\alpha}. &(\eps > \frac{1}{n})
\end{align*}
Since $n\geq 4m$, we have $|T_0| =  n - q \cdot \lfloor\frac{m}{q}\rfloor \geq n-m$.
Consider the points in $T_0$  and $T_j$ as inliers, we have 
\begin{align*}
\cost^{(m)}(P,c) &\leq \cost(T_j,c) + (n- m - |T_j|)m^{j\alpha} \\
&= (n- m - \lfloor\frac{m}{q}\rfloor)m^{j\alpha} &(\text{Definition of }T_j)\\
&\leq (n - m -\frac{m}{q} + 1)m^{j\alpha} \\
&\leq [(1-2\eps)n - (m-1)]m^{j\alpha}. &(\text{Definition of }q)
\end{align*}
It follows that $\cost^{(m)}(S,c) > (1+\eps)\cost^{(m)}(P,c)$, thus $S$ is not a $\eps$-coreset of $P$, which leads to a contradiction.
This implies that any $\eps$-coreset of $P$ contains $\Omega{(\frac{m}{n}\eps^{-1})}$ points when $\frac{m}{n} > \eps^{\frac{1}{2}}$.
Considering the discussion above, the lower bound of the coreset size is $\Omega{(\eps^{-\frac{1}{2}}+\frac{m}{n}\eps^{-1})}$.
    
\end{proof}

Note that the dataset $P$ we construct is a multiset, which is slightly different from the definition. However, this does not affect the proof, because we can move each point by a sufficiently small and distinct distance, making the cost value almost unchanged.

When $\eps n<1$, we can show that the coreset size is $\Omega{(\eps^{-\frac{1}{2}}+m)}$ by the above discussion. 
Moreover, consider a trivial method that applies algorithm $\mathcal{A}$ on $\M$ and keeps all points not in $\M$.
It's easy to prove that this method constructs an $\eps$-coreset of the original dataset under the assumption.
In this case, the coreset size is $\tilde{O}{(\eps^{-\frac{1}{2}}+m)}$, which matches the above lower bound.

\section{Improved coreset sizes for general $d\geq 1$}
\label{sec:high}
We present Algorithm \ref{alg3} for Theorem \ref{high 1-median theorem_contri}. 
In Line 1, we construct $L^\star$ as the set of outliers of $P$  w.r.t. $c^\star$ and $P_I^\star = P- L^\star$ as the set of inliers.
We construct coresets for $P_I^\star$ and $L^\star$ separately.
In Line 2, we take a uniform sample $S_O$ from $L^\star$ as the coreset of $L^\star$.
This step is the key for eliminating the $O(m)$ dependency in the coreset size.
In Line 3, we use the following theorem by \cite{huang2023coresets} to construct a coreset $S_I$ for $ P_I^\star $.
The coreset $S$ is the union of $S_O$ and $S_I$ (Line 5).
In Section \ref{sec:generalization}, we show how to generalize this algorithm to robust \kzC\ (Algorithm \ref{alg: kmedian}).

\begin{algorithm}[h]  
\caption{Coreset Construction for General $d$}
\label{alg3}   
{\bfseries Input:} A dataset $P\subset \R^d$, $\eps\in (0,1)$ and an $O(1)$-approximate center $c^\star\in \R^d$ \\
{\bfseries Output:} An $\eps$-coreset $S$
\begin{algorithmic}[1] 
\STATE $L^\star\leftarrow \arg \min_{L: |L|=m}\cost(P-L,c^\star), \ P_I^\star \leftarrow P-L^\star$
\STATE Uniformly sample $S_O\subseteq L^\star$ of size $\tilde{O}(\eps^{-2} \min\left\{\eps^{-2}, d\right\})$. Set $\forall p\in S_O$, $w_{O}(p)\leftarrow \frac{ m }{|S_O|}$.
\STATE Construct $(S_I,w_I)\leftarrow \mathcal{A}_d(P_I^\star)$ by Theorem \ref{theorem: coreset in  P_I}.
\STATE For any $p\in S_O$, define $w(p)=w_O(p)$ and for any $p\in S_I$, define $w(p)=w_I(p)$.
\STATE Return $S\leftarrow S_O\cup S_I$ and $w$;  
\end{algorithmic}      
\end{algorithm}

\begin{theorem}[Restatement of corollary 5.4 in \cite{huang2023coresets}]
\label{theorem: coreset in  P_I}
There exists a randomized algorithm $\mathcal{A}_d$ that in $O(nd)$ time constructs a weighted subset $S_I\subseteq  P_I^\star $ of size $\tilde{O}(\eps^{-2} \min\left\{\eps^{-2}, d\right\})$, such that for every dataset $P_O$ of size $m$, every integer $0\le t \le m$ and every center $c\in \R^d$, $
    |\cost^{(t)}(P_O\cup P_I^\star ,c)-\cost^{(t)}(P_O\cup S_I,c)|
        \le \eps \cdot \cost^{(t)}(P_O\cup P_I^\star,c)+ 2\eps\cdot \cost( P_I^\star ,c^\star)$.
\end{theorem}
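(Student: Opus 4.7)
The plan is to construct $S_I$ via sensitivity-based importance sampling, tuned so that not only the total cost $\cost(P_I^\star, c)$ but also the \emph{restricted} cost $\cost(P_I^\star \cap B, c)$ for every Euclidean ball $B$ is preserved within a $(1\pm\eps)$ factor, up to an additive slack of order $\eps \cdot \cost(P_I^\star, c^\star)$. Starting from the Feldman--Langberg framework, use the provided $O(1)$-approximate median $c^\star$ to assign each $p \in P_I^\star$ an upper-bound sensitivity $\tilde s(p) = O\bigl(\dist(p,c^\star)/\cost(P_I^\star, c^\star) + 1/|P_I^\star|\bigr)$, whose total mass is $O(1)$. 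Importance-sample $N = \tilde O(\eps^{-2}\min\{\eps^{-2}, d\})$ points with probabilities proportional to $\tilde s(p)$ and reweight them inversely; this defines $(S_I, w_I)$ in $O(nd)$ time, and by construction $w_I(S_I) = |P_I^\star|$ in expectation (which can be enforced exactly by a final rescaling).

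The crucial analytic step is to upgrade the single-center Feldman--Langberg guarantee to a uniform bound over the ball range space. Specifically, I aim to show that with high probability, for every $c \in \R^d$ and every radius $r \geq 0$, both the weighted count $w_I(S_I \cap B(c,r))$ and the weighted cost $\sum_{p \in S_I \cap B(c,r)} w_I(p)\,\dist(p,c)$ approximate $|P_I^\star \cap B(c,r)|$ and $\cost(P_I^\star \cap B(c,r), c)$, respectively, with multiplicative error $\eps$ plus additive slack $\eps \cdot \cost(P_I^\star,c^\star)$ (and $\eps m$ for the count). This follows from a standard chaining/$\eps$-net argument over the range space of weighted balls, whose VC dimension is $O(d)$; when $d$ is large, a clustering-tailored Johnson--Lindenstrauss dimension reduction applied on top of sensitivity sampling (in the spirit of Cohen-Addad et al.) brings the effective dimension down to $\tilde O(\eps^{-2})$, giving the $\min\{\eps^{-2},d\}$ factor.

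To transfer this uniform guarantee to the robust cost, fix $c \in \R^d$, $P_O$ of size $m$, and $0 \le t \le m$. The optimal inlier set for $\cost^{(t)}(P_O \cup P_I^\star, c)$ consists of all points within some threshold $r$ of $c$, so
\[
\cost^{(t)}(P_O \cup P_I^\star, c) = \cost(P_O \cap B(c,r), c) + \cost(P_I^\star \cap B(c,r), c).
\]
Using the same $r$ as a (suboptimal) outlier strategy for $P_O \cup S_I$ and invoking the uniform ball range space bound on the $P_I^\star$-term gives
\[
\cost^{(t)}(P_O \cup S_I, c) \leq \cost^{(t)}(P_O \cup P_I^\star, c) + \eps \cdot \cost(P_I^\star \cap B(c,r), c) + \eps \cdot \cost(P_I^\star, c^\star),
\]
where the small weight mismatch arising from the count approximation is absorbed into the additive term via the sensitivity bound (each ``misweighted'' unit contributes at most $O(\cost(P_I^\star,c^\star)/|P_I^\star|) + \dist(p,c^\star)$ to the cost). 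Since $\cost(P_I^\star \cap B(c,r), c) \leq \cost^{(t)}(P_O \cup P_I^\star, c)$, the first $\eps$-term is $\leq \eps \cdot \cost^{(t)}(P_O\cup P_I^\star,c)$. The reverse inequality follows symmetrically by starting from the optimal threshold for $\cost^{(t)}(P_O \cup S_I, c)$, yielding the two-sided bound with the factor of $2$ on the additive term.

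The main obstacle will be establishing the uniform ball range space bound with the sharp dependence $\tilde O(\eps^{-2}\min\{\eps^{-2}, d\})$: pointwise sensitivity sampling alone is insufficient, and one needs a chaining argument that remains stable when $\cost(P_I^\star \cap B, c)$ is a tiny fraction of the total. The additive $\eps \cdot \cost(P_I^\star, c^\star)$ slack is precisely what prevents this small-ball regime from blowing up, and replacing $d$ by $\tilde O(\eps^{-2})$ in high dimensions requires a clustering-specific dimension reduction that is shown to simultaneously preserve sensitivities, distances, and ball ranges.
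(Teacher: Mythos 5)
You should first note that the paper itself contains no proof of this statement: it is imported as a black box, namely Corollary~5.4 of \cite{huang2023coresets}, and is used downstream exactly as stated. So there is no internal argument to compare against; the relevant comparison is with the cited work. Your outline is consistent in spirit with how that corollary is actually established there (sensitivity sampling driven by the approximate center $c^\star$, a uniform guarantee over a ball-type range space, dimension reduction to get $\min\{\eps^{-2},d\}$, and a threshold argument passing from vanilla to robust costs), so the architecture is the right one.

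As a standalone proof, however, your proposal has two genuine gaps. First, the entire content of the cited corollary \emph{is} the uniform statement that a sample of size $\tilde{O}(\eps^{-2}\min\{\eps^{-2},d\})$ simultaneously approximates the restricted costs $\cost(P_I^\star\cap \Ball(c,r),c)$ over all centers and radii, with the additive slack $\eps\cdot\cost(P_I^\star,c^\star)$ absorbing the small-ball regime; you assert this via ``a standard chaining/$\eps$-net argument'' and then yourself identify it as the main obstacle, so the hard part is deferred rather than proved (in particular, the $\min\{\eps^{-2},d\}$ factor needs the iterative dimension-reduction machinery of \cite{narayanan2019optimal,cohen-addad2021improved}, and making the uniform bound hold for weighted costs rather than counts at this sample size is precisely the technical core of \cite{huang2023coresets}). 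Second, the transfer step is not airtight: reusing the optimal radius $r$ for $P_O\cup S_I$ does not produce a feasible competitor for $\cost^{(t)}(P_O\cup S_I,c)$, because $w_I(S_I\setminus \Ball(c,r))$ only matches $|P_I^\star\setminus \Ball(c,r)|$ up to an additive count error, so the removed weight is not exactly $t$ and the threshold must be shifted. The resulting misaligned units sit at distance roughly $r$ from $c$, which can greatly exceed your claimed per-unit bound of $O(\cost(P_I^\star,c^\star)/|P_I^\star|)+\dist(p,c^\star)$ whenever $c$ is far from $c^\star$; controlling their contribution requires a separate argument of the kind this paper carries out for $S_O$ (Lemmas~\ref{lemma: Error upper bound} and~\ref{lemma: induced error of each point}, bounding the misalignment count and the gap $T_I-T_O$ against $\cost^{(t)}(\cdot)/m$), which your sketch does not supply.
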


\noindent
The theorem tells that $S_I$ serves as an $\eps$-coreset for the combination of $P_I^\star$ and any possible set of outliers $P_O$.
The flexible choice of $P_O$ is useful for our analysis.
To estimate the error induced by $S_O$, we introduce the key lemma below, whose proof can be found in Section~\ref{sec:S_O}.

\begin{lemma}[Induced error of $S_O$]\label{lemma: induced error of S_O for 1-median}
For any center $c\in \R^d$, we have $ |\cost^{(m)}(P,c)-\cost^{(m)}(S_O\cup  P_I^\star ,c)|\le O(\eps)\cdot \cost^{(m)}(P,c)$.
\end{lemma}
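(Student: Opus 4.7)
The plan is to fix a center $c \in \R^d$ and compare the optimal outlier assignments for $P$ and for $S_O \cup P_I^\star$. Since $P_I^\star$ is common to both datasets, the only possible source of discrepancy between the two costs is the replacement of $L^\star$ by its weighted sample $S_O$. First I would set up notation: let $L^{(c)}$ denote the $m$ farthest points of $P$ from $c$, and consider the natural ``mirrored'' assignment for $S_O \cup P_I^\star$ in which each sampled $p \in S_O$ contributes its full weight $m/|S_O|$ as an outlier if $p \in L^{(c)}$ and as an inlier otherwise, with the remaining outlier budget drawn from $P_I^\star$ in the same way as for $P$. Call a point (or a fraction of a weighted point) \emph{outlier-misaligned} if its inlier/outlier status differs between $P$ and this mirrored assignment for $S_O \cup P_I^\star$. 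Upper and lower bounding $\cost^{(m)}(S_O \cup P_I^\star,c)$ by the cost of the mirrored assignment, the discrepancy decomposes into contributions from misaligned points only.

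Next, I would establish the per-point error bound announced in the technical overview: for any outlier-misaligned point $p$, its contribution to $|\cost^{(m)}(P,c) - \cost^{(m)}(S_O \cup P_I^\star,c)|$ is at most $O(\cost^{(m)}(P,c)/m)$, which is exactly the content of Lemmas~\ref{ob:comparision}, \ref{lemma: Error upper bound} and \ref{lemma: induced error of each point}. This is where the assumption $n \ge 4m$ is used crucially: it forces at least $n-2m \ge m$ ``always-inlier'' points of $P$, which gives a lower bound $\cost^{(m)}(P,c) \ge \Omega(m) \cdot r^\star_c$ for a suitable scale $r^\star_c$ of inlier-to-$c$ distances, large enough to absorb a single misaligned point's contribution.

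Third, I would control the total number of outlier-misaligned points through the ball range space argument. The range space of Euclidean balls in $\R^d$ has VC dimension $O(d)$, which can be further reduced to $\tilde O(\eps^{-2})$ via Johnson–Lindenstrauss style dimension reduction, so by Lemma~\ref{lemma: coreset size of ball range space} the uniform sample $S_O \subseteq L^\star$ of size $\tilde O(\eps^{-2}\min\{\eps^{-2},d\})$ is, with high probability, an $\eps$-approximation for this range space on $L^\star$. Uniformly in $c$ and in the radius $r$ of any ball, this yields
\[
\left|\,\tfrac{m}{|S_O|}\,|S_O \cap B(c,r)| - |L^\star \cap B(c,r)|\,\right| \le \eps m,
\]
which via Lemma~\ref{refined outlier gap} translates into the geometric fact that at most $O(\eps m)$ (weighted) points can be outlier-misaligned, regardless of the threshold radius that implicitly defines the outliers.

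Combining the two ingredients produces a total induced error of $O(\eps m) \cdot O(\cost^{(m)}(P,c)/m) = O(\eps) \cdot \cost^{(m)}(P,c)$, which gives the lemma after a constant-factor rescaling of $\eps$. The main obstacle I anticipate lies in Step~3: the outlier/inlier threshold radius is not fixed in advance but is determined implicitly by the joint distance spectrum of $P_I^\star \cup L^\star$ (respectively $P_I^\star \cup S_O$) to $c$, and the sample $S_O$ itself shifts this threshold. Making the ball-range-space approximation yield a bound on misaligned points that is simultaneously uniform over all centers $c$ and compatible with the weighted structure of $S_O$, rather than pointwise for a single $(c,r)$, is the delicate part of the argument.
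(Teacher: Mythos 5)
Your proposal follows essentially the same route as the paper's own proof: decompose the discrepancy into (number of outlier-misaligned points) times (per-point error), bound the former by $O(\eps m)$ via the $\eps$-approximation of the ball range space on $L^\star$ (Lemmas~\ref{lemma: coreset size of ball range space} and \ref{refined outlier gap}), and bound the latter by $O(\cost^{(m)}(P,c)/m)$ using $n\ge 4m$ (Lemmas~\ref{ob:comparision}, \ref{lemma: Error upper bound}, \ref{lemma: induced error of each point}). The delicate point you flag—that the outlier threshold radius shifts with the sample—is exactly what the paper's case analysis in Lemma~\ref{lemma: Error upper bound} handles via the integral representation of the cost, so your plan is sound.
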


\begin{proof}[Proof of Theorem \ref{high 1-median theorem_contri}]
    Fix a center $c\in \R^d$. 
    Let $P_O=S_O$ and $t= m$ in Theorem \ref{theorem: coreset in  P_I}, we have
    $
    |\cost^{(m)}(S_O\cup  P_I^\star ,c)-\cost^{(m)}(S_O\cup S_I,c)|
    \le \eps \cdot \cost^{(m)}(S_O\cup  P_I^\star ,c)+2\eps\cdot \cost( P_I^\star ,c^\star). 
    $
    By Lemma \ref{lemma: induced error of S_O for 1-median}, we have
    $
          |\cost^{(m)}(P,c)-\cost^{(m)}(S_O\cup  P_I^\star ,c)|
          \le O(\eps)\cdot \cost^{(m)}(P,c).
    $
    Adding the two inequalities above, we have
    $
        |\cost^{(m)}(P,c)-\cost^{(m)}(S,c)|\le O(\eps)\cdot \cost^{(m)}(P,c).
    $

{The runtime is dominated by Line 1 and Line 3 that costs $O(nd)$ time by Theorem \ref{theorem: coreset in  P_I}, making the total overhead $O(nd)$. This completes the proof of Theorem \ref{high 1-median theorem_contri}}.
\end{proof}

In the following subsections, we list out the proof of Lemma \ref{lemma: induced error of S_O for 1-median} and extend Theorem~\ref{high 1-median theorem_contri} to general metric spaces.

\subsection{Proof of Lemma \ref{lemma: induced error of S_O for 1-median}: induced error of $S_O$}
\label{sec:S_O}

Below we briefly introduce the proof idea.
We first observe that the induced error of $S_O$ is primarily caused by points that act as inliers in $P$ but outliers in $S$, or vice versa, as shown in Lemmas \ref{ob:comparision} and \ref{lemma: Error upper bound}. 
The error from a single point is bounded by $O( \frac{\cost^{(m)}(P, c)}{m})$ (see Lemma \ref{lemma: induced error of each point}). 
The next task is ensuring that there are $O(\varepsilon m)$ such points in $S_O$, which is guaranteed when $S_O$ provides an $\varepsilon$-approximation for the ball range space on $L^\star$ (Lemma \ref{refined outlier gap}). 
To achieve this, we sample $\tilde{O}(d / \varepsilon^2)$ points for $S_O$ (Lemma \ref{lemma: coreset size of ball range space}).

Fix a center $c\in \R^d$.
Let $L^{(c)} :=\arg\min_{L\subseteq P: |L|=m}\cost(P-L,c)$ be the set of outliers of $P$ w.r.t. $c$ and $m_P:=|L^\star\cap L^{(c)}|$ represent the number of these outliers contained in $L^\star$.
For $S_O\cup  P_I^\star $, we first define a family of weight functions
$
\mathcal{W}:=\{w_S: S_O\cup  P_I^\star \rightarrow \R^{+}\mid w_S(p)\leq w(p), \forall p\in S_O;~ w_S(p)\leq 1, \forall p\in  P_I^\star ; ~ \|w_S\|_1 =n-m\}.   
$
Intuitively, $\calW$ represents the collection of all possible weight functions for $n-m$ inliers of the weighted dataset $S_O\cup  P_I^\star $. 
Define a weight function $w'$ as follows:
\[
w' :=\arg\min_{w_S\in \mathcal{W}}\sum_{p\in S_O\cup  P_I^\star }w_S(p)\cdot \dist(p,c),
\]
i.e., $(S_O\cup  P_I^\star ,w')$ represents the $n-m$ inliers of $S_O\cup  P_I^\star $ with respect to center $c$. 
Let $m_S :=\sum_{p\in S_O}(w(p)-w'(p))$ denote the number of outliers of $S_O\cup  P_I^\star $ w.r.t. $c$ that are contained in $S_O$.

For preparation, we introduce the concept of ball range space, which facilitates a precise analysis of point distributions in $P$ and $S$. 

\begin{definition}[Approximation of ball range space, Definition F.2 in \cite{huang2022near}]\label{def: ball range space}
    \sloppy
    For a given dataset $P\subset \R^d$, the ball range space on $P$ is ($P$,$\mathcal{P}$) where $\mathcal{P}:=\{P\cap \Ball(c,u)\mid c\in \R^d, u>0\}$ and $\Ball(c,u):=\{p\in \R^d \mid \dist(p,c)\le u\}$.
 A subset $Y\subset P$ is called an $\eps$-approximation of the ball range space ($P$,$\mathcal{P}$) if for every $c,u \in \R^d$,
\[|\frac{|P\cap \Ball(c,u)|}{|P|}-\frac{|Y\cap \Ball(c,u)|}{|Y|}|\le \eps.\]
\end{definition}

\noindent
Based on this definition, we have the following preparation lemma that measures the performance of $S_O$; which is refined from \cite{huang2022near}.

\begin{lemma}[Refined from Lemma F.3 of \cite{huang2022near}]\label{lemma: coreset size of ball range space}
Given dataset $P_O\subset \R^d$. Let $S_O$ be a uniform sampling of size $\tilde{O}(\frac{d}{\eps^2})$ from $P_O$, then with probability at least $1-\frac{1}{\text{poly}(1/\eps)}$, $S_O$ is an $\eps$-approximation of the ball range space on $P_O$. 
Define a weight function $w$: $w(p)=\frac{|P_O|}{|S_O|}$, for any $p\in S_O$. Then for any $c\in \R^d$, $u\in \R^+$,
    \[||P_O\cap \Ball(c,u)|-w(S_O\cap \Ball(c,u))|\le \eps|P_O|.\]
\end{lemma}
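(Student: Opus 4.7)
The plan is to invoke standard Vapnik--Chervonenkis theory for range spaces of bounded VC dimension. The argument has three essentially routine steps, so the main ``obstacle'' is really just correctly tracking how the sample size, failure probability, and $\eps$-parameter interact, rather than any genuine combinatorial difficulty. Since this lemma is explicitly a refinement of Lemma~F.3 of \cite{huang2022near}, I would also want to pin down exactly what the refinement is (it appears to be the explicit $\tilde{O}(d/\eps^2)$ sample-size/$1-1/\mathrm{poly}(1/\eps)$ probability tradeoff, with the weighted conclusion spelled out).

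First, I would establish that the ball range space on $\R^d$ has VC dimension $O(d)$. This is classical: the family $\{\Ball(c,u) : c \in \R^d,\ u \ge 0\}$ can be linearised to halfspaces in $\R^{d+1}$ via the standard lifting $p \mapsto (p, \|p\|_2^2)$, and halfspaces in $\R^{d+1}$ have VC dimension $d+2$. Hence the induced range space $(P_O,\mathcal{P})$ also has VC dimension at most $d+2$.

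Second, I would apply the classical uniform-convergence theorem of Vapnik--Chervonenkis (or its refinement by Li--Long--Srebro): for any range space of VC dimension $\nu$, a uniform i.i.d.\ sample of size $\Theta\bigl((\nu + \log(1/\delta))/\eps^2\bigr)$ is an $\eps$-approximation with probability at least $1-\delta$. Plugging in $\nu = O(d)$ and $\delta = 1/\mathrm{poly}(1/\eps)$ yields sample size $\tilde{O}(d/\eps^2)$ and exactly the claimed success probability $1 - 1/\mathrm{poly}(1/\eps)$, giving the first assertion of the lemma.

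Third, I would translate the $\eps$-approximation guarantee into the weighted conclusion. By Definition~\ref{def: ball range space}, conditioned on the high-probability event, for every $c \in \R^d$ and every $u \in \R^+$,
\[
\left|\frac{|P_O \cap \Ball(c,u)|}{|P_O|} - \frac{|S_O \cap \Ball(c,u)|}{|S_O|}\right| \le \eps.
\]
Multiplying through by $|P_O|$ and using the fact that $w(S_O \cap \Ball(c,u)) = \frac{|P_O|}{|S_O|}\cdot |S_O \cap \Ball(c,u)|$ by construction of the uniform weight function $w$, the desired bound $\bigl||P_O \cap \Ball(c,u)| - w(S_O \cap \Ball(c,u))\bigr| \le \eps\,|P_O|$ follows immediately. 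No step requires new technology beyond citing the VC sampling theorem, so I do not anticipate any substantive technical obstruction; the only care is to align the logarithmic factors hidden in $\tilde{O}$ with the $\mathrm{poly}(1/\eps)$ confidence level.
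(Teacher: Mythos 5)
Your proposal is correct and follows essentially the same route as the paper: the paper treats this lemma as a direct corollary of the VC-dimension-based $\eps$-approximation sampling bound (Lemma~\ref{lemma: refined S_O size}, refined from \cite{LI2001516}), using the fact that balls in $\R^d$ have VC dimension $O(d)$, and the weighted inequality follows by rescaling by $|P_O|$ exactly as you describe. Your added detail on the halfspace lifting and the explicit $\delta = 1/\mathrm{poly}(1/\eps)$ bookkeeping is consistent with, and slightly more explicit than, what the paper records.
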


\noindent
By the iterative method introduced by \cite{narayanan2019optimal}, the factor $O(d)$ of coreset size can be replaced by $\tilde{O}(\eps^{-2})$.
Therefore, $S_O$ is an $\eps$-approximation of the ball range space on $L^\star$. 

Then we are ready to prove Lemma \ref{lemma: induced error of S_O for 1-median}. We first analyze where the induced error comes from.

Recall that we fix a center $c$ in this section.
Let $T_O:=\min_{p\in L^\star}\dist(p,c)$ denote the minimum distance from points in $L^\star$ to $c$.
Let $T_I: = \max_{p\in  P_I^\star }\dist(p,c) $ denote the maximum distance from points in $ P_I^\star $ to $c$. 
We have the following Lemma.

\begin{lemma}[Comparing $T_O$ and $T_I$]
\label{ob:comparision}
When $m_P = m$, $|\cost^{(m)}(P,c)-\cost^{(m)}(S_O\cup  P_I^\star ,c)| = 0$ holds.
When $m_P< m$, $T_O\leq T_I$ holds.
\end{lemma}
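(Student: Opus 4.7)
The plan is to handle the two claims separately. For the first claim ($m_P = m$), I will establish that the set of $P$-outliers with respect to $c$ is structurally identical to $L^\star$ and then compute $\cost^{(m)}(S_O \cup P_I^\star, c)$ exactly by exploiting this match. For the second claim ($m_P < m$), I will use a pigeonhole-style argument to locate one point in $P_I^\star$ that lies in $L^{(c)}$ and one point in $L^\star$ that lies outside $L^{(c)}$, and then compare distances.

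For Case 1, the first step is to note that $m_P = m$ combined with $|L^{(c)}| = m = |L^\star|$ forces $L^{(c)} = L^\star$. Consequently every point of $L^\star$ is at distance at least as large from $c$ as every point of $P_I^\star$, which in particular gives $T_O \geq T_I$. The second step is to evaluate $\cost^{(m)}(S_O \cup P_I^\star, c)$ via Definition~\ref{def: weighted cost function}: one must pick a weight function $w'$ with $w'(p)\le w(p)$ and total mass $n-m$. Since $w(S_O) = m$ and every point of $S_O \subseteq L^\star$ is at distance at least $T_O \geq T_I$ from $c$, the cost-minimizing allocation concentrates all ``removed'' mass on $S_O$, leaving $P_I^\star$ with full unit weight. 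This yields $\cost^{(m)}(S_O \cup P_I^\star, c) = \cost(P_I^\star, c)$, and the latter equals $\cost^{(m)}(P, c)$ because $L^{(c)} = L^\star$ makes $P_I^\star$ exactly the inlier set of $P$ relative to $c$.

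For Case 2, I will use the hypothesis $m_P < m$ twice. Since $m_P = |L^\star \cap L^{(c)}| < m = |L^{(c)}|$, there exists $p \in L^{(c)} \setminus L^\star$; by construction $p \in P_I^\star$ and $p$ is a $P$-outlier with respect to $c$. Symmetrically, since $m_P < m = |L^\star|$, there exists $q \in L^\star \setminus L^{(c)}$; this $q$ is a $P$-inlier with respect to $c$. By the extremal characterization of $L^{(c)}$ as the $m$ farthest points of $P$ from $c$, we have $\dist(p,c) \ge \dist(q,c)$. Combining with $T_I \ge \dist(p,c)$ (since $p \in P_I^\star$) and $T_O \le \dist(q,c)$ (since $q \in L^\star$) immediately yields $T_O \le T_I$.

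Both cases are essentially definition-chasing once the right exchange argument is set up, so I do not anticipate a hard technical obstacle. The one place that requires care is the optimality step in Case~1: one must justify rigorously that pushing all $m$ units of ``outlier weight'' onto $S_O$ is globally optimal for the fractional program of Definition~\ref{def: weighted cost function}, and that this produces equality (not just an inequality) with $\cost^{(m)}(P,c)$. Ties at distance $T_O = T_I$ could in principle cause concern, but since the weighted cost is linear in $w'$ and any reallocation would only shift weight to points at equal or greater distance, the greedy assignment remains optimal and the equality is preserved.
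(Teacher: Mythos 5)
Your proposal is correct and follows essentially the same route as the paper: for $m_P=m$ you use $L^{(c)}=L^\star$ to conclude that all of $S_O$ lies at distance at least $T_I$ from $c$, so the optimal fractional weight assignment discards exactly the $S_O$ mass and both costs reduce to $\cost(P_I^\star,c)$; for $m_P<m$ you use the same pigeonhole/exchange argument (a point of $P_I^\star$ inside $L^{(c)}$ versus a point of $L^\star$ outside it) to get $T_O\le T_I$. No gaps.
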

\begin{proof}
If $m_P=m$, for any point $p\in L^\star$, we have $\dist(p,c)\ge \max_{p\in P_I^\star}\dist(p,c)$. Since $S_O$ is sampled from $L^\star$, we know that for any point $p\in S_O$, $\dist(p,c)\ge \max_{p\in P_I^\star}\dist(p,c)$. Therefore, we have $m_P=m_S=m$ and then $\cost^{(m)}(P,c)=\cost(P_I^\star,c)=\cost^{(m)}(S_O\cup P_I^\star,c)$.

If $m_P<m$, then there exists a point $\widehat{p}\in P_I^\star$ such that $\dist(\widehat{p},c)\ge T_O$. By definition, we also know that $\dist(\widehat{p},c)\le T_I$. Therefore, combining these two conditions, we have $T_O\leq T_I$.
\end{proof}

\noindent
It remains to analyze the case that $m_P < m$.
In this setting, we present the following lemma, which provides an upper bound on the estimation error $|\cost^{(m)}(P,c)-\cost^{(m)}(S_O\cup  P_I^\star ,c)|$.
Before stating the lemma, we introduce a notation that will also be used in subsequent lemmas.
We sort all distances $\dist(p,c)$ for each point $p\in  P_I^\star $ in descending order, w.l.o.g. say $\dist(p_1,c) > \ldots > \dist(p_{| P_I^\star |},c)$. 
Here, we can safely assume all distances $\dist(p,c)$ are distinct, given that adding small values to the distances has only a subtle impact on the cost.
Let $d_i :=\dist(p_{i},c)$ for $i\in [m]$.
Then $d_1, \ldots ,d_m$ represent the distances from the $m$ furthest points in $ P_I^\star $ to the center $c$. 
Now, we are ready to provide the following lemmas.

\begin{lemma}[An upper bound of the induced error of $S_O$]
\label{lemma: Error upper bound}
When $m_P < m$, suppose $||P_O\cap \Ball(c,u)|-w(S_O\cap \Ball(c,u))|\le \Delta$ for any $u>0$, the following holds:
$
    |\cost^{(m)}(P,c)-\cost^{(m)}(S_O\cup  P_I^\star ,c)|\le 2\cdot (\Delta+|m_P-m_S|) \cdot (T_I-T_O)
$.
\end{lemma}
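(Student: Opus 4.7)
The plan is to establish a cleaner bound of $\Delta(T_I - T_O)$, which is strictly stronger than the stated $2(\Delta + |m_P - m_S|)(T_I - T_O)$, by expressing both robust costs as layer-cake integrals and bounding the integrand pointwise. Define $F_P(u) := |P \cap \Ball(c,u)^c|$ and $F_S(u) := w(\{p \in S_O \cup P_I^\star : \dist(p,c) > u\})$. Rearranging ``sum of the smallest $n-m$ distances'' yields the identities
\[
\cost^{(m)}(P,c) = \int_0^\infty \max(F_P(u) - m,\, 0)\, du \quad \text{and} \quad \cost^{(m)}(S_O \cup P_I^\star, c) = \int_0^\infty \max(F_S(u) - m,\, 0)\, du,
\]
so by the $1$-Lipschitzness of $x \mapsto \max(x,0)$, it suffices to bound $\int_0^\infty |F_P(u) - F_S(u)|\, du$.

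The key geometric observation is that the integrand vanishes outside the window $[T_O, T_I]$. For $u < T_O$, every point of $L^\star$ (and hence every point of $S_O \subseteq L^\star$) lies at distance strictly greater than $u$, so $L^\star$ and $S_O$ each contribute their full mass $m$ to $F_P$ and $F_S$ respectively; the $P_I^\star$ contribution is identical on both sides, hence $F_P(u) = F_S(u)$. For $u \ge T_I$, every point of $P_I^\star$ sits inside $\Ball(c,u)$, so $F_P(u) = |L^\star \cap \Ball(c,u)^c| \le m$ and $F_S(u) = w(S_O \cap \Ball(c,u)^c) \le m$, making both truncations zero. Within $[T_O, T_I]$, the hypothesis is translated to complements: since $|L^\star| = m = w(S_O)$, the inequality $||L^\star \cap \Ball(c,u)| - w(S_O \cap \Ball(c,u))| \le \Delta$ becomes $||L^\star \cap \Ball(c,u)^c| - w(S_O \cap \Ball(c,u)^c)| \le \Delta$, and the common $P_I^\star$ terms cancel to give $|F_P(u) - F_S(u)| \le \Delta$. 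Integrating over the window of length $T_I - T_O$ closes the argument.

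The main subtlety is justifying the layer-cake identity for the weighted set $S_O \cup P_I^\star$, namely that the minimizer of $\sum_p w'(p)\dist(p,c)$ over $w' \in \mathcal{W}$ is the greedy ``threshold'' weight assignment that keeps the closest total weight $n-m$ and drops the rest; this follows from a standard greedy exchange argument. With the layer-cake form in hand, no combinatorial case analysis in $m_P$ versus $m_S$ is needed, which is why the resulting bound beats the stated one. The $|m_P - m_S|$ slack in the lemma as written would naturally appear from an alternative decomposition proof splitting $\cost^{(m)}(P,c) = \cost^{(m_P)}(L^\star, c) + \cost^{(m - m_P)}(P_I^\star, c)$ and the analogous decomposition for $S$: the mismatch in inlier counts on the $P_I^\star$ side then contributes a sum of $|m_P - m_S|$ values $d_i$ lying in $[T_O, T_I]$, each of magnitude at most $T_I - T_O$ after cancelling with the constant part of the $L^\star$-versus-$S_O$ integrand on $[0, T_O]$; such a proof is also viable but strictly looser than the integration route.
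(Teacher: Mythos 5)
Your proof is correct, and it takes a genuinely different and tighter route than the paper's. The paper splits $\cost^{(m)}(P,c)=\cost^{(m_P)}(L^\star,c)+\cost^{(m-m_P)}(P_I^\star,c)$ (and analogously for $S_O\cup P_I^\star$ with $m_S$), applies the layer-cake identity only to the outlier parts $L^\star$ and $S_O$ with the mismatched truncation levels $m-m_P$ and $m-m_S$, and then runs a three-way case analysis ($m_S=m$; $m_S\neq m_P<m$; $m_P=m_S\neq m$) using order-statistic inequalities on $d_{m-m_P+1},d_{m-\lfloor m_S\rfloor}$ to control the resulting mismatch on the $P_I^\star$ side; the $|m_P-m_S|$ term and the factor $2$ are artifacts of stitching these pieces back together. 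You instead apply the layer-cake identity to the \emph{entire} sets $P$ and $S_O\cup P_I^\star$ at the common truncation level $m$, so the split into $m_P$ versus $m_S$ never arises: the integrand difference is supported on $[T_O,T_I]$ (below $T_O$ both outlier parts contribute full mass $m$ and the $P_I^\star$ terms cancel; above $T_I$ both truncations vanish) and is bounded there by $\Delta$ via the complement identity $|L^\star\cap\Ball(c,u)^c|-w(S_O\cap\Ball(c,u)^c)=w(S_O\cap\Ball(c,u))-|L^\star\cap\Ball(c,u)|$, giving the cleaner bound $\Delta\cdot(T_I-T_O)$, which trivially implies the stated one. The two ingredients you must (and do) justify are the threshold structure of the optimal weight function $w'$ for the weighted set, which is exactly the paper's Claim \ref{claim: partial weight for k} and follows from the exchange argument you cite, and the 1-Lipschitzness of $x\mapsto\max(x,0)$ to pass from costs to the counting functions. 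Since your bound would also simplify the downstream use in Lemma \ref{lemma: induced error of S_O for 1-median} (the $|m_P-m_S|$ term, and hence Lemma \ref{refined outlier gap}, would no longer be needed for this step), this is a worthwhile simplification rather than merely an alternative.
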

\begin{proof}
By Fact F.1 in \cite{huang2022near}, we have
\begin{equation}\label{ineq: integral_L_STAR}
\begin{aligned}
\cost^{(m_P)}(L^\star,c)=\int_{0}^{\infty}(m-m_P-|L^\star\cap \Ball(c,u)|)^+ du  
\end{aligned}    
\end{equation}
and
\begin{equation}\label{ineq: integral_S_O}
    \begin{aligned}
    \cost^{(m_S)}(S_O,c)=\int_{0}^{\infty}(m-m_S-w(S_O\cap \Ball(c,u)))^+ du
\end{aligned}
\end{equation}
where $(a)^+ = \max\left\{a, 0\right\}$.

Let $T:=\max_{p\in L^\star-L^{(c)}}\dist(p,c)$ and  $T_S:=\max_{p\in S_O: w'(p)>0}\dist(p,c)$. 
Then we know that $T_O\le T$, $T_S\le T_I$. By definition, for any distance $u>T$, we have $|L^\star\cap \Ball(c,u)|\ge m-m_P$. Then we can transform the Inequality (\ref{ineq: integral_L_STAR}) to 
\begin{equation}\label{ineq: integral_L_STAR_USED}
    \begin{aligned}
\cost^{(m_P)}(L^\star,c)=\int_{0}^{T}(m-m_P-|L^\star\cap \Ball(c,u)|) du.
\end{aligned}
\end{equation}
Similarly, for any distance $u>T_S$, we have $w(S_O\cap \Ball(c,u))\ge m-m_S$. We can transform the Inequality (\ref{ineq: integral_S_O}) to 
\begin{equation}\label{ineq: integral_S_O_USED}
    \begin{aligned}
    \cost^{(m_S)}(S_O,c)=\int_{0}^{T_S}(m-m_S-w(S_O\cap \Ball(c,u))) du.
\end{aligned}
\end{equation}
Based on the notation of $d_1,...,d_m$, we know that each point $p\in  P_I^\star $ satisfying
$\dist(p,c)\le d_{m-m_P+1}$
is an inlier of $ P_I^\star \cup L^\star$, and each point $q\in  P_I^\star $ with weight $w'(q)>0$ satisfying
$\dist(q,c)\le d_{m-\lfloor m_S \rfloor} $
is an inlier of $S_O \cup  P_I^\star $.
Let $m_S':=\lfloor m_S \rfloor$. Let $l:=|m_P-m_S|$ denote the difference in the number of outliers.
If $m_P>m_S$, we have 
\begin{align}\label{ineq: m_p>m_S_1}
l\cdot d_{m-m_S'}\le \cost^{(m-m_P)}( P_I^\star ,c)-\cost^{(m-m_S)}( P_I^\star ,c)  
\end{align}
and
\begin{align}\label{ineq: m_p>m_S_2}
\begin{aligned}
 \cost^{(m-m_P)}( P_I^\star ,c)-\cost^{(m-m_S)}( P_I^\star ,c)\le l\cdot d_{m-m_P+1}.  
\end{aligned}
\end{align}
If $m_P<m_S$, we have
\begin{align}\label{ineq: m_p<m_S_1}
  l\cdot d_{m-m_P}\le \cost^{(m-m_S)}( P_I^\star ,c)-\cost^{(m-m_P)}( P_I^\star ,c)
\end{align}s
and
\begin{align}\label{ineq: m_p<m_S_2}
\begin{aligned}
    \cost^{(m-m_S)}( P_I^\star ,c)-\cost^{(m-m_P)}( P_I^\star ,c)
    \le l\cdot d_{m-m_S'}. 
\end{aligned}
\end{align}
If $m_P=m_S$, we have
\begin{align}\label{ineq: m_p=m_S}
  \cost^{(m-m_S)}( P_I^\star ,c)-\cost^{(m-m_P)}( P_I^\star ,c)=0.  
\end{align}
By definition, we know that
\begin{align}\label{ineq: d_upper}
    d_m\le d_{m-1}\le \ldots \le d_1\le T_I.
\end{align}
When $m_S\neq m$, we know the point $p\in  P_I^\star $ that has distance $\dist(p,c)=d_{m-m_S'}$ is an outlier on $P_I^\star\cup S_O$.
Then there exists a point $q\in S_O$ such that $\dist(q,c)\le \dist(p,c)$. Then we have 
\begin{align}\label{ineq: d_lower_m_s}
    d_{m-m_S'}\ge T_O.
\end{align}
Similarly, the point $p\in  P_I^\star $ with distance $\dist(p,c)= d_{m-m_P}$ is an outlier on $P$, thus there exists a point $q\in L^\star$ such that $\dist(q,c)\le \dist(p,c)$. Then we have
\begin{align}\label{ineq: d_lower_m_p}
    d_{m-m_P}\ge T_O.
\end{align}

By Lemma \ref{ob:comparision}, it suffices to discuss the following three cases based on the values of $m_P$ and $m_S$.

\paragraph{Case 1: $m_S=m$}

Recall that $l=m_S-m_P$, we have
\begin{align*}
     \cost^{(m_P)}(L^\star,c)
     &= \quad \int_{0}^{T}(m-m_P-|L^\star \cap \Ball(c,u)|) du 
     &(\text{Inequality (\ref{ineq: integral_L_STAR_USED})})\\
     &=\quad \int_{0}^{T} (l-|L^\star \cap \Ball(c,u)|) du & (\text{$l=m_S-m_P$})\\
     &=\quad l \cdot T-\int_{T_O}^{T} |L^\star \cap \Ball(c,u)| du & (\text{$\forall p\in L^\star$, $\dist(p,c)\ge T_O$}).
\end{align*}
Since $m_S=m$ and $m_P<m_S$, each point $p\in L^\star-L^{(c)}$ satisfies $\dist(p,c)<\min_{q\in S_O}\dist(q,c)$. For each $T_O\le u\le T$, we have $w(S_O\cap \Ball(c,u))=0$. Since $|L^\star \cap \Ball(c,u)|-w(S_O\cap \Ball(c,u))\le \Delta$, we know $|L^\star \cap \Ball(c,u)|\le \Delta$ for $T_O\le u\le T$ and 
\[   l\cdot T-(T-T_O)\cdot \Delta \le \cost^{(m_P)}(L^\star,c)\le l\cdot T.
\] 
Since $\cost^{(m_S)}(S_O,c)=0$, we have
\begin{align}\label{ineq: m_S=m_ineq_2}
    \cost^{(m_P)}(L^\star,c)-\cost^{(m_S)}(S_O,c) \le l\cdot T
\end{align}
and
\begin{equation}\label{ineq: m_S=m_ineq_3}
    \begin{aligned}
    \cost^{(m_S)}(S_O,c)-\cost^{(m_P)}(L^\star,c)\le (T-T_O)\cdot \Delta -l\cdot T.
\end{aligned}
\end{equation}
Adding Inequality (\ref{ineq: m_p<m_S_1}) and Inequality (\ref{ineq: m_S=m_ineq_2}), we have
\begin{align*}
        \cost^{(m)}(P,c)-\cost^{(m)}(S_O\cup  P_I^\star ,c)&  \le \quad l\cdot T_I-l\cdot T_O \qquad\qquad(\text{$T\le T_I$ and Inequality (\ref{ineq: d_lower_m_p})})\\
        & \le \quad |m_P-m_S| \cdot (T_I-T_O)\qquad\qquad\qquad(\text{Lemma \ref{refined outlier gap}}).
\end{align*}
Adding Inequality (\ref{ineq: m_p<m_S_2}) into Inequality (\ref{ineq: m_S=m_ineq_3}), we have
\begin{align*}
    \cost^{(m)}(S_O\cup   P_I^\star ,c)-\cost^{(m)}(P,c)
    &\le \quad (T-T_O)\cdot \Delta+l \cdot (d_1-T)\\
    &\le \quad (T_I-T_O)\cdot \Delta +l \cdot (T_I-T_O) \\
    & \qquad\qquad\qquad\qquad\qquad(\text{$T_O\le T\le T_I$ and Inequality (\ref{ineq: d_upper})})\\
    &\le \quad (|m_P-m_S|+\Delta) \cdot (T_I-T_O).
\end{align*}
Then we complete the proof of Case 2.

\paragraph{Case 2: $m_S \neq m_P < m$}

Without loss of generality, we assume that $m_P>m_S$ and $T\ge T_S$.

Firstly, we prove that $\cost^{(m_P)}(P,c)-\cost^{(m_S)}(S_O\cup  P_I^\star,c)\le 2(\Delta+|m_P-m_S| ) \cdot(T_I-T_O)$. We have
\begin{align*}
  \cost^{(m_P)}&(L^\star,c)-\cost^{(m_S)}(S_O,c)\\
  &=\quad \int_{0}^{T}(m-m_P-|L^\star\cap \Ball(c,u)|) du \\
  &\quad -\int_{0}^{T_S}(m-m_S-w(S_O\cap \Ball(c,u))) du \quad\quad\quad\quad\quad(\text{Inequalities (\ref{ineq: integral_L_STAR_USED}) and (\ref{ineq: integral_S_O_USED})})\\   
    &= \quad\int_{0}^{T_S} (m_S-m_P+w(S_O\cap\Ball(c,u)) -|L^\star \cap \Ball(c,u)|)du \\
 & \quad + \int_{T_S}^{T}(m-m_P-|L^\star \cap\Ball(c,u)|) du \quad\quad\quad\quad\quad\quad\quad\quad\quad\quad\quad\quad\quad (\text{$T\ge T_S$})\\
  &= \quad \int_{T_O}^{T_S} (w(S_O\cap \Ball(c,u))-|L^\star \cap\Ball(c,u)|)du-l\cdot {T_S}\\
   &\quad +\int_{T_S}^{T} (m-m_P-|L^\star\cap\Ball(c,u)|) du\\
    &\le \quad \Delta \cdot(T_S-T_O)-l\cdot {T_S} +\int_{T_S}^{T}(m-m_P-|L^\star\cap\Ball(c,u)|) du.
\end{align*}
For $T_S<u<T$, we have $|L^\star\cap\Ball(c,u)|\ge m-\Delta$, thus
\begin{align*}
    \cost^{(m_P)}(L^\star,c)-&\cost^{(m_S)}(S_O,c) \\
    &\le \quad \Delta \cdot(T_S-T_O)+
    \int_{T_S}^{T} (\Delta -m_P) du -l\cdot {T_S}\\
    &\le \quad \Delta \cdot(T_S-T_O)+\Delta \cdot(T-{T_S})-l\cdot {T_S} \\
    &\le \quad 2\cdot\Delta \cdot(T_I-T_O)-l\cdot {T_S} \quad\quad\quad(\text{$T_O\le T_S\le T_I$ and $T\le T_I$}).
\end{align*}
Adding Inequality (\ref{ineq: m_p>m_S_2}) and the above inequality, we obtain
\begin{align*}
    \cost^{(m)}(P,c)-  &\cost^{(m)}(S_O\cup  P_I^\star ,c)\\
    &\le \quad 2\cdot\Delta\cdot(T_I-T_O)+l\cdot(d_{m-m_P+1}-T_S)\\
    &\le \quad 2\cdot\Delta \cdot(T_I-T_O)+l\cdot(T_I-T_O)& (\text{Inequality (\ref{ineq: d_upper})})\\
    &\le \quad (2\cdot \Delta+|m_P-m_S|) \cdot(T_I-T_O) .
\end{align*}

Secondly, we prove that $\cost^{(m_S)}(S_O\cup  P_I^\star ,c)-\cost^{(m_P)}(P,c)\le (\Delta+|m_P-m_S|) \cdot(T_I-T_O)$. By Inequality (\ref{ineq: integral_L_STAR_USED}) and Inequality (\ref{ineq: integral_S_O_USED}), we have
\begin{align*}
    \cost^{(m_S)}&(S_O,c)-\cost^{(m_P)}(L^\star,c) \\
        &=\quad \int_{0}^{T_S}(m-m_S-w(S_O\cap \Ball(c,u))) du\\
        &\quad -\int_{0}^{T}(m-m_P-|L^\star\cap \Ball(c,u)|) du\\
        &= \quad \int_{0}^{T_S}(m_P-m_S+|L^\star \cap \Ball(c,u)|-w(S_O\cap\Ball(c,u)))du \\
        &\quad -\int_{T_S}^{T}(m-m_P-|L^\star\cap \Ball(c,u)|) du &(\text{$T_S\le T$})\\
        &\le \quad \int_{0}^{T_S} (m_P-m_S+|L^\star \cap \Ball(c,u)|-w(S_O\cap\Ball(c,u)))du \\
        &\le \quad \int_{T_O}^{T_S} (|L^\star \cap \Ball(c,u)|-w(S_O\cap\Ball(c,u)))du+l\cdot T_S\\
        &\le\quad \Delta \cdot(T_S-T_O)+l\cdot T_S.
\end{align*}        
Adding Inequality (\ref{ineq: m_p>m_S_1}) and the above inequality, we have 
\begin{align*}
    \cost^{(m)}&(S_O\cup  P_I^\star ,c)-\cost^{(m)}(P,c)\\
    &\le \quad \Delta \cdot(T_S-T_O)+l\cdot(T_S-d_{m_S'})\\
    &\le \quad \Delta\cdot(T_I-T_O)+l\cdot(T_I-T_O)&(\text{Inequality (\ref{ineq: d_lower_m_s}) and $T_S\le T_I$})\\
    &\le \quad(\Delta+|m_P-m_S|) \cdot(T_I-T_O).
\end{align*}

In summary, we have
\begin{align*}
    |\cost^{(m)}(P,c)-\cost^{(m)}(S_O\cup  P_I^\star ,c)|\le 2\cdot (\Delta+|m_P-m_S|) \cdot(T_I-T_O).
\end{align*}
Similarly, we can get the same conclusion when $T<T_S$. 
Moreover, for the case that $m_P< m_S$, with the help of Inequalities (\ref{ineq: m_p<m_S_1})(\ref{ineq: m_p<m_S_2})(\ref{ineq: d_upper})(\ref{ineq: d_lower_m_p}), the conclusion still holds.

\paragraph{Case 3: $m_P=m_S\neq m$}

By Inequality (\ref{ineq: m_p=m_S}), we have
\begin{align*}
    |\cost&^{(m)}(S_O\cup  P_I^\star ,c)-\cost^{(m)}(P,c)|=|\cost^{(m_S)}(S_O,c)-\cost^{(m_P)}(L^\star,c)|.
\end{align*}
By a similar argument as in Case 2, we have
\begin{align*}
    |\cost^{(m_S)}(S_O\cup  P_I^\star ,c)-\cost^{(m_P)}(P,c)| \le 2\cdot (\Delta+|m_P-m_S|) \cdot(T_I-T_O).
\end{align*}
Overall, we complete the proof of Lemma \ref{lemma: Error upper bound}.
    
\end{proof}

\noindent
The Lemma \ref{lemma: Error upper bound} tells us that at most $2(\Delta + |m_P - m_S|)$ points contribute to the error, with each point inducing at most $T_I - T_O$ error. We now analyze the magnitude of $T_I - T_O$.

\begin{lemma}[Bounding induced error of each point]
\label{lemma: induced error of each point}
    $T_I-T_O\le O(\cost^{(m)}(P,c)/m)$
\end{lemma}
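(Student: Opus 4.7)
Proof plan: The plan is to first relate $T_I - T_O$ to $d := \dist(c, c^\star)$ using the structure of $L^\star$, then lower-bound $\cost^{(m)}(P,c)$ by $\Omega(m\, d)$ using both the $O(1)$-approximation quality of $c^\star$ and the assumption $n \geq 4m$. If $T_I \leq T_O$ the bound is trivial, so I assume $T_I > T_O$.

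The first step is to show $T_I - T_O \leq 2d$. Let $R := \max_{q \in P_I^\star} \dist(q, c^\star)$. Because $L^\star$ consists of the $m$ points of $P$ farthest from $c^\star$, every $q \in L^\star$ satisfies $\dist(q, c^\star) \geq R$. By the triangle inequality, $T_I \leq R + d$; and when $d \leq R$, $T_O \geq R - d$, yielding $T_I - T_O \leq 2d$. In the remaining case $d > R$, a direct bound gives $T_I \leq R + d \leq 2d$ and $T_O \geq 0$, so $T_I - T_O \leq 2d$ still holds.

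The second step is to show $d \leq O(\cost^{(m)}(P,c)/m)$ by combining two lower bounds on $\cost^{(m)}(P,c)$. Since $c^\star$ is an $O(1)$-approximate center, there is an absolute constant $c_0 > 0$ with
\[
\cost^{(m)}(P,c) \;\geq\; c_0 \cdot \cost^{(m)}(P, c^\star).
\]
On the other hand, using $|L^\star| = |L^{(c)}| = m$ and $n \geq 4m$, we obtain $|P_I^\star \cap P_I^{(c)}| \geq n - 2m \geq 2m$; applying the triangle inequality $\dist(p,c) \geq d - \dist(p, c^\star)$ to these points and summing gives
\[
\cost^{(m)}(P,c) \;\geq\; \sum_{p \in P_I^\star \cap P_I^{(c)}} \dist(p,c) \;\geq\; (n-2m)\, d \,-\, \cost^{(m)}(P, c^\star).
\]
Substituting the first bound into the second to eliminate $\cost^{(m)}(P, c^\star)$ yields $(1 + 1/c_0)\,\cost^{(m)}(P,c) \geq (n-2m)\, d \geq 2m\, d$, and hence $d \leq O(\cost^{(m)}(P,c)/m)$.

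Chaining the two steps gives $T_I - T_O \leq 2d \leq O(\cost^{(m)}(P,c)/m)$, as desired. The main obstacle is the second step: the $O(1)$-approximation bound alone is far too weak when $c$ is close to $c^\star$ (the scenario where $\cost^{(m)}(P,c^\star)$ may be much smaller than $m\,r$ for the relevant scale $r$), so the $d$-dependent bound derived via the triangle inequality on the intersection $P_I^\star \cap P_I^{(c)}$ is essential, and its leading factor of $n - 2m$ critically depends on the assumption $n \geq 4m$ to produce a factor of $m$.
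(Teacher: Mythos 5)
Your proof is correct, and while the first step (showing $T_I - T_O \le 2\,\dist(c,c^\star)$ via the triangle inequality, splitting on whether $\dist(c,c^\star)$ exceeds $r_{\max}$) is identical to the paper's, your second step takes a genuinely different route. The paper proves $2m\,\dist(c,c^\star) \le O(\cost^{(m)}(P,c))$ by a case analysis on $\dist(c,c^\star)$ versus the average radius $\bar{r} = \cost(P_I^\star,c^\star)/(n-m)$: when $\dist(c,c^\star) \le 4\bar{r}$ it invokes the $O(1)$-approximation guarantee directly, and when $\dist(c,c^\star) > 4\bar{r}$ it uses a Markov-type argument to extract a set $\widehat{P}$ of at least $(n-m)/2$ inliers within distance $2\bar{r}$ of $c^\star$, each of which pays at least $\dist(c,c^\star) - 2\bar{r}$. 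You instead sum the triangle inequality $\dist(p,c) \ge \dist(c,c^\star) - \dist(p,c^\star)$ over the common inlier set $P_I^\star \cap P_I^{(c)}$ (of size at least $n-2m \ge 2m$), obtaining $\cost^{(m)}(P,c) \ge (n-2m)\dist(c,c^\star) - \cost^{(m)}(P,c^\star)$, and then absorb the subtracted term using the approximation guarantee $\cost^{(m)}(P,c^\star) \le O(1)\cdot\cost^{(m)}(P,c)$. This is cleaner and avoids the case split entirely; the paper's $\bar{r}$-based argument, on the other hand, is the one that carries over to the robust $(k,z)$-clustering generalization (Lemma~\ref{lemma: bound for T_I and T_O}), where the cluster-size and remote-inlier conditions of Assumption~\ref{as: kmedian} play the role that $n \ge 4m$ plays here, so the two formulations are not interchangeable in that broader context. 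Both steps of your argument check out: the intersection bound $|P_I^\star \cap P_I^{(c)}| \ge 2(n-m) - n = n-2m$, the term-by-term validity of the triangle inequality even when individual terms $\dist(c,c^\star) - \dist(p,c^\star)$ are negative, and the final rearrangement are all sound.
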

\begin{proof}
Let $r_{\max} := \max_{p\in P_I^\star} \dist(p, c^\star)$ denote the maximum distance from $P_I^\star$ to $c^\star$.
Let $d_{\max}:=\dist(c,c^\star)$ denote the distance between the approximate center $c^\star $ and the center $c$. Let $\bar{r}:=\frac{\cost(P_I^\star,c^\star)}{n-m}$ denote the average distance from $P_I^\star$ to $c^\star$.

Utilizing the triangle inequality, for any point $p \in L^\star$, we can assert that $\dist(p,c) \ge \dist(p,c^\star) - \dist(c,c^\star) \ge r_{\max} - d_{\max}$, thus $T_O\ge r_{\max} - d_{\max}$. Similarly, for any point $p \in P_I^\star$, we have $\dist(p,c) \le \dist(p,c^\star) + \dist(c,c^\star) = r_{\max} + d_{\max}$, thus $T_I\le r_{\max}+d_{\max}$. Consequently, depending on whether $r_{\max}$ is greater than or less than $d_{\max}$, we derive different expressions for $T_I - T_O$: If $r_{\max} \ge d_{\max}$, then $T_I - T_O \le  2 \cdot d_{\max}$; if $r_{\max} < d_{\max}$, given that $T_O \ge 0$, it follows that $T_I - T_O \le r_{\max} + d_{\max} < 2 \cdot d_{\max}$.
Therefore, we turn to prove $2\cdot m\cdot d_{\max}\le O(\cost^{(m)}(P,c)) $. 
We discuss the relationship between $d_{\max}$ and $\bar{r}$ in two cases.

\paragraph{Case 1: $d_{\max}\le 4 \bar{r}$}
By definition of $\bar{r}$, we have 
\begin{align*}
4 \cdot \cost^{(m)}(P,c^\star) &=\quad 4 \cdot (n-m)\bar{r}\\
                       &\ge \quad  (n-m)d_{\max}\\
                       &\ge \quad 2 \cdot  m  \cdot d_{\max} & (n\ge 4m).    
\end{align*}
since $c^\star$ is an $O(1)$-approximate solution for robust geometric median, we have \[O( \cost^{(m)}(P,c))\ge 2  \cdot m \cdot d_{\max}\].

\paragraph{Case 2: $d_{\max}> 4\bar{r}$}
Let $\widehat{P}:=\{p\in P_I^\star| \dist(p,c^\star)\le 2 \bar{r}\}$. By definition, we have $|\widehat{P}|\ge \frac{1}{2}|P_I^\star|=\frac{n-m}{2}$. Since $d_{\max} > 4\bar{r}$, we obtain
\begin{align*}
   8 \cdot \cost^{(m)}(P,c)
    &\ge \quad 8 \cdot ((n-m)/2-m)\min_{p\in \widehat{P} }\dist(p,c)\\
    & \ge \quad 4 \cdot  m \cdot \min_{p\in \widehat{P} }\dist(p,c) & (n\ge 4m)\\
    &\ge \quad 4\cdot m \cdot\min_{p\in \widehat{P}}(d_{\max}-\dist(p,c^\star)) & (\text{Triangle Inequality})\\
    &\ge\quad 4 \cdot m \cdot (d_{\max}-2\bar{r}) \\
    &> \quad  2\cdot m \cdot d_{\max}.
\end{align*}

Overall, we have $T_I-T_O\le O( \cost^{(m)}(P,c)/m)$, which completes the proof.
\end{proof}

\noindent
Combining Lemmas \ref{lemma: Error upper bound} and \ref{lemma: induced error of each point}, we obtain the bound: $|\cost^{(m)}(P,c)-\cost^{(m)}(S_O\cup P_I^\star,c)|\le O(1)\cdot(\Delta+|m_P-m_S|)\cdot \frac{\cost^{(m)}(P,c)}{m}$. To prove $|\cost^{(m)}(P,c) - \cost^{(m)}(S_O \cup P_I^\star,c)| \le O(\eps) \cdot \cost^{(m)}(P,c)$, it remains to ensure that $\Delta= O(\eps m)$ and $|m_P - m_S|=O(\eps m)$. We show that both conditions hold when $S_O$ is an $\varepsilon$-approximation for the ball range space on $L^\star$.

\begin{lemma}[Bounding misaligned outlier count]
\label{refined outlier gap}
Suppose $S_O$ is an $\eps$-approximation for the ball range space on $L^\star$, we have
$
|m_P-m_S| \le 2\cdot\eps\cdot m. 
$
\end{lemma}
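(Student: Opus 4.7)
The plan is to characterize both $m_P$ and $m_S$ via radii of balls around $c$ that capture the respective inlier sets, and then combine a monotonicity argument on $P_I^\star$ with the ball range approximation on $L^\star$ to control their difference.

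First I would choose $r_P$ so that $P\cap \Ball(c,r_P)$ equals the inlier set of $P$ with respect to $c$ (up to ties on the boundary). Partitioning $P = P_I^\star \cup L^\star$ and using that exactly $m_P$ points of $L^\star$ lie outside this ball (they are the members of $L^\star \cap L^{(c)}$), I obtain
\[|L^\star \cap \Ball(c,r_P)| = m - m_P \quad\text{and}\quad |P_I^\star \cap \Ball(c,r_P)| = n - 2m + m_P.\]
Analogously I would pick $r_S$ so that the weighted inliers of $S_O \cup P_I^\star$ are captured by $\Ball(c,r_S)$; splitting $S_O$ off from $P_I^\star$ in the weight equation then yields $w(S_O \cap \Ball(c,r_S)) = m - m_S$ and $|P_I^\star \cap \Ball(c,r_S)| = n - 2m + m_S$.

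Next I would split into two cases according to which radius is larger. Assume $r_S \le r_P$. Monotonicity of $|P_I^\star \cap \Ball(c,u)|$ in $u$ gives $n - 2m + m_S \le n - 2m + m_P$, hence $m_S \le m_P$. For the matching lower bound I would combine monotonicity of $w(S_O \cap \Ball(c,u))$ with the $\eps$-approximation of $S_O$ on the ball range space of $L^\star$:
\[m - m_S \;=\; w(S_O \cap \Ball(c,r_S)) \;\le\; w(S_O \cap \Ball(c,r_P)) \;\le\; |L^\star \cap \Ball(c,r_P)| + \eps m \;=\; m - m_P + \eps m,\]
which gives $m_S \ge m_P - \eps m$. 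The case $r_S > r_P$ is symmetric: monotonicity of $|P_I^\star \cap \Ball(c,\cdot)|$ supplies the lower bound $m_S \ge m_P$, while the $\eps$-approximation applied at $r_S$, together with $|L^\star \cap \Ball(c,r_P)| \le |L^\star \cap \Ball(c,r_S)|$, gives $m_S \le m_P + \eps m$.

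The main obstacle I anticipate is a technical one. When $S_O$ carries uniform weights $m/|S_O|$ and several points of $P$ have tied distances to $c$, no single radius aligns exactly with both inlier partitions, so a little fractional outlier mass may sit on the boundary sphere of radius $r_P$ or $r_S$. I would resolve this by defining $r_P$ and $r_S$ as infima over admissible thresholds and separately tracking open versus closed balls on the boundary, or equivalently by a vanishingly small generic perturbation of distances. Absorbing this boundary effect costs at most an additive $\eps m$ in the estimate, which is precisely what accounts for the factor $2$ in the stated bound compared with the $\eps m$ produced by the clean argument above.
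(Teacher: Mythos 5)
Your proof is correct and follows essentially the same route as the paper's: both reduce the claim to evaluating the ball-range-space approximation on $L^\star$ at a single threshold radius that separates inliers from outliers, with monotonicity of the ball counts supplying the other direction and the lone partial-weight/boundary point accounting for the extra slack (your additive $\eps m$, the paper's $+1$, each absorbed into the factor $2$). The paper's bookkeeping uses the order statistics $d_{l_1}$ and $d_{l_2-1}$ of $P_I^\star$ together with its partial-weight claim instead of your radii $r_P$ and $r_S$, but the underlying mechanism is identical.
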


\noindent
Before proving this lemma, we first analyze the properties of $w'$.
Given a point $p\in S_O\cup  P_I^\star $, we say $p$ is {of partial weight} w.r.t. $c$ if $0<w'(p)<w(p)$ when $p\in S_O$ and $0<w'(p)< 1$ when $p\in  P_I^\star $.
Intuitively, such a point is partially an inlier and partially an outlier of $S_O\cup  P_I^\star $ to $c$.
The following claim indicates that the number of partial-weight points is at most one for any $c$.

\begin{claim}[Properties of partial-weight points]
\label{claim: partial weight for k}
%
For every center $c\subset \R^d$, there exists at most one point $v\in S_O\cup  P_I^\star $ of partial weight w.r.t. $c$. 
Moreover, for any other point $p\in S_O\cup  P_I^\star $ with $w'(p) = w(p)$, we have $\dist(p,c)\le \dist(v,c)$.
\end{claim}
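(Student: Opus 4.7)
The plan is a standard exchange argument exploiting the fact that $w'$ is chosen as the minimizer of the linear objective $\sum_{p\in S_O\cup P_I^\star} w_S(p)\cdot \dist(p,c)$ over the feasibility polytope $\mathcal{W}$. Any feasible perturbation that shifts mass from a farther point to a closer point cannot increase the objective, and must strictly decrease it whenever the two distances differ. As in the preceding discussion, I will tacitly use the convention that the distances $\dist(p,c)$ are distinct; genuine ties can be resolved by an infinitesimal perturbation, or by a lexicographic tie-breaking rule when selecting $w'$ from the set of optimizers.

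For the uniqueness part, I would argue by contradiction. Suppose two distinct points $u,v\in S_O\cup P_I^\star$ are both of partial weight w.r.t.\ $c$, and WLOG $\dist(u,c)<\dist(v,c)$. Set $\delta:=\min\{w(u)-w'(u),\,w'(v)\}$, which is strictly positive by the partial-weight hypothesis on both points. Define $\tilde w$ by $\tilde w(u):=w'(u)+\delta$, $\tilde w(v):=w'(v)-\delta$, and $\tilde w(p):=w'(p)$ for all other $p$. Then $\|\tilde w\|_1=\|w'\|_1=n-m$ and $0\le \tilde w(p)\le w(p)$ for every $p$, so $\tilde w\in\mathcal{W}$; yet the objective drops by $\delta\cdot(\dist(v,c)-\dist(u,c))>0$, contradicting the optimality of $w'$.

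For the location statement, let $v$ be the unique partial-weight point guaranteed by the first part, and suppose for contradiction there exists some other $p\in S_O\cup P_I^\star$ with $w'(p)=w(p)$ and $\dist(p,c)>\dist(v,c)$. Set $\delta:=\min\{w(p),\,w(v)-w'(v)\}>0$ and perform the analogous swap: $\tilde w(p):=w'(p)-\delta$, $\tilde w(v):=w'(v)+\delta$, and $\tilde w=w'$ elsewhere. Feasibility is again preserved, and the objective strictly decreases by $\delta\cdot(\dist(p,c)-\dist(v,c))>0$, contradicting optimality. The main subtlety I expect is the tie case $\dist(u,c)=\dist(v,c)$ (or $\dist(p,c)=\dist(v,c)$): the local swap then leaves the objective unchanged, so multiple optimal weight functions coexist. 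This is harmless once the distinct-distance convention inherited from the earlier discussion is in force, and modulo that point the entire claim reduces to the fact that a basic optimal solution of the LP defining $w'$ splits at most one coordinate.
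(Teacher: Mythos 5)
Your proof is correct and uses essentially the same exchange argument as the paper: both parts are proved by shifting a small amount $\delta$ of weight from a farther point to a nearer one within the polytope $\mathcal{W}$ and contradicting the optimality of $w'$. The only cosmetic difference is in handling ties: you invoke the distinct-distance convention to get a strict decrease, whereas the paper notes the transfer does not increase the cost and re-selects the optimizer so that at most one partial-weight point remains; both resolutions are fine.
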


\begin{proof}
By contradiction, we assume that there exist two points $v,v'\in S_O\cup  P_I^\star $ of partial weight w.r.t. $c$. 
W.l.o.g., suppose $\dist(v,c)\leq \dist(v',c)$.
We note that transferring weight from $v$ to $v'$ increases $w'(v)$ by any constant $\delta > 0$ and decreases $w'(v')$ by $\delta$ does not increase the term $\cost^{(m)}(S_O\cup  P_I^\star ,c)$.
Then by selecting a suitable $\delta$, we can make either $v$ or $v'$ no longer of partial weight. 
Hence, there exists at most one point $v\in S_O\cup  P_I^\star $ of partial weight w.r.t. $c$. 

For any other point $p\in S_O\cup  P_I^\star $ with $w'(p)=w(p)$, suppose $\dist(v,c)<\dist(p,c)$.
In this case, increasing $w'(v)$ by a small amount $\delta > 0$ and decreasing $w'(p)$ by $\delta$ decreases $\cost^{(m)}(S_O\cup  P_I^\star ,c)$.
This contradicts the definition of $w'$, which completes the proof.
\end{proof}

Recall that $d_1, \ldots ,d_m$ represent the distances from the $m$ furthest points in $ P_I^\star $ to the center $c$. 
Now we are ready to prove Lemma \ref{refined outlier gap}.

\begin{proof}[Proof of Lemma \ref{refined outlier gap}]
We only need to consider the case of $m_P > m_S$ and $m_P<m_S$.
Without loss of generality, we assume that $m_P > m_S$. 
Let $l_1 := m-m_P+1$. 
Based on the definition of $d_1,...,d_m$, each inlier point $p$ satisfies $\dist(p,c)\le d_{l_1-1}$. There are $m-m_P$ inlier points in $L^\star$, so we have:
\begin{align}\label{ineq: 8_1}
    |L^\star\cap \Ball(c,d_{l_1})|\le m-m_P.
\end{align}
Let $l_2 := m-\lfloor m_S\rfloor+1 $. Since $m_P>m_S$, we have $m_P\ge \lfloor m_S \rfloor+1$, then we get the inequality 
\begin{align}\label{ineq: 8_3}
    l_2-1\ge l_1.
\end{align}
Moreover, we claim that 
\begin{align}
\label{ineq:8_2}
 m -w(S_O\cap \Ball(c,d_{l_2-1}))\le \lfloor m_S\rfloor+1.
\end{align}
By Claim~\ref{claim: partial weight for k}, at most one point $v\in S_O \cup  P_I^\star $ of partial weight w.r.t. $c$ exists. 
If $v\in  P_I^\star $, we have $\dist(v,c)=d_{l_2}$.
Any point $v'\in  P_I^\star $ with $\dist(v',c)\geq d_{l_2-1}$ is an outlier of $S_O \cup  P_I^\star $.
Then there are $m-\lfloor m_S\rfloor -1$ points in $ P_I^\star $ that have distance to $C$ at least $d_{l_2-1}$. 
By contradiction, we assume that $ m -w(S_O\cap \Balls(c,d_{l_2-1})) > \lfloor m_S\rfloor+1$. Then there are no less than $m$ points in $P$ that have a distance to $c$ greater than $d_{l_2-1}$, which contradicts the fact that $v'$ is an outlier of $S_O\cup  P_I^\star $. 
Hence, Inequality \eqref{ineq:8_2} holds.
For the case that $v\in S_O$ or there is no point of partial weight w.r.t. $c$, the argument is similar.

Since $m_P>m_S$, we have 
$
w(S_O\cap \Ball(c,d_{l_1}))> m-m_P.
$
Combining with Inequality (\ref{ineq: 8_1}), we conclude that 
\begin{equation}
    \begin{aligned}\label{ineq: 8_a}
        |L^\star\cap \Ball(c,d_{l_1})|& \le \quad m-m_P\\                    &< \quad w(S_O\cap \Ball(c,d_{l_1})).
    \end{aligned}
\end{equation} 
Moreover,
\begin{equation}\label{ineq: 8_b}
    \begin{aligned}
        w(S_O\cap \Ball(c,d_{l_1}))&\ge\quad w(S_O\cap \Ball(c,d_{l_2-1}))\\
    & >\quad m-\lfloor m_S \rfloor-1.
    \end{aligned}
\end{equation}
Then, we have
\begin{align*}
||L^\star\cap \Ball(c,d_{l_1})|&- w(S_O\cap \Ball(c,d_{l_1}))|\\
&=\quad w(S_O\cap \Ball(c,d_{l_1}))-|L^\star\cap \Ball(c,d_{l_1})|& (\text{Inequality (\ref{ineq: 8_a})})\\
&\ge \quad m_P-\lfloor m_S\rfloor -1 & (\text{Inequality (\ref{ineq: 8_1}) and (\ref{ineq: 8_b})})\\
&\ge \quad m_P-m_S-1.
\end{align*}
Since $S_O$ is an $\eps$-approximation for the ball range space on $L^\star$, by Lemma \ref{lemma: coreset size of ball range space}, we have 
\[
||L^\star\cap \Ball(c,d_{l_1})|- w(S_O\cap \Ball(c,d_{l_1}))|\le \eps \cdot m.
\]
Combining the above two inequalities, we have
\[|m_P-m_S|\le \eps \cdot m +1 \le 2\cdot\eps\cdot  m. \]
Similarly, we can get the same conclusion when $m_P<m_S$, which completes the proof of Lemma \ref{refined outlier gap}.
\end{proof}

\noindent
Now we are ready to prove Lemma \ref{lemma: induced error of S_O for 1-median}.

\begin{proof}[Proof of Lemma \ref{lemma: induced error of S_O for 1-median}]
Based on Lemmas \ref{ob:comparision} and \ref{lemma: Error upper bound}, we have 
\begin{align*}
    |\cost^{(m)}(P,c)-&\cost^{(m)}(P_I^\star\cup S_O,c)|\\
    &\le \quad 2\cdot(\Delta+|m_P-m_S|)\cdot (T_I-T_O)\\
    &\le \quad O(1)\cdot(\Delta+|m_P-m_S|)\cdot \cost^{(m)}(P,c)/m &\text{(Lemma \ref{lemma: induced error of each point})}\\ 
    &\le \quad O(1)\cdot(\eps m+|m_P-m_S|)\cdot \cost^{(m)}(P,c)/m &\text{(Lemma \ref{lemma: refined S_O size})}\\
    &\le \quad O(\eps)\cdot \cost^{(m)}(P,c), & \text{(Lemma \ref{lemma: coreset size of ball range space})}
\end{align*} 
which completes the proof.
\end{proof}

\subsection{Extension to other metric spaces}
\label{sec: other metric space}
In this section, we explore the extension of Algorithm \ref{alg3}, designed for the robust geometric median in Euclidean space, to the metric spaces {by leveraging the notions of VC dimension and doubling dimension since the VC dimension is known for various metric spaces \cite{BOUSQUET20152302,Braverman,cohen2022towards,huang2023coresets}.}

Let $(\mathcal{X},\dist)$ denote the metric space, where $\mathcal{X}$ is the set under consideration, and $\dist: \mathcal{X} \times \mathcal{X}\rightarrow \R_{\ge 0}$ is a function that measures the distance between points in $\mathcal{X}$, satisfying the triangle inequality. Specifically, in the Euclidean metric, $\mathcal{X}$ represents Euclidean space $\R^d$ and $\dist$ is the Euclidean distance.

Similar to the analysis for robust geometric median on Euclidean space, we discuss the induced error of $S_I$ and $S_O$ separately. We first introduce how to bound the error induced by $S_O$. Since the $\dist$ function satisfies the triangle inequality, our Lemmas \ref{ob:comparision}-\ref{lemma: Error upper bound} hold, ensuring that each point in $S_O$ induces at most $O(\cost^{(m)}(P,c)/ m)$ error.
To ensure the number of points in $S_O$ inducing error is $O(\eps m)$, it suffices for $S_O$ to be an $\eps$-approximation for the ball range space on $L^\star$ (as shown in Lemma \ref{refined outlier gap}).
Next, we illustrate how this condition can be satisfied {using the notions: 1) VC dimension; 2) doubling dimension.}

\paragraph{VC dimension.}
We begin by introducing the concept of the VC dimension of the ball range space, which serves as a measure of the complexity of this ball range space.

\begin{definition}[VC dimension of ball range space]
\label{def: VC dimension}
Let $M=(\mathcal{X},\dist)$ be the metric space and define $\Balls(\mathcal{X}):=\{\Ball(c,u)\mid c\in \mathcal{X}, u>0\}$ as the collection of balls in the space. The VC dimension of the ball range space $(\mathcal{X},\Balls(\mathcal{X}))$, denoted by $d_{VC}(\mathcal{X})$, is the maximum $|P|$, $P\subseteq \mathcal{X}$ such that $|P\cap\Balls(\mathcal{X})|=2^{|P|}$, where $P\cap \Balls(\mathcal{X}):=\{P\cap \Ball(c,u) \mid \Ball(c,u)\in \Balls(\mathcal{X})\}$.
\end{definition}

\noindent
The VC dimension of $(\mathcal{X},\Balls(\mathcal{X}))$ aligns to the pseudo-dimension of $(\mathcal{X},\Balls(\mathcal{X}))$ used by \cite{LI2001516}. Based on this observation, we give out a refined lemma from \cite{LI2001516}.

\begin{lemma}[Refined from \cite{LI2001516}]
\label{lemma: refined S_O size}
Let $M=(\mathcal{X},\dist)$ be the metric space with VC dimension $d_{VC}(\mathcal{X})$. 
Given dataset $P_O\subseteq \mathcal{X}$, assume $S_O$ is a uniform sample of size $\tilde{O}(\frac{d_{VC}(\mathcal{X})}{\eps^2})$ from $P_O$, then with probability $1-1/poly(1/\eps)$, $S_O$ is an $\eps$-approximation of the ball range space on $P_O$.
\end{lemma}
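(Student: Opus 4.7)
\begin{proofof}[Proof proposal for Lemma \ref{lemma: refined S_O size}]
The plan is to reduce the claim to the classical VC-dimension $\eps$-approximation sampling bound for range spaces. First, I would observe that the ball range space $(P_O, \mathcal{P})$ induced on $P_O$, with $\mathcal{P}=\{P_O\cap \Ball(c,u)\mid c\in \mathcal{X},u>0\}$, has VC dimension at most $d_{VC}(\mathcal{X})$ by Definition \ref{def: VC dimension}, since any shattered subset of $P_O$ is a shattered subset of $\mathcal{X}$. This frees us from the metric structure and lets us work purely at the level of a set system of bounded VC dimension.

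Next, I would invoke the classical uniform-convergence theorem for VC classes (the Vapnik–Chervonenkis / Li–Long–Srinivasan $\eps$-approximation bound), which states that a uniform i.i.d.\ sample of size $O(\eps^{-2}(d_{VC}(\mathcal{X})+\log(1/\delta)))$ from $P_O$ is, with probability at least $1-\delta$, an $\eps$-approximation of any range space on $P_O$ with VC dimension at most $d_{VC}(\mathcal{X})$, in the sense that
\[
\Bigl|\tfrac{|P_O\cap R|}{|P_O|}-\tfrac{|S_O\cap R|}{|S_O|}\Bigr|\le \eps\quad\text{for every }R\in \mathcal{P}.
\]
This matches exactly the inequality required by Definition \ref{def: ball range space}. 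Setting $\delta=1/\mathrm{poly}(1/\eps)$ contributes only a $\log(1/\eps)$ factor, absorbed into the $\tilde{O}$ notation, giving the stated sample size $\tilde{O}(d_{VC}(\mathcal{X})/\eps^2)$.

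A small technical point to handle: sampling with or without replacement, and the mild distinction between VC dimension and pseudo-dimension as noted in the statement. Since $\Ball(c,u)$ is a $\{0,1\}$-valued indicator (membership), the pseudo-dimension of the ball range space coincides with its VC dimension, so the \cite{LI2001516} bound applies directly; sampling without replacement can only reduce variance, so the same concentration bound holds (this is standard). I do not expect a serious obstacle here; the only genuinely novel ingredient compared to \cite{LI2001516} is the choice of $\delta$, which is what the ``refined'' in the statement refers to. Thus the lemma follows immediately by combining the VC-dimension reduction with the standard $\eps$-approximation sampling theorem.
\end{proofof}
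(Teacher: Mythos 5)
Your proposal is correct and matches the paper's approach: the paper gives no separate proof of this lemma, simply noting that the VC dimension of the ball range space coincides with the pseudo-dimension used in \cite{LI2001516} and invoking the classical $\eps$-approximation sampling theorem, exactly as you do. Your additional remarks on the choice of $\delta$ and sampling without replacement are fine but not needed beyond what the citation already covers.
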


\noindent
The Lemma \ref{lemma: coreset size of ball range space}, used earlier for the Euclidean space case, is a direct corollary, since in Euclidean space $\R^d$, we have $d_{VC}(\mathcal{X})=O(d)$.   
This lemma illustrates the number of samples required from $L^\star$. 

\paragraph{Doubling dimension.}
{Another notion to measure the complexity of the ball range space is the doubling dimension. } 

\begin{definition}[Doubling dimension \cite{gupta2003bounded, Assouad1983}]
\label{def: doubling dimension}
The doubling dimension $\mathrm{ddim}(\mathcal{X})$ of a metric
space $(\mathcal{X},\dist)$ is the least integer $t$ such that for any $\Ball(c,u)$ with $c\in \mathcal{X}$, $u>0$, it can be covered by $2^t$ balls of radius $u/2$.
\end{definition}

\noindent
{We denote the metric space with bounded doubling dimension as \textbf{doubling metric}.
Based on \cite{huang2018epsilon}, it is known that the VC dimension of a doubling metric $(\mathcal{X},\dist)$ may not be bounded. 
Therefore, the previous Lemma \ref{lemma: refined S_O size} does not apply for an effective bound. }
{We want to directly relate the ball range space bound to the doubling dimension. 
This goal can be achieved by the following refined lemma.}

%
%

\begin{lemma}[Ball range space approximation {in doubling metrics} \cite{huang2018epsilon}]
\label{lemma: refined S_O size for doubling}
Let $M=(\mathcal{X},\dist)$ be the metric space with {doubling dimension} $\mathrm{ddim}(\mathcal{X})$. Given $P_O\subseteq \mathcal{X}$, assume $S_O$ is a uniform sample of size $\tilde{O}(\mathrm{ddim}(\mathcal{X})\eps^{-2})$ from $P_O$, then with probability $1-1/poly(1/\eps)$, $S_O$ is an $\eps$-approximation of the ball range space on $P_O$.
\end{lemma}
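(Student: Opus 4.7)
The plan is to adapt the standard VC-style uniform-convergence argument to the doubling setting, where the VC dimension of the ball range space may be infinite but the local complexity at any given scale is controlled by $\mathrm{ddim}(\mathcal{X})$. Concretely, I would reduce the continuum of balls $\{\Ball(c,u) : c\in \mathcal{X},\, u>0\}$ to a finite ``representative family'' $\mathcal{B}^{\star}$ such that (i) every ball can be approximated from within $\mathcal{B}^{\star}$ up to an $O(\eps|P_O|)$ additive change in how it intersects $P_O$, and (ii) $|\mathcal{B}^{\star}| = |P_O|^{O(1)}\cdot 2^{O(\mathrm{ddim}(\mathcal{X}))}$. Once this is in hand, a Chernoff + union bound argument gives the stated sample complexity.

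First I would discretize the radii. Let $\Delta := \max_{p,q\in P_O}\dist(p,q)$ and consider only $O(\log(|P_O|/\eps))$ geometrically spaced radii $u_j = (1+\eps)^j$ covering the range $[\eps\Delta/|P_O|,\Delta]$; balls of radius smaller than this lower bound either miss $P_O$ entirely or trap at most $\eps|P_O|/3$ extra points, so they can be handled trivially. For each such radius $u_j$, I would build a $\tfrac{\eps u_j}{c}$-net $N_j$ of the ``relevant'' region $\bigcup_{p\in P_O} \Ball(p, 2u_j)$ using the doubling property, which by a standard packing argument yields $|N_j|\le |P_O|\cdot (c/\eps)^{O(\mathrm{ddim}(\mathcal{X}))}$. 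The representative family is then $\mathcal{B}^{\star} := \{\Ball(\tilde{c},u_j): \tilde{c}\in N_j\}$, and its cardinality is $|P_O|\cdot \eps^{-O(\mathrm{ddim}(\mathcal{X}))}\cdot\log(|P_O|/\eps)$.

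Next I would verify the approximation property: for any ball $\Ball(c,u)$ there exists $\Ball(\tilde c,\tilde u)\in \mathcal{B}^{\star}$ with $\dist(\tilde c,c)\le \eps u/c$ and $|\tilde u - u|\le \eps u$, so every point of $P_O$ that changes membership lies in an $O(\eps u)$-thick annulus around $\partial\Ball(c,u)$. By repeating the doubling covering at the relevant scale, such an annulus is coverable by $2^{O(\mathrm{ddim}(\mathcal{X}))}$ balls of radius $\eps u$; since each tiny ball is representable in $\mathcal{B}^{\star}$, one can bootstrap the argument (or simply double the resolution) to ensure only an $\eps|P_O|/3$ additive discrepancy. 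The same argument handles $S_O$, provided $\mathcal{B}^{\star}$ itself is well-approximated on $S_O$.

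Finally, for each fixed $B \in \mathcal{B}^{\star}$, the count $|S_O\cap B|$ is a sum of $|S_O|$ Bernoulli indicators with mean $|P_O\cap B|/|P_O|$, so Chernoff--Hoeffding gives $\bigl||S_O\cap B|/|S_O| - |P_O\cap B|/|P_O|\bigr|\le \eps/3$ with failure probability $\exp(-\Omega(\eps^2|S_O|))$. Union-bounding over $|\mathcal{B}^{\star}|$ events and choosing $|S_O| = \Theta(\eps^{-2}(\mathrm{ddim}(\mathcal{X})\log(1/\eps)+\log|P_O|))=\tilde{O}(\mathrm{ddim}(\mathcal{X})\eps^{-2})$ drives the overall failure probability below $1/\mathrm{poly}(1/\eps)$. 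Combined with the approximation step, this yields the $\eps$-approximation guarantee for every ball simultaneously. The main obstacle I anticipate is controlling the approximation-by-net step so that the annular discrepancy is truly $O(\eps|P_O|)$ uniformly over all scales; this requires either a careful multi-scale net construction or, alternatively, an application of the generic chaining / iterative refinement technique of \cite{narayanan2019optimal} adapted from VC dimension to doubling dimension.
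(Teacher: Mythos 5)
Your overall strategy (discretize the ball family, then Chernoff plus a union bound) is the natural first attempt, but it breaks at the approximation-by-net step, and this is precisely the point where the result is nontrivial. When you replace $\Ball(c,u)$ by a representative $\Ball(\tilde c,\tilde u)$ with $\dist(c,\tilde c)=O(\eps u)$ and $|\tilde u-u|=O(\eps u)$, the symmetric difference is indeed contained in an annulus of width $O(\eps u)$ around the sphere of radius $u$ about $c$ --- but nothing bounds the number of points of $P_O$ in that annulus. If, say, all of $P_O$ lies at distance exactly $u$ from $c$ (perfectly consistent with bounded doubling dimension, e.g.\ points on a circle in $\R^2$), the annulus contains all of $P_O$, and the discrepancy between the true ball and every net representative can be $\Omega(|P_O|)$ in both the $P_O$- and $S_O$-counts. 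Your proposed fix --- covering the annulus by $2^{O(\mathrm{ddim}(\mathcal{X}))}$ balls of radius $\eps u$ and ``bootstrapping'' --- bounds the metric extent of the annulus, not its $P_O$-mass: each small covering ball may still contain $\Omega(|P_O|)$ points, and recursing does not change this. Geometric spacing of the radii has the same defect: two adjacent radii $u_j,u_{j+1}$ can separate $\Omega(|P_O|)$ points. This is not a technicality; it is the reason the VC dimension of the ball range space in a doubling metric can be unbounded, so no purely geometric discretization of centers and radii yields a bracketing family of size $|P_O|^{O(1)}\eps^{-O(\mathrm{ddim}(\mathcal{X}))}$ whose brackets are $\eps|P_O|$-accurate. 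You flag this as ``the main obstacle,'' but the proof as written does not overcome it.

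For comparison, the paper does not prove the lemma from scratch: it invokes Lemma 3.1 of \cite{huang2018epsilon} and then transfers the guarantee back to $\dist$. The cited result circumvents exactly the obstacle above by first constructing a $(1\pm\eps)$-distorted distance $\dist'$ from a hierarchical net decomposition of $\mathcal{X}$, under which balls become unions of net cells and the range space acquires bounded combinatorial complexity; uniform convergence is proved for $\dist'$-balls and then converted into an $O(\eps)$-approximation for $\dist$-balls. Your final Chernoff-plus-union-bound step is fine once a valid finite bracketing family exists, but the family $\mathcal{B}^{\star}$ you construct does not have the approximation property you assert, so the argument has a genuine gap at its core.
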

\begin{proof}
Suppose the distorted distance function $\dist': \mathcal{X}\times \mathcal{X}\rightarrow \R_{\ge 0}$ satisfies: for any $x,y\in \mathcal{X}$, we have $(1-\eps)\dist(x,y)\le \dist'(x,y)\le (1+\eps)\dist(x,y)$. By Lemma 3.1 of \cite{huang2018epsilon}, we know that with probability $1-poly(1/\eps)$, $S_O$ is an $\eps$-ball range space approximation of $P_O$ w.r.t. $\dist'$, when $S_O=\tilde{O}(\mathrm{ddim}(\mathcal{X})\eps^{-2})$.
By setting of $\dist'$, we know $S_O$ is an $O(\eps)$-ball range space approximation of $P_O$ w.r.t. $\dist$, which completes the proof.
\end{proof}
\noindent
This lemma illustrates the number of points needed to be sampled in order to bound the induced error of $S_O$ when the metric space is a doubling metric.

The remaining problem is to bound the induced error of $S_I$. We give out a generalized version of Theorem \ref{theorem: coreset in  P_I} as below.

\begin{theorem}[Refined from Corollary 5.4 in \cite{huang2023coresets}]
\label{lemma: generalized 1-Median P_I^star size}
Let $M=(\mathcal{X},\dist)$ be the metric space. 
There exists a randomized algorithm that in $O(n)$ time constructs a weighted subset $S_I\subseteq P_I^\star$ of size:{
\begin{itemize}
    \item $\tilde{O}(d_{VC}(\mathcal{X})\cdot \eps^{-4})$, w.r.t. the VC dimension $d_{VC}(\mathcal{X})$,
    \item $\tilde{O}(\mathrm{ddim}(\mathcal{X})\cdot \eps^{-2})$, w.r.t. the doubling dimension $\mathrm{ddim}(\mathcal{X})$,
\end{itemize}}
such that for every dataset $P_O$ of size $m$, every integer $0\le t\le m$ and every center $c\in \R^d$, 
$
    |\cost^{(t)}(P_O\cup P_I^\star ,c)-\cost^{(t)}(P_O\cup S_I,c)|
        \le \eps \cdot \cost^{(t)}(P_O\cup P_I^\star,c)+ 2\eps\cdot \cost( P_I^\star ,c^\star)$.
\end{theorem}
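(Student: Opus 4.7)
\begin{proofof}[Proposal for Theorem \ref{lemma: generalized 1-Median P_I^star size}]

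The plan is to follow the proof of Corollary 5.4 in \cite{huang2023coresets} verbatim, replacing every appeal to the Euclidean ball range space bound (the one we have already isolated as Lemma \ref{lemma: coreset size of ball range space}) with Lemma \ref{lemma: refined S_O size} (in the VC-dimension regime) or Lemma \ref{lemma: refined S_O size for doubling} (in the doubling-dimension regime). The Euclidean proof uses the structure of $\mathbb{R}^d$ only through the ball range space approximation; everything else—triangle inequality, ring decomposition, and the integral representation of the robust cost—is purely metric.

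First, I would introduce the standard ring decomposition of $P_I^\star$ around the $O(1)$-approximate center $c^\star$. Let $\bar{r}:=\cost(P_I^\star,c^\star)/|P_I^\star|$ and partition the inliers into an ``inner'' ball $R_0:=\{p\in P_I^\star : \dist(p,c^\star)\le \eps\bar{r}\}$, geometric rings $R_i:=\{p: 2^{i-1}\eps\bar{r}\le \dist(p,c^\star)<2^i\eps\bar{r}\}$ for $i=1,\ldots,O(\log(1/\eps))$, and a ``far'' set. A Markov argument shows $|R_{\text{far}}|\le \mathrm{poly}(\eps)|P_I^\star|$ and these points are stored individually with negligible size cost. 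In each remaining ring $R_i$, take a uniform sample $S_I^{(i)}$ of size $\tilde{O}(d_{VC}(\mathcal{X})\eps^{-4})$ (respectively $\tilde{O}(\mathrm{ddim}(\mathcal{X})\eps^{-2})$) and weight each sample point by $|R_i|/|S_I^{(i)}|$. Set $S_I:=\bigcup_i S_I^{(i)}$.

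For the error analysis, fix $P_O$ of size $m$, $0\le t\le m$, and $c\in\mathcal{X}$. Using the integral identity (Fact F.1 of \cite{huang2022near}),
\begin{equation*}
\cost^{(t)}(P_O\cup P_I^\star,c)=\int_0^\infty\bigl(|P_O\cup P_I^\star|-t-|(P_O\cup P_I^\star)\cap \Ball(c,u)|\bigr)^+ du,
\end{equation*}
and analogously for $\cost^{(t)}(P_O\cup S_I,c)$. Subtracting, the $P_O$-contribution cancels, so the discrepancy is controlled purely by $\bigl||P_I^\star\cap \Ball(c,u)|-w(S_I\cap\Ball(c,u))\bigr|$ summed over $u$. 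Ring by ring, this is exactly the ball range space approximation quantity on $R_i$: by Lemma \ref{lemma: refined S_O size} (resp.\ Lemma \ref{lemma: refined S_O size for doubling}), the chosen sample size forces the per-ring deviation to be at most $\eps^2|R_i|$ (resp.\ $\eps|R_i|$) uniformly in $c,u$. Multiplied by the ring's radial scale $2^i\eps\bar{r}$ and summed via triangle inequality $\dist(p,c)\in \dist(p,c^\star)\pm\dist(c^\star,c)$, the outer rings contribute at most $\eps\cdot\cost^{(t)}(P_O\cup P_I^\star,c)$, while $R_0$ contributes at most $2\eps\cdot\cost(P_I^\star,c^\star)$ by its definition. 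The runtime is $O(n)$ since ring assignment and uniform sampling are linear.

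The main obstacle is that the guarantee must hold uniformly over \emph{all} choices of outlier set $P_O$ and center $c$, which a naive union bound over configurations cannot afford. The key point—already implicit in \cite{huang2022near,huang2023coresets}—is that the integral identity reduces the robust cost to a functional of the ball-count function $u\mapsto |X\cap\Ball(c,u)|$, so a single uniform ball range space approximation of $P_I^\star$ automatically handles every $(P_O,t,c)$. This is precisely what Lemmas \ref{lemma: refined S_O size} and \ref{lemma: refined S_O size for doubling} provide in the general-metric setting, so the refinement of the size bound follows immediately from substituting these lemmas into the existing Euclidean proof.
\end{proofof}
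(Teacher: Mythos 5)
The paper never proves this statement: it is imported wholesale as a ``refinement'' of Corollary~5.4 of \cite{huang2023coresets}, with the Euclidean ball-range-space bound swapped for Lemmas \ref{lemma: refined S_O size} and \ref{lemma: refined S_O size for doubling}. So there is no in-paper proof to compare against; what you have written is a reconstruction of the external argument, and its core is the right one. In particular, the reduction you describe is sound: writing both robust costs via the integral identity, the $P_O$-terms do not literally cancel because of the $(\cdot)^+$, but $|(a)^+-(b)^+|\le|a-b|$ shows the integrand's discrepancy is bounded by $\bigl||P_I^\star\cap\Ball(c,u)|-w(S_I\cap\Ball(c,u))\bigr|$, and for each ring $R_i$ this quantity is supported on a $u$-interval of length at most $2^{i+1}\eps\bar r$ and bounded by $\delta|R_i|$ under a $\delta$-approximation of the ball range space on $R_i$. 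Since $|R_i|\,2^{i-1}\eps\bar r\le\cost(R_i,c^\star)$, summing over rings gives $O(\delta)\cdot\cost(P_I^\star,c^\star)$, i.e.\ the additive term alone, uniformly in $(P_O,t,c)$. (Two bookkeeping remarks: $\delta=\eps$ already suffices, so the $\eps^{-4}$ in the VC bullet corresponds to oversampling relative to your own analysis --- harmless but worth flagging; and the split ``outer rings give the multiplicative term, $R_0$ gives the additive term'' is not how the accounting actually comes out --- every ring is charged against $\cost(P_I^\star,c^\star)$.)

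The one genuine gap is your treatment of the far set. Markov gives $|R_{\mathrm{far}}|\le\mathrm{poly}(\eps)\cdot|P_I^\star|$, but $|P_I^\star|=n-m$ can be $\Theta(n)$, so ``storing these points individually'' produces a set of size $\mathrm{poly}(\eps)\cdot n$, which is not negligible and destroys the claimed coreset size. The fix is standard but must be stated: either continue the geometric rings all the way to $r_{\max}=\max_{p\in P_I^\star}\dist(p,c^\star)$, which costs only $O(\log(r_{\max}/(\eps\bar r)))=O(\log(n/\eps))$ rings (since $r_{\max}\le|P_I^\star|\bar r$) and is absorbed into the $\tilde O(\cdot)$, or handle points with $\dist(p,c^\star)\ge\dist(c,c^\star)/\eps$ by the usual argument that their distance to $c$ equals their distance to $c^\star$ up to a $(1\pm\eps)$ factor, so grouping them by $O(\log)$ distance scales with one weighted representative each suffices. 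With that repair, and with the runtime claim justified by noting that ring assignment and per-ring uniform sampling are single passes over $P_I^\star$, the sketch is a correct proof of the stated theorem.
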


\noindent
By combining the discussions of the errors induced by \( S_O \) and \( S_I \), we present the main theorem for constructing a coreset for the robust geometric median across various metric spaces.
We study the shortest-path metric $(\mathcal{X},\dist)$, where $\mathcal{X}$ is the vertex set of a graph $G=(V,E)$, and $\dist(\cdot, \cdot)$ measures the shortest distance between two points in the graph. 
The treewidth of a graph measures how ``tree-like'' the graph is; see Definition 2.1 of \cite{baker2020coresets} for formal definition. 
A graph excluding a fixed minor is one that does not contain a particular substructure, known as the minor.


\begin{theorem}[Coreset size for robust geometric median in various metric spaces]
\label{theorem: main for various metric 1 k}
Let $\eps\in (0,1)$. For a metric space $M=(\mathcal{X},\dist)$ and a dataset $X\subset \mathcal{X}$ of size $n\ge 4m$, let $S=S_O\cup S_I$ be a sampled set of size
\begin{itemize}
    \item $\tilde{O}(\log(|\mathcal{X}|)\cdot\eps^{-4})$ if $M$ is general metric space.
    \item $\tilde{O}(\mathrm{ddim}(\mathcal{X}) \cdot \eps^{-2})$ if $M$ is a doubling metric with doubling dimension $\mathrm{ddim}(\mathcal{X})$.
    \item $\tilde{O}(t\cdot\eps^{-4})$ if $M$ is a shortest-path metric of a graph with bounded treewidth $t$.
    \item $\tilde{O}(|H|\cdot\eps^{-4})$ if $M$ is a shortest-path metric of a graph that excludes a fixed minor $H$.
\end{itemize}
Then $S$ is an $\eps$-coreset for robust geometric median on $X$.
\end{theorem}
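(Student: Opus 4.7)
The plan is to follow the same two-part decomposition $S = S_O \cup S_I$ used in the proof of Theorem \ref{high 1-median theorem_contri}, but replace the Euclidean-specific tools by their metric-space analogs. Concretely, I would first fix an $O(1)$-approximate center $c^\star$ for the robust geometric median problem on $X$, set $L^\star$ to be the $m$ farthest points to $c^\star$, and let $P_I^\star := X - L^\star$. Then $S_O$ is a uniform sample from $L^\star$ and $S_I$ is obtained from the algorithm behind Theorem \ref{lemma: generalized 1-Median P_I^star size}. The crucial observation is that the entire chain of lemmas used in Section \ref{sec:S_O} to control the error of $S_O$, namely Lemmas \ref{ob:comparision}, \ref{lemma: Error upper bound}, \ref{lemma: induced error of each point}, and \ref{refined outlier gap}, only uses the triangle inequality and the ball range space approximation property of $S_O$; they are therefore agnostic to the ambient space and carry over verbatim to a general metric $(\mathcal{X},\dist)$.

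Given this, the analysis reduces to two independent tasks: (i) choose the sample size of $S_O$ large enough so that $S_O$ is an $\eps$-approximation of the ball range space on $L^\star$, and (ii) invoke Theorem \ref{lemma: generalized 1-Median P_I^star size} for $S_I$. For (i) I would split on the metric type. For a general metric on a ground set of size $|\mathcal{X}|$, one has $d_{VC}(\mathcal{X}) = O(\log|\mathcal{X}|)$ because any ball range space on an $n$-point set has VC dimension at most $\log n$; for a shortest-path metric of a graph with treewidth $t$ or one that excludes a fixed minor $H$, the corresponding ball range space has VC dimension $O(t)$ and $O(|H|)$ respectively (these bounds are known from \cite{BOUSQUET20152302} and used in \cite{Braverman,huang2023coresets}). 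In each of these three cases, Lemma \ref{lemma: refined S_O size} guarantees $|S_O| = \tilde{O}(d_{VC}(\mathcal{X}) \cdot \eps^{-2})$ suffices. For a doubling metric, the VC dimension may be unbounded, so I would instead invoke Lemma \ref{lemma: refined S_O size for doubling} to obtain $|S_O| = \tilde{O}(\mathrm{ddim}(\mathcal{X}) \cdot \eps^{-2})$.

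For task (ii), Theorem \ref{lemma: generalized 1-Median P_I^star size} immediately provides $S_I$ with $|S_I| = \tilde{O}(d_{VC}(\mathcal{X}) \cdot \eps^{-4})$ for the VC-dimension cases and $|S_I| = \tilde{O}(\mathrm{ddim}(\mathcal{X}) \cdot \eps^{-2})$ for the doubling case, satisfying that for every outlier set $P_O$ of size $m$ and every $c$,
\[
|\cost^{(t)}(P_O\cup P_I^\star,c) - \cost^{(t)}(P_O\cup S_I,c)|
\le \eps \cdot \cost^{(t)}(P_O\cup P_I^\star,c) + 2\eps \cdot \cost(P_I^\star,c^\star).
\]
Plugging in $P_O = S_O$ and $t = m$ and combining with the $S_O$-error bound from the metric-free analogue of Lemma \ref{lemma: induced error of S_O for 1-median} yields the triangle-inequality chain
\[
|\cost^{(m)}(X,c) - \cost^{(m)}(S,c)| \le O(\eps) \cdot \cost^{(m)}(X,c),
\]
exactly as in the proof of Theorem \ref{high 1-median theorem_contri}, after rescaling $\eps$ by a constant. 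Summing the sizes of $S_O$ and $S_I$ gives the four bounds in the statement, since in each case $|S_I|$ dominates (for VC-dimension cases it contributes the $\eps^{-4}$ factor, while for the doubling case both terms are $\tilde{O}(\mathrm{ddim}(\mathcal{X})\cdot \eps^{-2})$).

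The only step that really requires care is verifying that Lemmas \ref{lemma: Error upper bound}--\ref{refined outlier gap} transfer cleanly. The proofs use only (a) the triangle inequality, invoked in Lemma \ref{lemma: induced error of each point} to compare $T_I$ and $T_O$ with $r_{\max}$ and $d_{\max}$, and (b) the definition of $\cost^{(t)}$ via the layer-cake formula $\cost^{(t)}(A,c) = \int_0^\infty(|A|-t-|A\cap \Ball(c,u)|)^+\, du$, which is purely metric. The $n\ge 4m$ assumption is used only through Lemma \ref{lemma: induced error of each point} and remains valid here. Hence the main obstacle is essentially bookkeeping: confirming the VC-dimension values for each listed graph class and ensuring that the $\eps$-approximation guarantee in the metric setting is strong enough to drive the same misalignment bound $|m_P - m_S| \le 2\eps m$ as in Lemma \ref{refined outlier gap}. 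Once this is checked, the theorem follows by the same union-of-errors argument used in the Euclidean case.
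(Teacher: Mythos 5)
Your proposal is correct and follows essentially the same route as the paper: the paper's proof likewise observes that Lemmas \ref{ob:comparision}--\ref{refined outlier gap} depend only on the triangle inequality and the ball-range-space approximation, then case-splits on the metric type, bounding $|S_O|$ via Lemma \ref{lemma: refined S_O size} (or Lemma \ref{lemma: refined S_O size for doubling} for doubling metrics) with $d_{VC}(\mathcal{X})=O(\log|\mathcal{X}|)$, $O(t)$, or $O(|H|)$ as appropriate, and $|S_I|$ via Theorem \ref{lemma: generalized 1-Median P_I^star size}. Your size bookkeeping and the combination of the $S_O$ and $S_I$ error bounds match the paper's argument.
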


\noindent
This theorem illustrates the coreset size with respect to different metric spaces, improving upon previous results for robust coresets by eliminating the $O(m)$ dependency.

\begin{proof}[Proof of Theorem \ref{theorem: main for various metric 1 k}]
   If \( M = (\mathcal{X}, \dist) \) is a general metric space, then \( d_{VC}(\mathcal{X}) = O(\log |\mathcal{X}|) \). Thus, we have \( |S_O| = \tilde{O}(\log(|\mathcal{X}|)\cdot \eps^{-2}) \) by Lemma \ref{lemma: refined S_O size} and \( |S_I| = \tilde{O}(\log (|\mathcal{X}|)\cdot \eps^{-4}) \) by Theorem \ref{lemma: generalized 1-Median P_I^star size}, leading to a coreset of size \( \tilde{O}(\log(|\mathcal{X}|)\cdot \eps^{-4}) \).

If \( M = (\mathcal{X}, \dist) \) is a doubling metric space, then by definition, $\mathrm{ddim}(\mathcal{X})$ is bounded. Thus we have \( |S_O| = \tilde{O}( \mathrm{ddim}(\mathcal{X}) \cdot\eps^{-2}) \) by Lemma \ref{lemma: refined S_O size for doubling} and \( |S_I| = \tilde{O}(\mathrm{ddim}(\mathcal{X})\cdot\eps^{-2}) \) by Theorem \ref{lemma: generalized 1-Median P_I^star size}, leading to a coreset of size \( \tilde{O}(\mathrm{ddim}(\mathcal{X})\cdot\eps^{-2}) \).

If $M = (\mathcal{X}, \dist)$ is a shortest-path metric of a graph with bounded treewidth $t$, we have $d_{VC}(\mathcal{X})=O(t)$ by \cite{BOUSQUET20152302}. Thus, we have \( |S_O| = \tilde{O}(t \cdot\eps^{-2}) \) by Lemma \ref{lemma: refined S_O size} and \( |S_I| = \tilde{O}(t \cdot\eps^{-4}) \) by Theorem \ref{lemma: generalized 1-Median P_I^star size}, leading to a coreset of size \( \tilde{O}(t\cdot\eps^{-4}) \).

If $M = (\mathcal{X}, \dist)$ is a shortest-path metric of a graph that excludes a fixed minor $H$, we have $d_{VC}(\mathcal{X})=O(|H|)$ by \cite{BOUSQUET20152302}. Thus, we have \( |S_O| = \tilde{O}(|H| \cdot\eps^{-2}) \) by Lemma \ref{lemma: refined S_O size} and \( |S_I| = \tilde{O}(|H|\cdot \eps^{-4}) \) by Theorem \ref{lemma: generalized 1-Median P_I^star size}, leading to a coreset of size \( \tilde{O}(|H|\cdot\eps^{-4}) \).

\end{proof}

\section{Improved coreset sizes for robust \kzC}
\label{sec:generalization}
In this section, we provide a coreset construction for robust $(k,z)$-clustering when $d\ge1$. 
We first extend the definition of the robust geometric median and its coreset to the robust $(k,z)$-clustering and its corresponding coreset. 

\begin{definition}[Robust $(k,z)$-clustering]
\label{def: robust kmedian}
    Given a dataset $P\subset \R^d$ of size $n\geq 1$ and an integer $m\geq 0$, the robust $(k,z)$-clustering problem is to find a center set $C\subset \R^{d}$, $|C|=k$ that minimizes the objective function below:
\begin{equation}\nonumber
    \cost_z^{(m)}(P,C):=\min_{L\subset P:|L|=m}
    \sum_{p \in P\backslash L} (\dist(p,C))^z,
\end{equation}
where $L$ represents the set of $m$ outliers w.r.t. $C$ and $\dist(p,C)=\min_{c\in C}\dist(p,c)$ denotes the Euclidean distance from $p$ to the closest center among $C$. 
\end{definition}

\noindent
Before we define the coreset for robust $(k,z)$-clustering, we introduce the generalized cost function in the context of the robust $(k,z)$-clustering. 
\begin{definition}[Generalized cost function for robust $(k,z)$-clustering]
\label{def: weighted cost function for kmedian}
    Let $m$ be an integer.
    Let \(S \subseteq \mathbb{R}^d\) be a weighted dataset with weights \(w(p)\) for each point $p\in S$. 
    Let $w(S) := \sum_{p\in S} w(p)$.
    Define a collection of weight functions \(\mathcal{W}:=\{w': S\rightarrow \mathbb{R}^+ \mid \sum_{p \in S} w'(p) = w(S) - m \wedge \forall p\in S, w'(p)\leq w(p)\}\). 
    Moreover, we define the following cost function on $S$:
    \[
    \forall C\subset \R^d, |C|=k \quad \cost_z^{(m)}(S,C):=\min_{w'\in \mathcal{W}} \sum_{p \in S} w'(p) \cdot (\dist(p, C))^z.
    \]
\end{definition}

\noindent
With this definition of the generalized cost function, we now define the notion of a coreset for robust $(k,z)$-clustering.

\begin{definition}[Coreset for robust $(k,z)$-clustering]
\label{def: coreset for kmedian}
    Given a point set \(P \subset \mathbb{R}^d\) of size $n\geq 1$, integer $m\geq 1$ and \(\eps \in (0,1)\), we say a weighted subset $S\subseteq P$ together with a weight function \(w: S \rightarrow \mathbb{R}^+\) is an \(\eps\)-coreset of $P$ for robust $(k,z)$-clustering if $w(S) = n$ and for any center set \(C\subset \mathbb{R}^d\), $|C|=k$,  
    $
    \cost_z^{(m)}(S,C) \in (1 \pm \eps) \cdot \cost_z^{(m)}(P,C).
    $
\end{definition}

\noindent
Finally, we extend the concepts of inliers, outliers, and the ball range space.

\paragraph{Inliers and outliers.}
Throughout this section, we denote the approximate center set for robust $(k,z)$-clustering by 
\(C^\star = \{c_1^\star, \ldots, c_k^\star\} \subset \mathbb{R}^d\),
which is an $O(1)$-approximation of the optimal solution.
We then define the inliers and outliers with respect to this approximation center set.

Let \(L^\star := \arg\min_{L \subset P, |L| = m} \mathrm{cost}_z(P - L, C^\star)\)
denote the set of $m$ outliers with respect to \(C^\star\),
and define \(P_I^\star := P \setminus L^\star\) as the corresponding inlier set.
The inlier set \(P_I^\star\) is naturally partitioned by \(C^\star\) into $k$ clusters
\(\{P_1^\star, \ldots, P_k^\star\}\),
where each cluster \(P_i^\star\) contains the points in \(P_I^\star\)
closest to its corresponding center \(c_i^\star\).
Additionally, define $r_{\max}:=\max_{p\in P_I^\star}\dist(p,C^\star)$ which represents the maximum distance from the points in $P_I^\star$ to this center set $C^\star$.
Let $\bar{r}:=\sqrt[z]{\frac{\cost_z(P_I^\star,C^\star)}{n-m}}$ denote the average distance from $P_I^\star$ to $C^\star$.
{
Under these notations, the second condition of Assumption~\ref{as: kmedian}
can be rewritten as
\((r_{\max})^z \le 4k\, (\bar{r})^z.\)}

\paragraph{Ball range space.}
We introduce the concept of the $k$-balls range space, which is a direct extension of the ball range space defined in Definition \ref{def: ball range space}.

\begin{definition}[Approximation of $k$-balls range space, Definition F.2 in \cite{huang2022near}]\label{def: k balls range space}
Let $\Balls(C,u):=\cup_{c\in C}\Ball(c,u)$. 
For a given dataset $P\subset \R^d$, the $k$-balls range space on $P$ is $(P,\mathcal{P})$ where $\mathcal{P}:=\{P\cap \Balls(C,u)\mid C\subset \R^d,|C|=k, u\in \R^{+}\}$.
 A subset $Y\subset P$ is called an $\eps$-approximation of the $k$-balls range space $(P,\mathcal{P})$ if for every $C\subset \R^d$, $|C|=k$, $u\in \R^{+}$,
\[|\frac{|P\cap \Balls(C,u)|}{|P|}-\frac{|Y\cap \Balls(C,u)|}{|Y|}|\le \eps.\]
    
\end{definition}

\noindent
Based on this definition, we have the following lemma that measures the performance of $S_O$ for robust $(k,z)$-clustering.

\begin{lemma}[Refined from Lemma F.3 of \cite{huang2022near}]\label{lemma: coreset size of k balls range space}
Given dataset $P_O\subset \R^d$. Let $S_O$ be a uniform sampling of size $\tilde{O}(\frac{k d}{\eps^2})$ from $P_O$, then with probability at least $1-\frac{1}{\text{poly}(k/\eps)}$, $S_O$ is an $\eps$-approximation of the $k$-balls range space on $P_O$. 
Define a weight function $w$: $w(p)=\frac{|P_O|}{|S_O|}$, for any $p\in S_O$. Then for any $C\subset \R^d$, $|C|=k$, $u\in \R^+$,
    \[||P_O\cap \Balls(C,u)|-w(S_O\cap \Balls(C,u))|\le \eps|P_O|.\]
\end{lemma}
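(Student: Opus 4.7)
\begin{proofof}[Proof plan for Lemma~\ref{lemma: coreset size of k balls range space}]
The plan is to reduce the $k$-balls range space to a single-ball range space via standard VC-dimension arguments and then invoke uniform sampling. First, I would bound the VC dimension of the $k$-balls range space $(\R^d, \{\Balls(C,u) : |C|=k, u>0\})$. Since a single ball $\Ball(c,u)$ in $\R^d$ has VC dimension $O(d)$ (the classical bound), and any set in the $k$-balls range space is a union of $k$ sets from the ball class, I would apply the standard fact that the $k$-fold union of a set system with VC dimension $d_0$ has VC dimension $O(k d_0 \log k)$ (see, e.g., the Blumer et al.\ bound). This gives a VC dimension of $\tilde{O}(kd)$ for the $k$-balls range space.

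Next, I would invoke the classical VC sampling theorem (e.g., the Li-Long-Srinivasan bound, or equivalently Talagrand's refinement of Vapnik-Chervonenkis): for any range space with VC dimension $V$, a uniform sample of size $\tilde{O}(V/\eps^2)$ yields an $\eps$-approximation with probability at least $1 - 1/\mathrm{poly}(1/\eps)$. Plugging in $V = \tilde{O}(kd)$ produces the required sample size $|S_O| = \tilde{O}(kd/\eps^2)$ and the high-probability $\eps$-approximation guarantee
\[
\Bigl| \tfrac{|P_O \cap \Balls(C,u)|}{|P_O|} - \tfrac{|S_O \cap \Balls(C,u)|}{|S_O|} \Bigr| \le \eps, \quad \forall C,u.
\]

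Finally, I would translate the normalized approximation bound into the weighted, unnormalized form in the lemma. Since $w(p) = |P_O|/|S_O|$ for all $p \in S_O$, we have $w(S_O \cap \Balls(C,u)) = \tfrac{|P_O|}{|S_O|}\cdot |S_O \cap \Balls(C,u)|$. Multiplying the approximation inequality above by $|P_O|$ yields
\[
\bigl| |P_O \cap \Balls(C,u)| - w(S_O \cap \Balls(C,u)) \bigr| \le \eps \cdot |P_O|,
\]
which is exactly the conclusion.

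The main obstacle is the first step: establishing a clean $\tilde{O}(kd)$ bound on the VC dimension of the $k$-balls range space. The single-ball bound is standard, but the $k$-fold union step must be cited (or rederived) carefully. Once that combinatorial bound is in hand, the remainder is a direct application of the VC sampling theorem and a one-line rescaling, so no further technical difficulty arises. I would also note that the $\log k$ factor absorbed into $\tilde{O}(\cdot)$ is consistent with the statement and with the analogous Lemma~\ref{lemma: coreset size of ball range space} for $k=1$.
\end{proofof}
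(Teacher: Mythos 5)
Your proposal is correct and follows essentially the same route the paper relies on: the paper does not prove this lemma itself but cites Lemma F.3 of \cite{huang2022near} (and the pseudo-dimension/VC machinery of \cite{LI2001516}), which rest on exactly the argument you sketch — bound the VC dimension of $k$-fold unions of balls by $\tilde{O}(kd)$, apply the uniform-sampling $\eps$-approximation theorem, and rescale by $|P_O|/|S_O|$ to pass from the normalized to the weighted form. The only cosmetic point is that the failure probability should be stated as $1-1/\mathrm{poly}(k/\eps)$ rather than $1-1/\mathrm{poly}(1/\eps)$, which costs only an extra $\log(k/\eps)$ factor absorbed into $\tilde{O}(\cdot)$.
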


\noindent
Then we are ready to give out our main result.

\subsection{Result for robust $(k,z)$-clustering}
\label{sec: main_high_kmedian}

We first recall the following theorem.

\begin{theorem}[Restatement of Corollary 5.4 in \cite{huang2023coresets}]
\label{theorem: coreset in  P_I for k}
There exists a randomized algorithm $\mathcal{A}_{kd}$ that in $O(nkd)$ time constructs a weighted subset $S_I\subseteq  P_I^\star $ of size $\tilde{O}(k^2\eps^{-2z} \min\left\{\eps^{-2}, d\right\}))$, such that for every dataset $P_O$ of size $m$, every integer $0\le t \le m$ and every center set $C\subset \R^d$, $|C|=k$, $
    |\cost_z^{(t)}(P_O\cup P_I^\star ,C)-\cost_z^{(t)}(P_O\cup S_I,C)|
        \le \eps \cdot \cost_z^{(t)}(P_O\cup P_I^\star,C)+ 2\eps\cdot \cost_z( P_I^\star ,C^\star)$.
\end{theorem}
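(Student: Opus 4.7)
The plan is to reduce to the Feldman--Langberg sensitivity sampling framework for vanilla $(k,z)$-clustering on the inlier set $P_I^\star$, and then lift the resulting vanilla coreset guarantee to the robust setting with arbitrary outlier-augmented input $P_O$. The theorem is a direct restatement of Corollary 5.4 of \cite{huang2023coresets}; what follows is a sketch of the construction they use.

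First I would assign each point $p \in P_I^\star$ a sensitivity score upper bounding $\sup_{|C|=k}\dist(p,C)^z/\cost_z(P_I^\star,C)$ via the standard two-term bound using $C^\star$ (which, being $O(1)$-approximate for robust $(k,z)$-clustering on $P$, also serves as an $O(1)$-approximate solution for vanilla $(k,z)$-clustering on $P_I^\star$). The total sensitivity is $O(k)$. Next I would sample $\tilde{O}(k^2 \eps^{-2z} \min\{\eps^{-2}, d\})$ points according to the sensitivity distribution with weights equal to the inverse sampling probability, producing $S_I$. The size bound follows from a uniform convergence argument over $\{p \mapsto \dist(p,C)^z : |C|=k\}$; the $\min\{\eps^{-2}, d\}$ factor comes from either the $O(kd)$ pseudo-dimension of this family or from first applying a terminal embedding to reduce the effective dimension to $\tilde{O}(\eps^{-2})$ before sampling. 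This step alone yields a vanilla coreset with additive-multiplicative guarantee $|\cost_z(S_I,C)-\cost_z(P_I^\star,C)|\le \eps\,\cost_z(P_I^\star,C)+\eps\,\cost_z(P_I^\star,C^\star)$ for every $k$-set $C$.

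To upgrade this to arbitrary $P_O$ and $t\le m$, I would exploit the fact that the optimal outlier selection in $\cost_z^{(t)}(\cdot,C)$ is threshold-based: there exists $\tau^\star$ so that a point is an outlier iff $\dist(\cdot,C)>\tau^\star$ (up to ties). Strengthening the vanilla coreset to preserve cost on every sublevel set $\{p:\dist(p,C)\le \tau\}$ simultaneously for all $(C,\tau)$ costs only an $O(1)$ factor in the pseudo-dimension, since ball indicators have bounded VC dimension, and lets me decouple the contributions of $P_O$ and $P_I^\star$ (respectively $S_I$) at a common threshold. The main obstacle is the \emph{boundary effect} arising because $\tau^\star$ may shift when $P_I^\star$ is replaced by $S_I$, causing points near $\tau^\star$ to flip between inlier and outlier. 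I would control this shift by charging it against a sensitivity-weighted annulus around $\tau^\star$ and absorbing the slack into the additive $2\eps\,\cost_z(P_I^\star,C^\star)$ term of the conclusion. The $O(nkd)$ runtime is dominated by computing the approximate solution and the distances from each point to $C^\star$ needed for the sensitivity bounds.
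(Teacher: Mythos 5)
The paper does not prove this statement at all: it is imported verbatim as a black box from Corollary 5.4 of \cite{huang2023coresets}, exactly as its title says, and the present paper only \emph{uses} it (in the proof of Theorem \ref{theorem: kmedian informal}, mirroring how Theorem \ref{theorem: coreset in  P_I} is used for $k=1$). So there is no in-paper proof to compare yours against; the only fair question is whether your sketch would actually reconstruct the cited result.

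As a reconstruction it points in the right direction but leaves the technical heart unproven. The sensitivity-sampling setup, the $O(k)$ total sensitivity via $C^\star$, the pseudo-dimension / terminal-embedding origin of the $\min\{\eps^{-2},d\}$ factor, and the identification of the threshold (sublevel-set) structure of the optimal outlier selection are all standard and correct. But the entire content of Corollary 5.4 is the uniformity of the guarantee over \emph{every} augmenting set $P_O$ of size $m$ and \emph{every} $t\le m$ simultaneously, and your treatment of the resulting boundary effect --- ``charge it against a sensitivity-weighted annulus around $\tau^\star$ and absorb the slack'' --- is a one-sentence placeholder for precisely the argument that needs to be made. Note that $\tau^\star$ depends on $P_O$, $t$, and $C$ jointly, so the annulus is not a fixed object you can union-bound over for free; this is why the cited result pays an extra $\eps^{-2z}$ (rather than the vanilla $\eps^{-z-2}$-type dependence), a factor your sketch inserts into the sample size without deriving. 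If you intend this as a genuine proof rather than a citation, that step must be filled in; if you intend it as a citation, the sketch is unnecessary and you should simply invoke \cite{huang2023coresets} as the paper does.
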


\noindent
This theorem is a generalization of Theorem \ref{theorem: coreset in  P_I}, describing the number of points that need to be sampled from $P_I^\star$ for robust $(k,z)$-clustering.

To adapt Algorithm \ref{alg3} for Theorem \ref{theorem: kmedian informal}, we only need to adjust the size of $S_O$ to $\tilde{O}(k \eps^{-2z} \min\left\{\eps^{-2}, d\right\})$ in Line 2 and modify the algorithm $\mathcal{A}_d$ to $\mathcal{A}_{kd}$ in Line 3 (see Algorithm \ref{alg: kmedian}). 
{Consequently, the runtime remains dominated by Line 1 and Line 3, resulting in an overall complexity of $O(ndk)$ according to Theorem \ref{theorem: coreset in  P_I for k}.}

\begin{algorithm}[htp]  
\caption{Coreset Construction for General $d$ and General $k$}
\label{alg: kmedian}   
{\bfseries Input:} A dataset $P\subset \R^d$, $\eps\in (0,1)$ and an $O(1)$-approximate center set $C^\star\subset \R^d$ \\
{\bfseries Output:} An $\eps$-coreset $S$
\begin{algorithmic}[1] 
\STATE $L^\star\leftarrow \arg \min_{L\subset P,|L|=m}\cost_z(P-L,C^\star)$, $ P_I^\star \leftarrow P-L^\star$
\STATE Uniformly sample $S_O\subseteq L^\star$ of size $\tilde{O}(k \eps^{-2z} \min\left\{\eps^{-2}, d\right\})$. Set $\forall p\in S_O$, $w_{O}(p)\leftarrow \frac{ m }{|S_O|}$.
\STATE Construct $(S_I,w_I)\leftarrow \mathcal{A}_{kd}(P_I^\star)$ by Theorem \ref{theorem: coreset in  P_I for k}.
\STATE For any $p\in S_O$, define $w(p)=w_O(p)$ and for any $p\in S_I$, define $w(p)=w_I(p)$.
\STATE Return $S\leftarrow S_O\cup S_I$ and $w$;  
\end{algorithmic}      
\end{algorithm}

Similar to the robust geometric median, we present the key lemma used to prove Theorem \ref{theorem: kmedian informal} below.

\begin{lemma}[Induced error of $S_O$]\label{lemma: induced error of S_O for k}
For any center set $C\subset \R^d$, $|C|=k$, we have $ |\cost_z^{(m)}(P,C)-\cost_z^{(m)}(S_O\cup  P_I^\star ,C)|\le O(\eps)\cdot \cost_z^{(m)}(P,C)$.
\end{lemma}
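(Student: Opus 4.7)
The plan is to mirror the four-step argument used for Lemma \ref{lemma: induced error of S_O for 1-median} and to generalize each ingredient to the $(k,z)$-clustering setting, using the $k$-balls range space and Assumption \ref{as: kmedian}. Fix a center set $C\subset\R^d$ with $|C|=k$. Let $L^{(C)}$ be the outlier set of $P$ w.r.t.\ $C$, and define $m_P:=|L^\star\cap L^{(C)}|$ and $m_S$ analogously on $S_O\cup P_I^\star$ via the minimizing weight function $w'$. Set $T_O:=\min_{p\in L^\star}\dist(p,C)$ and $T_I:=\max_{p\in P_I^\star}\dist(p,C)$.

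First, I would generalize Lemma \ref{ob:comparision}: if $m_P=m$ then $\cost_z^{(m)}(P,C)=\cost_z(P_I^\star,C)=\cost_z^{(m)}(S_O\cup P_I^\star,C)$ (so the error is zero), and otherwise $T_O\le T_I$ because some point of $P_I^\star$ must be an outlier w.r.t.\ $C$. Second, I would generalize Lemma \ref{lemma: Error upper bound} via the integral representation $\cost_z^{(m_P)}(L^\star,C)=\int_0^{\infty} z u^{z-1}\bigl(m-m_P-|L^\star\cap \Balls(C,u)|\bigr)^+ du$ and the analogous identity for $\cost_z^{(m_S)}(S_O,C)$. Running the same case analysis on $m_P$ vs.\ $m_S$ as in the original proof, but integrating the weight $zu^{z-1}$ over $[T_O,T_I]$ and using that on this interval the integrands differ pointwise by at most $\Delta+|m_P-m_S|$ with $\Delta:=\sup_u\bigl||L^\star\cap\Balls(C,u)|-w(S_O\cap\Balls(C,u))|\bigr|$, yields
\begin{equation*}
|\cost_z^{(m)}(P,C)-\cost_z^{(m)}(S_O\cup P_I^\star,C)| \;\le\; O(1)\cdot(\Delta+|m_P-m_S|)\cdot(T_I^z-T_O^z).
\end{equation*}

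The third and main step is to generalize Lemma \ref{lemma: induced error of each point} to show $T_I^z-T_O^z\le O(\cost_z^{(m)}(P,C)/m)$. Using $T_O\ge 0$ and $(a+b)^z\le 2^{z-1}(a^z+b^z)$, I would write $T_I^z\le 2^{z-1}(r_{\max}^z+D^z)$ where $D:=\max_{i\in[k]}\dist(c_i^\star,C)$. Assumption \ref{as: kmedian}(2) gives $r_{\max}^z\le 4k\bar r^z$, and since $C^\star$ is an $O(1)$-approximation we have $(n-m)\bar r^z=\cost_z(P_I^\star,C^\star)\le O(\cost_z^{(m)}(P,C))$; combined with $n-m\ge 4km$ from Assumption \ref{as: kmedian}(1), this gives $r_{\max}^z\le O(\cost_z^{(m)}(P,C)/m)$. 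For the $D^z$ term, let $i^\star:=\arg\max_i \dist(c_i^\star,C)$ and dichotomize on whether $D_{i^\star}\le 2(8k)^{1/z}\bar r$ (already bounded via the $k\bar r^z$ estimate) or $D_{i^\star}$ exceeds this threshold. In the latter case, applying Markov's inequality inside $P_{i^\star}^\star$ to the function $\dist(p,c_{i^\star}^\star)^z$ produces a subset $\widehat P_{i^\star}$ with $|\widehat P_{i^\star}|\ge |P_{i^\star}^\star|/2\ge 2m$ (by Assumption \ref{as: kmedian}(1)) on which $\dist(p,c_{i^\star}^\star)\le (8k)^{1/z}\bar r\le D_{i^\star}/2$, so by triangle inequality $\dist(p,C)\ge D_{i^\star}/2$; removing any $m$ outliers still leaves $\ge m$ such points, giving $\cost_z^{(m)}(P,C)\ge m(D_{i^\star}/2)^z$.

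Fourth, I would adapt Lemma \ref{refined outlier gap} to show $|m_P-m_S|\le O(\eps m)$, replacing every $\Ball(c,\cdot)$ by $\Balls(C,\cdot)$ and invoking that $S_O$ is an $\eps$-approximation of the $k$-balls range space on $L^\star$ (Lemma \ref{lemma: coreset size of k balls range space}); Claim \ref{claim: partial weight for k} already holds verbatim in the $(k,z)$ setting, so the same counting argument at radii $d_{l_1}$ and $d_{l_2-1}$ carries through. Chaining the four steps gives $|\cost_z^{(m)}(P,C)-\cost_z^{(m)}(S_O\cup P_I^\star,C)|\le O(\eps m)\cdot O(\cost_z^{(m)}(P,C)/m)=O(\eps)\cdot\cost_z^{(m)}(P,C)$. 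The main obstacle will be Step 3: unlike the single-center case where $T_I-T_O$ collapses to $O(\dist(c,c^\star))$, here we must simultaneously control $r_{\max}$ and the $k$ distinct quantities $\dist(c_i^\star,C)$, and this is exactly where both halves of Assumption \ref{as: kmedian} become essential—without the $r_{\max}^z\le 4k\bar r^z$ bound a single ``remote inlier'' could blow up $T_I^z$, and without $\min_i|P_i^\star|\ge 4m$ the Markov-based $D^z$ estimate would lack enough surviving points after outlier removal.
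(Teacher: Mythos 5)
Your proposal is correct and follows essentially the same route as the paper: extend Lemmas \ref{ob:comparision} and \ref{lemma: Error upper bound} to the powered/$k$-center setting, bound $T_I - T_O$ (in the paper's powered notation) by $O(\cost_z^{(m)}(P,C)/m)$ using both parts of Assumption \ref{as: kmedian} exactly as in Lemma \ref{lemma: bound for T_I and T_O}, and control $\Delta$ and $|m_P-m_S|$ via the $k$-balls range space approximation as in the extension of Lemma \ref{refined outlier gap}. The only cosmetic difference is in Step 3, where you bound $r_{\max}^z$ and $d_{\max}^z$ simultaneously via $T_I^z\le 2^{z-1}(r_{\max}^z+d_{\max}^z)$ rather than dichotomizing on which of the two dominates, but the ingredients (cluster-size lower bound, the remote-inlier condition, and the Markov-plus-triangle-inequality argument) are the same.
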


\noindent
Theorem \ref{theorem: kmedian informal} follows as a corollary of Theorem \ref{theorem: coreset in P_I for k} and this lemma, with a proof similar to that of Theorem \ref{high 1-median theorem_contri}.

\begin{remark}
  The second condition in Assumption \ref{as: kmedian} can be replaced with $\dist(c_i^*, c_j^*)^z \ge \frac{m \cdot r_{\max}^z}{\min\{|P_i^*|, |P_j^*|\}}$ for any $c_i^*, c_j^* \in C^*$, $i\neq j$.
  Under this modified assumption, Theorem \ref{theorem: kmedian informal} becomes a direct generalization of Theorem \ref{high 1-median theorem_contri}, as the second condition vanishes when $k = 1$.
This modification ensures that clusters are sufficiently well-separated, which explains why our algorithm performs well on the \textbf{Bank} dataset when $k = 3$, as this modified assumption is satisfied in this case. 
\end{remark}

\noindent
Below we illustrate why this assumption also works. 
If points in each cluster are mostly assigned to distinct centers, then $\cost_z^{(m)}(P,C)\approx \cost_z^{(m)}(P,C^\star)$, making the error $|\cost_z^{(m)}(P,C)-\cost_z^{(m)}(P_I^\star\cup S_O,C)|$ easy to bound.
Alternatively, if two clusters each have at least half of their points assigned to the same center, then by the modified assumption, $\cost_z^{(m)}(P,C)>m r_{\max}^z$, which is large enough to bound the error induced by $S_O$.

\subsection{Proof of Lemma \ref{lemma: induced error of S_O for k}: Induced error of $S_O$}
\label{sec: high k}
In this section, we fix a center set $C\subset \R^d$ of size $k$ and define $T_O:=\min_{p\in L^\star}(\dist(p,C))^z$, $T_I:=\max_{p\in P_I^\star}(\dist(p,C))^z$. 
We observe that Lemmas \ref{ob:comparision} and \ref{lemma: Error upper bound} can be directly extended to robust $(k,z)$-clustering. Therefore, the induced error of each point is at most $T_I-T_O$. Next, we show that $T_I-T_O\le O(\cost_z^{(m)}(P,c)/m)$, supported by a detailed discussion of the sizes of $T_I$ and $T_O$ (see Lemma \ref{lemma: bound for T_I and T_O}). The remaining task is to ensure that the number of points that may induce error is $O(\eps m)$.
The result of Lemma \ref{refined outlier gap} can be extended to robust $(k,z)$-clustering when $S_O$ is an $\eps$-approximation of the $k$-balls range space on $L^\star$.
By Lemma \ref{lemma: coreset size of k balls range space}, we sample $\tilde{O}(k\eps^{-2z}\min\{\eps^{-2}, d\})$ points from $L^\star$ to ensure the approximation.

We then demonstrate how to prove that $T_I-T_O\le O(\cost_z^{(m)}(P,c)/m)$. 

\begin{lemma}[Bounds for \(T_I\) and \(T_O\)]
\label{lemma: bound for T_I and T_O}
When $m_P<m$, we have $T_I-T_O\le O(\cost_z^{(m)}(P,c)/m)$
\end{lemma}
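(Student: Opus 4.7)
The plan is to generalize Lemma~\ref{lemma: induced error of each point} (the $k=1$, $z=1$ analogue) to center sets of size $k$ and general $z$, exploiting both parts of Assumption~\ref{as: kmedian}. Since $T_O \ge 0$, it suffices to upper-bound $T_I$ by $O(\cost_z^{(m)}(P,C)/m)$. The right $k$-center surrogate for the ``distance from $C^\star$ to $C$'' is $d_{\max} := \max_{i \in [k]} \dist(c_i^\star, C)$. Then for $p \in P_{i(p)}^\star$, the triangle inequality gives $\dist(p, C) \le \dist(p, c_{i(p)}^\star) + \dist(c_{i(p)}^\star, C) \le r_{\max} + d_{\max}$, so $T_I \le (r_{\max} + d_{\max})^z$. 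I will also use that Assumption~\ref{as: kmedian}(1) immediately yields $n - m = \sum_i |P_i^\star| \ge 4km$, which absorbs later factors of $k$.

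I then split into two cases. \emph{Case A:} $d_{\max} \le 4 r_{\max}$. Here $T_I \le 5^z r_{\max}^z$, and Assumption~\ref{as: kmedian}(2) gives $r_{\max}^z \le 4k \bar{r}^z = 4k \cdot \cost_z(P_I^\star, C^\star)/(n - m)$. The bound $n - m \ge 4km$ cancels this $4km$ factor after multiplying by $m$, yielding $m \cdot r_{\max}^z \le \cost_z(P_I^\star, C^\star) = \cost_z^{(m)}(P, C^\star)$. Since $C^\star$ is an $O(1)$-approximation, this is $O(\cost_z^{(m)}(P, C))$, as desired.

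\emph{Case B:} $d_{\max} > 4 r_{\max}$. Pick $j^\star \in \arg\max_i \dist(c_i^\star, C)$. For any $c \in C$ and $p \in P_{j^\star}^\star$, since $\dist(c_{j^\star}^\star, c) \ge d_{\max}$, the reverse triangle inequality gives $\dist(p, c) \ge d_{\max} - \dist(p, c_{j^\star}^\star) \ge d_{\max} - r_{\max} > \frac{3}{4} d_{\max}$, hence $\dist(p, C) > \frac{3}{4} d_{\max}$ uniformly over $P_{j^\star}^\star$. By Assumption~\ref{as: kmedian}(1), $|P_{j^\star}^\star| \ge 4m$, so at least $3m$ of its points are inliers w.r.t.\ $C$, giving $\cost_z^{(m)}(P, C) \ge 3m \cdot (3 d_{\max}/4)^z$. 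Combined with $T_I \le (r_{\max} + d_{\max})^z \le (5 d_{\max}/4)^z$, this yields $m T_I \le O(1) \cdot \cost_z^{(m)}(P, C)$.

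The main obstacle is Case~B. The natural temptation is to mimic the $k=1$ proof by comparing $d_{\max}$ with the average distance $\bar{r}$ and invoking Markov's inequality, but passing from $\dist(p, C^\star)$ small to $\dist(p, C)$ large requires a \emph{directional} bound that is only available inside the specific cluster $P_{j^\star}^\star$ achieving $d_{\max}$, since only $c_{j^\star}^\star$ is guaranteed to be far from every center in $C$. Assumption~\ref{as: kmedian}(1) is exactly what guarantees that this one cluster is large enough to dominate the $m$ outliers; Assumption~\ref{as: kmedian}(2) plays the dual role in Case~A by bridging $r_{\max}^z$ to $\bar{r}^z$ so that the cumulative cost of the $4km$ inliers controls $m r_{\max}^z$. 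Combining the two cases gives $T_I - T_O \le T_I = O(\cost_z^{(m)}(P, C)/m)$, with the hidden constant depending only on $z$ and the approximation factor of $C^\star$.
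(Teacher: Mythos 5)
Your proof is correct, and its skeleton matches the paper's: bound $T_I\le (r_{\max}+d_{\max})^z$ with $d_{\max}:=\max_{i}\dist(c_i^\star,C)$ via the triangle inequality, reduce to bounding $T_I$ using $T_O\ge 0$, and split on whether $d_{\max}$ is small or large relative to $r_{\max}$, with Assumption~\ref{as: kmedian}(2) driving the near case and Assumption~\ref{as: kmedian}(1) driving the far case. The one genuine difference is in the far case: the paper splits at $d_{\max}\ge r_{\max}$ and then needs a further sub-case on $d_{\max}$ versus $8\bar{r_i}$ (the average radius of the extremal cluster), invoking a Markov-type argument to find a constant fraction of $P_i^\star$ within $4\bar{r_i}$ of $C^\star$ and hence at distance roughly $d_{\max}$ from $C$. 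By choosing the threshold $d_{\max}>4r_{\max}$ instead, you get the uniform bound $\dist(p,c_{j^\star}^\star)\le r_{\max}<d_{\max}/4$ for \emph{every} point of $P_{j^\star}^\star$, so all $\ge 4m$ points of that cluster (hence $\ge 3m$ surviving inliers) are at distance $>\tfrac34 d_{\max}$ from $C$, and the Markov sub-case disappears. This is a mild but real simplification; the price is a slightly larger constant in the near case ($5^z r_{\max}^z$ instead of $2^z d_{\max}^z$), which is absorbed into the $O(\cdot)$. Both arguments use $n-m\ge 4km$ and the $O(1)$-approximation guarantee of $C^\star$ in the same way.
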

\begin{proof}
Let $d_{\max} :=\max_{c^\star\in C^\star}\dist(c^\star,C)$ denote the maximum distance of points in $C^\star$ to $C$. Let $d_{\max}':=\max_{c\in C}\dist(c,C^\star)$ denote the maximum distance of points in $C$ to $C^\star$.

We first claim that $T_I\le (r_{\max}+d_{\max})^z$.
Let \(\widehat{p}\) be defined as \(\max_{p \in P_I^\star} \dist(p, C)\). Denote \(c_p^\star\) as the closest approximate center to \(\widehat{p}\), where \(c_p^\star := \arg\min_{c^\star \in C^\star} \dist(\widehat{p}, c^\star)\), and \(c_p\) as the closest center to \(c_p^\star\), where \(c_p := \arg\min_{c \in C} \dist(c_p^\star, c)\). This setup allows us to establish an upper bound for \(T_I\):

\begin{align*}
    \dist(\widehat{p}, C) &\leq \quad \dist(\widehat{p}, c_p) \\
    &\leq \quad \dist(\widehat{p}, c_p^\star) + \dist(c_p^\star, c_p) & (\text{Triangle Inequality}) \\
    &\leq \quad r_{\max} + d_{\max},
\end{align*}
thus, \(T_I \leq (r_{\max} + d_{\max})^z\).


Next, we discuss the relationship between $C$ and $C^\star$ in two cases: 1) $r_{\max}\le d_{\max}$; 2) $r_{\max}> d_{\max}$.

\paragraph{Case 1: $d_{\max}\ge r_{\max}$}
In this case, we have $T_I-T_O\le 2^z d_{\max}^z$ since $T_O\ge 0$. Therefore, it suffices to prove $2^z d_{\max}^z\le O(\cost_z^{(m)}(P,C)/m)$. Suppose $\dist(c_i^\star,C)=d_{\max}$, let $(\bar{r_i})^z:=\cost_z(P_i^\star, C^\star)/|P_i^\star|$. Based on the scale of $d_{\max}$, we discuss in two cases.

\paragraph{Case 1.1: $d_{\max}\le 8 \bar{r_i}$}
By definition of $\bar{r_i}$, we have
\begin{align*}
16^z \cdot \cost_z^{(m)}(P_i^\star,C^\star) &=\quad 16^z \cdot |P_i^\star|\cdot (\bar{r_i})^z\\
                       &\ge \quad 2^z \cdot |P_i^\star|\cdot d_{\max}^z\\
                       &\ge \quad 2^z \cdot  m  \cdot d_{\max}^z & (\text{Item 1 of Assumption \ref{as: kmedian}}).    
\end{align*}
Since $C^\star$ is an $O(1)$-approximate solution for robust $(k,z)$-clustering, we have 
\[ 2^z \cdot (d_{\max})^z\le 16^z \cdot \cost_z^{(m)}(P,C^\star)/m \le O(\cost_z^{(m)}(P,C)/m).\]

\paragraph{Case 1.2: $d_{\max}> 8 \bar{r_i}$}

Let $\widehat{P_i}:=\{p\in P_i^\star| \dist(p,C^\star)\le 4 \bar{r_i}\}$. By definition, we have $|\widehat{P_i}|\ge \frac{3}{4}|P_i^\star|$. For any point $p\in \widehat{p_i}$, suppose $c_p:=\arg\min_{c\in C}\dist(p,c)$, then we have
\begin{equation}\label{ineq: n_25_1}
    \begin{aligned}
    \dist(p,c) &\ge\quad  \dist(c_i^\star,c_p)-\dist(p,c_i^\star) & (\text{Triangle Inequality})\\
    &\ge \quad \dist(c_i^\star,C)-\dist(p,c_i^\star) \\
    &=\quad d_{\max}-\dist(p,c_i^\star)\\
    &\ge \quad d_{\max}-4\bar{r_i} &(\text{Definition of $\widehat{P_i}$})
\end{aligned}
\end{equation}

Since $d_{\max} > 8\bar{r_i}$, we obtain
\begin{align*}
   8^z \cdot \cost_z^{(m)}(P,c)
    &\ge \quad 8^z \cdot (\frac{3}{4}|P_i^\star|-m)\min_{p\in \widehat{P_i} }(\dist(p,C))^z\\
    &\ge\quad 8^z \cdot(\frac{3}{4}|P_i^\star|-m)(d_{\max}-4\bar{r_i})^z & (\text{Inequality \eqref{ineq: n_25_1}})\\
    &\ge \quad 4^z \cdot(\frac{3}{4}|P_i^\star|-m)d_{\max}^z\\
    &\ge \quad  2^z \cdot m \cdot d_{\max} &(\text{Item 1 of Assumption \ref{as: kmedian}}).
\end{align*}

Overall, we have $ T_I-T_O \le O( \cost_z^{(m)}(P,C)/m)$.

\paragraph{Case 2: $d_{\max}<r_{\max}$}
We have $T_I-T_O< 2^z \cdot r_{\max}^z$ since $T_O\ge 0$. By Item 2 of Assumption \ref{as: kmedian}, we have $m \cdot r_{\max}^z\le n\cdot \bar{r}^z$, thus $T_I-T_O\le 2^z \cdot \cost_z^{(m)}(P,C)/m$.

Combine Case 1 and Case 2, we complete the proof.

\end{proof}

%
%

%

\subsection{Extension to other metric spaces}
\label{sec: other metric space for k median}
In this section, we explore the extension of Algorithm \ref{alg: kmedian} for robust $(k,z)$-clustering in Euclidean space, to various other metric spaces.

Similar to the analysis for robust geometric median, we discuss the induced error of $S_I$ and $S_O$ separately. For the error induced by \( S_O \), Lemma \ref{lemma: bound for T_I and T_O} still holds, so it suffices for \( S_O \) to be an \( \eps \)-approximation of the \( k \)-balls range space.

\paragraph{VC dimension}
We begin by introducing the concept of the VC dimension of the $k$-balls range space.

\begin{definition}[VC dimension of $k$-balls range space]
\label{def: VC dimension for k}
Let $(\mathcal{X},\dist)$ be the metric space and define $\Balls_k(\mathcal{X}):=\{\Balls(C,u)\mid C\subset \mathcal{X}, |C|=k, u>0\}$. The VC dimension of the $k$-balls range space $(\mathcal{X},\Balls_k(\mathcal{X}))$, denoted by $d_{VC}(\mathcal{X})$, is the maximum $|P|$, $P\subseteq \mathcal{X}$ such that $|P\cap \Balls_k(\mathcal{X})|=2^{|P|}$, where $P\cap \Balls_k(\mathcal{X}):=\{P\cap \Balls(C,u) \mid \Balls(C,u)\in \Balls_k(\mathcal{X})\}$.
\end{definition}

\noindent
The VC dimension of $(\mathcal{X},\Balls_k(\mathcal{X}))$ aligns to the pseudo-dimension of $(\mathcal{X},\Balls_k(\mathcal{X}))$ used by \cite{LI2001516}. Based on this observation, we give out a refined lemma from \cite{LI2001516}.

\begin{lemma}[Refined from \cite{LI2001516}]
\label{lemma: refined S_O size for kmedian}
Let $(\mathcal{X},\dist)$ be the metric space. 
Given dataset $P_O\subseteq \mathcal{X}$, assume $S_O$ is a uniform sample of size $\tilde{O}(\frac{k\cdot d_{VC}(\mathcal{X})}{\eps^2})$ from $P_O$, then with probability $1-1/poly(k/\eps)$, $S_O$ is a $\eps$-approximation of the $k$-balls range space on $P_O$.
\end{lemma}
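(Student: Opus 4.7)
The plan is to combine a classical VC-dimension bound for $k$-fold unions with the standard uniform-sampling theorem for $\eps$-approximations, paralleling the proof of Lemma \ref{lemma: refined S_O size} but with an extra factor of $k$ absorbed into the effective dimension. Morally, the $k$-balls range space is obtained from the single-ball range space by taking unions of $k$ ranges, which blows up the VC dimension only by a factor of $\tilde{O}(k)$.

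First I would observe that each $k$-balls range $\Balls(C,u)=\bigcup_{c\in C}\Ball(c,u)$ is a $k$-fold union of single balls from $\Balls(\mathcal{X})$. By the classical result of Blumer, Ehrenfeucht, Haussler, and Warmuth (1989), if a set system $(\mathcal{X},\mathcal{R})$ has VC dimension $d$, then the set system whose ranges are all $k$-fold unions of ranges in $\mathcal{R}$ has VC dimension at most $O(d k \log k)$. Applied here, the $k$-balls range space $(\mathcal{X},\Balls_k(\mathcal{X}))$ has VC dimension at most $\tilde{O}(k\cdot d_{VC}(\mathcal{X}))$, where $d_{VC}(\mathcal{X})$ is interpreted (as in the one-cluster Lemma \ref{lemma: refined S_O size}) as the VC dimension of the single-ball range space on $\mathcal{X}$.

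Next I would invoke the classical uniform-sampling theorem for $\eps$-approximations of set systems, in the sharpened form due to Li, Long, and Srinivasan (2001): for any range space of VC dimension $D$ and any $\eps,\delta\in(0,1)$, a uniform i.i.d.\ sample of size $O\!\bigl(D\eps^{-2}+\eps^{-2}\log(1/\delta)\bigr)$ is an $\eps$-approximation with probability at least $1-\delta$. Applying this to the range space induced on $P_O$ (whose ranges are the sets $P_O\cap\Balls(C,u)$ for $|C|=k$ and $u>0$) with $D=\tilde{O}(k\cdot d_{VC}(\mathcal{X}))$ and $\delta=1/\mathrm{poly}(k/\eps)$ yields exactly the claimed sample size $\tilde{O}(k\cdot d_{VC}(\mathcal{X})/\eps^2)$ and success probability $1-1/\mathrm{poly}(k/\eps)$.

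The only thing to verify carefully is that the Li--Long--Srinivasan bound transfers from the ambient range space $(\mathcal{X},\Balls_k(\mathcal{X}))$ to the induced range space on the finite ground set $P_O$; this is immediate because both the $k$-fold union VC-dimension bound and the uniform-sampling theorem are purely combinatorial and depend only on the set-system structure, not on the metric $\dist$. I do not anticipate any real obstacle here: the two logarithmic overheads, one from the $k$-fold union bound and one from the sampling theorem, are absorbed in the $\tilde{O}(\cdot)$ notation, matching the form of the stated bound.
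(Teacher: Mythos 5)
Your proposal is correct and matches the paper's (implicit) argument: the paper states this lemma without an explicit proof, justifying it only by observing that the $k$-balls range space has bounded VC/pseudo-dimension and citing Li--Long--Srinivasan, which is exactly the uniform-sampling theorem you invoke. Your additional step of bounding the VC dimension of the $k$-fold-union range space by $\tilde{O}(k\cdot d_{VC}(\mathcal{X}))$ via the classical composition lemma is the right way to account for the extra factor of $k$ in the sample size, and is consistent with how the lemma is later applied (e.g., with $d_{VC}(\mathcal{X})=O(\log|\mathcal{X}|)$ for general metrics).
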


\noindent
This lemma illustrates the number of points needed to be sampled from $L^\star$. 

\paragraph{Doubling dimension}

Similar to the Lemma \ref{lemma: refined S_O size for doubling}, we give out the Lemma \ref{lemma: refined S_O size for doubling k} that illustrates the number of points needed to be sampled in order to bound the induced error of $S_O$ .

\begin{lemma}[Balls range space approximation for the doubling dimension]
\label{lemma: refined S_O size for doubling k}
Let $M=(\mathcal{X},\dist)$ be the metric space with {doubling dimension} $\mathrm{ddim}(\mathcal{X})$. Given $P_O\subseteq \mathcal{X}$, assume $S_O$ is a uniform sample of size $\tilde{O}(\mathrm{ddim}(\mathcal{X})\cdot \eps^{-2z}\cdot k)$ from $P_O$, then with probability $1-1/poly(k,1/\eps)$, $S_O$ is an $\eps$-approximation of the $k$-balls range space on $P_O$.
\end{lemma}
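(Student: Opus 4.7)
The plan is to generalize the proof of Lemma \ref{lemma: refined S_O size for doubling} (the $k=1$ case) to the $k$-balls setting by combining a distorted-distance reduction with a $k$-fold discretization argument. First, I would build a distorted distance $\dist'\colon \mathcal{X}\times \mathcal{X}\to \R_{\ge 0}$ satisfying $(1-\eps^z)\dist(x,y)\le \dist'(x,y)\le (1+\eps^z)\dist(x,y)$ for all $x,y\in P_O$, by placing an $\eps^z$-scale hierarchical net over $P_O$ whose size at every scale is controlled by $2^{O(\mathrm{ddim}(\mathcal{X}))}$. As in the proof of Lemma \ref{lemma: refined S_O size for doubling}, this distortion lets us replace the continuous collection of radii and centers by a discrete one, and any $O(\eps^z)$-approximation with respect to $\dist'$ transfers to an $O(\eps)$-approximation with respect to $\dist$ because the $(1\pm\eps^z)$-distortion only reassigns points within thin annular regions.

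Second, I would bound the effective number of distinct $k$-balls ranges induced on $P_O$ under $\dist'$. For a single ball, the argument of Lemma \ref{lemma: refined S_O size for doubling} shows that there are $\exp(\tilde{O}(\mathrm{ddim}(\mathcal{X})))$ essentially different ranges once distances are discretized. Since each range $\Balls(C,u)=\bigcup_{i\in [k]}\Ball(c_i,u)$ in the $k$-balls range space is specified by $k$ centers (drawn from the net) and one shared radius, the number of distinct $k$-balls ranges is at most the $k$-th power, i.e., $\exp(\tilde{O}(k\cdot \mathrm{ddim}(\mathcal{X})))$.

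Third, I would invoke Chernoff-Hoeffding on a uniform sample $S_O$ of size $N=\tilde{O}(\mathrm{ddim}(\mathcal{X})\cdot \eps^{-2z}\cdot k)$ and take a union bound over this discretized family. Standard concentration gives that for each fixed range the empirical mass deviates from the true mass by more than $\eps^z$ with probability at most $\exp(-\Omega(N\eps^{2z}))$; choosing $N$ as above ensures that the union bound over $\exp(\tilde{O}(k\,\mathrm{ddim}(\mathcal{X})))$ ranges fails with probability at most $1/\mathrm{poly}(k,1/\eps)$. Translating back through the $(1\pm\eps^z)$-distortion yields the desired $\eps$-approximation of the $k$-balls range space.

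The main obstacle is the combinatorial blowup from passing to $k$ centers simultaneously: a direct union bound over continuously parameterized tuples $(c_1,\dots,c_k,u)$ is hopeless, so one must genuinely rely on the doubling-dimension net to keep the effective parameter space small. A secondary delicacy is the choice of distortion scale: because the cost is a sum of $z$-th powers, the range-space error must be controlled at scale $\eps^z$ rather than $\eps$, which is precisely what forces the $\eps^{-2z}$ dependence in the sample size (versus $\eps^{-2}$ in the $z=1$ case of Lemma \ref{lemma: refined S_O size for doubling}).
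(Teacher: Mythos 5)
Your proposal is essentially sound, and it supplies a proof where the paper supplies none: for this lemma the paper writes only that the argument is ``similar to'' the $k=1$ doubling case (Lemma~\ref{lemma: refined S_O size for doubling}), which is itself proved by citing Lemma~3.1 of \cite{huang2018epsilon} for a distorted distance and transferring back. Your net-based discretization, range counting, and Chernoff-plus-union-bound argument is exactly the machinery underlying that citation, and your composition step --- a union of $k$ balls with a shared radius is parameterized by $k$ net centers, so the log of the number of discretized ranges (equivalently, the shattering dimension) grows by a factor of roughly $k$ --- is the correct way to pass from one ball to $k$ balls. So in substance you follow the same route as the paper, just unpacked from first principles.

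Two caveats. First, your rationale for the $\eps^z$ scale is a misattribution: the $k$-balls range space is defined with the plain distance $\dist$, not with $z$-th powers, so nothing in the stated conclusion (an $\eps$-approximation) forces an $\eps^{-2z}$ sample; a sample of size $\tilde{O}(k\cdot\mathrm{ddim}(\mathcal{X})\cdot\eps^{-2})$ already yields an $\eps$-approximation, and the extra $\eps^{-2(z-1)}$ in the lemma is slack reserved for the downstream error analysis of robust $(k,z)$-clustering. Since a finer distortion and a larger sample only strengthen the conclusion, this does not break your proof. Second, the transfer ``an approximation w.r.t.\ $\dist'$ yields one w.r.t.\ $\dist$'' is not automatic: a $\dist$-ball is only sandwiched between two $\dist'$-balls of nearby radii, and the mass of $P_O$ in the annulus between them is not a priori small. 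The clean fix is to bound the shattering dimension of the original $\dist$-ball (and hence $k$-balls) range space directly via the doubling nets and then invoke the standard $\eps$-approximation theorem, rather than approximating w.r.t.\ $\dist'$ and transferring. The paper's own $k=1$ proof elides the same point, so this does not put you behind the paper, but it is the step you would need to tighten to make the argument fully rigorous.
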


\noindent
The remaining problem is to bound the induced error of $S_I$. We give out a generalized version of Theorem \ref{lemma: generalized 1-Median P_I^star size} as below.

\begin{theorem}[Refined from Corollary 5.4 in \cite{huang2023coresets}]
\label{lemma: generalized k-Median P_I^star size}
Let $(\mathcal{X},\dist)$ be the metric space. 
There exists a randomized algorithm that in $O(nk)$ time constructs a weighted subset $S_I\subseteq P_I^\star$ of size:
\begin{itemize}
    \item $\tilde{O}(k^2 \cdot d_{VC}(\mathcal{X})\cdot \eps^{-2z-2})$, w.r.t. VC dimension $d_{VC}(\mathcal{X})$,
    \item $\tilde{O}(k^2 \cdot\mathrm{ddim}(\mathcal{X})\cdot \eps^{-2z})$, w.r.t. doubling dimension $\mathrm{ddim}(\mathcal{X})$,
\end{itemize}
such that for every dataset $P_O$ of size $m$, every integer $0\le t\le m$ and every center set $C\subset \R^d$, $|C|=k$,  
$
    |\cost_z^{(t)}(P_O\cup P_I^\star ,C)-\cost_z^{(t)}(P_O\cup S_I,C)|
        \le \eps \cdot \cost_z^{(t)}(P_O\cup P_I^\star,C)+ 2\eps\cdot \cost_z( P_I^\star ,C^\star)$.
\end{theorem}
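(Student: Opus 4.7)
The plan is to lift the Euclidean argument behind Theorem~\ref{theorem: coreset in  P_I for k} (Corollary~5.4 of \cite{huang2023coresets}) to an arbitrary metric space $(\mathcal{X},\dist)$, replacing the Euclidean dimension $d$ by whichever complexity parameter---$d_{VC}(\mathcal{X})$ or $\mathrm{ddim}(\mathcal{X})$---controls uniform sampling over the $k$-balls range space. The construction is a two-stage importance sampling on $P_I^\star$ relative to $C^\star$: first group $P_I^\star$ by its closest approximate center, then further subdivide each group into distance rings around $C^\star$ at geometrically increasing scales. Rings whose aggregate $z$-th power cost is at most $O(\eps/k)\cdot \cost_z(P_I^\star,C^\star)$ can be dropped, since they are absorbed by the additive slack $2\eps\cdot \cost_z(P_I^\star,C^\star)$ in the statement. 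This leaves $k\cdot \tilde{O}(1)$ groups, and within each I would take a uniform sample and reweight by the group cardinality.

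Next I would rewrite the estimation error uniformly in $(C,P_O,t)$ via the integral identity
\begin{equation*}
\cost_z^{(t)}(P_O\cup Q, C)=\int_{0}^{\infty}\bigl(|P_O\cup Q|-t-|(P_O\cup Q)\cap \Balls(C,u^{1/z})|\bigr)^{+}\,du
\end{equation*}
applied with $Q=P_I^\star$ and $Q=S_I$ in turn. Since $P_O$ is present verbatim in both, its contribution cancels, and the discrepancy reduces to the $\ell_\infty$ error of approximating $|P_I^\star\cap \Balls(C,r)|$ by $w_I(S_I\cap \Balls(C,r))$ uniformly in $C$ and $r$. This is precisely a $k$-balls range space approximation of $P_I^\star$, so Lemmas~\ref{lemma: refined S_O size for kmedian} and \ref{lemma: refined S_O size for doubling k} supply the required per-ring sample sizes with the VC or doubling complexity appearing as a multiplicative factor.

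Converting the count approximation into an approximation of the $z$-th power cost then loses a factor of $\eps^{-2z}$ (a standard geometric discretization over scales between $\eps\bar{r}$ and a polynomial-in-$1/\eps$ multiple of $\bar{r}$), and a union bound over the $k\cdot\tilde{O}(1)$ ring-groups loses another factor of $k$. In the VC-dimension case, an additional $\eps^{-2}$ appears from the discretization of the threshold $u$ needed to upgrade the range-space guarantee to a guarantee on the integral; in the doubling-dimension case Lemma~\ref{lemma: refined S_O size for doubling k} already accounts for the distance distortion so no extra factor is incurred. The totals $\tilde{O}(k^2\cdot d_{VC}(\mathcal{X})\cdot \eps^{-2z-2})$ and $\tilde{O}(k^2\cdot \mathrm{ddim}(\mathcal{X})\cdot \eps^{-2z})$ follow.

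The main obstacle will be maintaining the uniform guarantee across the free outlier set $P_O$ and the free count $t$, since the coreset never samples from $P_O$. The cancellation in the integral identity above makes the $P_O$-part fully transparent, so $(P_O,t)$ only influences \emph{where} the cap $u^{1/z}$ falls in the uniform $\ell_\infty$ bound; however, a few $P_I^\star$-points near this cap may flip between inlier and outlier between $P_I^\star$ and $S_I$, producing a boundary error not controlled multiplicatively. Absorbing this boundary error is exactly what the additive $2\eps\cdot \cost_z(P_I^\star,C^\star)$ slack is for, and verifying its sufficiency uses the same two-stage (inner multiplicative, outer additive) decomposition as in \cite{huang2023coresets}, which transfers without change once the range-space sample bound is replaced by the VC/doubling version.
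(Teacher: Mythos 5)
The paper does not actually prove this statement: it is imported as a black-box refinement of Corollary~5.4 of \cite{huang2023coresets}, with the Euclidean dimension swapped for $d_{VC}(\mathcal{X})$ or $\mathrm{ddim}(\mathcal{X})$, so there is no in-paper proof to compare yours against. Your sketch is a reasonable reconstruction of the underlying argument (ring decomposition around $C^\star$, per-ring uniform sampling controlled by a $k$-balls range-space bound, and the integral representation of the robust cost), which is indeed the HJLW/HLLW framework the paper relies on elsewhere (cf.\ the use of Fact~F.1 of \cite{huang2022near} in the proof of Lemma~\ref{lemma: Error upper bound}).

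Two steps in your write-up are imprecise enough to count as gaps if this were meant as a standalone proof. First, the contribution of $P_O$ in the integral identity does not ``cancel'': the integrand carries a positive part $(\cdot)^{+}$, which is not linear, so you must instead invoke the pointwise bound $|(a)^{+}-(b)^{+}|\le|a-b|$ to reduce the discrepancy at each threshold $u$ to the count error on $P_I^\star$ versus $S_I$; this is what actually makes $(P_O,t)$ transparent. Second, integrating a uniform $\ell_\infty$ count error of order $\eps|P_I^\star|$ over all $u$ up to the maximum $z$-th power distance gives an additive error of order $\eps\,|P_I^\star|\cdot(\max_p\dist(p,C))^{z}$, which is not bounded by $\eps\cdot\cost_z^{(t)}+2\eps\cdot\cost_z(P_I^\star,C^\star)$ in general; the per-ring localization you mention is therefore not an optimization but the load-bearing step, and it needs to be carried out explicitly (range-space approximation relative to each ring's cardinality, with the relevant range of $u$ confined to that ring's scale). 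Relatedly, the factors $\eps^{-2z}$ and the extra $\eps^{-2}$ in the VC case are asserted rather than derived, and the claimed $O(nk)$ construction time is not addressed. None of these points suggests the approach would fail, but as written the proposal is an outline of the cited proof rather than a proof.
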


\noindent
By combining the discussions of the errors induced by \( S_O \) and \( S_I \), we present the main theorem for constructing a coreset for the robust $(k,z)$-clustering across various metric spaces.

\begin{theorem}[Coreset size for robust $(k,z)$-clustering in various metric spaces]
\label{theorem: main for various metric k z}
Let $\eps\in (0,1)$.
For a metric space $M=(\mathcal{X},\dist)$ and a dataset $X\subset \mathcal{X}$ satisfying Assumption \ref{as: kmedian}, let $S=S_O\cup S_I$ be a sampled set of size
\begin{itemize}
    \item $\tilde{O}(k^2\cdot\log(|\mathcal{X}|)\cdot \eps^{-2z-2})$ if $M$ is general metric space.
    \item $\tilde{O}(k^2 \cdot\mathrm{ddim}(\mathcal{X}) \cdot \eps^{-2z})$ if $M$ is a doubling metric with doubling dimension $\mathrm{ddim}(\mathcal{X})$.
    \item $\tilde{O}(k^2\cdot t\cdot \eps^{-2z-2})$ if $M$ is a shortest-path metric of a graph with bounded treewidth $t$.
    \item $\tilde{O}(k^2\cdot|H|\cdot\eps^{-2z-2})$ if $M$ is a shortest-path metric of a graph that excludes a fixed minor $H$.
\end{itemize}
Then $S$ is an $\eps$-coreset for robust $(k,z)$-clustering on $X$.
\end{theorem}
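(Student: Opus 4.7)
The plan is to extend the Euclidean construction (Algorithm \ref{alg: kmedian}) to a generic metric space $M=(\mathcal{X},\dist)$ by replacing the two building blocks with their metric-aware counterparts: the uniform sample $S_O$ drawn from $L^\star$, whose size is governed by the complexity of the $k$-balls range space, and the inlier coreset $S_I$ produced on $P_I^\star$, whose size is governed by Theorem \ref{lemma: generalized k-Median P_I^star size}. The key observation is that the error decomposition we used in the Euclidean case, namely
\[
|\cost_z^{(m)}(P,C)-\cost_z^{(m)}(S,C)| \le |\cost_z^{(m)}(P,C)-\cost_z^{(m)}(S_O\cup P_I^\star,C)| + |\cost_z^{(m)}(S_O\cup P_I^\star,C)-\cost_z^{(m)}(S_O\cup S_I,C)|,
\]
relies only on the triangle inequality and on Definition \ref{def: weighted cost function for kmedian}, both of which are purely metric in nature; hence it transfers verbatim to $(\mathcal{X},\dist)$.

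First, I would verify that the machinery behind Lemma \ref{lemma: induced error of S_O for k} (i.e.\ the $T_O$ versus $T_I$ comparison of Lemma \ref{ob:comparision}, the misaligned-outlier bound of Lemma \ref{refined outlier gap}, and the per-point error estimate of Lemma \ref{lemma: bound for T_I and T_O}) goes through unchanged once the Euclidean ball range space is replaced by the $k$-balls range space on $(\mathcal{X},\dist)$ from Definition \ref{def: k balls range space}. This relies on Assumption \ref{as: kmedian} and on the triangle inequality only. Consequently, whenever $S_O$ is an $\eps$-approximation of the $k$-balls range space on $L^\star$, we retain the bound
\[
|\cost_z^{(m)}(P,C)-\cost_z^{(m)}(S_O\cup P_I^\star,C)|\le O(\eps)\cdot \cost_z^{(m)}(P,C).
\]

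Next, I would plug in the correct sample size in each of the four metric regimes. For general metrics I invoke the classical bound $d_{VC}(\mathcal{X})=O(\log|\mathcal{X}|)$, apply Lemma \ref{lemma: refined S_O size for kmedian} to get $|S_O|=\tilde{O}(k\log|\mathcal{X}|\cdot\eps^{-2})$, and Theorem \ref{lemma: generalized k-Median P_I^star size} to get $|S_I|=\tilde{O}(k^2\log|\mathcal{X}|\cdot\eps^{-2z-2})$, whose sum is $\tilde{O}(k^2\log|\mathcal{X}|\cdot\eps^{-2z-2})$. For doubling metrics I use Lemma \ref{lemma: refined S_O size for doubling k} and Theorem \ref{lemma: generalized k-Median P_I^star size}, giving sizes $\tilde{O}(k\cdot \mathrm{ddim}(\mathcal{X})\cdot \eps^{-2z})$ and $\tilde{O}(k^2\cdot \mathrm{ddim}(\mathcal{X})\cdot \eps^{-2z})$ respectively. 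For bounded-treewidth and fixed-minor-free graph metrics I cite \cite{BOUSQUET20152302} to obtain $d_{VC}(\mathcal{X})=O(t)$ and $d_{VC}(\mathcal{X})=O(|H|)$ respectively, and then repeat the same two substitutions. In each case the dominant term matches the statement of Theorem \ref{theorem: main for various metric k z}.

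Finally, one combines the two error contributions by the triangle inequality and Theorem \ref{lemma: generalized k-Median P_I^star size} (choosing $P_O=S_O$, $t=m$), absorbing the additive term $2\eps\cdot\cost_z(P_I^\star,C^\star)$ into $O(\eps)\cdot\cost_z^{(m)}(P,C)$ via Assumption \ref{as: kmedian} together with the fact that $C^\star$ is an $O(1)$-approximation. The main obstacle I anticipate is checking that Lemma \ref{lemma: bound for T_I and T_O} really does extend to arbitrary metrics without invoking Euclidean-specific geometric facts: the proof only uses the triangle inequality, the two items of Assumption \ref{as: kmedian}, and the $O(1)$-approximation guarantee on $C^\star$, all of which are metric-intrinsic, so the extension is safe; but one should double-check the doubling-dimension case, where the $\eps^{-2z}$ (rather than $\eps^{-2}$) exponent in Lemma \ref{lemma: refined S_O size for doubling k} is needed precisely to absorb the $z$-th-power distortion in the cost, and this is the only place where the exponent in the final bound is tight rather than loose.
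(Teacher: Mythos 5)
Your proposal is correct and follows essentially the same route as the paper's proof: decompose the error into the $S_O$ contribution (controlled by the $k$-balls range-space approximation, instantiated per metric via the VC-dimension bounds of \cite{BOUSQUET20152302} or the doubling-dimension lemma) and the $S_I$ contribution (controlled by Theorem \ref{lemma: generalized k-Median P_I^star size}), then add the two sample sizes in each of the four regimes. If anything, your bookkeeping is more consistent with the theorem statement than the paper's own write-up, which cites the $k=1$ versions of the lemmas and records exponents $\eps^{-4}$ rather than the correct $\eps^{-2z-2}$ and $\eps^{-2z}$ that you obtain from Lemma \ref{lemma: refined S_O size for kmedian}, Lemma \ref{lemma: refined S_O size for doubling k}, and Theorem \ref{lemma: generalized k-Median P_I^star size}.
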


\noindent
This theorem illustrates the coreset size with respect to different metric spaces for robust $(k,z)$-clustering, improving upon previous results by eliminating the $O(m)$ dependency.

\begin{proof}[Proof of Theorem \ref{theorem: main for various metric k z}]
   If \( M = (\mathcal{X}, \dist) \) is a general metric space, then \( d_{VC}(\mathcal{X}) = O(\log |\mathcal{X}|) \). Thus, we have \( |S_O| = \tilde{O}(k\cdot \log(|\mathcal{X}|)\cdot \eps^{-2}) \) by Lemma \ref{lemma: refined S_O size} and \( |S_I| = \tilde{O}(k^2\cdot\log (|\mathcal{X}|)\cdot \eps^{-4}) \) by Theorem \ref{lemma: generalized 1-Median P_I^star size}, leading to a coreset of size \( \tilde{O}(k^2\cdot \log(|\mathcal{X}|)\cdot \eps^{-4}) \).

If \( M = (\mathcal{X}, \dist) \) is a doubling metric space, then by definition, $\mathrm{ddim}(\mathcal{X})$ is bounded. Thus we have \( |S_O| = \tilde{O}( k\cdot\mathrm{ddim}(\mathcal{X}) \cdot\eps^{-2}) \) by Lemma \ref{lemma: refined S_O size for doubling} and \( |S_I| = \tilde{O}(k^2\cdot\mathrm{ddim}(\mathcal{X})\cdot\eps^{-2}) \) by Theorem \ref{lemma: generalized 1-Median P_I^star size}, leading to a coreset of size \( \tilde{O}(k^2\cdot\mathrm{ddim}(\mathcal{X})\cdot\eps^{-2}) \).

If $M = (\mathcal{X}, \dist)$ is a shortest-path metric of a graph with bounded treewidth $t$, we have $d_{VC}(\mathcal{X})=O(t)$ by \cite{BOUSQUET20152302}. Thus, we have \( |S_O| = \tilde{O}(k\cdot t \cdot\eps^{-2}) \) by Lemma \ref{lemma: refined S_O size} and \( |S_I| = \tilde{O}(k^2\cdot t \cdot\eps^{-4}) \) by Theorem \ref{lemma: generalized 1-Median P_I^star size}, leading to a coreset of size \( \tilde{O}(k^2\cdot t\cdot\eps^{-4}) \).

If $M = (\mathcal{X}, \dist)$ is a shortest-path metric of a graph that excludes a fixed minor $H$, we have $d_{VC}(\mathcal{X})=O(|H|)$ by \cite{BOUSQUET20152302}. Thus, we have \( |S_O| = \tilde{O}(k\cdot|H| \cdot\eps^{-2}) \) by Lemma \ref{lemma: refined S_O size} and \( |S_I| = \tilde{O}(k^2\cdot|H|\cdot \eps^{-4}) \) by Theorem \ref{lemma: generalized 1-Median P_I^star size}, leading to a coreset of size \( \tilde{O}(k^2\cdot|H|\cdot\eps^{-4}) \).
   
\end{proof}

\section{Empirical results}
\label{sec: empirical}
We implement our coreset construction algorithm and compare its performance to several baselines. All experiments are conducted on a PC with an Intel Core i9 CPU and 16GB of memory, and the algorithms are implemented in C++ 11.

\paragraph{Baselines.}
We compare our algorithm with two baselines: 1) Method \textbf{HJLW23} proposed by \cite{huang2022near}, which directly includes $L^\star$ in the coreset and samples points from $P_I^\star$.
2) Method \textbf{HLLW25} proposed by \cite{huang2023coresets}, improves the sample size from $P_I^\star$ in \cite{huang2022near}.

\paragraph{Setup.}
{
We conduct experiments on six datasets from diverse domains, including social networks, demographics, and disease statistics, with sample sizes ranging from ($10^4$) to ($10^5$) and feature dimensions from $2$ to $68$, as summarized in Table \ref{tb: dataset3}. These datasets cover those used in baseline \cite{huang2022near}, ensuring fair comparison. 
}
In each dataset, numerical features are extracted to create a vector for each record and the outlier number is set to $2\%$ of the dataset size. 
To simplify computation, we subsample $10^5$ points from the \textbf{Twitter} and \textbf{Census1990} datasets, and $10^4$ points from the \textbf{Athlete} and \textbf{Diabetes} datasets, respectively.
We use $k$-means++ to compute an approximate center $c^\star$.

\begin{table}[t]
\caption{Datasets used in our experiments. For each dataset, we report its size, dimension (DIM), number of outliers ($m$). The number of centers ($k$) is used in robust $(k, z)$-clustering. We also provide the values of $\min_i |P_i^\star|$ and $(r_{\max}/\bar{r})^z$ for robust $(k,z)$-clustering. Our Assumption \ref{as: kmedian} requires that $\min_i |P_i^\star| \geq 4m$ and $(r_{\max}/\bar{r})^z \leq 4k$. The value $Y$ (resp. $N$) indicates these assumptions are satisfied or not. Note that the \textbf{Athlete} dataset is sourced from Kaggle, while the other datasets are from the UCI repository.
}
\label{tb: dataset3}
\begin{center}
\begin{small}
\begin{sc}
\begin{tabular}{lcccccccr}
\toprule
\multirow{2}{*}{dataset} & \multirow{2}{*}{size} & \multirow{2}{*}{dim.}   & \multirow{2}{*}{$m$} & \multirow{2}{*}{$k$}&\multicolumn{2}{c}{$\min_i |P_i^\star|$} & \multicolumn{2}{c}{$(\frac{r_{\max}}{\bar{r}})^z$} \\&&&&& $z=1$ & $z=2$ & $z=1$ & $z=2$\\
\midrule
 Twitter\cite{twitter}    &    $10^5$      &  2      &2000 & 5 & 8968,y& 588,n & 3.819,y & 1.083,y\\
Census1990 \cite{us_census_data_(1990)_116}     &    $10^5$       & 68     &2000 & 5 & 5927,n& 5927,n& 1.873,y & 3.252,y \\
 Bank\cite{misc_bank_marketing_222}   &   41188  &10      & 824 & 5 & 1361,n &1361,n & 4.674,y &15.731,y\\
Adult\cite{misc_adult_2}     &  48842    &6      & 977 & 5 & 4508,y & 186,n&5.151,y& 5.834,y
 \\
Athlete\cite{heesoo_olympic_history_2016}   & 10000 & 4 & 200 & 5 & 1018,y& 1018,y& 2.923,y& 7.062,y\\
Diabetes\cite{diabetes_34} & 10000 & 10 & 200 & 5 & 1415,y& 1459,y  &2.551,y &2.527,y\\

\bottomrule
\end{tabular}
\end{sc}
\end{small}
\end{center}
\end{table}

In the following subsections, we implement our algorithms for the robust geometric median ($d\ge 1$), robust 1D geometric median, robust \(k\)-median, and \(k\)-means, respectively,

\subsection{Empirical results for robust geometric median ($d\ge 1$)}

\paragraph{Size-error tradeoff.}
We evaluate the tradeoff between coreset size and empirical error. 
Given a (weighted) subset $S\subseteq P$ and a center $c\subset \R^d$, we define the empirical error $\widehat{\epsilon}(S, c) := \frac{|\cost^{(m)}(P, c) - \cost^{(m)}(S, c)|}{\cost^{(m)}(P, c)}$, where lower values indicate better coreset performance for $c$. 
It is difficult to estimate the performance for every center. 
So we sample 500 centers $c_i \in \R^d$, where each $c_i$ is drawn uniformly from $P$ without replacement. 
Like in the literature \cite{huang2022near}, we evaluate the empirical error $\widehat{\eps}(S):=\max_{i\in [500]}\widehat{\eps}(S,c_i)$.
We vary the coreset size $|S|$ from $m$ to $2m$, and compute the empirical error $\widehat{\eps}(S)$. 
For each size and each algorithm, we run the algorithm 10 times, compute their empirical errors $\widehat{\eps}(S)$, and report the average of 10 empirical errors.

Figure \ref{fig: experiment1} presents the empirical results illustrating the size-error tradeoff on the six datasets of Table \ref{tb: dataset3}.
As shown in Figures \ref{fig: experiment1}, our coreset algorithm consistently achieves the lowest empirical error among all methods.
Moreover, unlike the baselines, which require a coreset of size at least $m$, our method attains the same level of error with a coreset size \textbf{smaller than $m$}.
For example, with the \textbf{Census1990} dataset, our method yields a coreset of size 1000 with an empirical error of 0.012, while the best baseline needs size 2300 to achieve a worse error of 0.013.

\paragraph{Statistical test.}
We also perform statistical tests across six real-world datasets by comparing the ratio of empirical errors between our algorithm and baselines, which further demonstrates that our algorithm consistently outperforms the baselines; see Table \ref{table: statistical test for robust geometric median}. 
{The results show that both $\widehat{\varepsilon}(S_2)/\widehat{\varepsilon}(S_1)$ and $\widehat{\varepsilon}(S_3)/\widehat{\varepsilon}(S_1)$ are consistently bigger than $1$, demonstrating that our coreset consistently yields lower empirical error than the baselines. This confirms the applicability of our coreset across real-world datasets.}

\begin{figure*}[t]
	\centering
	\subfigure[\textbf{Census1990}]{\label{sub_exp1_census}
		\begin{minipage}[b]{0.28\textwidth}
			\includegraphics[width=1\textwidth]{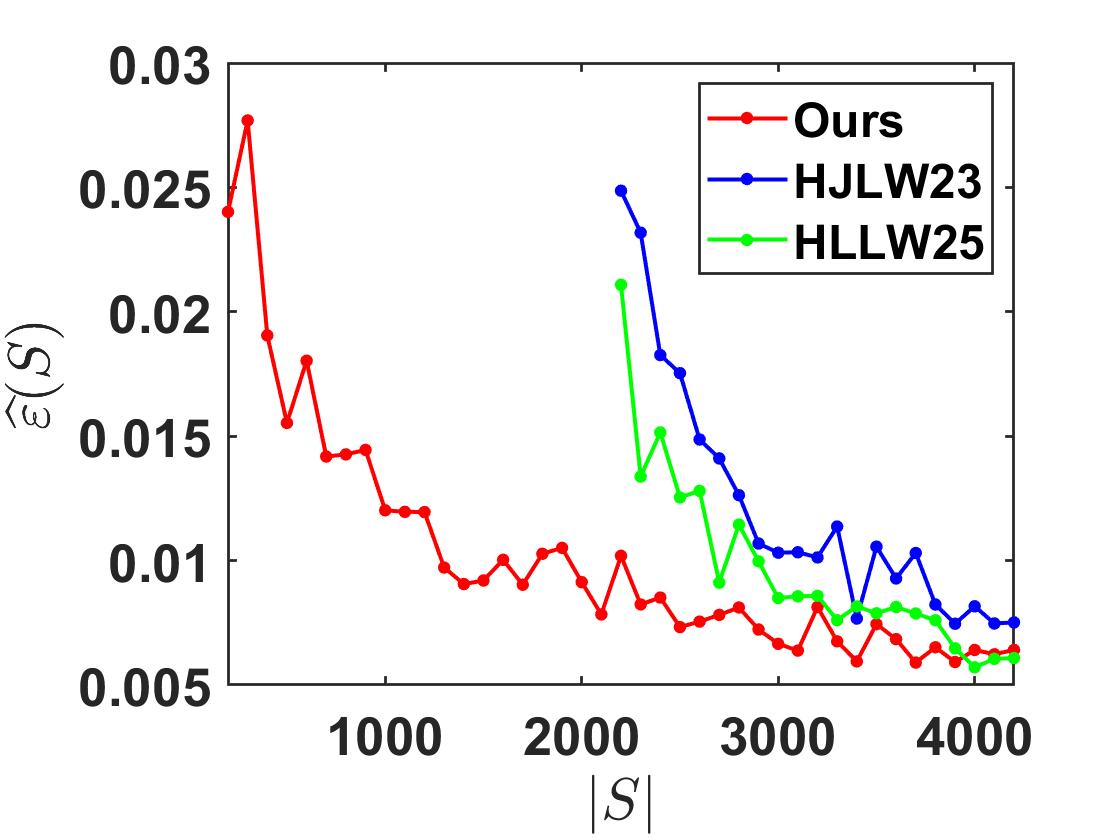} 
		\end{minipage}
	}
     \hspace{0.25in}
    	\subfigure[\textbf{Twitter}]{\label{sub_exp1_twitter}
    		\begin{minipage}[b]{0.28\textwidth}
   		 	\includegraphics[width=1\textwidth]{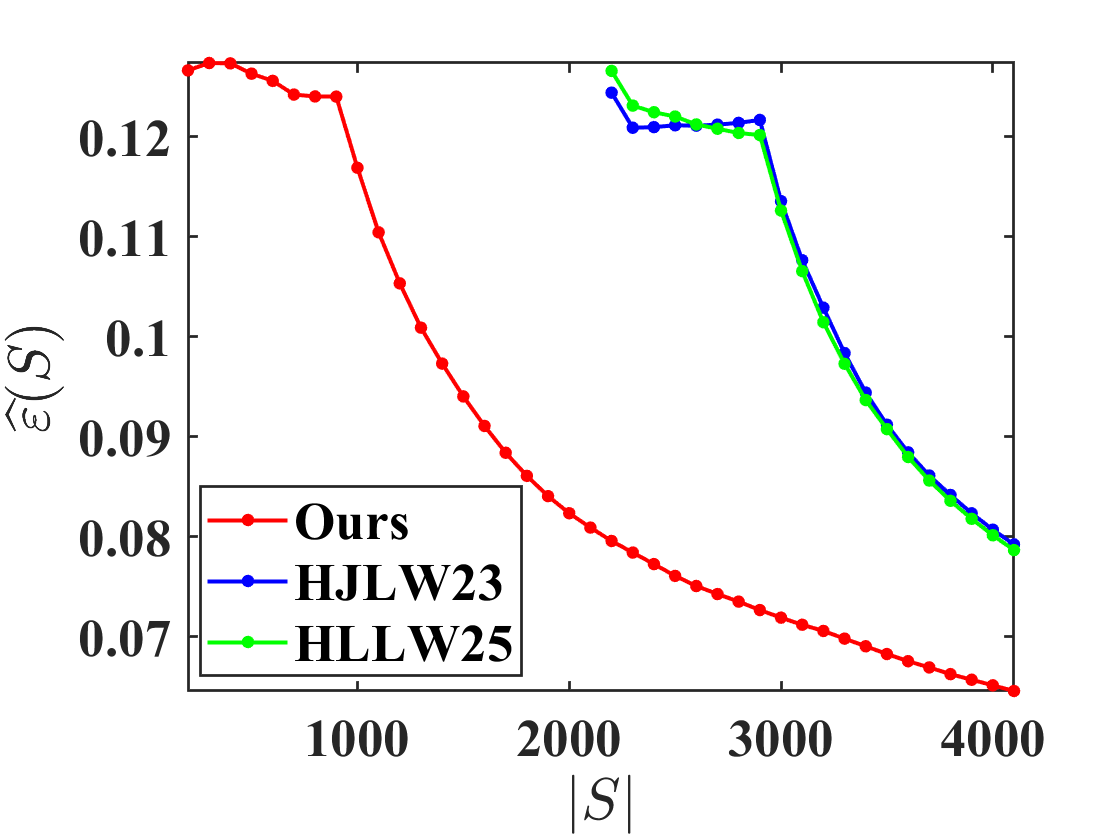}
    		\end{minipage}
    	}
    \hspace{0.25in}
    	\subfigure[\textbf{Adult}]{\label{sub_exp1_adult}
    		\begin{minipage}[b]{0.28\textwidth}
   		 	\includegraphics[width=1\textwidth]{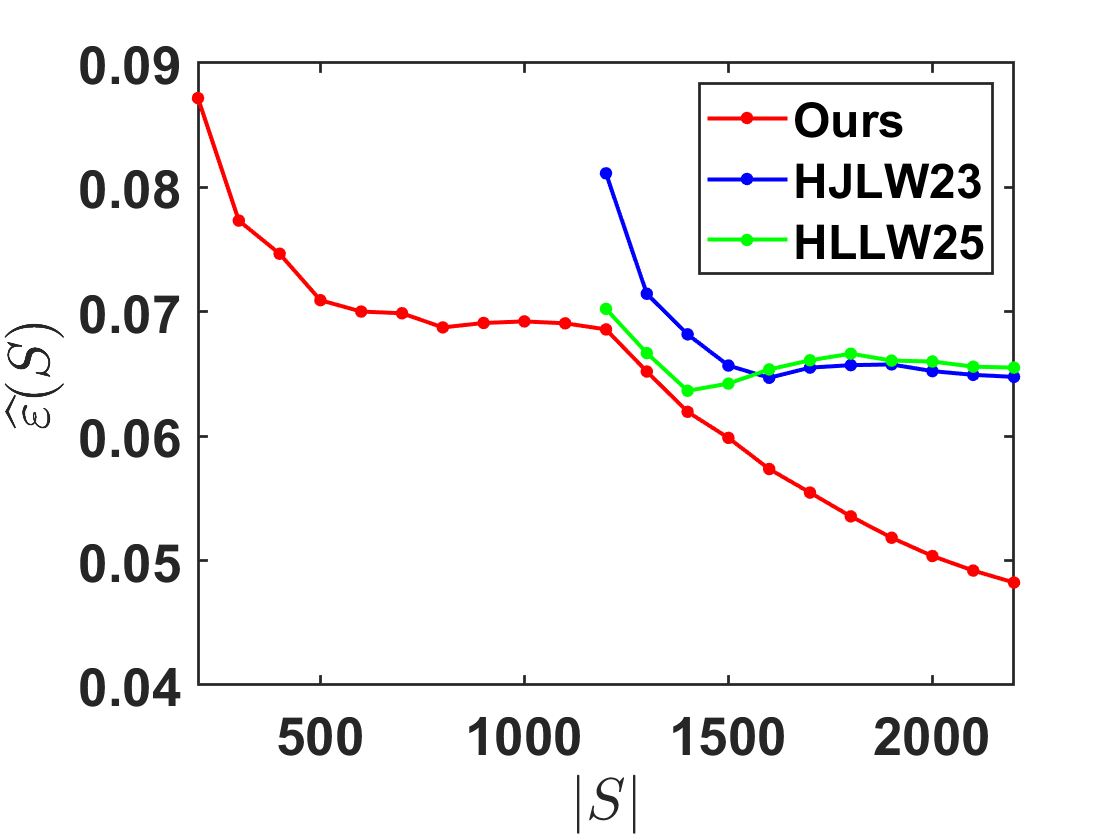}
    		\end{minipage}
    	}
    \hspace{0.25in}
    \subfigure[\textbf{Bank}]{\label{sub_exp1_bank}
    		\begin{minipage}[b]{0.28\textwidth}
   		 	\includegraphics[width=1\textwidth]{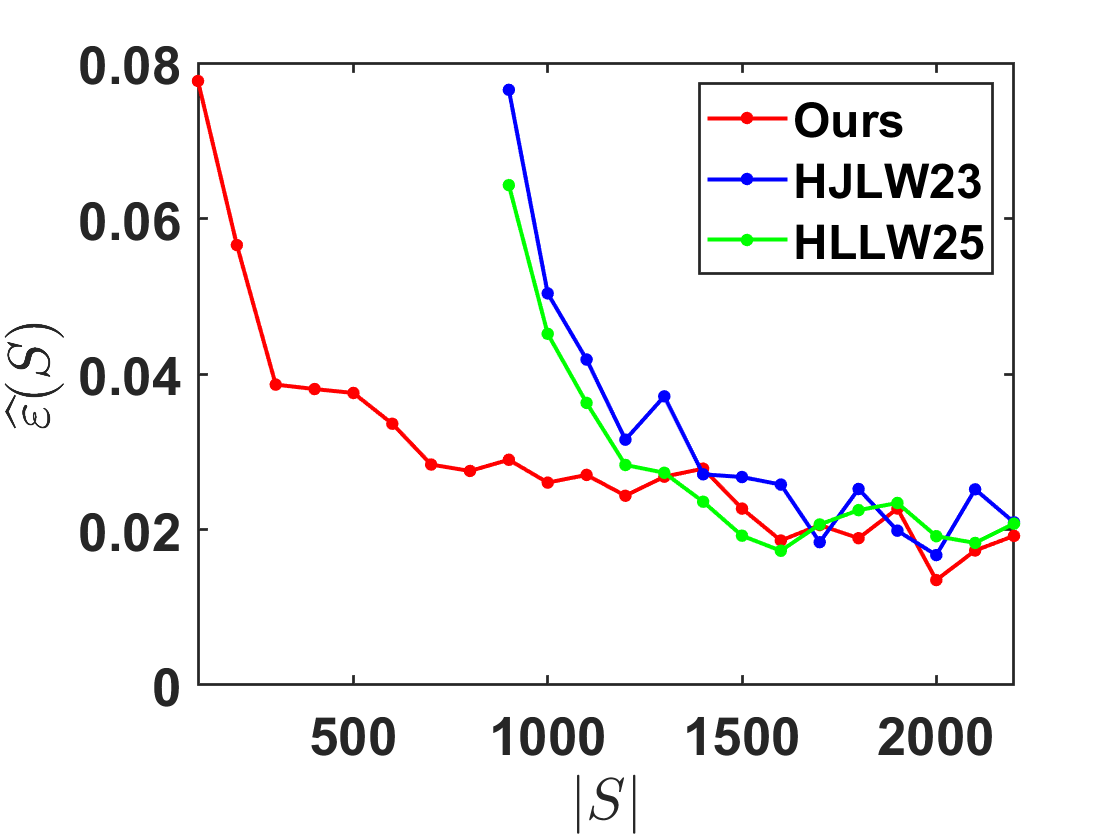}
    		\end{minipage}
    	}
        \hspace{0.25in}
    \subfigure[\textbf{Athlete}]{\label{sub_exp1_athlete}
		\begin{minipage}[b]{0.28\textwidth}
			\includegraphics[width=1\textwidth]{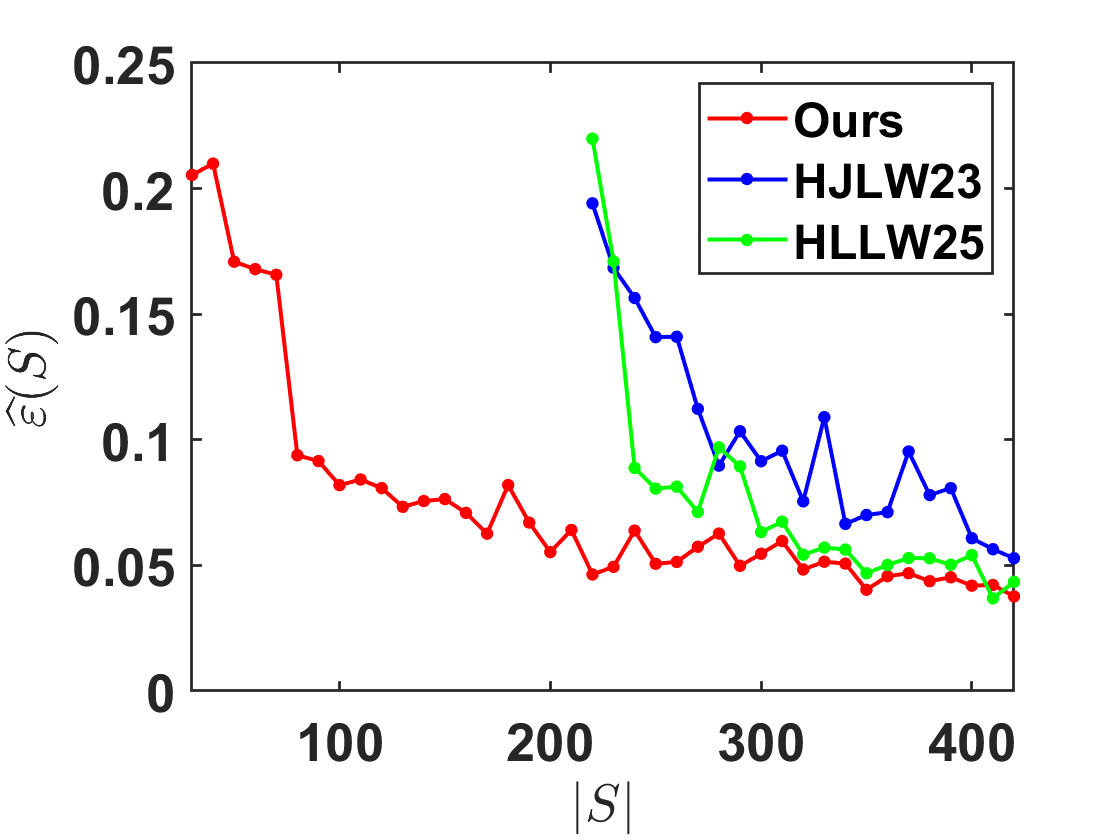} 
		\end{minipage}
	}
    \hspace{0.25in}
    	\subfigure[\textbf{Diabetes}]{\label{sub_exp1_diabetes}
    		\begin{minipage}[b]{0.28\textwidth}
   		 	\includegraphics[width=1\textwidth]{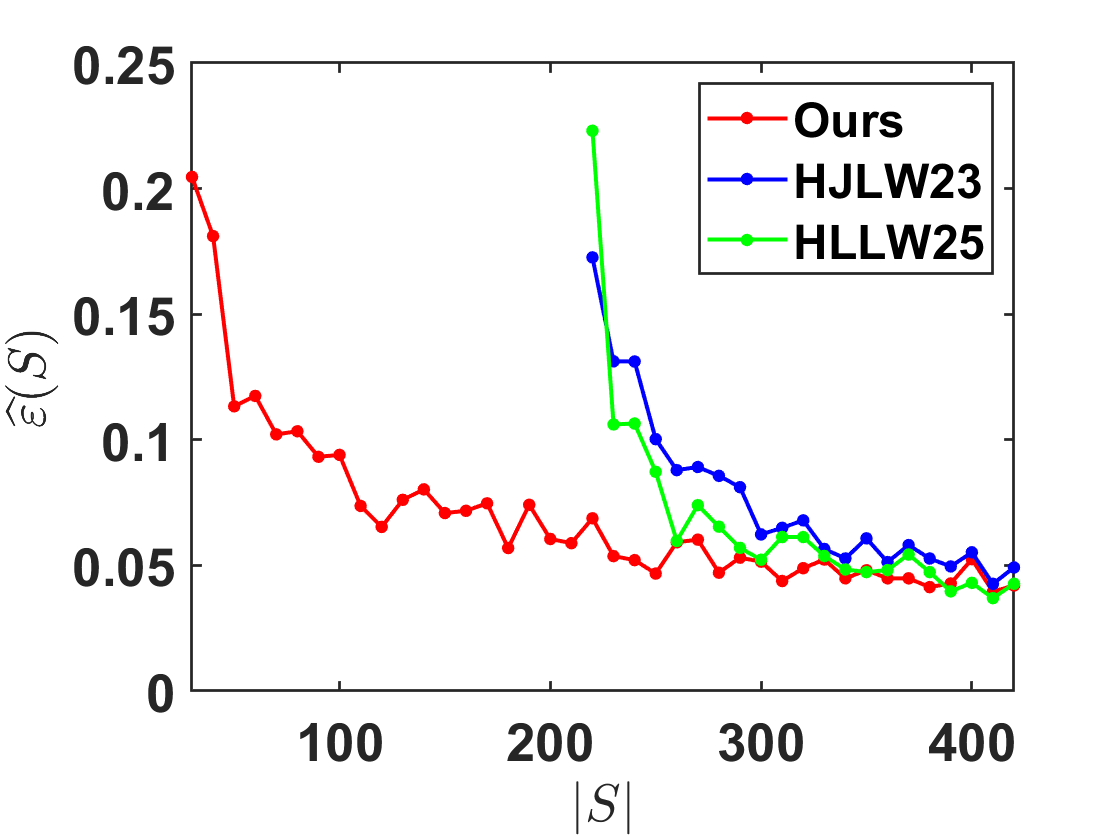}
    		\end{minipage}
    	}
	\caption{Tradeoff between coreset size $|S|$ and empirical error $\widehat{\eps}(S)$.     }
     \label{fig: experiment1}
        \vskip -0.2in
\end{figure*}

\begin{table}[h]
\caption{Statistical comparison of different coreset construction methods for robust geometric median. The coreset $S_1$ represents our coreset, $S_2$ represents the coreset constructed by the baseline \textbf{HJLW23}, and $S_3$ the coreset constructed by baseline \textbf{HLLW25}. For each empirical error ratio $\widehat{\varepsilon}(S_2)/\widehat{\varepsilon}(S_1)$ and $\widehat{\varepsilon}(S_3)/\widehat{\varepsilon}(S_1)$, we report the mean value over 20 runs, with the subscript indicating the standard deviation. }
\label{table: statistical test for robust geometric median}
\centering
\begin{subtable}
\centering
\begin{tabular}{lcccccr}
\toprule
\multirow{2}{*}{Coreset Size} & \multicolumn{2}{c}{Census1990}                                            & \multicolumn{2}{c}{Twitter} \\
        & $\widehat{\varepsilon}(S_2)/\widehat{\varepsilon}(S_1)$ & $\widehat{\varepsilon}(S_3)/\widehat{\varepsilon}(S_1)$              & $\widehat{\varepsilon}(S_2)/\widehat{\varepsilon}(S_1)$          & $\widehat{\varepsilon}(S_3)/\widehat{\varepsilon}(S_1)$      \\
        \midrule
$2200$              & $3.253_{2.063}$                                       & $2.645_{1.458} $ &   $1.793_{0.644} $           &  $1.667_{0.479} $            \\
$3200$                          & $1.257_{0.842} $                                       & $1.251_{0.632} $ &   $ 1.343_{0.234}$           &    $1.283_{0.197} $          \\
$4200$                          & $1.303_{0.692} $                                       & $1.168_{0.739} $ & $1.244_{0.152} $             &  $1.246_{0.148} $     \\      
\bottomrule
\end{tabular}

\end{subtable}
\begin{subtable}
\centering
\begin{tabular}{lcccccr}
\toprule
\multirow{2}{*}{Coreset Size} & \multicolumn{2}{c}{Bank}                                            & \multicolumn{2}{c}{Adult} \\
        & $\widehat{\varepsilon}(S_2)/\widehat{\varepsilon}(S_1)$ & $\widehat{\varepsilon}(S_3)/\widehat{\varepsilon}(S_1)$              & $\widehat{\varepsilon}(S_2)/\widehat{\varepsilon}(S_1)$          & $\widehat{\varepsilon}(S_3)/\widehat{\varepsilon}(S_1)$      \\
        \midrule
$1200$              & $1.647_{0.972} $                                       & $1.360_{1.018} $ &  $1.467_{0.287} $            &   $1.094_{0.542} $           \\
$1700 $                          & $1.010_{0.654} $                                       & $1.028_{0.574} $ &  $2.149_{0.884} $            &  $2.416_{1.002} $            \\
$2200$                          & $1.010_{0.654} $                                       & $1.026_{0.674} $ &    $ 1.089_{0.360}$          &$1.172_{0.537} $       \\      
\midrule
\bottomrule
\end{tabular}
\end{subtable}

\begin{subtable}
\centering
\begin{tabular}{lcccccr}
\toprule
\multirow{2}{*}{Coreset Size} & \multicolumn{2}{c}{Athlete}                                            & \multicolumn{2}{c}{Diabetes} \\
        & $\widehat{\varepsilon}(S_2)/\widehat{\varepsilon}(S_1)$ & $\widehat{\varepsilon}(S_3)/\widehat{\varepsilon}(S_1)$              & $\widehat{\varepsilon}(S_2)/\widehat{\varepsilon}(S_1)$          & $\widehat{\varepsilon}(S_3)/\widehat{\varepsilon}(S_1)$      \\
        \midrule
$210$              & $5.172_{3.634}$                                       & $4.200_{1.944} $ &  $5.700_{3.303} $            &   $5.868_{2.952} $           \\
$310$                          & $ 2.467_{1.564}$                                       & $1.427_{0.660} $ &  $1.567_{0.800}$            &  $ 1.332_{0.653}$            \\
$410$                          & $1.658_{0.881} $                                       & $1.045_{0.449} $ &    $1.360_{1.103} $          &$1.216_{0.943} $       \\      
\midrule
\bottomrule
\end{tabular}
\end{subtable}

\end{table}

\paragraph{Speed-up baselines.}
In this experiment, we compare the coreset of size $2m$ constructed by the \textbf{HLLW25} baseline and coreset of size $m$ constructed by Algorithm \ref{alg3}.
We repeat the experiment 10 times and report the averages. 
The result is listed in Table \ref{tb: dataset2}. The construction time of our coreset is similar to that of the baseline \textbf{HLLW25}.
However, our algorithm achieves a speed-up over \textbf{HLLW25} (a 2$\times$ reduction in the running time on the coreset), while achieving the same level of empirical error.

\paragraph{Analysis of our algorithm when $n < 4m$.}
For robust geometric median, we set $m=n/2$ in the six dataset, violating the assumption $n\ge4m$, and report the corresponding size-error tradeoff in Figure \ref{fig: n=2m}.The results show that our method still consistently outperforms the baselines even when $n<4m$, further highlighting the practical robustness of our algorithms.
For instance, in \textbf{Census1990} dataset, our method produces a coreset of size $50200$ with an empirical error of $0.004$, while the best baseline produces a coreset of size $51200$ with a much larger empirical error of $0.012$.

\begin{figure*}[h]
	\centering
	\subfigure[\textbf{Census1990}]{\label{violate_exp_USC}
		\begin{minipage}[b]{0.28\textwidth}
			\includegraphics[width=1\textwidth]{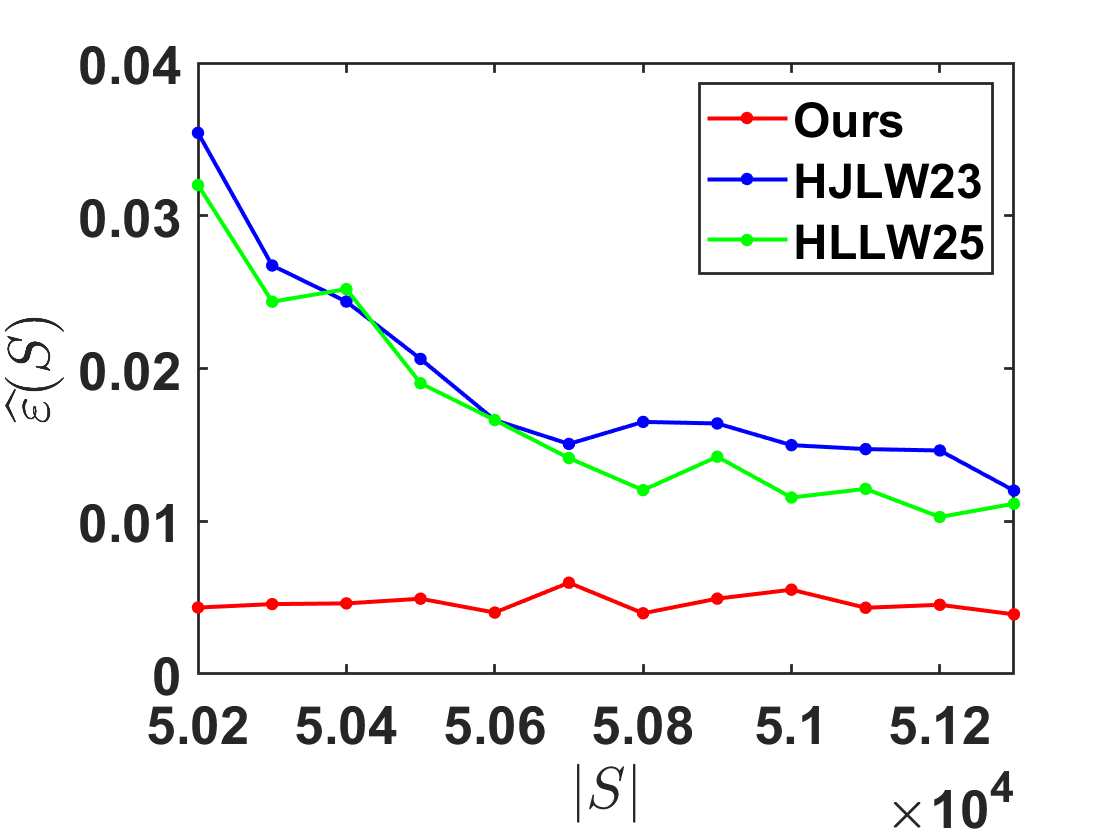} 
		\end{minipage}
	}
    \hspace{0.25in}
    	\subfigure[\textbf{Twitter}]{\label{violate_exp_twitter}
    		\begin{minipage}[b]{0.28\textwidth}
   		 	\includegraphics[width=1\textwidth]{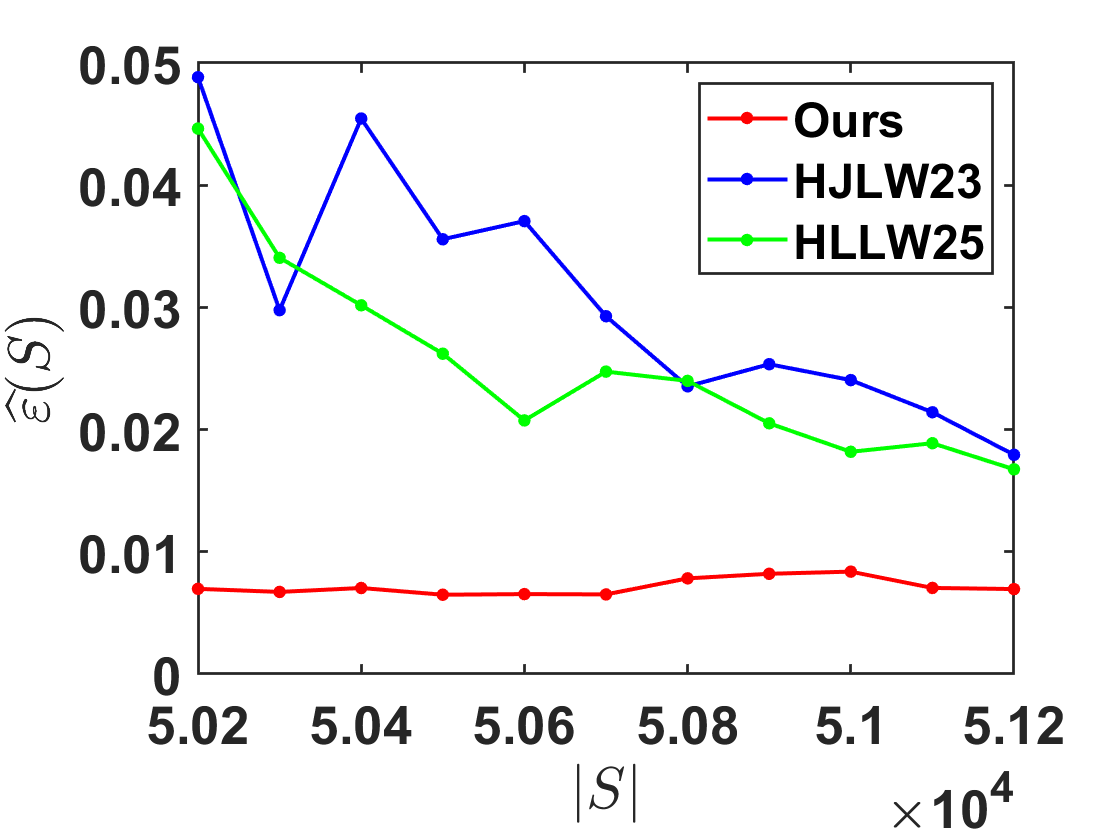}
    		\end{minipage}
    	}
        \hspace{0.25in}
    \subfigure[\textbf{Bank}]{\label{violate_exp_bank}
		\begin{minipage}[b]{0.28\textwidth}
			\includegraphics[width=1\textwidth]{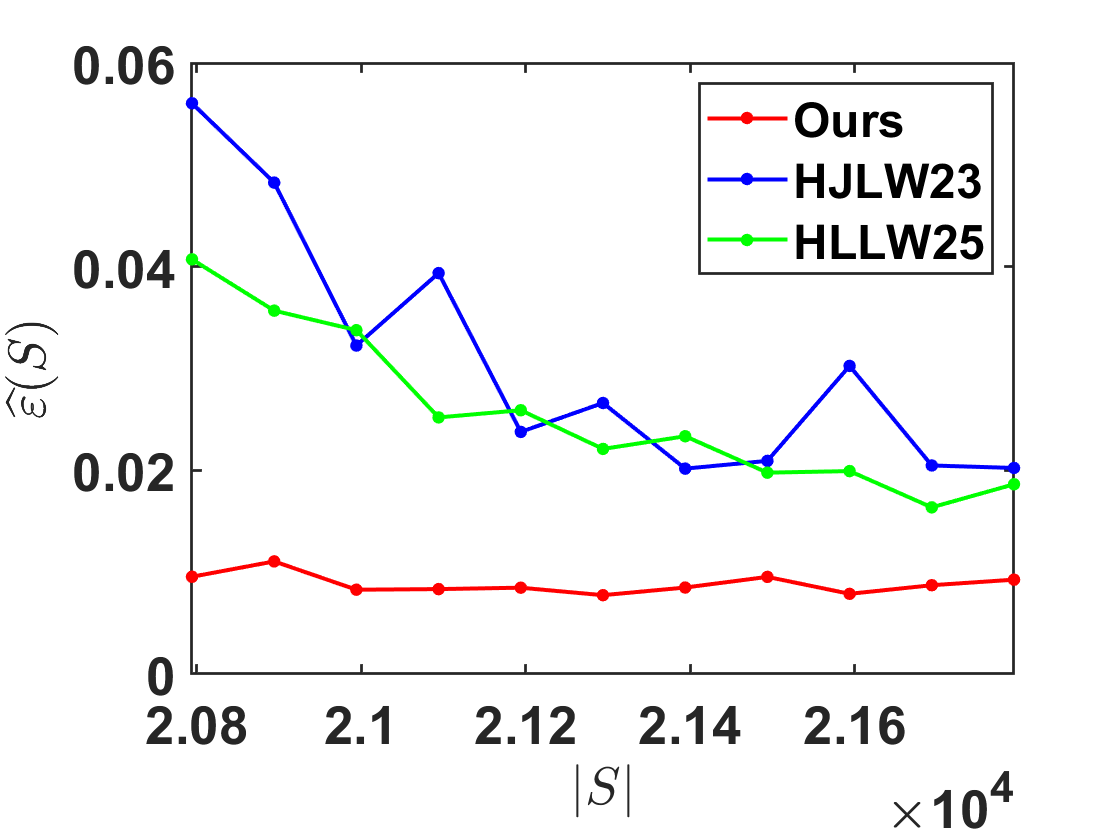} 
		\end{minipage}
	}
    	\subfigure[\textbf{Adult}]{\label{violate_exp_adult}
    		\begin{minipage}[b]{0.28\textwidth}
   		 	\includegraphics[width=1\textwidth]{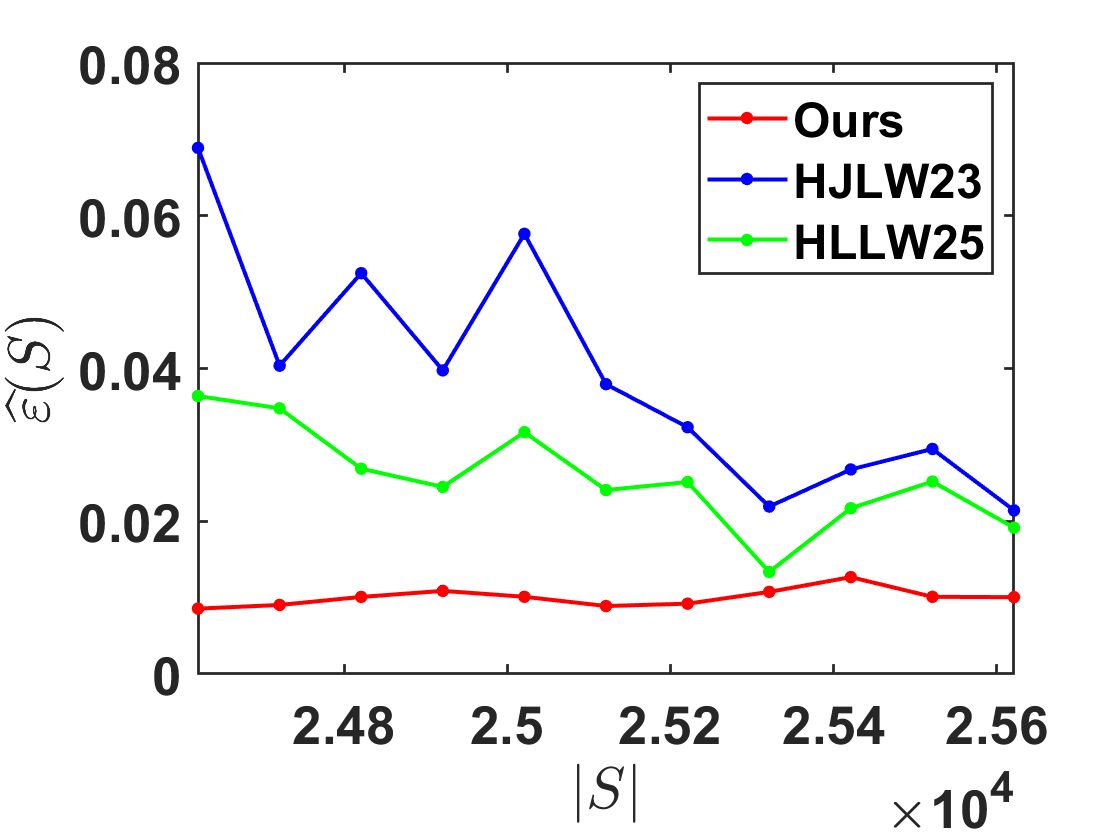}
    		\end{minipage}
    	}
        \hspace{0.25in}
        \subfigure[\textbf{Athlete}]{\label{violate_exp_athlete}
		\begin{minipage}[b]{0.28\textwidth}
			\includegraphics[width=1\textwidth]{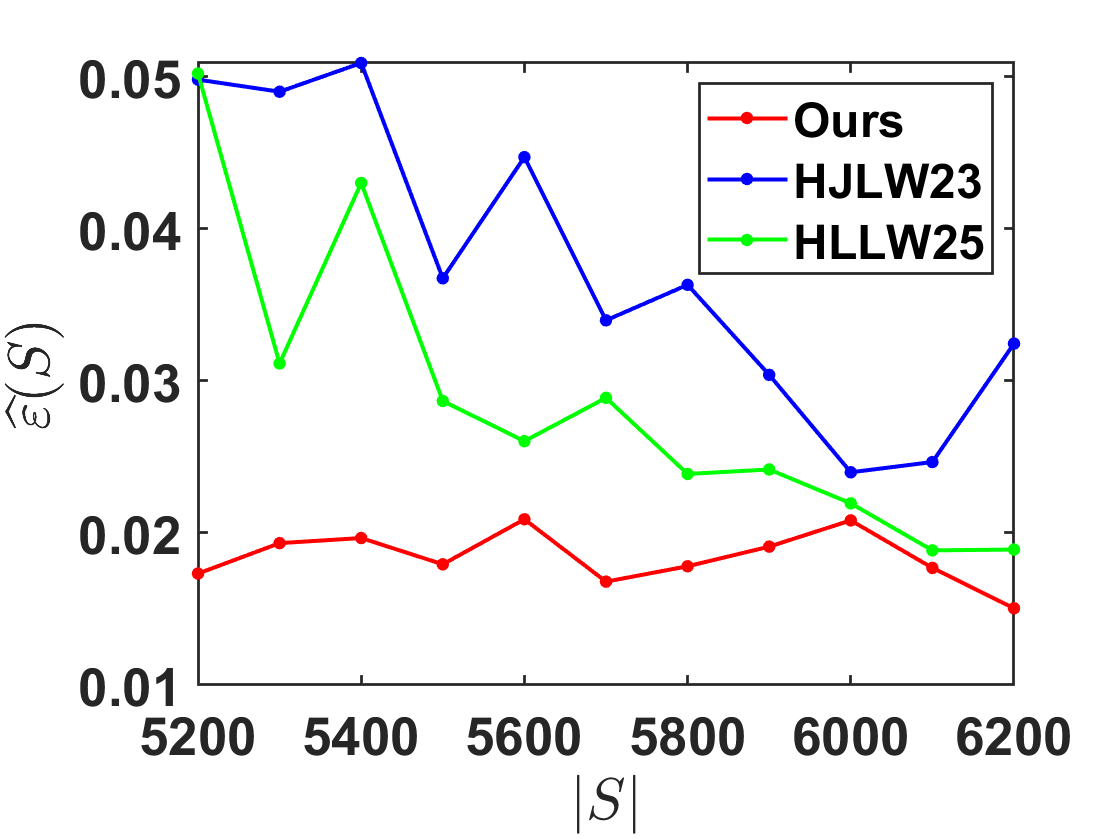} 
		\end{minipage}
	}
    \hspace{0.25in}
    	\subfigure[\textbf{Diabetes}]{\label{violate_exp_diabetes}
    		\begin{minipage}[b]{0.28\textwidth}
   		 	\includegraphics[width=1\textwidth]{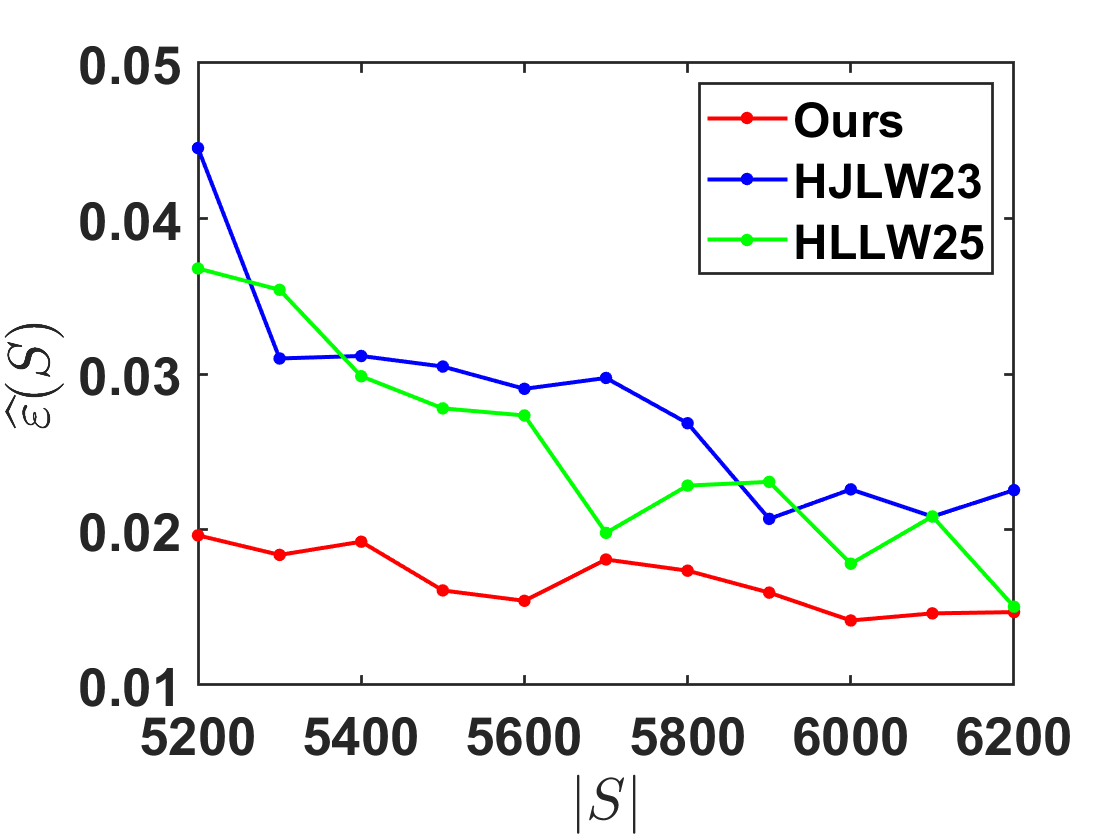}
    		\end{minipage}
    	}
	\caption{Tradoff between coreset size $|S|$ and empirical error $\widehat{\varepsilon}(S)$ for robust geometric median when we set $m=n/2$. In this scenario, the assumption $n \geq 4m$ is violated. 
    }
     \label{fig: n=2m}
\end{figure*}

{
\paragraph{Analysis under heavy-tailed contamination.}
For each dataset, we randomly perturb 10\% of the points by adding independent $\mathrm{Cauchy}(0,1)$ \footnote{The probability density function of $\mathrm{Cauchy}(0,1)$ is $f(x)=\frac{1}{\pi(1+x^2)}$ for any $x\in\mathbb{R}$.} noise to every dimension, thereby simulating a heavy-tailed data environment.
As shown in Figure~\ref{fig: tail_geometric}, our method consistently outperforms the baselines on the robust geometric median task, demonstrating that its performance advantage remains stable even under heavy-tailed contamination.
{For example, on the perturbed \textbf{Twitter} dataset, our method achieves a coreset of size $1500$ with an empirical error of $0.031$, whereas the best baseline, \textbf{HLLW25}, requires a coreset twice as large ($3000$) to attain a higher error of $0.032$.
}}

\begin{figure*}[h]
	\centering
	\subfigure[\textbf{Census1990}]{\label{tail_exp_USC}
		\begin{minipage}[b]{0.28\textwidth}
			\includegraphics[width=1\textwidth]{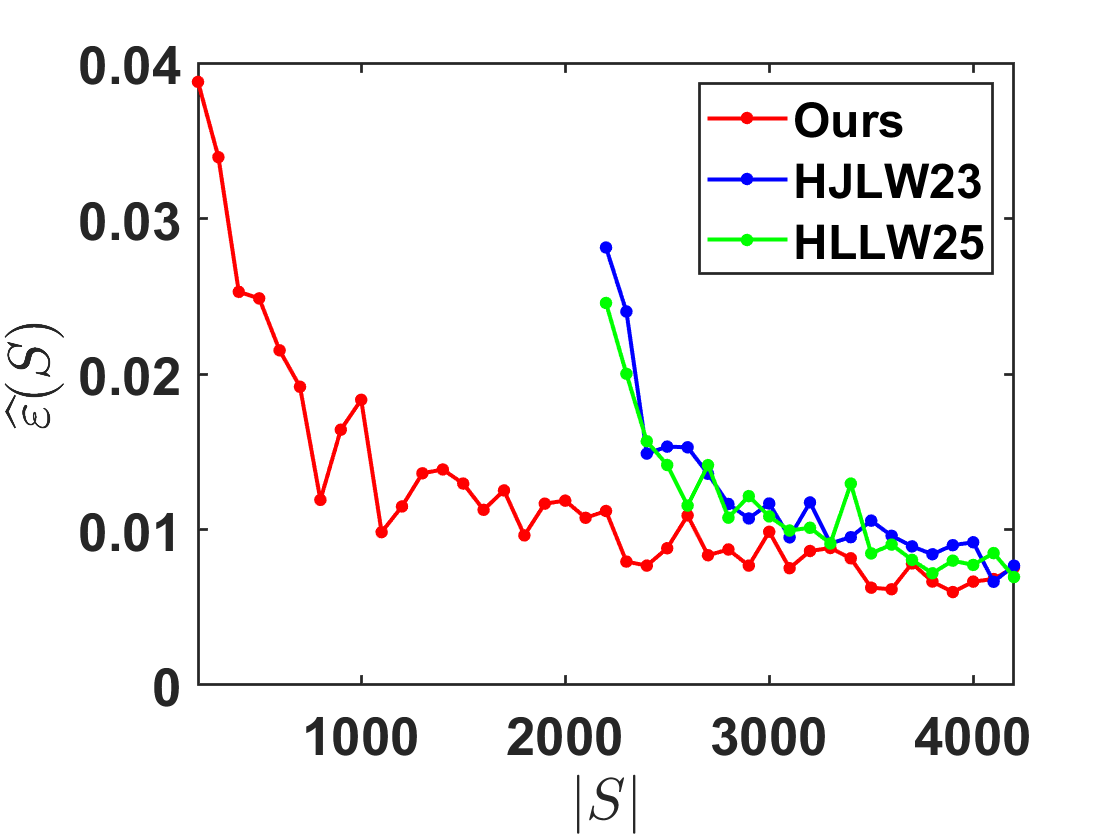} 
		\end{minipage}
	}
    \hspace{0.25in}
    	\subfigure[\textbf{Twitter}]{\label{tail_exp_twitter}
    		\begin{minipage}[b]{0.28\textwidth}
   		 	\includegraphics[width=1\textwidth]{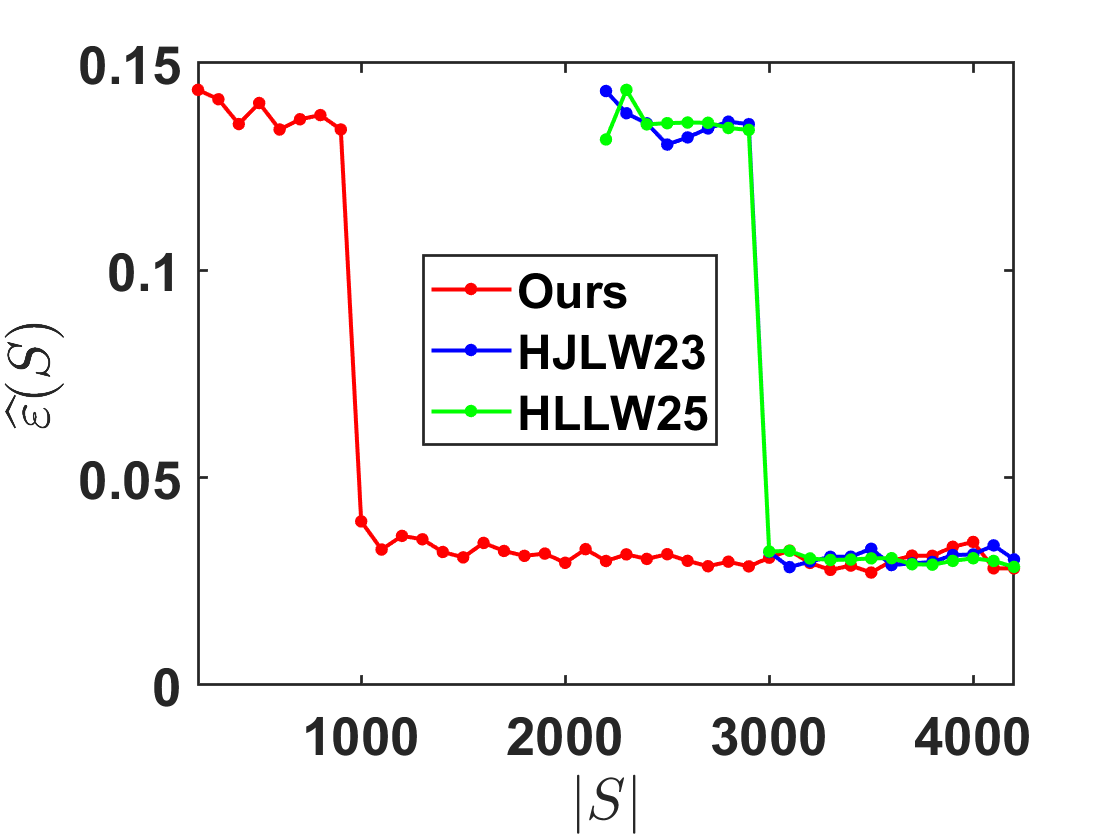}
    		\end{minipage}
    	}
        \hspace{0.25in}
    \subfigure[\textbf{Bank}]{\label{tail_exp_bank}
		\begin{minipage}[b]{0.28\textwidth}
			\includegraphics[width=1\textwidth]{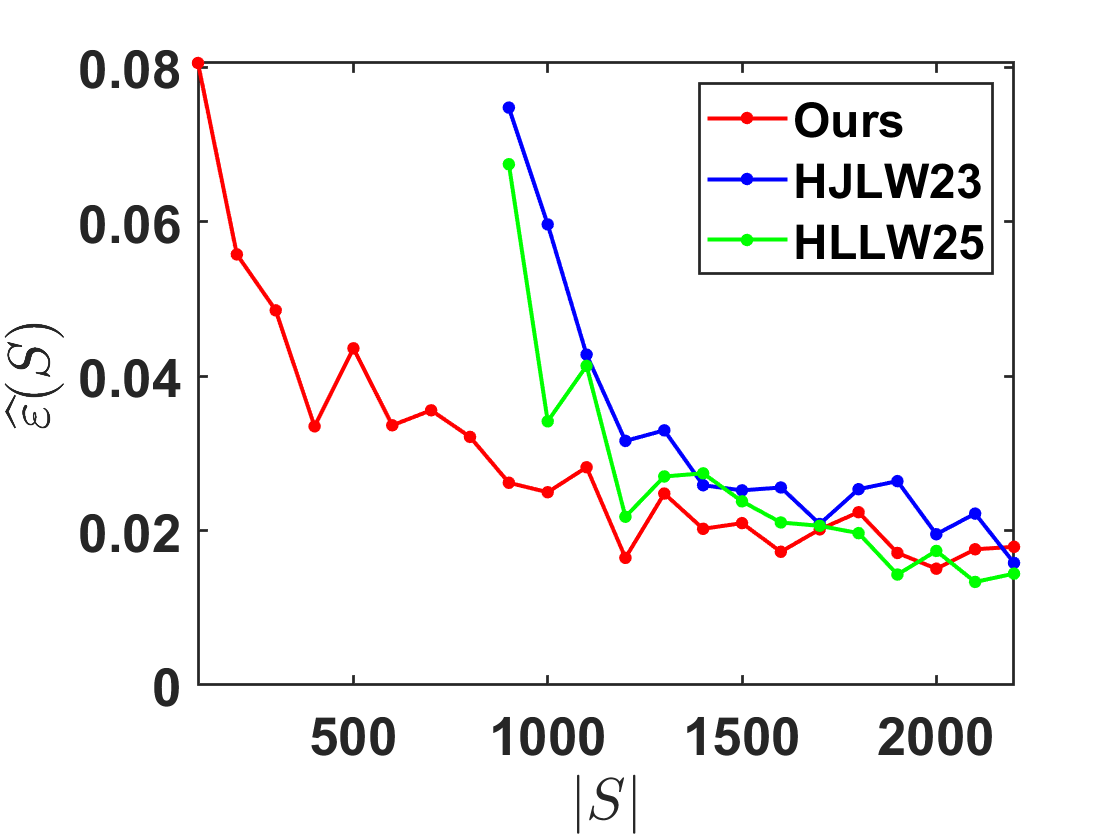} 
		\end{minipage}
	}
    	\subfigure[\textbf{Adult}]{\label{tail_exp_adult}
    		\begin{minipage}[b]{0.28\textwidth}
   		 	\includegraphics[width=1\textwidth]{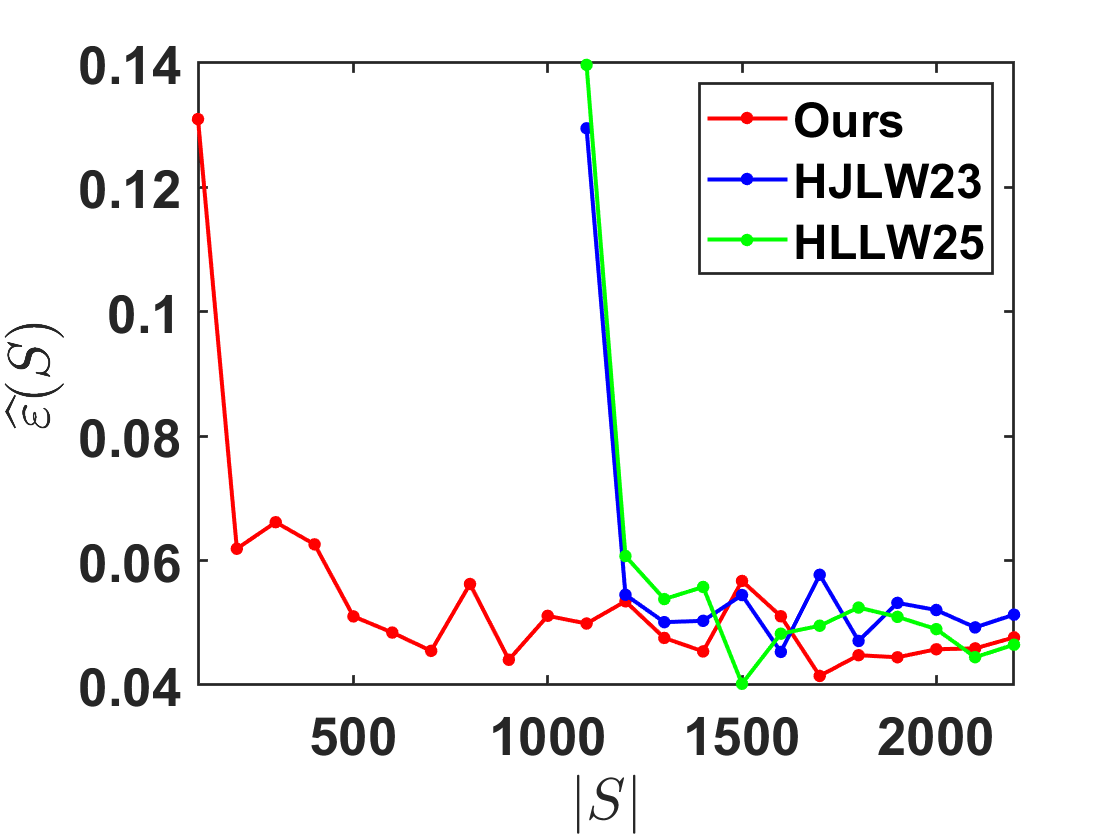}
    		\end{minipage}
    	}
        \hspace{0.25in}
        \subfigure[\textbf{Athlete}]{\label{tail_exp_athlete}
		\begin{minipage}[b]{0.28\textwidth}
			\includegraphics[width=1\textwidth]{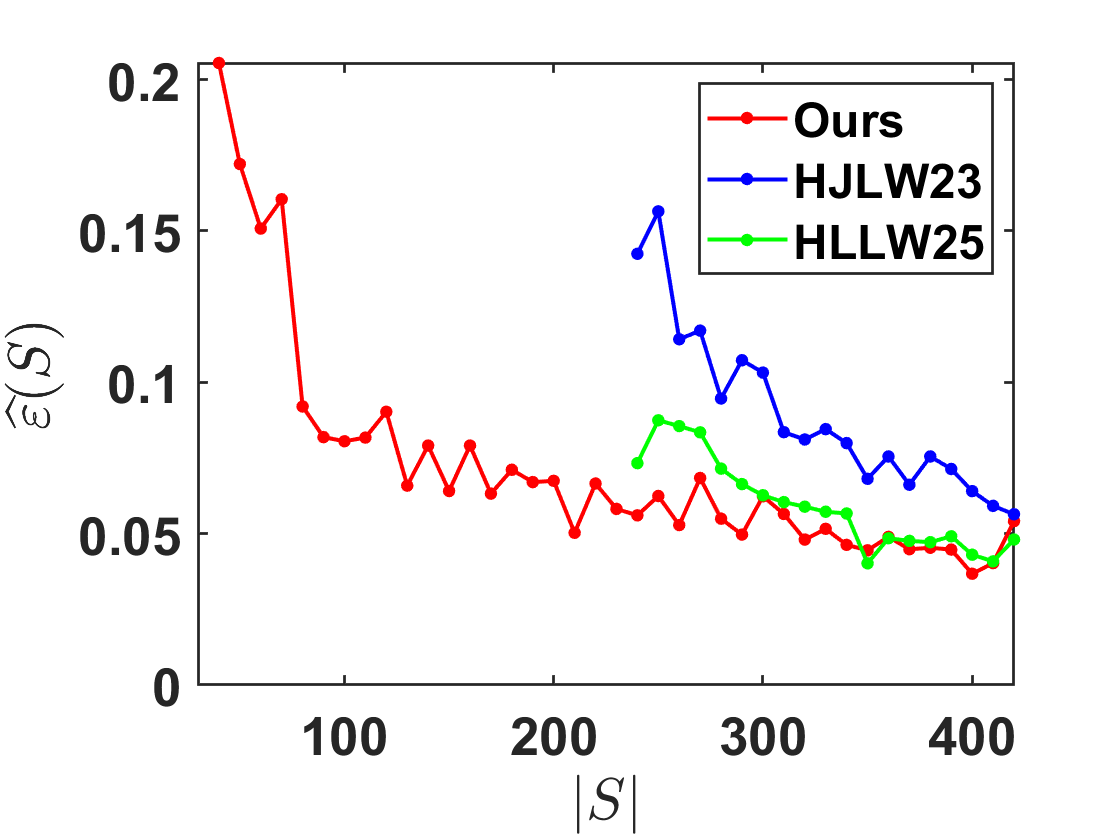} 
		\end{minipage}
	}
    \hspace{0.25in}
    	\subfigure[\textbf{Diabetes}]{\label{tail_exp_diabetes}
    		\begin{minipage}[b]{0.28\textwidth}
   		 	\includegraphics[width=1\textwidth]{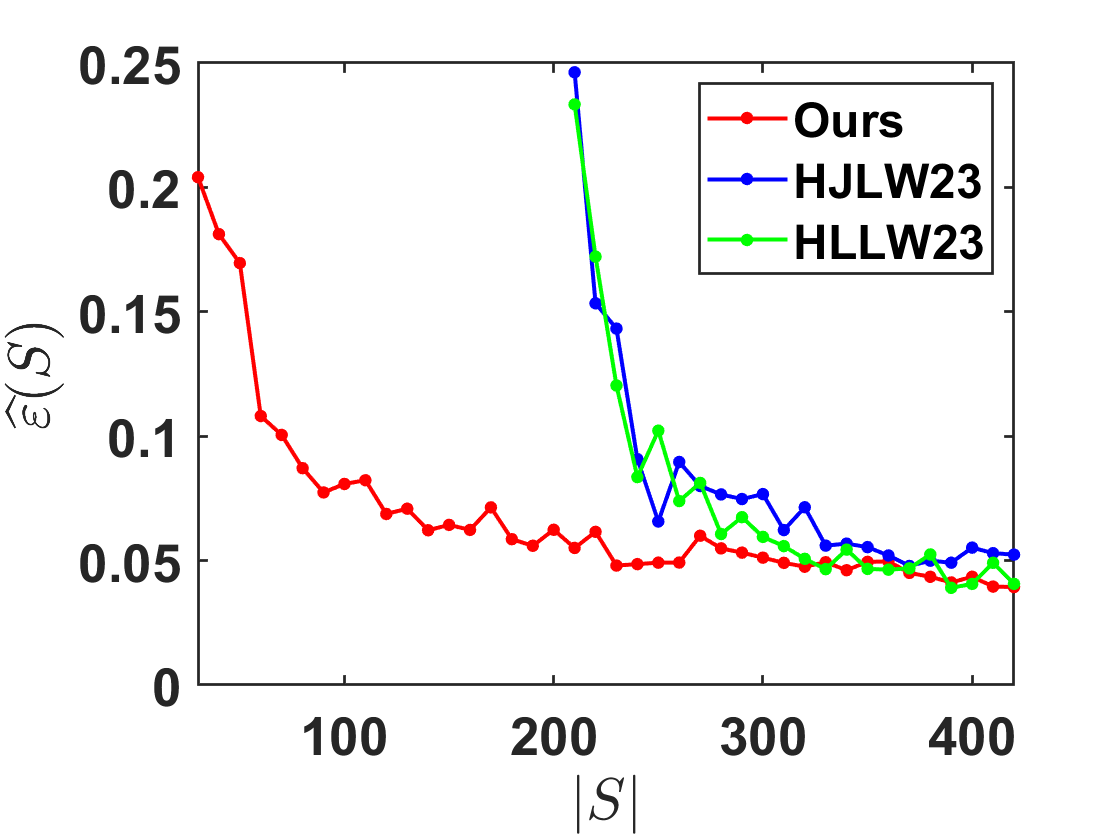}
    		\end{minipage}
    	}
	\caption{Tradoff between coreset size $|S|$ and empirical error $\widehat{\varepsilon}(S)$ for robust geometric median when we perturb $10\%$ points. 
    }
     \label{fig: tail_geometric}
\end{figure*}

\subsection{Empirical results for robust 1D geometric median}
\label{sec: 1D tradeoff}
We implement our 1D coreset construction algorithm and compare its performance to the previous baselines and our general dimension method. All experiments are conducted on a PC with an Intel Core i9 CPU and 16GB of memory, and the algorithms are implemented in C++ 11.

\paragraph{Setup.}
We select the first dimension of the \textbf{Twitter} dataset, the second dimension of the \textbf{Adult} dataset, and the second dimension of the \textbf{Bank} dataset to create the \textbf{Twitter1D}, \textbf{Adult1D}, and \textbf{Bank1D} datasets.
We choose these dimensions because other dimensions in the four datasets listed in Table \ref{tb: dataset3} contain no more than 130 distinct points, which would result in an overly small coreset.
We conduct experiments on these 1D datasets.
To compute an approximate center for each dataset, we use the $k$-means++ algorithm.

\paragraph{Experiment result.}
We vary the coreset size from $200$ to $m+1000$, and compute the empirical error $\widehat{\eps}(S)$ w.r.t. the coreset size $|S|$. 
For each size and each algorithm, we independently run the algorithm 10 times and obtain 10 coresets, compute their empirical errors $\widehat{\eps}(S)$, and report the average of 10 empirical errors.
Figure \ref{fig: experiment1D} presents our results. This figure shows that our 1D coreset (denoted by \textbf{Our1D}) outperforms the previous baselines and our coreset for general dimension.
For example, with the \textbf{Twitter1D} dataset, our 1D method provides a coreset of size $320$ with an empirical error $0.013$. The best empirical error achieved by our general dimension method for a coreset size $3500$ is much larger, $0.087$. 
Note that the coreset size of \textbf{Our1D} method does not grow uniformly, since the number of buckets does not increase linearly with \(\varepsilon^{-1}\) or  \(\varepsilon^{-1/2}\).

\begin{figure*}[h]
	\centering
	\subfigure[\textbf{Twitter1D}]{\label{sub_1D_twitter}
		\begin{minipage}[b]{0.28\textwidth}
			\includegraphics[width=1\textwidth]{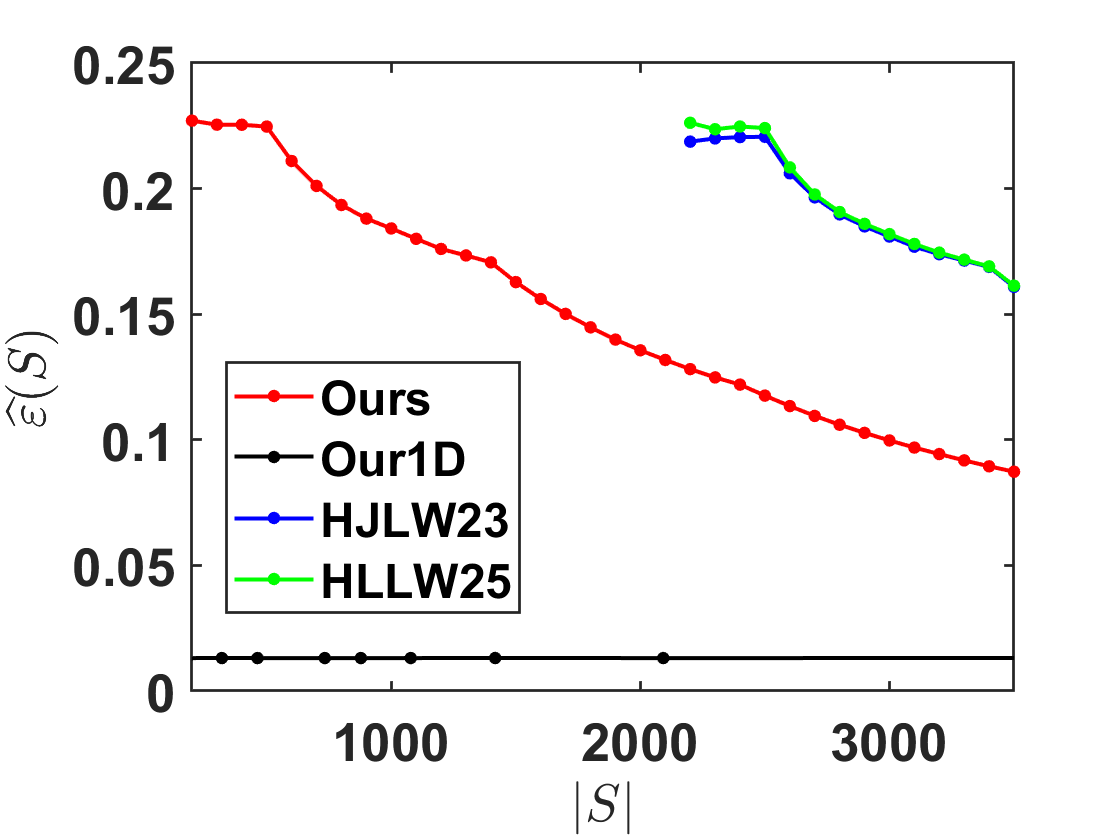} 
		\end{minipage}
	}
    \hspace{0.25in}
    	\subfigure[\textbf{Adult1D}]{\label{sub_1D_adult}
    		\begin{minipage}[b]{0.28\textwidth}
   		 	\includegraphics[width=1\textwidth]{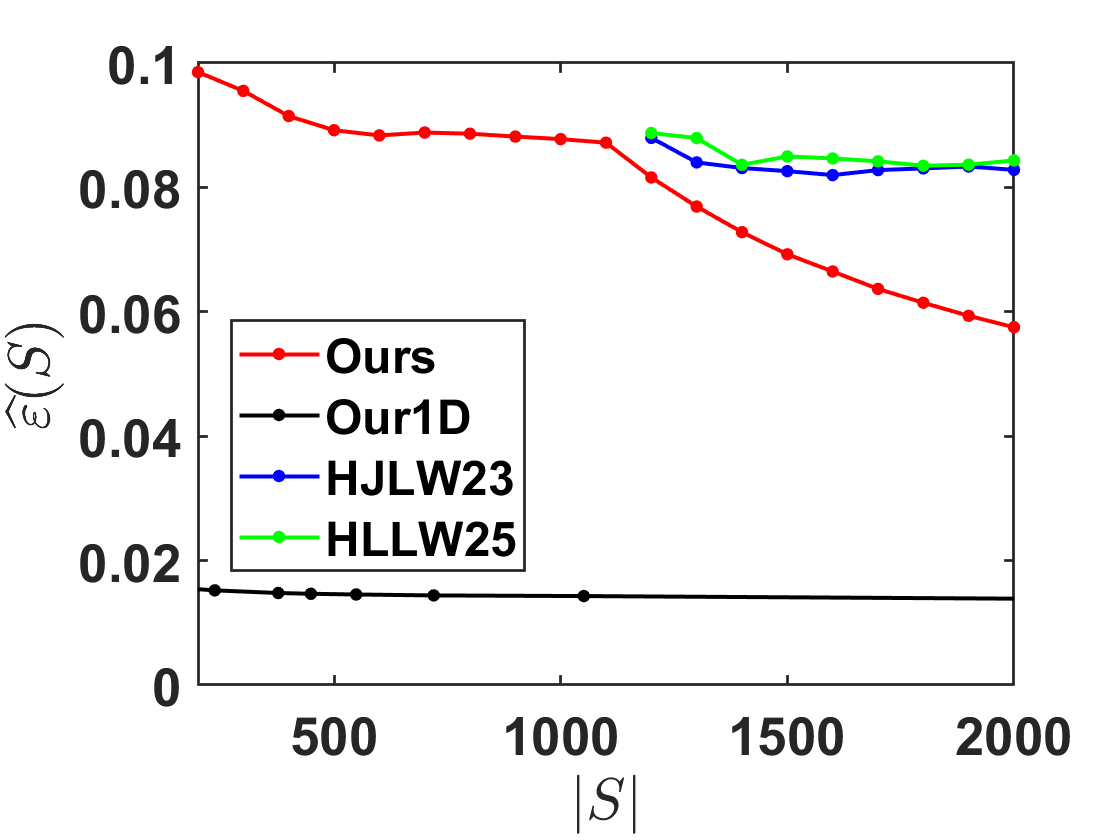}
    		\end{minipage}
    	}
    \hspace{0.25in}
    \subfigure[\textbf{Bank1D}]{\label{sub_1D_bank}
		\begin{minipage}[b]{0.28\textwidth}
			\includegraphics[width=1\textwidth]{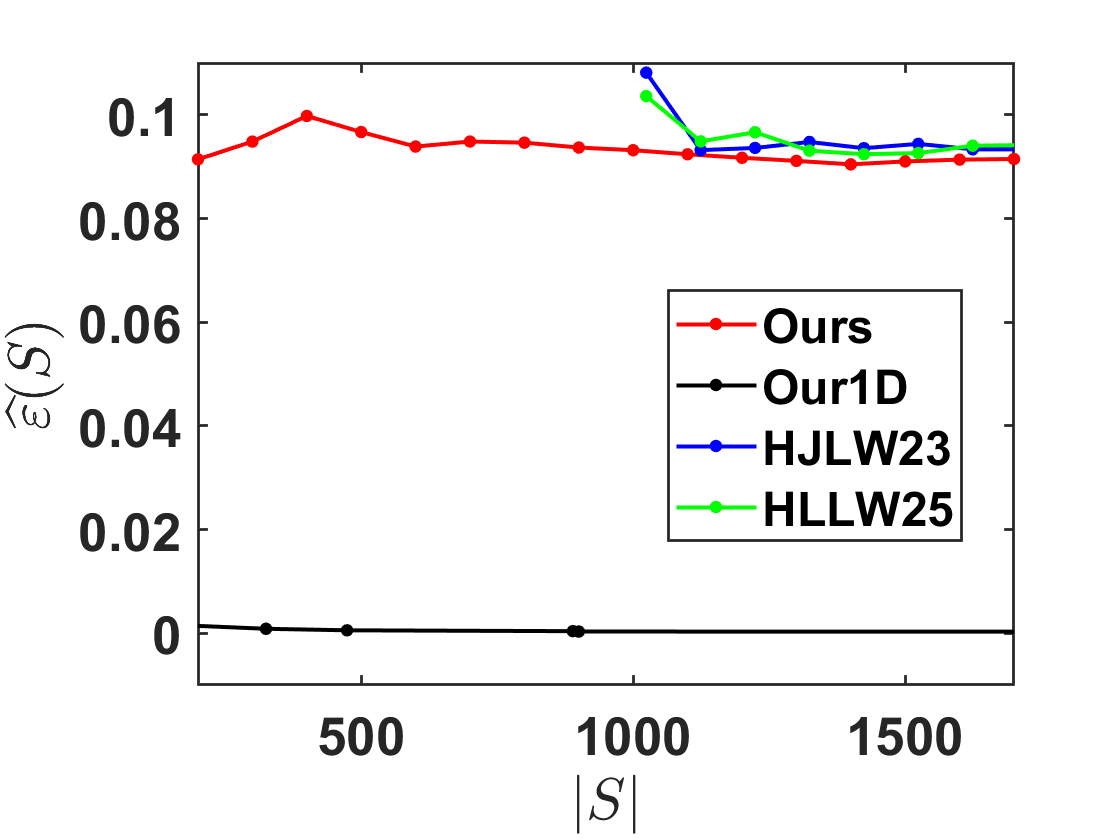} 
		\end{minipage}
	}
	\caption{Tradeoff between coreset size $|S|$ and empirical error $\widehat{\eps}(S)$ in 1D datasets.
    }
     \label{fig: experiment1D}
\end{figure*}

\subsection{Empirical results for robust $k$-median}
\label{sec: exp: kmedian}
We implement Algorithm \ref{alg: kmedian} for robust $k$-median and compare its performance to the previous baselines.
\paragraph{Setup.}
We do experiments on the six datasets listed in Table \ref{tb: dataset3}, and set the number of centers $k$ to be $5$.
We use Lloyd version $k$-means++ to compute an approximate center $C^\star$.

\paragraph{Coreset size and empirical error tradeoff for robust $k$-median.}
We vary the coreset size from $m$ to $2m$, and compute the empirical error $\widehat{\eps}(S)$ w.r.t. the coreset size $|S|$. 
For each size and each algorithm, we independently run the algorithm 10 times and obtain 10 coresets, compute their empirical errors $\widehat{\eps}(S)$, and report the average of 10 empirical errors.
Figure \ref{fig: experimentMedian} presents our results. This figure shows that our algorithm (\textbf{Ours}) outperforms the previous baselines.
For example, with the \textbf{Census1990} dataset, our method provides a coreset of size $2200$ with empirical error $0.012$. The best empirical error achieved by baselines for a coreset size $3600$ is larger, $0.013$.

\begin{figure*}[h]
	\centering
	\subfigure[\textbf{Census1990}]{
		\begin{minipage}[b]{0.28\textwidth}
			\includegraphics[width=1\textwidth]{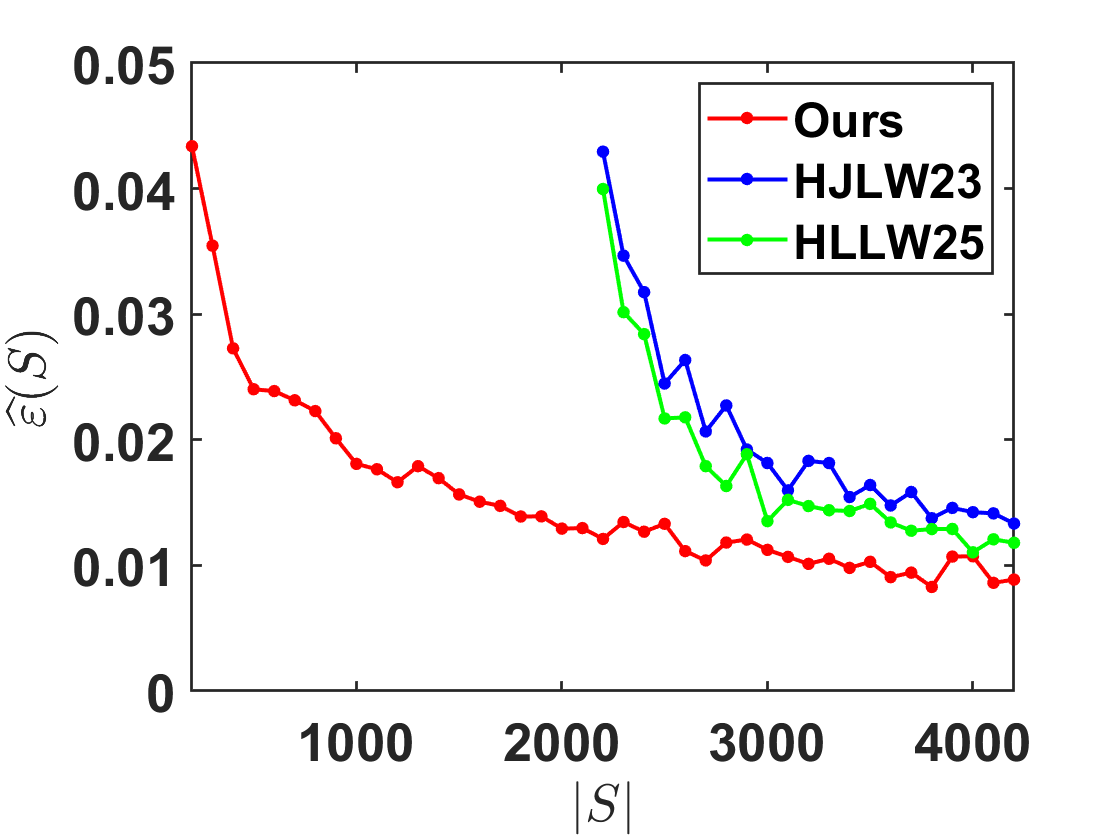} 
		\end{minipage}
	}
    \hspace{0.25in}
    	\subfigure[\textbf{Twitter}]{
    		\begin{minipage}[b]{0.28\textwidth}
   		 	\includegraphics[width=1\textwidth]{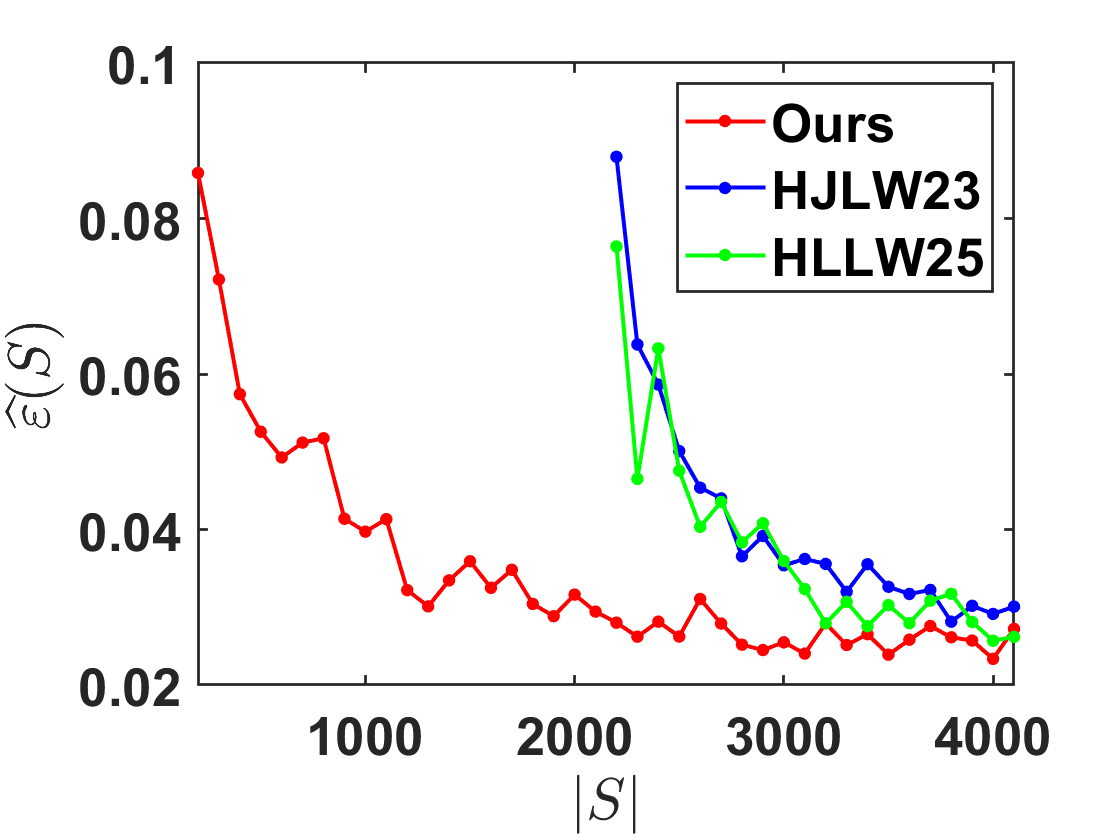}
    		\end{minipage}
    	}
    \hspace{0.25in}
    \subfigure[\textbf{Bank}]{
		\begin{minipage}[b]{0.28\textwidth}
			\includegraphics[width=1\textwidth]{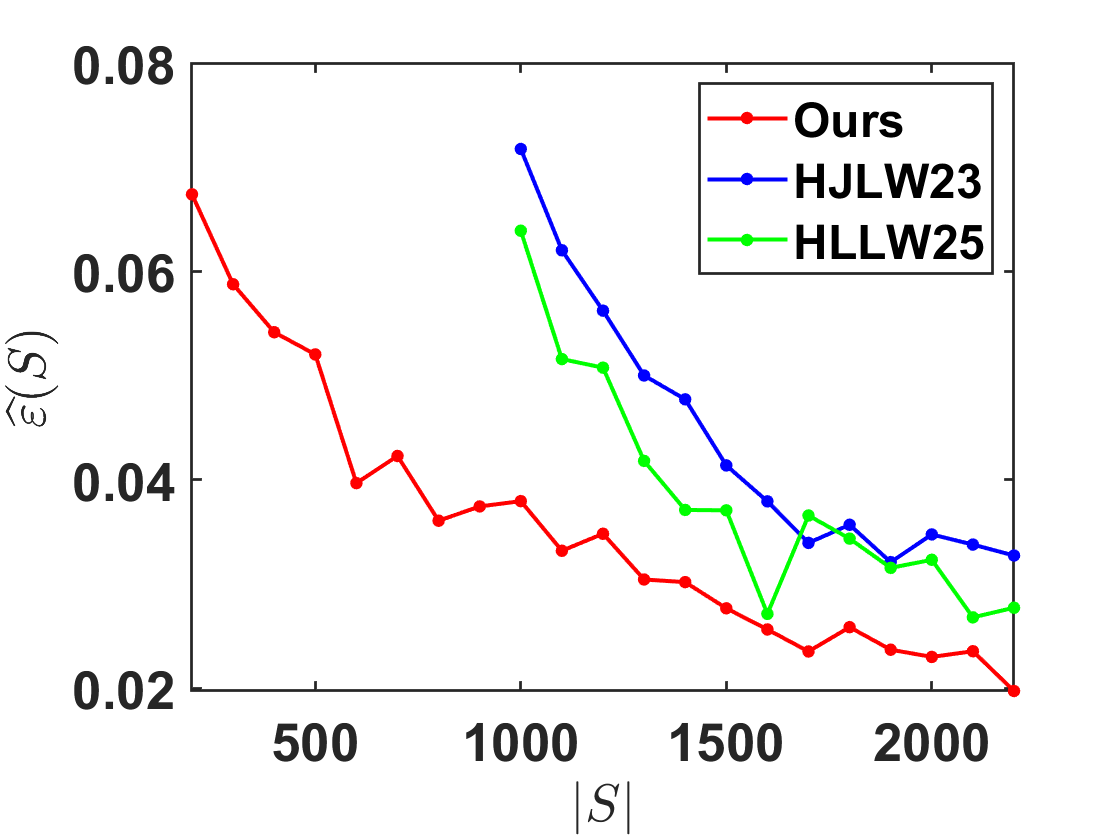} 
		\end{minipage}
	}
    	\subfigure[\textbf{Adult}]{
    		\begin{minipage}[b]{0.28\textwidth}
   		 	\includegraphics[width=1\textwidth]{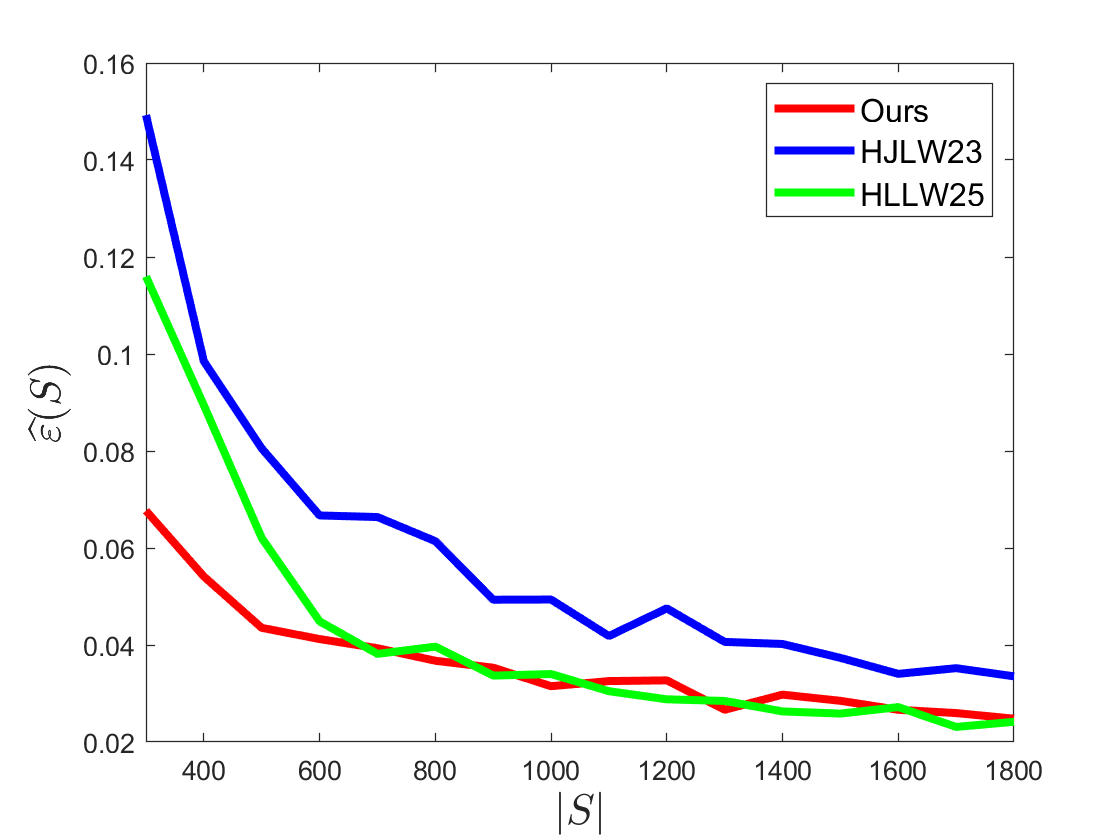}
    		\end{minipage}
    	}
        \hspace{0.25in}
         \subfigure[\textbf{Athlete}]{
		\begin{minipage}[b]{0.28\textwidth}
			\includegraphics[width=1\textwidth]{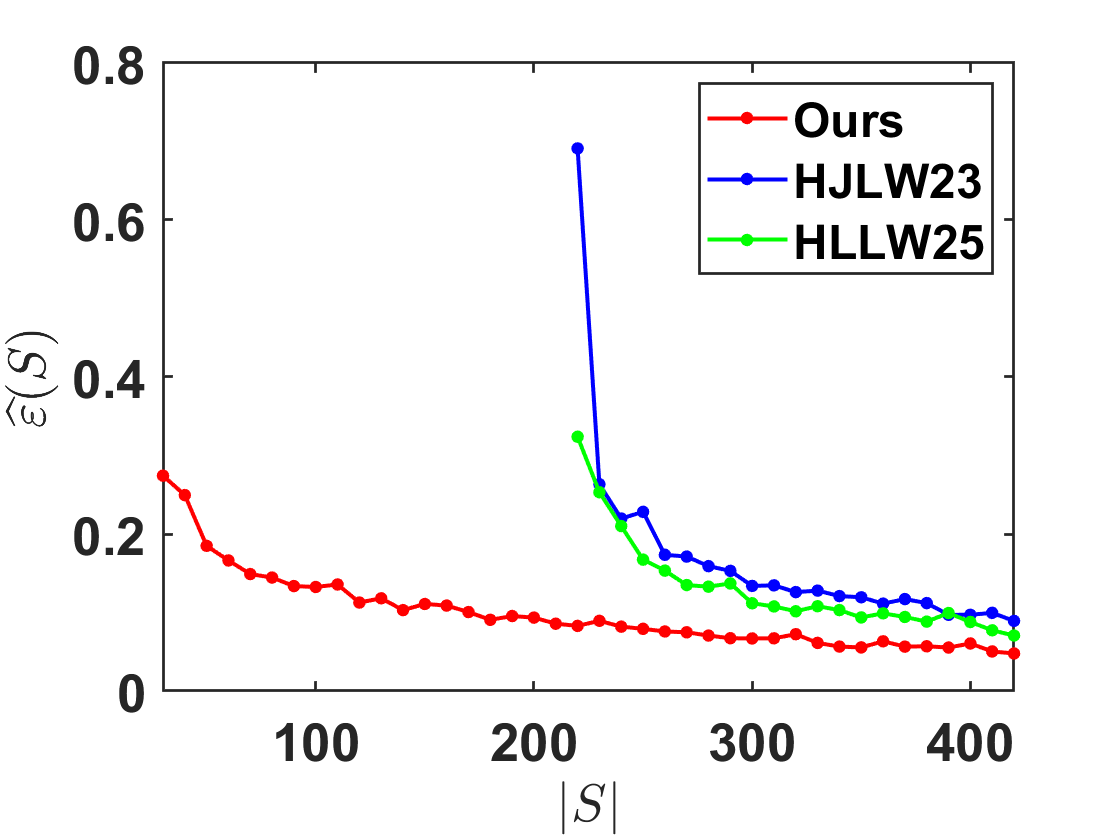} 
		\end{minipage}
	}
\hspace{0.25in}
    	\subfigure[\textbf{Diabetes}]{
    		\begin{minipage}[b]{0.28\textwidth}
   		 	\includegraphics[width=1\textwidth]{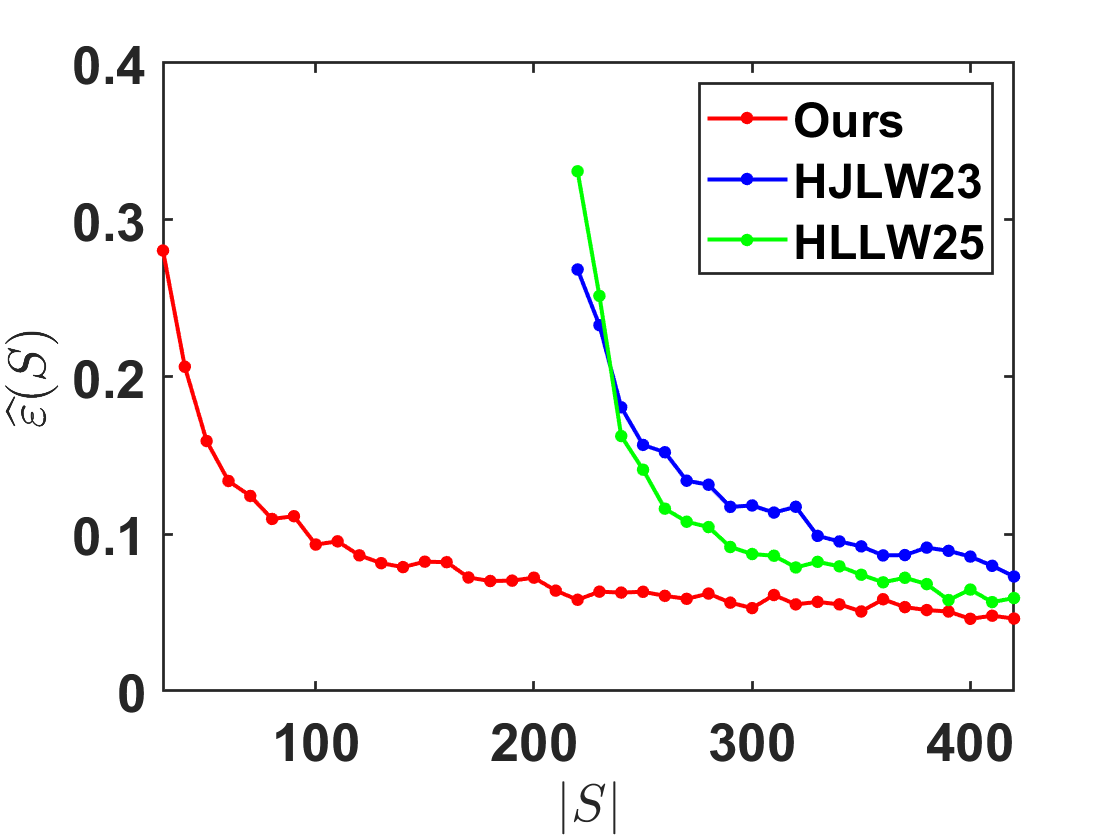}
    		\end{minipage}
    	}
    
	\caption{Tradeoff between coreset size $|S|$ and empirical error $\widehat{\varepsilon}(S)$ for robust $k$-median on real-world datasets. 
    }
     \label{fig: experimentMedian}
\end{figure*}

\paragraph{Statistical test.}
Similar to the robust geometric median, we evaluate the statistical performance between our method and baselines for robust $k$-median on all six real-world datasets. The results, listed in Table \ref{table: statistical test for robust k-median}, further demonstrate that our algorithm consistently outperforms the baselines.

\begin{table}[h]
\caption{Statistical comparison of different coreset construction methods for robust $k$-median. The coreset $S_1$ represents our coreset, $S_2$ represents the coreset constructed by the baseline \textbf{HJLW23}, and $S_3$ the coreset constructed by baseline \textbf{HLLW25}. For each empirical error ratio $\widehat{\varepsilon}(S_2)/\widehat{\varepsilon}(S_1)$ and $\widehat{\varepsilon}(S_3)/\widehat{\varepsilon}(S_1)$, we report the mean value over 20 runs, with the subscript indicating the standard deviation. }
\label{table: statistical test for robust k-median}
\centering
\begin{subtable}
\centering
\begin{tabular}{lcccccr}
\toprule
\multirow{2}{*}{Coreset Size} & \multicolumn{2}{c}{Census1990}                                            & \multicolumn{2}{c}{Twitter} \\
        & $\widehat{\varepsilon}(S_2)/\widehat{\varepsilon}(S_1)$ & $\widehat{\varepsilon}(S_3)/\widehat{\varepsilon}(S_1)$              & $\widehat{\varepsilon}(S_2)/\widehat{\varepsilon}(S_1)$          & $\widehat{\varepsilon}(S_3)/\widehat{\varepsilon}(S_1)$      \\
        \midrule
$2200$              & $3.374_{0.818}$                                       & $3.800_{1.262}$ &   $2.958_{0.785}$           &  $3.149_{1.477}$            \\
$3200$                          & $1.800_{0.659}$                                       & $1.483_{0.534}$ &   $1.654_{0.788}$           &    $1.574_{0.710}$          \\
$4200$                          & $1.543_{0.559}$                                       & $1.316_{0.373}$ & $1.408_{0.439}$             &  $1.379_{0.369}$     \\      
\bottomrule
\end{tabular}

\end{subtable}
\begin{subtable}
\centering
\begin{tabular}{lcccccr}
\toprule
\multirow{2}{*}{Coreset Size} & \multicolumn{2}{c}{Bank}                                            & \multicolumn{2}{c}{Adult} \\
        & $\widehat{\varepsilon}(S_2)/\widehat{\varepsilon}(S_1)$ & $\widehat{\varepsilon}(S_3)/\widehat{\varepsilon}(S_1)$              & $\widehat{\varepsilon}(S_2)/\widehat{\varepsilon}(S_1)$          & $\widehat{\varepsilon}(S_3)/\widehat{\varepsilon}(S_1)$      \\
        \midrule
$1200$              & $1.657_{0.415}$                                       & $2.019_{0.740}$ &  $2.098_{0.551}$            &   $2.153_{0.772}$           \\
$1700$                          & $1.548_{0.582}$                                       & $1.257_{0.543}$ &  $1.440_{0.619}$            &  $1.702_{0.584}$            \\
$2200$                          & $1.393_{0.558}$                                       & $1.460_{0.737}$ &    $1.450_{0.677}$          &$1.373_{0.545}$       \\      
\midrule
\bottomrule
\end{tabular}
\end{subtable}

\begin{subtable}
\centering
\begin{tabular}{lcccccr}
\toprule
\multirow{2}{*}{Coreset Size} & \multicolumn{2}{c}{Athlete}                                            & \multicolumn{2}{c}{Diabetes} \\
        & $\widehat{\varepsilon}(S_2)/\widehat{\varepsilon}(S_1)$ & $\widehat{\varepsilon}(S_3)/\widehat{\varepsilon}(S_1)$              & $\widehat{\varepsilon}(S_2)/\widehat{\varepsilon}(S_1)$          & $\widehat{\varepsilon}(S_3)/\widehat{\varepsilon}(S_1)$      \\
        \midrule
$210$              & $11.481_{3.633}$                                       & $8.851_{0.740}$ &  $12.688_{3.443} $            &   $8.080_{1.652} $           \\
$310$                          & $2.142_{0.525}$                                       & $1.798_{0.553}$ &  $1.902_{0.512} $            &  $ 1.402_{0.395}$            \\
$410$                          & $2.104_{0.570}$                                       & $1.523_{0.481}$ &    $1.691_{0.546} $          &$1.257_{0.470} $       \\      
\midrule
\bottomrule
\end{tabular}
\end{subtable}

\end{table}

\paragraph{Speed-up baselines.}
We compare the coreset of size $2m$ constructed by the \textbf{HLLW25} baselines and coreset of size $m$ conducted by Algorithm \ref{alg: kmedian}. We repeat the experiment $10$ times and report the averages. The result is listed in Table \ref{tb: speed up k median}. Our algorithm achieves a speed-up over \textbf{HLLW25}—specifically, a 2$\times$ reduction in the running time on the coreset—while maintaining the same level of empirical error.

\paragraph{Validity of Assumption \ref{as: kmedian} for robust $k$-median.}
We evaluate the validity of Assumption \ref{as: kmedian} for robust $k$-median across six datasets. The results in Table \ref{tb: dataset3} show that the assumptions are satisfied by the \textbf{Twitter}, \textbf{Adult}, \textbf{Athlete}, and \textbf{Diabetes} datasets, while the condition \( r_{\max} \le 4k \bar{r} \) holds for all six datasets. 
These results demonstrate that our assumptions are practical in real-world scenarios, and our algorithm performs well even when the assumption \( \min_i |P_i^\star| \ge 4m \) is violated.
Note that the condition $r_{\max} \le 4k \bar{r}$ is satisfied by all six real-world datasets, even across different choices of $k$ and $m$.

{
\paragraph{Analysis under heavy-tailed contamination.}
For each dataset, we randomly perturb 10\% of the points by adding independent $\mathrm{Cauchy}(0,1)$ noise to every dimension, thereby simulating a heavy-tailed data environment.
As shown in Figure~\ref{fig: tail_kmedian}, our method consistently outperforms the baselines on the robust $k$-median task, confirming that its superior performance remains stable even under heavy-tailed contamination. 
{For example, on the perturbed \textbf{Bank} dataset, our method achieves a coreset of size \(600\) with an empirical error of \(0.044\), whereas the best baseline, \textbf{HLLW25}, requires a coreset of size \(1300\) to reach a higher error of \(0.046\).
}}

\begin{figure*}[h]
	\centering
	\subfigure[\textbf{Census1990}]{\label{tail_kmedian_USC}
		\begin{minipage}[b]{0.28\textwidth}
			\includegraphics[width=1\textwidth]{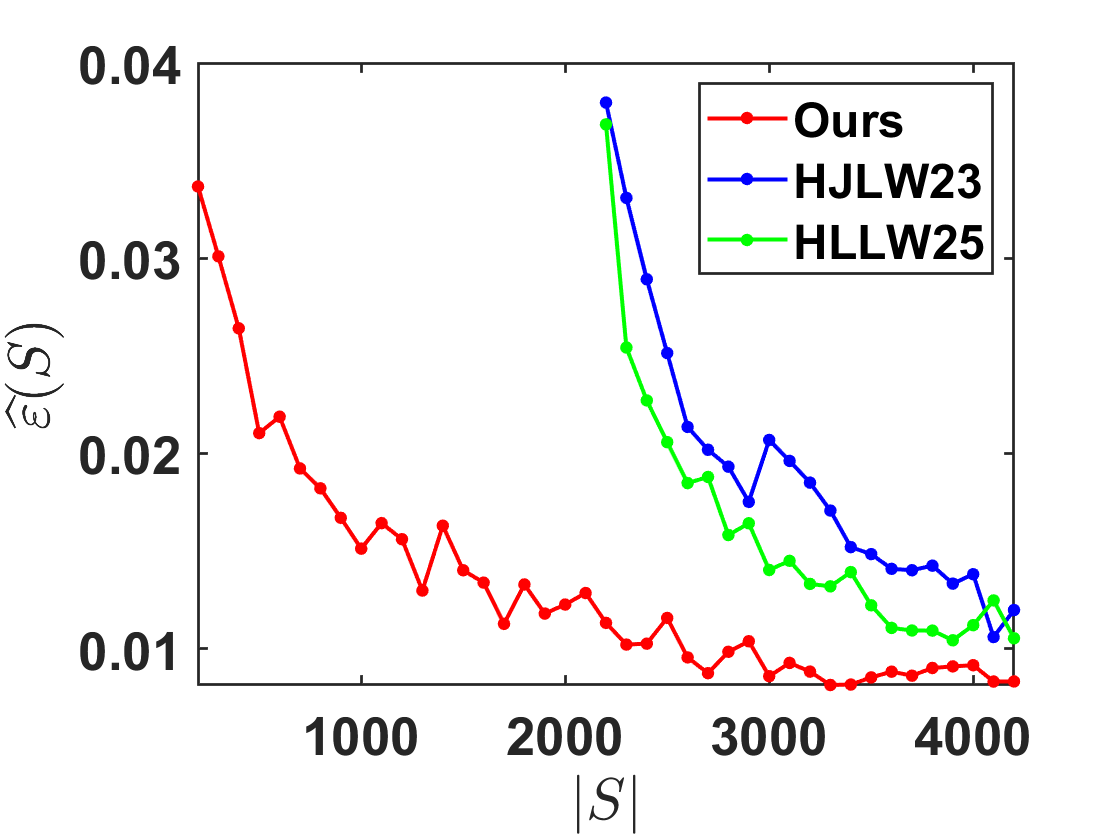} 
		\end{minipage}
	}
    \hspace{0.25in}
    	\subfigure[\textbf{Twitter}]{\label{tail_kmedian_twitter}
    		\begin{minipage}[b]{0.28\textwidth}
   		 	\includegraphics[width=1\textwidth]{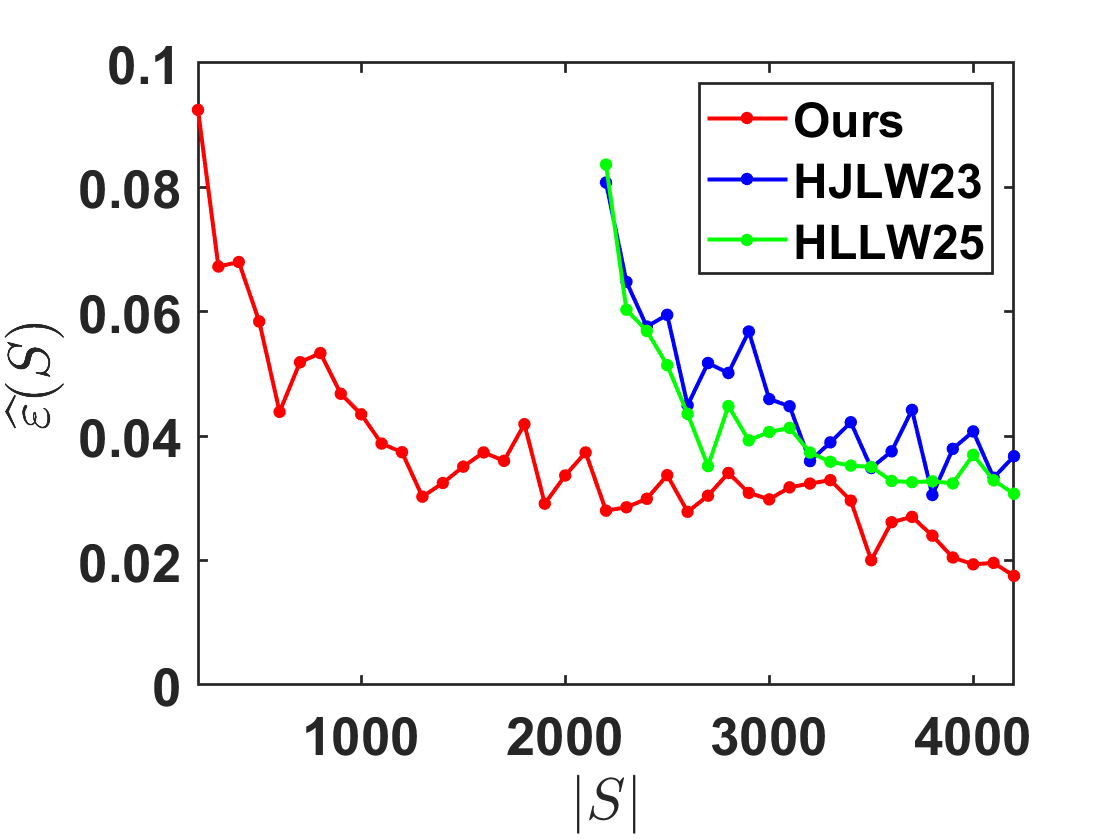}
    		\end{minipage}
    	}
        \hspace{0.25in}
    \subfigure[\textbf{Bank}]{\label{tail_kmedian_bank}
		\begin{minipage}[b]{0.28\textwidth}
			\includegraphics[width=1\textwidth]{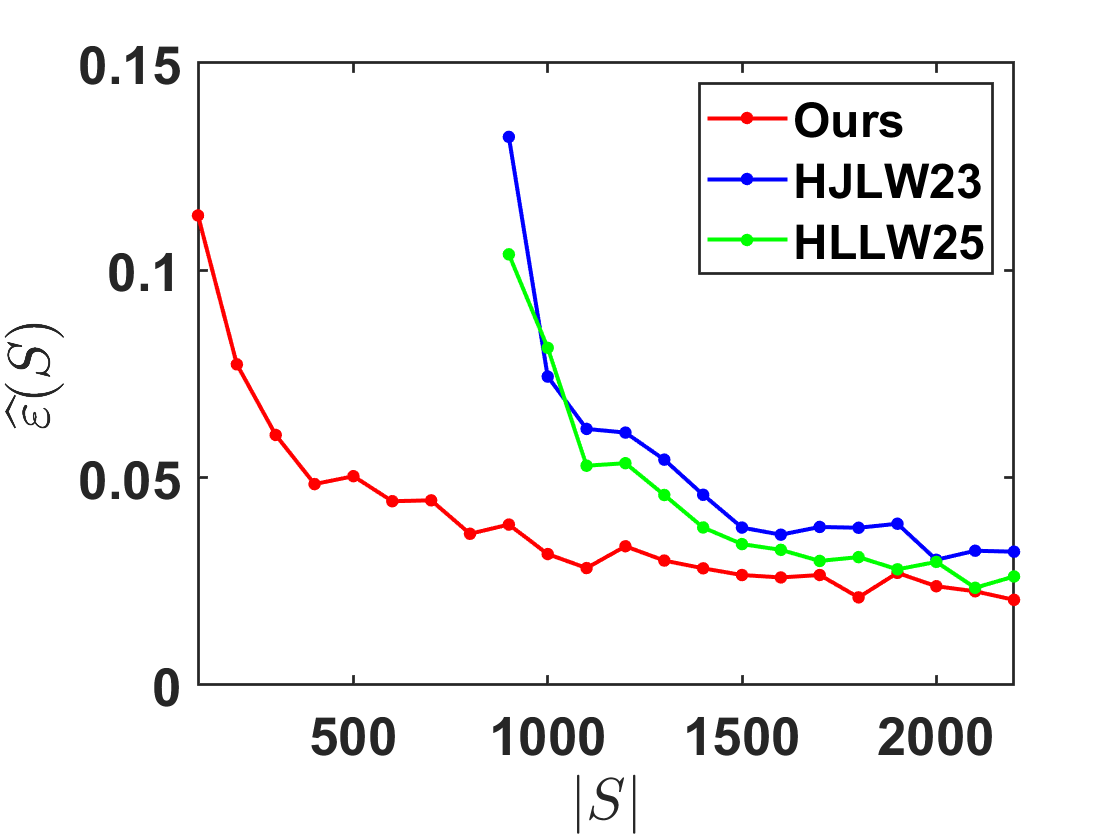} 
		\end{minipage}
	}
    	\subfigure[\textbf{Adult}]{\label{tail_kmedian_adult}
    		\begin{minipage}[b]{0.28\textwidth}
   		 	\includegraphics[width=1\textwidth]{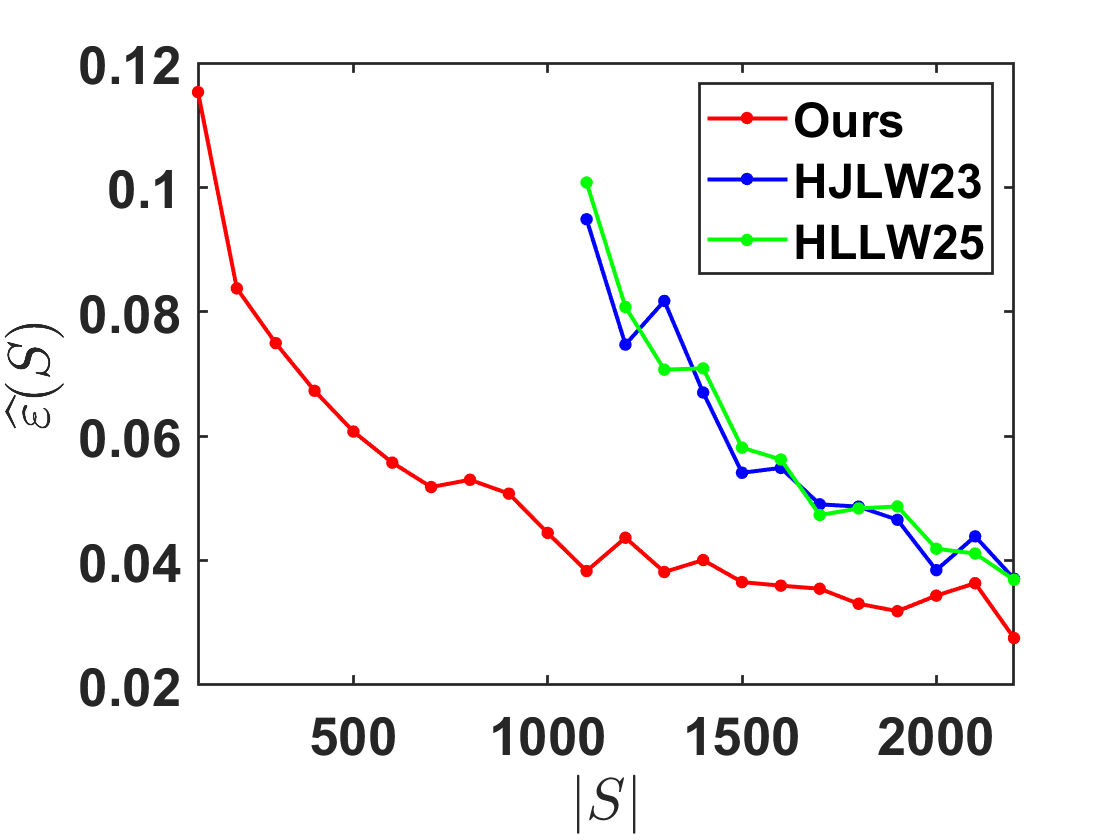}
    		\end{minipage}
    	}
        \hspace{0.25in}
        \subfigure[\textbf{Athlete}]{\label{tail_kmedian_athlete}
		\begin{minipage}[b]{0.28\textwidth}
			\includegraphics[width=1\textwidth]{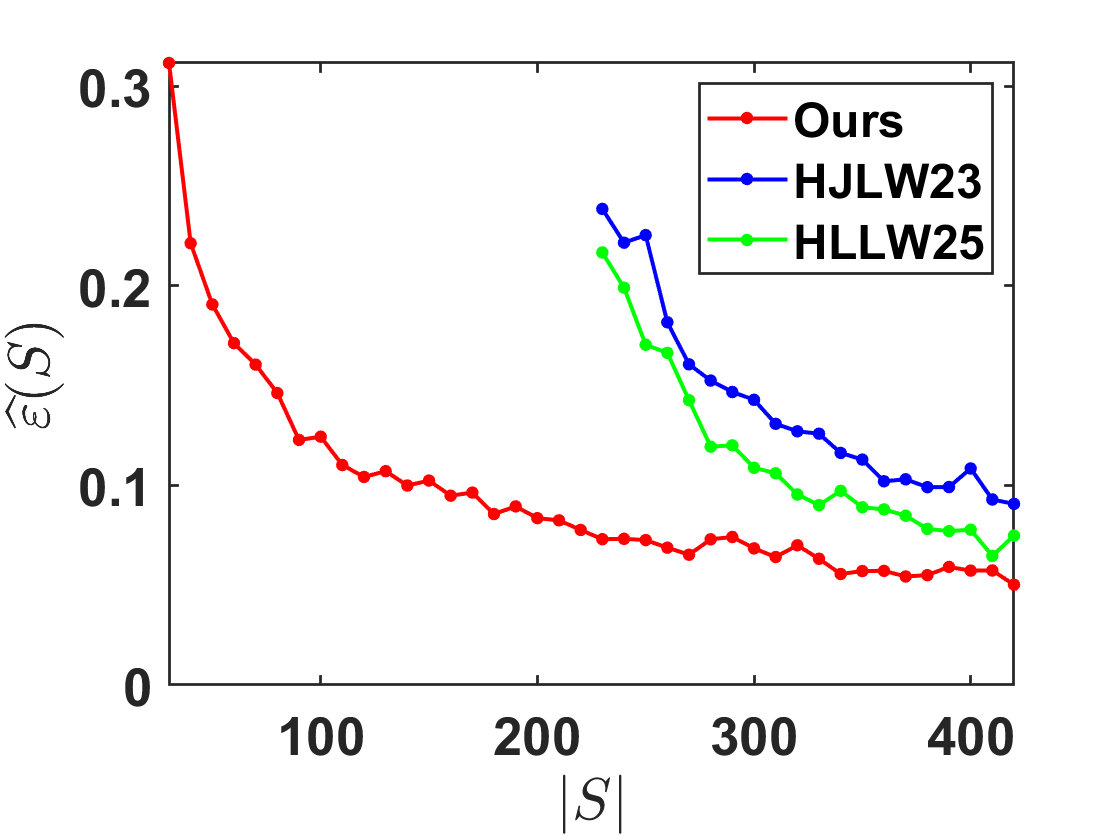} 
		\end{minipage}
	}
    \hspace{0.25in}
    	\subfigure[\textbf{Diabetes}]{\label{tail_kmedian_diabetes}
    		\begin{minipage}[b]{0.28\textwidth}
   		 	\includegraphics[width=1\textwidth]{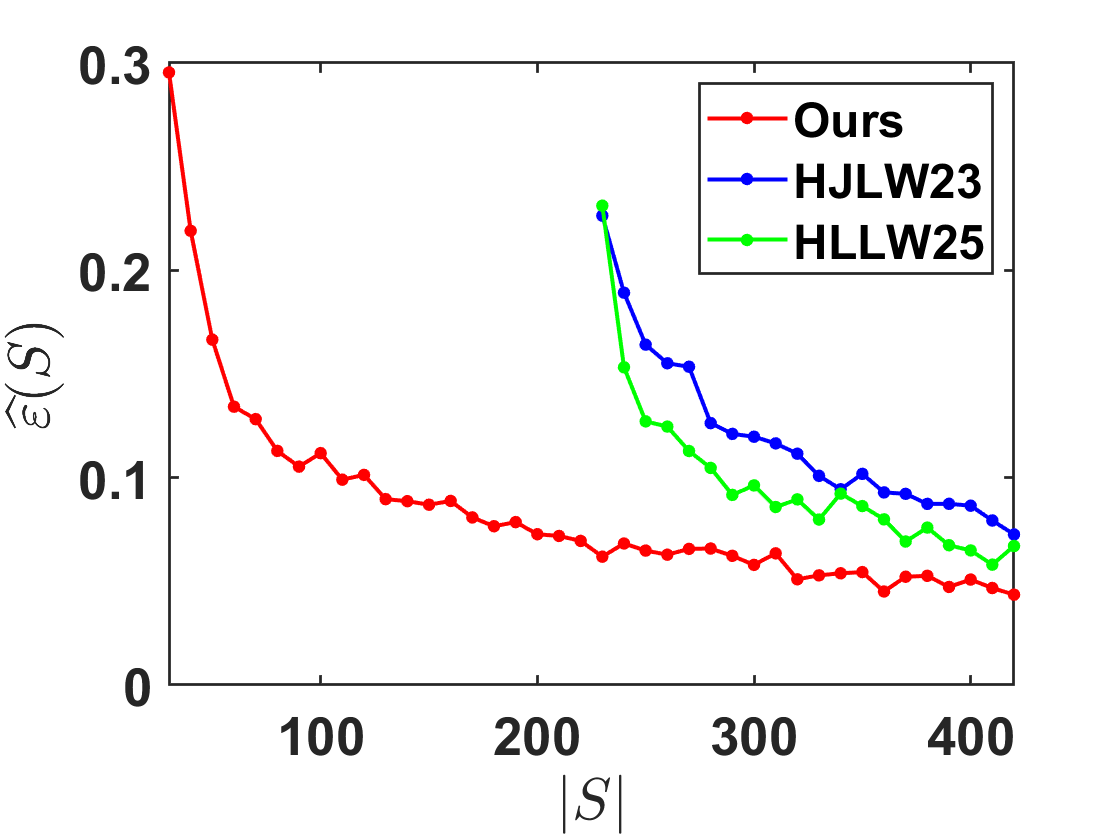}
    		\end{minipage}
    	}
	\caption{Tradoff between coreset size $|S|$ and empirical error $\widehat{\varepsilon}(S)$ for robust $k$-median when we perturb $10\%$ points. 
    }
     \label{fig: tail_kmedian}
\end{figure*}

\begin{table*}[ht]
\vskip -0.1in
\caption{Comparison of runtime between our Algorithm \ref{alg: kmedian} and baseline \textbf{HLLW25} for robust $k$-median. 
For each dataset, the coreset size of baseline \textbf{HLLW25} is $2m$ and the coreset size of ours is $m$. 
We use Lloyd algorithm given by \cite{Bhaskara2019GreedySF} to compute approximate solutions $C_P$ and $C_S$ for both the original dataset $P$ and coreset $S$, respectively.
``COST$_P$'' denotes $\cost_1^{(m)}(P,C_P)$ on the original dataset $P$. 
``COST$_S$'' denotes $\cost_1^{(m)}(P,C_S)$ on the coreset constructed by METHOD. 
$T_X$ is the running time on the original dataset. $T_S$ is the running time on coreset.
$T_C$ is the construction time of the coreset. }
\label{tb: speed up k median}

\begin{center}
\begin{small}
\begin{sc}
\begin{tabular}{lcccccr}
\toprule
dataset & cost$_P$ & method  & cost$_S$ &$T_X$  & $T_C$ & $T_S$ \\\hline
\multirow{ 2}{*}{Census1990} & \multirow{ 2}{*}{1.032$\times 10^6$} 
& Ours & 1.030$\times 10^6$ & \multirow{ 2}{*}{312.606} & 38.862 & 5.532\\
&&HLLW25 &  1.040$\times 10^6$ & &38.703 & 10.815\\ \hline
\multirow{ 2}{*}{Twitter} & \multirow{ 2}{*}{1.328$\times 10^6$}  
&Ours  & 1.364$\times 10^6$ & \multirow{ 2}{*}{96.024} &12.220& 1.621\\
&& HLLW25 &1.347$\times 10^6$ & &  12.452 &   2.995\\ \hline
\multirow{ 2}{*}{Bank} & \multirow{ 2}{*}{3.179$\times 10^6$}   
&Ours & 3.194$\times 10^6$ & \multirow{ 2}{*}{147.324} & 9.021 & 1.484\\
&& HLLW25 &3.207$\times 10^6$ &  &  9.210 & 3.445 \\ \hline
\multirow{ 2}{*}{Adult} & \multirow{ 2}{*}{ 9.221$\times 10^{8}$ } 
&Ours & 9.153$\times 10^{8}$ & \multirow{ 2}{*}{144.185} & 9.324 & 1.854 \\
&&HLLW25 &  9.166$\times 10^8$ & & 9.266 &  3.517\\ \hline
\multirow{ 2}{*}{Athlete} & \multirow{ 2}{*}{ 7.251$\times 10^4$  } 
&Ours & 7.503 $\times 10^4$& \multirow{ 2}{*}{ 41.541} & 1.825& 0.206 \\
&&HLLW25 &7.389 $\times 10^4$   & &1.872 & 0.421\\ \hline
\multirow{ 2}{*}{Diabetes} & \multirow{ 2}{*}{8.571$\times 10^4$   } 
&Ours &8.791$\times 10^4$ & \multirow{ 2}{*}{44.733 } & 1.829 & 0.226 \\
&&HLLW25 & 8.811$\times 10^4$  & &1.850 & 0.473 \\
\midrule
\bottomrule
\end{tabular}\\
\end{sc}
\end{small}
\end{center}
\end{table*}

\subsection{Empirical results for robust $k$-means}
\label{sec: exp: kmeans}
We implement Algorithm \ref{alg: kmedian} for robust \kMeans\ and compare its performance to the previous baselines.

\paragraph{Setup.}
We do experiments on the six datasets listed in Table \ref{tb: dataset3}, and set the number of centers $k$ to be $5$.
We use $k$-means++ to compute an approximate center $C^\star$.

\paragraph{Coreset size and empirical error tradeoff for Robust $k$-means.}
We vary the coreset size from $m$ to $2m$, and compute the empirical error $\widehat{\eps}(S)$ w.r.t. the coreset size $|S|$. 
For each size and each algorithm, we independently run the algorithm 10 times and obtain 10 coresets, compute their empirical errors $\widehat{\eps}(S)$, and report the average of 10 empirical errors.
Figure \ref{fig: experimentMeans} presents our results. This figure shows that our algorithm (\textbf{Ours}) outperforms the previous baselines.
For example, with the \textbf{Twitter} dataset, our method provides a coreset of size $2200$ with an empirical error $0.039$. The best empirical error achieved by our general dimension method for a coreset size $4000$ is much larger, $0.056$.

\begin{figure*}[h]
	\centering
	\subfigure[\textbf{Twitter}]{\label{sub_means_twitter}
		\begin{minipage}[b]{0.28\textwidth}
			\includegraphics[width=1\textwidth]{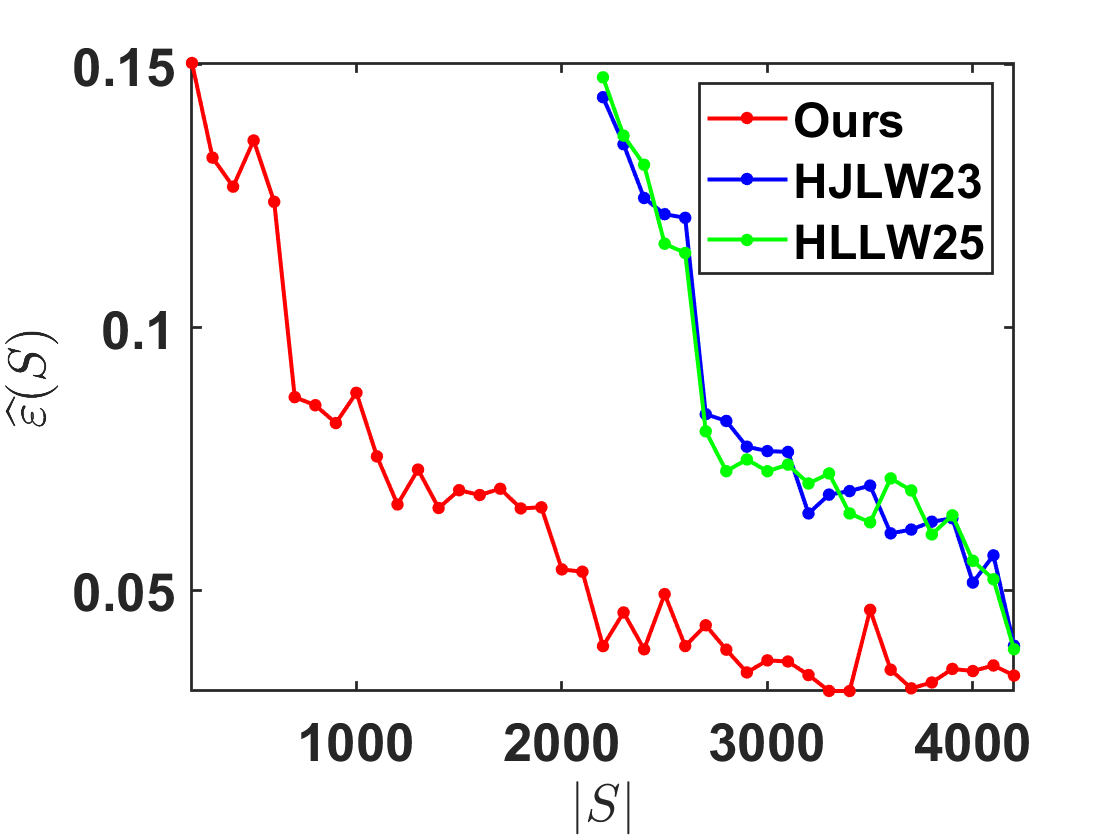} 
		\end{minipage}
	}
    \hspace{0.25in}
    	\subfigure[\textbf{Adult}]{\label{sub_means_adult}
    		\begin{minipage}[b]{0.28\textwidth}
   		 	\includegraphics[width=1\textwidth]{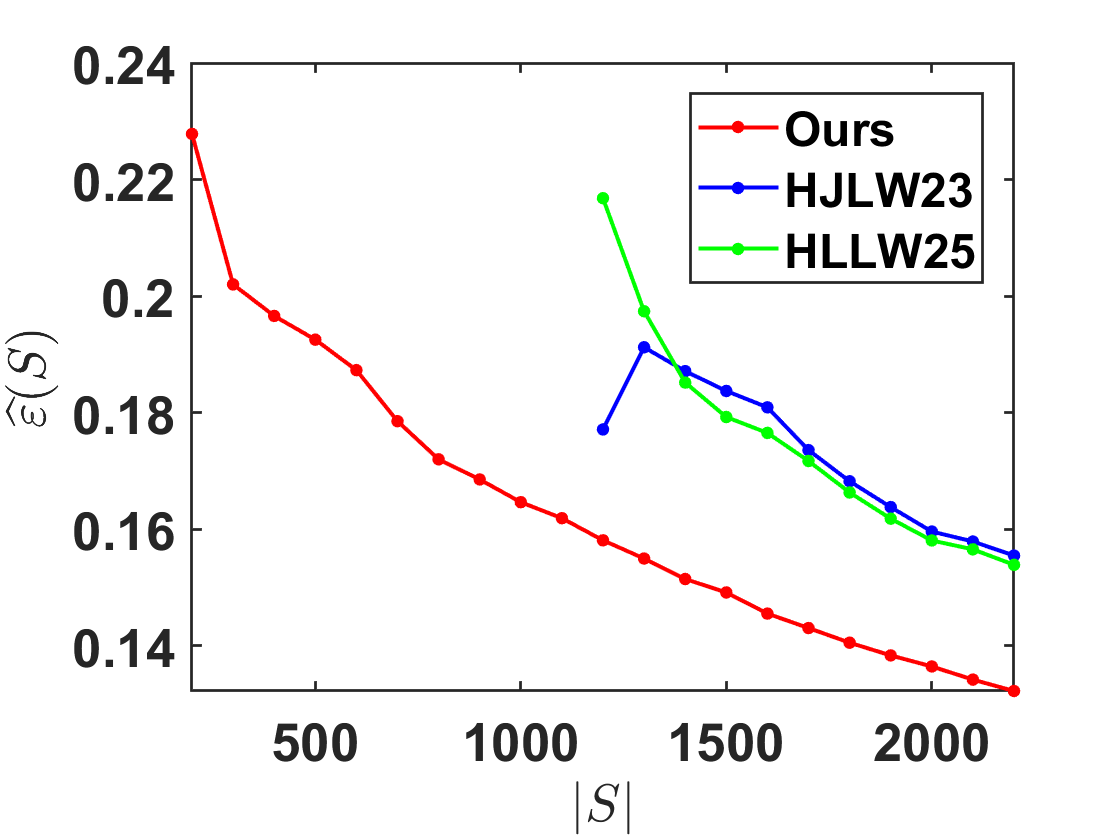}
    		\end{minipage}
    	}
    \hspace{0.25in}
    \subfigure[\textbf{Bank}]{\label{sub_means_bank}
		\begin{minipage}[b]{0.28\textwidth}
			\includegraphics[width=1\textwidth]{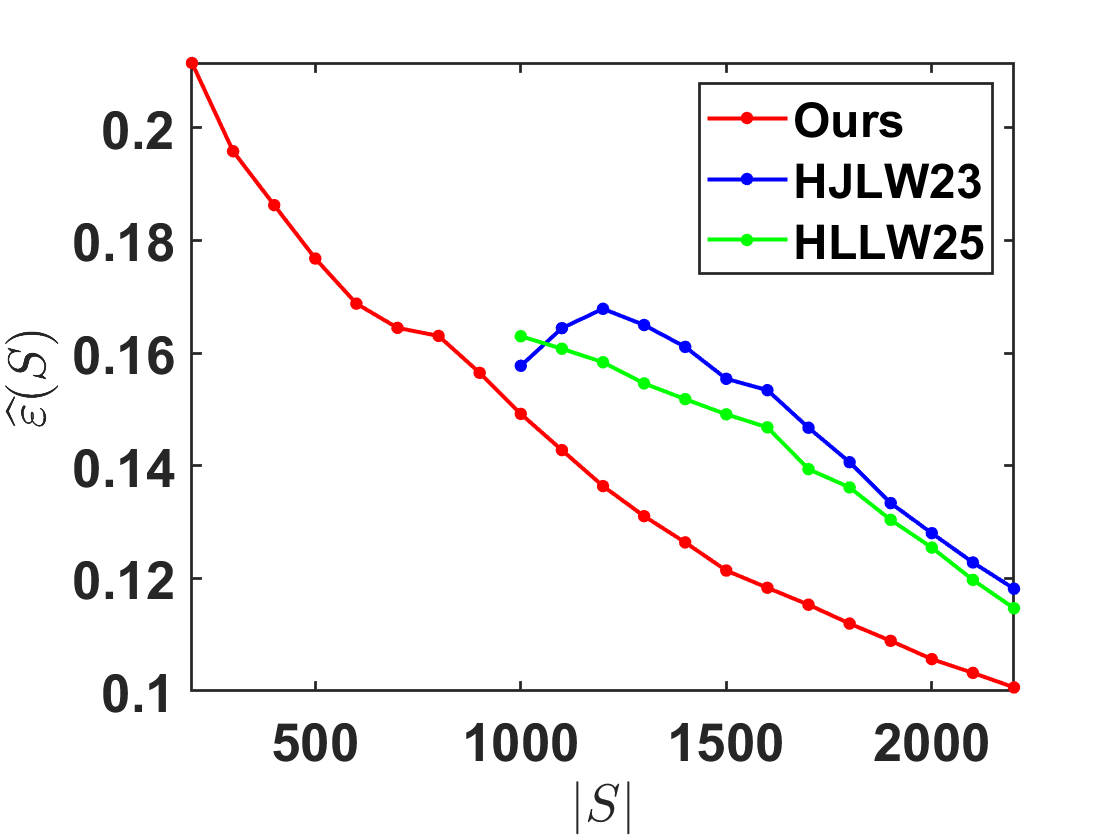} 
		\end{minipage}
	}
    	\subfigure[\textbf{Census1990}]{\label{sub_means_US}
    		\begin{minipage}[b]{0.28\textwidth}
   		 	\includegraphics[width=1\textwidth]{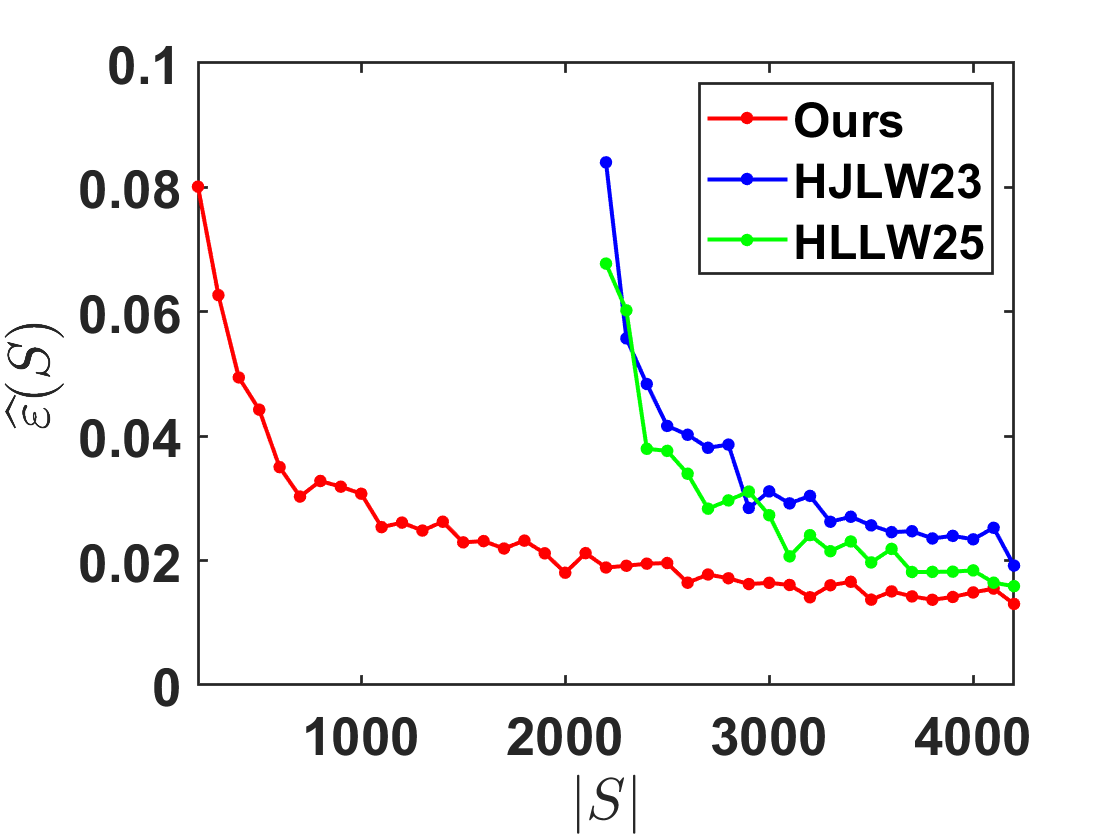}
    		\end{minipage}
    	}
        \hspace{0.25in}
    \subfigure[\textbf{Athlete}]{
		\begin{minipage}[b]{0.28\textwidth}
			\includegraphics[width=1\textwidth]{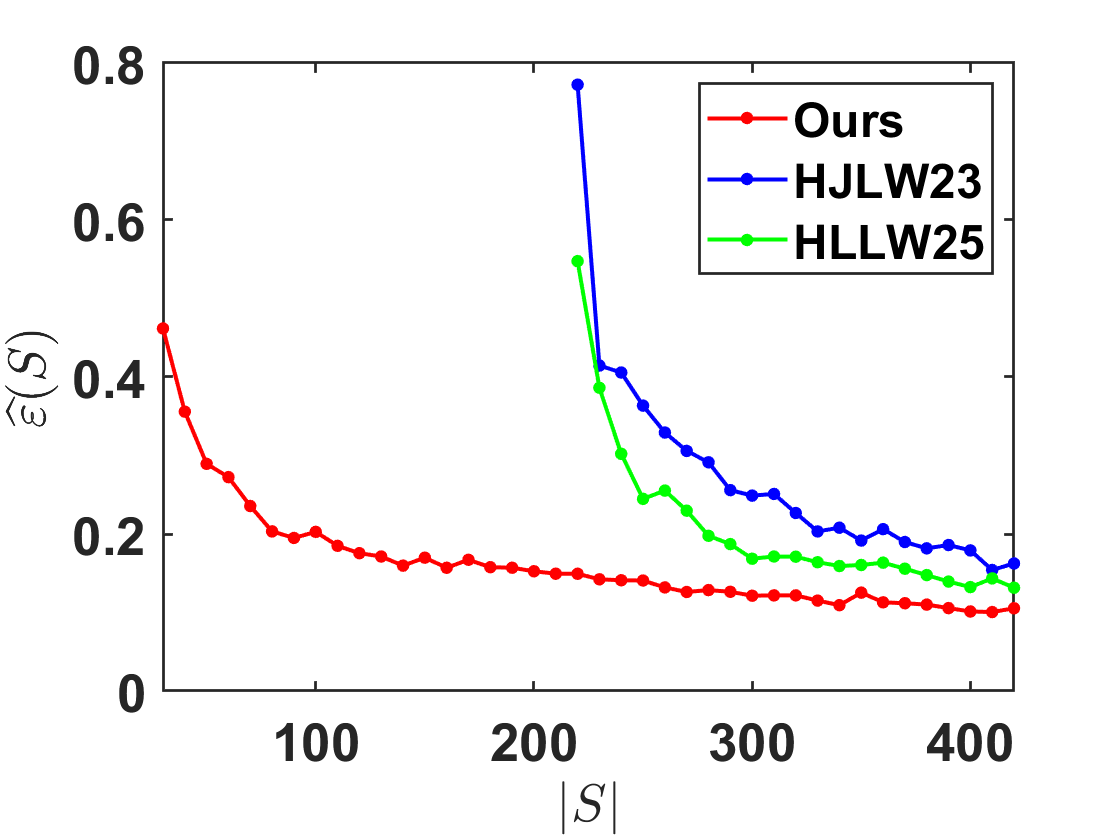} 
		\end{minipage}
	}
    \hspace{0.25in}
    	\subfigure[\textbf{Diabetes}]{
    		\begin{minipage}[b]{0.28\textwidth}
   		 	\includegraphics[width=1\textwidth]{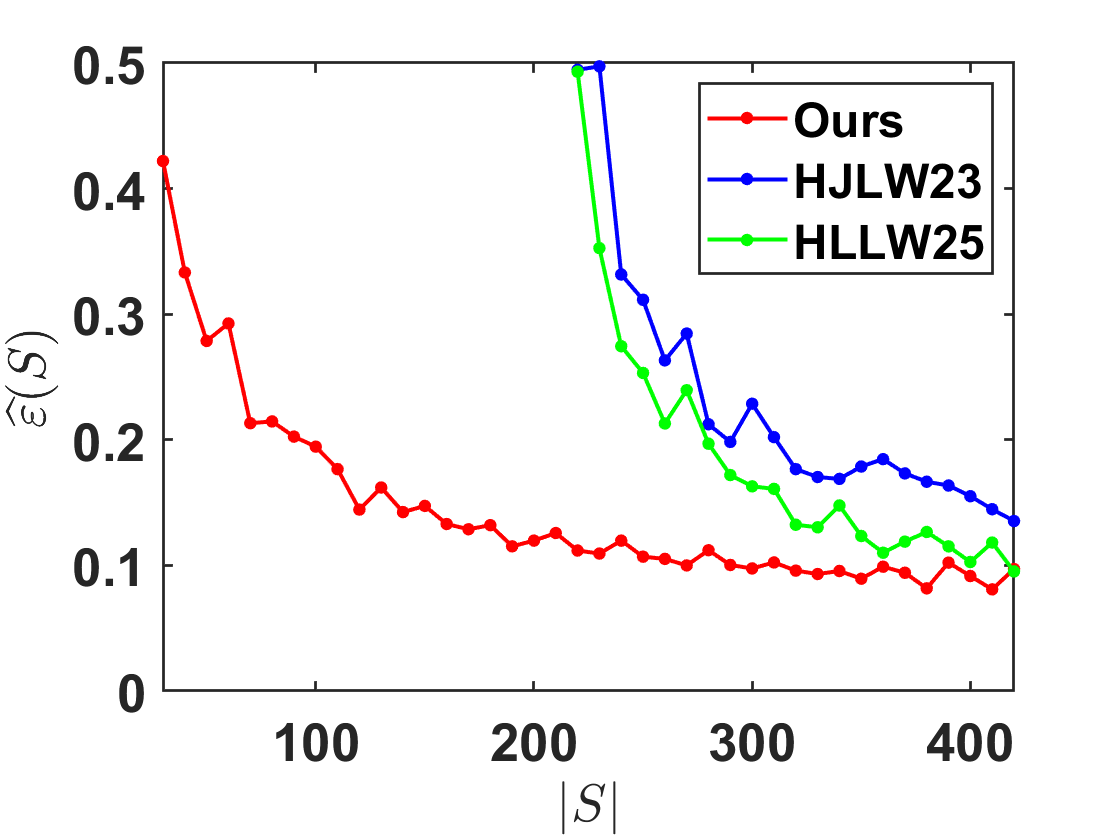}
    		\end{minipage}
    	}
	\caption{Tradeoff between coreset size $|S|$ and empirical error $\widehat{\eps}(S)$ for robust $k$-means.
    }
     \label{fig: experimentMeans}
\end{figure*}

\paragraph{Statistical test.}
Similar to the previous cases, we evaluate the statistical performance between our method and baselines for robust $k$-means on all six real-world datasets.
The results, listed in Table \ref{table: statistical test for robust k-means}, further demonstrate that our algorithm consistently outperforms the baselines.

\begin{table}[h]
\caption{Statistical comparison of different coreset construction methods for robust geometric median. The coreset $S_1$ represents our coreset, $S_2$ represents the coreset constructed by the baseline \textbf{HJLW23}, and $S_3$ the coreset constructed by baseline \textbf{HLLW25}. For each empirical error ratio $\widehat{\varepsilon}(S_2)/\widehat{\varepsilon}(S_1)$ and $\widehat{\varepsilon}(S_3)/\widehat{\varepsilon}(S_1)$, we report the mean value over 20 runs, with the subscript indicating the standard deviation. }
\label{table: statistical test for robust k-means}
\centering
\begin{subtable}
\centering
\begin{tabular}{lcccccr}
\toprule
\multirow{2}{*}{Coreset Size} & \multicolumn{2}{c}{Census1990}                                            & \multicolumn{2}{c}{Twitter} \\
        & $\widehat{\varepsilon}(S_2)/\widehat{\varepsilon}(S_1)$ & $\widehat{\varepsilon}(S_3)/\widehat{\varepsilon}(S_1)$              & $\widehat{\varepsilon}(S_2)/\widehat{\varepsilon}(S_1)$          & $\widehat{\varepsilon}(S_3)/\widehat{\varepsilon}(S_1)$      \\
        \midrule
$2200$              & $3.253_{2.063}$                                       & $2.645_{1.458} $ &   $1.793_{0.644} $           &  $1.667_{0.479} $            \\
$3200$                          & $1.257_{0.842} $                                       & $1.251_{0.632} $ &   $ 1.343_{0.234}$           &    $1.283_{0.197} $          \\
$4200$                          & $1.303_{0.692} $                                       & $1.168_{0.739} $ & $1.244_{0.152} $             &  $1.246_{0.148} $     \\      
\bottomrule
\end{tabular}

\end{subtable}
\begin{subtable}
\centering
\begin{tabular}{lcccccr}
\toprule
\multirow{2}{*}{Coreset Size} & \multicolumn{2}{c}{Bank}                                            & \multicolumn{2}{c}{Adult} \\
        & $\widehat{\varepsilon}(S_2)/\widehat{\varepsilon}(S_1)$ & $\widehat{\varepsilon}(S_3)/\widehat{\varepsilon}(S_1)$              & $\widehat{\varepsilon}(S_2)/\widehat{\varepsilon}(S_1)$          & $\widehat{\varepsilon}(S_3)/\widehat{\varepsilon}(S_1)$      \\
        \midrule
$1200$              & $1.647_{0.972} $                                       & $1.360_{1.018} $ &  $1.467_{0.287} $            &   $1.094_{0.542} $           \\
$1700 $                          & $1.010_{0.654} $                                       & $1.028_{0.574} $ &  $2.149_{0.884} $            &  $2.416_{1.002} $            \\
$2200$                          & $1.010_{0.654} $                                       & $1.026_{0.674} $ &    $ 1.089_{0.360}$          &$1.172_{0.537} $       \\      
\midrule
\bottomrule
\end{tabular}
\end{subtable}

\begin{subtable}
\centering
\begin{tabular}{lcccccr}
\toprule
\multirow{2}{*}{Coreset Size} & \multicolumn{2}{c}{Athlete}                                            & \multicolumn{2}{c}{Diabetes} \\
        & $\widehat{\varepsilon}(S_2)/\widehat{\varepsilon}(S_1)$ & $\widehat{\varepsilon}(S_3)/\widehat{\varepsilon}(S_1)$              & $\widehat{\varepsilon}(S_2)/\widehat{\varepsilon}(S_1)$          & $\widehat{\varepsilon}(S_3)/\widehat{\varepsilon}(S_1)$      \\
        \midrule
$210$              & $5.172_{3.634}$                                       & $4.200_{1.944} $ &  $5.700_{3.303} $            &   $5.868_{2.952} $           \\
$310$                          & $ 2.467_{1.564}$                                       & $1.427_{0.660} $ &  $1.567_{0.800}$            &  $ 1.332_{0.653}$            \\
$410$                          & $1.658_{0.881} $                                       & $1.045_{0.449} $ &    $1.360_{1.103} $          &$1.216_{0.943} $       \\      
\midrule
\bottomrule
\end{tabular}
\end{subtable}

\end{table}

\paragraph{Speed-up baselines.}
We compare the coreset of size $2m$ constructed by the \textbf{HLLW25} baselines and coreset of size $m$ conducted by Algorithm \ref{alg: kmedian}. We repeat the experiment $10$ times and report the averages. The result is listed in Table \ref{tb: speed up k means}. Our algorithm achieves a speed-up over \textbf{HLLW25}—specifically, a 2$\times$ reduction in the running time on the coreset—while maintaining the same level of empirical error.

\begin{table*}[ht]
\vskip -0.1in
\caption{Comparison of runtime between our Algorithm \ref{alg: kmedian} and baseline \textbf{HLLW25} for robust $k$-means. 
For each dataset, the coreset size of baseline \textbf{HLLW25} is $2m$ and the coreset size of ours is $m$. 
We use Lloyd algorithm given by \cite{Bhaskara2019GreedySF} to compute approximate solutions $C_P$ and $C_S$ for both the original dataset $P$ and coreset $S$, respectively.
``COST$_P$'' denotes $\cost_2^{(m)}(P,C_P)$ on the original dataset $P$. 
``COST$_S$'' denotes $\cost_2^{(m)}(P,C_S)$ on the coreset constructed by METHOD. 
$T_X$ is the running time on the original dataset. $T_S$ is the running time on coreset.
$T_C$ is the construction time of the coreset. }
\label{tb: speed up k means}

\begin{center}
\begin{small}
\begin{sc}
\begin{tabular}{lcccccr}
\toprule
dataset & cost$_P$ & method  & cost$_S$ &$T_X$  & $T_C$ & $T_S$ \\\hline
\multirow{ 2}{*}{Census1990} & \multirow{ 2}{*}{1.172$\times 10^7$} 
& Ours & 1.170$\times 10^7$ & \multirow{ 2}{*}{358.218} & 35.513 & 5.770\\
&&HLLW25 &  1.170$\times 10^7$ & &35.399 & 11.043\\ \hline
\multirow{ 2}{*}{Twitter} & \multirow{ 2}{*}{2.657$\times 10^7$}  
&Ours  &2.664$\times 10^7$ & \multirow{ 2}{*}{106.327} &14.084 & 1.806\\
&& HLLW25 &2.662$\times 10^7$ & &  13.330 &   3.494\\ \hline
\multirow{ 2}{*}{Bank} & \multirow{ 2}{*}{3.477$\times 10^8$}   
&Ours & 3.531$\times 10^8$ & \multirow{ 2}{*}{133.978} & 7.478 & 1.283\\
&& HLLW25 &3.530$\times 10^8$ &  &  7.225 &2.725 \\ \hline
\multirow{ 2}{*}{Adult} & \multirow{ 2}{*}{ 2.575$\times 10^{13}$ } 
&Ours & 2.646$\times 10^{13}$ & \multirow{ 2}{*}{139.16} & 8.273 & 1.564 \\
&&HLLW25 &  2.652$\times 10^{13}$ & & 8.048& 3.091\\ \hline
\multirow{ 2}{*}{Athlete} & \multirow{ 2}{*}{8.630$\times 10^5$   } 
&Ours &9.089$\times 10^5$ & \multirow{ 2}{*}{ 44.208} & 1.890&  0.251\\
&&HLLW25 & 9.016$\times 10^5$  & &1.909 & 0.466 \\ \hline
\multirow{ 2}{*}{Diabetes} & \multirow{ 2}{*}{6.490$\times 10^5$   } 
&Ours &6.814$\times 10^5$ & \multirow{ 2}{*}{39.570} & 1.622&  0.196\\
&&HLLW25 & 6.838$\times 10^5$  & &1.602 & 0.413 \\
\midrule
\bottomrule
\end{tabular}\\
\end{sc}
\end{small}
\end{center}
\end{table*}

\paragraph{Validity of Assumption \ref{as: kmedian} for robust $k$-means.}
We evaluate the validity of Assumption \ref{as: kmedian} for robust $k$-means across six datasets. The results in Table \ref{tb: dataset3} show that our assumptions are satisfied by datasets \textbf{Athlete} and \textbf{Diabetes}, while the assumption $r_{\max}^2\le 4k \bar{r}^2$ is satisfied by all six datasets. These results show that our assumptions are practical in real-world datasets, and our algorithm performs well even when the assumption $\min_i|P_i^\star|\ge 4m$ is violated.

\paragraph{Analysis of our algorithm when Assumption \ref{as: kmedian} violates.}
We evaluate the applicability of our algorithm when both assumptions $\min_i|P_i^\star|\ge 4m$ and $(r_{\max}/\bar{r})^2\le 4k$ are violated.
In \textbf{Bank} dataset, we let $k=3$, then we have $(r_{\max}/\bar{r})^2=15.001> 4k$ and $\min_{i}|P_i^\star|=1432< 4m$, violating the two assumptions. In \textbf{Adult} dataset, we let $k=8$, $m=500$, then we have $(r_{\max}/\bar{r})^2=36.278> 4k$ and $\min_{i}|P_i^\star|=1592< 4m$, violating the two assumptions. We present the size-error tradeoff for robust $k$-means on these datasets in Figure \ref{fig: two violation}.
Our results show that our method still outperforms the baselines even when both assumptions are violated.

\begin{figure*}[h]
    \centering
    \subfigure[\textbf{Bank}]{
		\begin{minipage}[b]{0.4\textwidth}
			\includegraphics[width=1\textwidth]{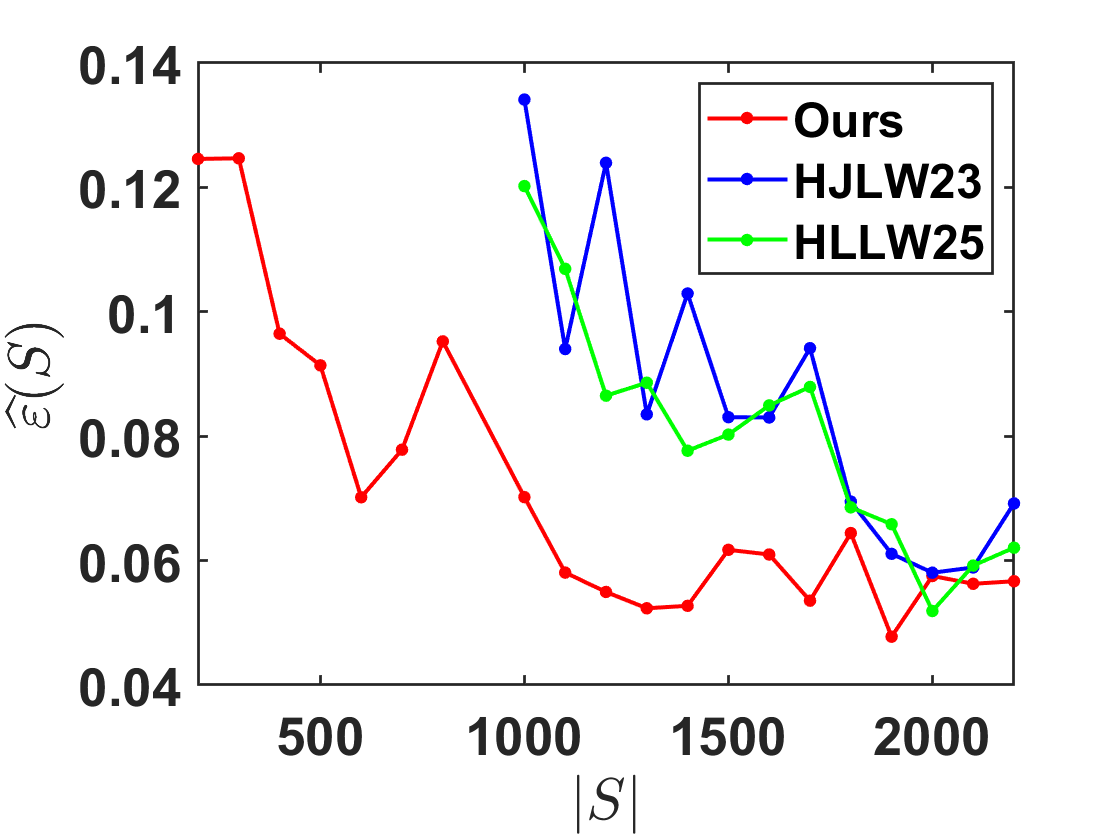} 
		\end{minipage}
	}
    \hspace{0.25in}
    	\subfigure[\textbf{Adult}]{
    		\begin{minipage}[b]{0.4\textwidth}
   		 	\includegraphics[width=1\textwidth]{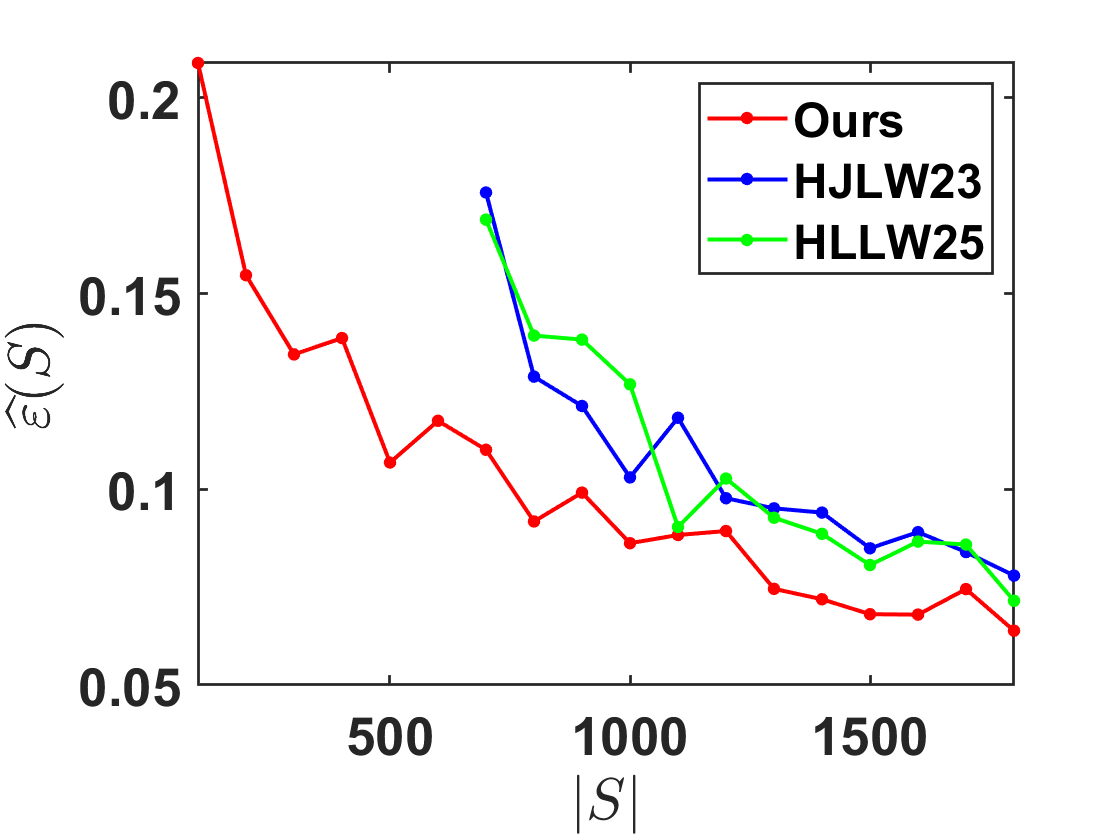}
    		\end{minipage}
    	}
    \caption{Tradeoff between coreset size $|S|$ and empirical error $\widehat{\varepsilon}(S)$ for robust $k$-means. We set $k=3$ in the \textbf{Bank} dataset and set $k=8$, $m=500$ in the \textbf{Adult} dataset. In these cases, the values $(r_{\max}/\bar{r})^2$ and $\min_{i \in [k]}{|P_i^\star|}$ violate both of our assumptions. The goal of this figure is to examine whether our algorithm remains applicable when the assumptions are not satisfied.}
    \label{fig: two violation}
\end{figure*}

{
\paragraph{Analysis under heavy-tailed contamination.}
For each dataset, we randomly perturb 10\% of the points by adding independent $\mathrm{Cauchy}(0,1)$ noise to every dimension, thereby simulating a heavy-tailed data environment.
As shown in Figure~\ref{fig: tail_kmeans}, our method consistently outperforms the baselines on the robust $k$-means task. 
{For example, on the perturbed \textbf{Athlete} dataset, our method yields a coreset of size \(110\) with an empirical error of \(0.181\), whereas the best baseline, \textbf{HLLW25}, attains a higher error of \(0.196\) even with a larger coreset of size \(290\).}}

\begin{figure*}[h]
	\centering
	\subfigure[\textbf{Census1990}]{\label{tail_kmeans_USC}
		\begin{minipage}[b]{0.28\textwidth}
			\includegraphics[width=1\textwidth]{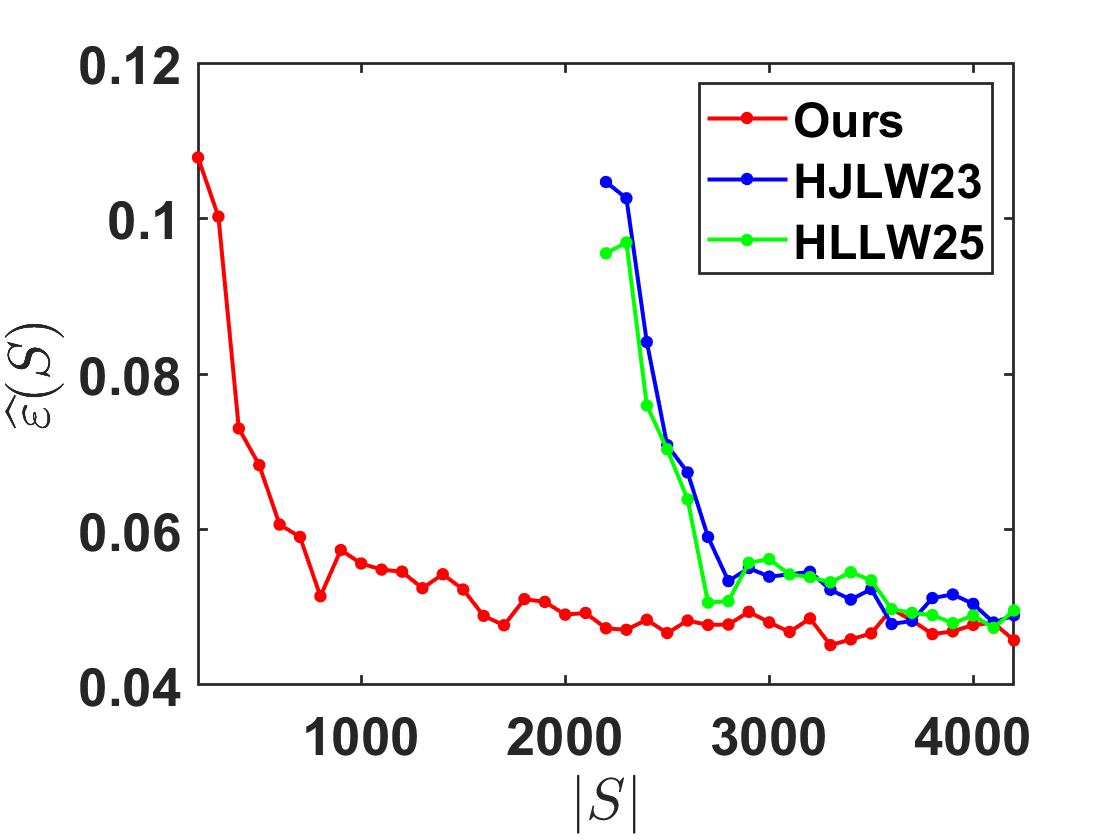} 
		\end{minipage}
	}
    \hspace{0.25in}
    	\subfigure[\textbf{Twitter}]{\label{tail_kmeans_twitter}
    		\begin{minipage}[b]{0.28\textwidth}
   		 	\includegraphics[width=1\textwidth]{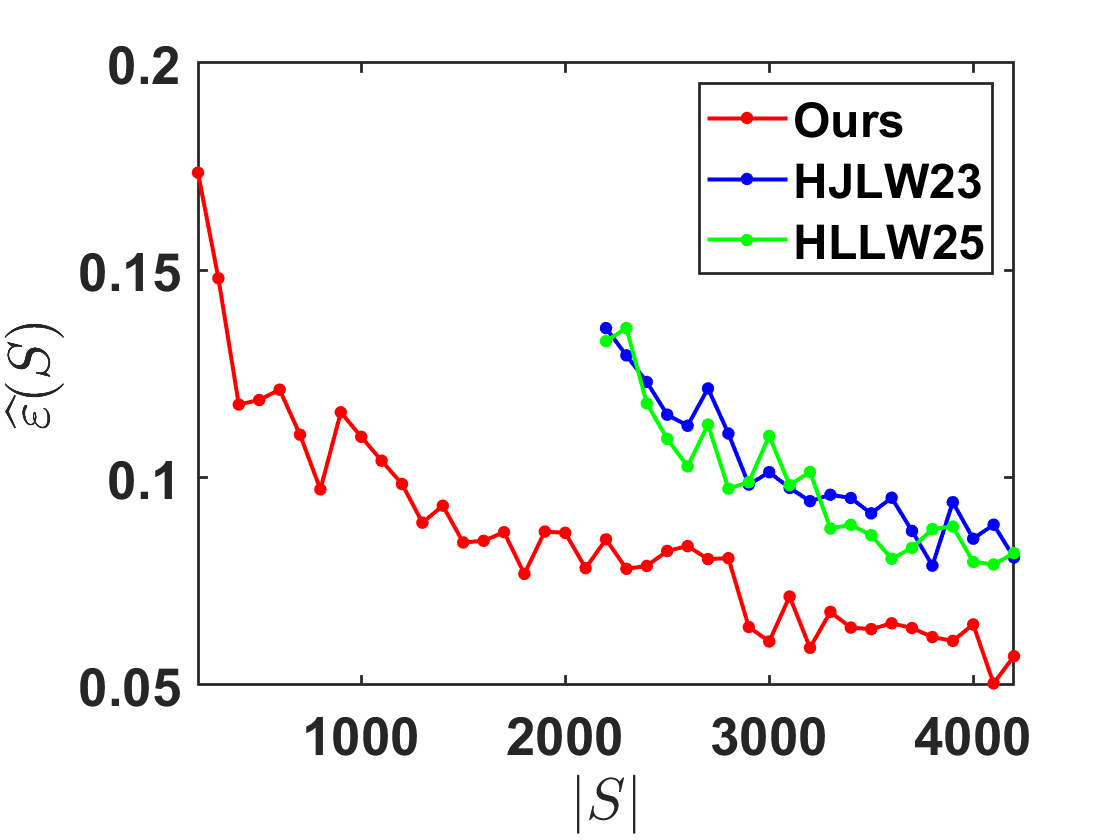}
    		\end{minipage}
    	}
        \hspace{0.25in}
    \subfigure[\textbf{Bank}]{\label{tail_kmeans_bank}
		\begin{minipage}[b]{0.28\textwidth}
			\includegraphics[width=1\textwidth]{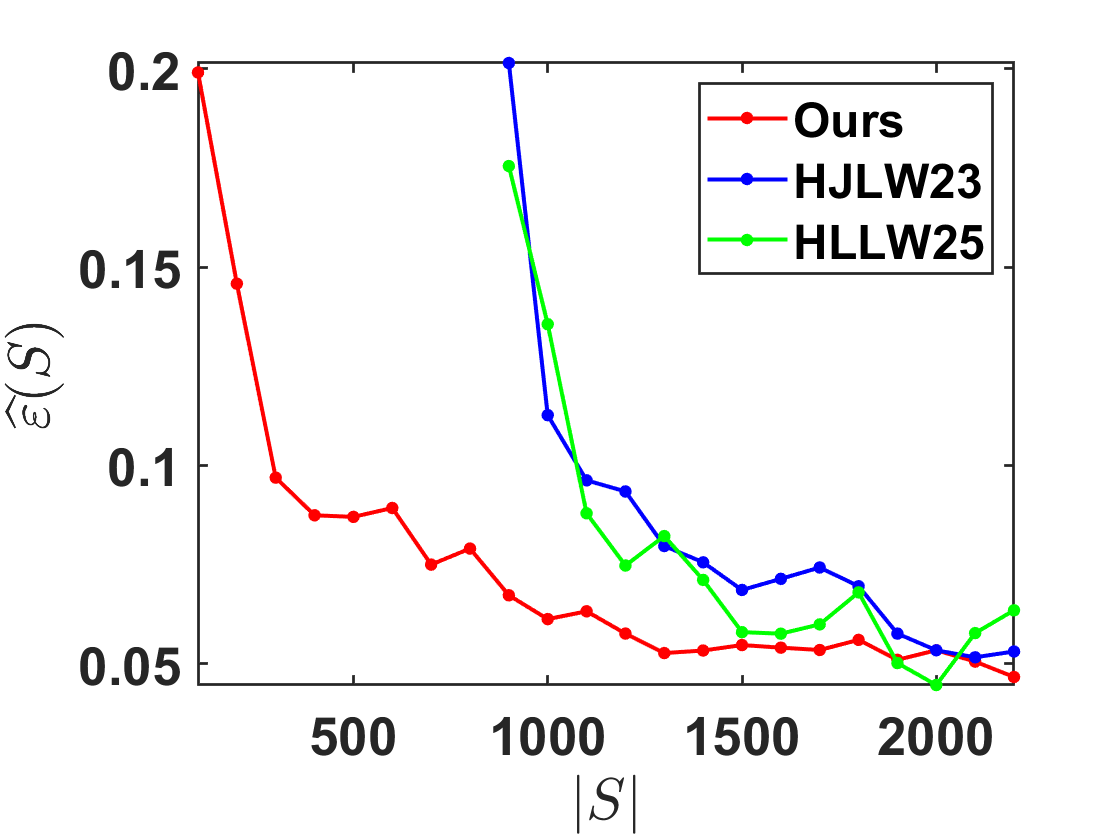} 
		\end{minipage}
	}
    	\subfigure[\textbf{Adult}]{\label{tail_kmeans_adult}
    		\begin{minipage}[b]{0.28\textwidth}
   		 	\includegraphics[width=1\textwidth]{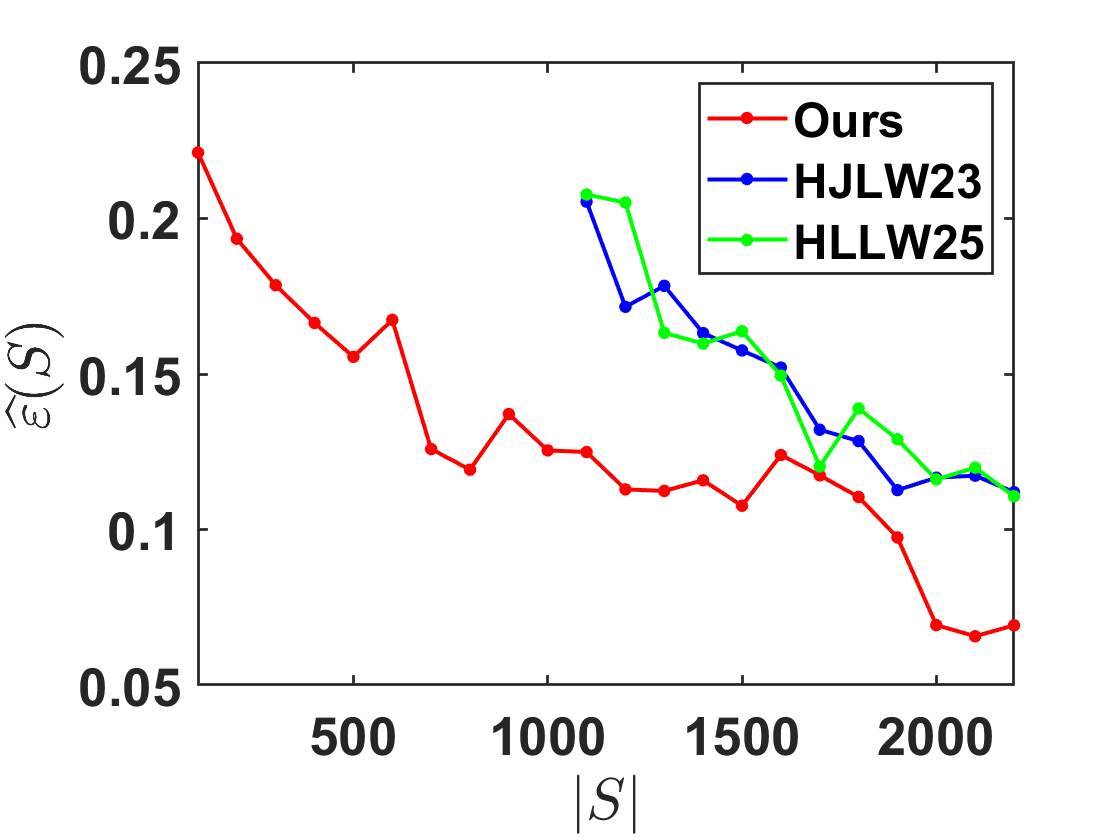}
    		\end{minipage}
    	}
        \hspace{0.25in}
        \subfigure[\textbf{Athlete}]{\label{tail_kmeans_athlete}
		\begin{minipage}[b]{0.28\textwidth}
			\includegraphics[width=1\textwidth]{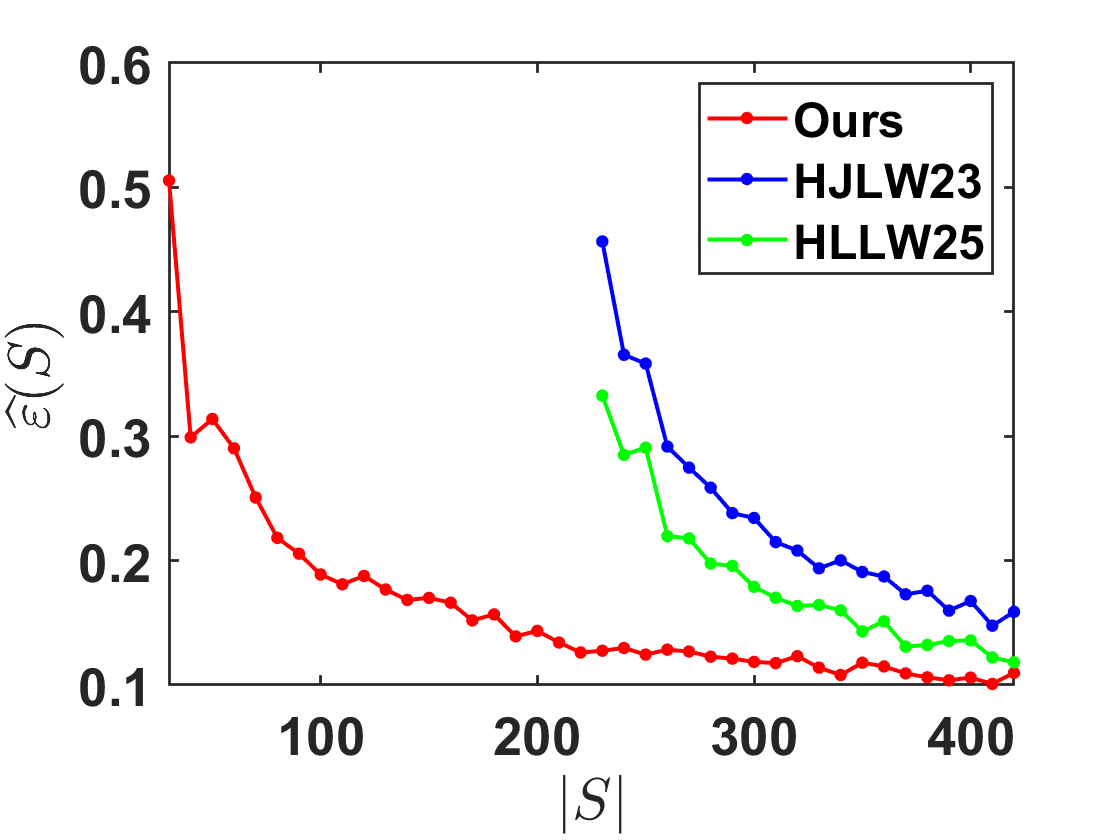} 
		\end{minipage}
	}
    \hspace{0.25in}
    	\subfigure[\textbf{Diabetes}]{\label{tail_kmeans_diabetes}
    		\begin{minipage}[b]{0.28\textwidth}
   		 	\includegraphics[width=1\textwidth]{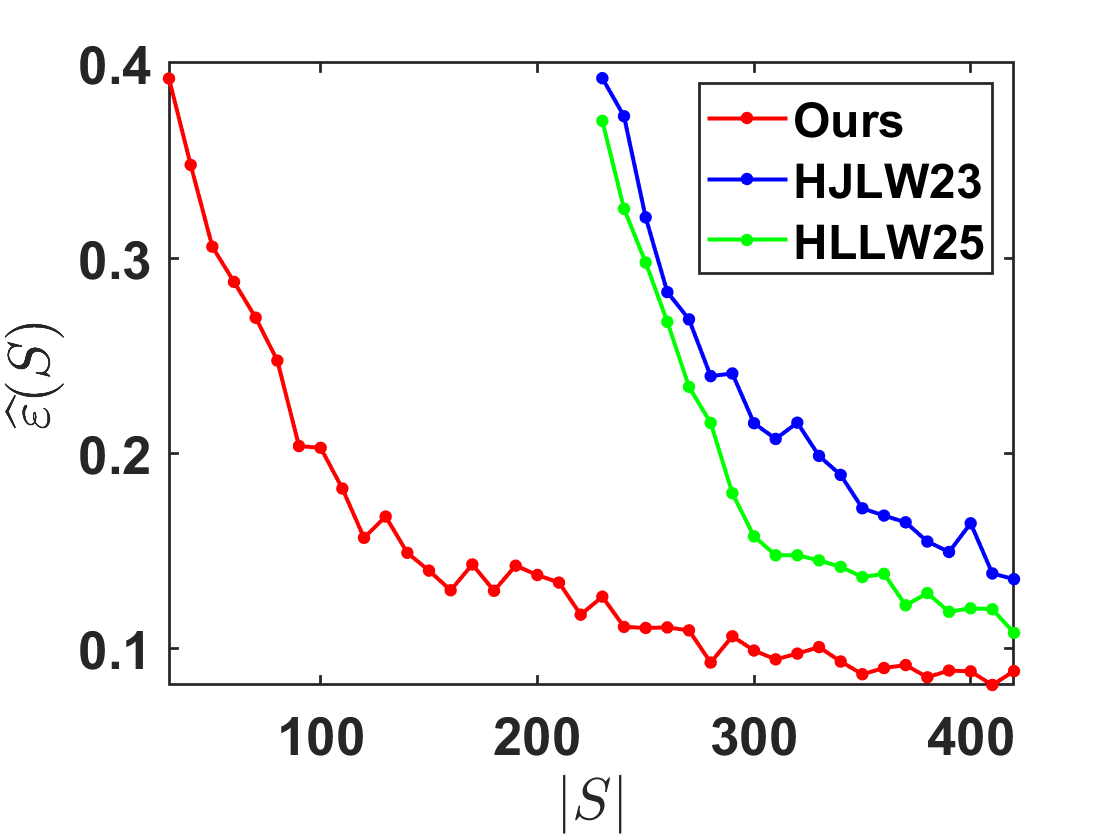}
    		\end{minipage}
    	}
	\caption{Tradoff between coreset size $|S|$ and empirical error $\widehat{\varepsilon}(S)$ for robust $k$-means when we perturb $10\%$ points. 
    }
     \label{fig: tail_kmeans}
\end{figure*}

\section{Conclusion and future work}
\label{sec: conclusion}
We investigate coreset construction for robust geometric median problem, successfully eliminating the size dependency on the number of outliers. 
Specifically, for the 1D Euclidean case, we achieve the first optimal coreset size.  
Furthermore, our results generalize to robust clustering applications.  
Empirically, our algorithms achieve a superior size-error balance and a runtime acceleration.

This work opens several intriguing research directions.  
One immediate problem is to optimize the coreset size for \(d > 1\) or $k > 1$, particularly in cases where the size diverges from that of the vanilla setting. 
Extending robust coresets to the streaming model is a valuable but challenging direction. The primary obstacle is their lack of mergeability, as outlier interactions across different data chunks prevent the compositional updates essential for streaming algorithms.
It is also interesting to explore whether our non-component-wise analysis can be applied to other robust machine learning problems, such as robust regression and robust PCA. 

\begin{table*}[ht]
\vskip -0.1in
\caption{Comparison of runtime between our Algorithm \ref{alg3} and baseline \textbf{HLLW25}. 
For each dataset, the coreset size of baseline \textbf{HLLW25} is $2m$ and the coreset size of ours is $m$. 
We use Lloyd algorithm given by \cite{Bhaskara2019GreedySF} to compute approximate solutions $c_P$ and $c_S$ for both the original dataset $P$ and coreset $S$, respectively.
``COST$_P$'' denotes $\cost^{(m)}(P,c_P)$ on the original dataset $P$. 
``COST$_S$'' denotes $\cost^{(m)}(P,c_S)$ on the coreset constructed by METHOD. 
$T_X$ is the running time on the original dataset. $T_S$ is the running time on coreset.
$T_C$ is the construction time of the coreset. }
\label{tb: dataset2}

\begin{center}
\begin{small}
\begin{sc}
\begin{tabular}{lcccccr}
\toprule
dataset & cost$_P$ & method  & cost$_S$ &$T_X$  & $T_C$ & $T_S$ \\\hline
\multirow{ 2}{*}{Census1990} & \multirow{ 2}{*}{5.099$\times 10^6$} 
& Ours & 5.100$\times 10^6$ & \multirow{ 2}{*}{63.425} & 6.876 & 1.284\\
&&HLLW25 &  5.099$\times 10^6$ & &6.950 & 2.629\\ \hline
\multirow{ 2}{*}{Twitter} & \multirow{ 2}{*}{7.307$\times 10^6$}  
&Ours  & 7.310$\times 10^6$ & \multirow{ 2}{*}{41.816} &3.233 & 0.633\\
&& HLLW25 &7.307$\times 10^6$ & &  3.259 &   1.278\\ \hline
\multirow{ 2}{*}{Bank} & \multirow{ 2}{*}{7.815$\times 10^6$}   
&Ours & 7.760$\times 10^6$ & \multirow{ 2}{*}{16.555} & 1.427 & 0.308\\
&& HLLW25 &7.765$\times 10^6$ &  &  1.477 & 0.677 \\ \hline
\multirow{ 2}{*}{Adult} & \multirow{ 2}{*}{ 3.418$\times 10^{9}$ } 
&Ours & 3.412$\times 10^{9}$ & \multirow{ 2}{*}{16.907} & 1.632 & 0.310 \\
&&HLLW25 &  3.411$\times 10^9$ & & 1.596 & 0.597\\ \hline
\multirow{ 2}{*}{Athlete} & \multirow{ 2}{*}{ 1.460$\times 10^5$  } 
&Ours & 1.467 $\times 10^5$& \multirow{ 2}{*}{ 2.92} & 0.320& 0.055 \\
&&HLLW25 &1.463 $\times 10^5$   & &0.321 & 0.104\\ \hline
\multirow{ 2}{*}{Diabetes} & \multirow{ 2}{*}{1.781$\times 10^5$   } 
&Ours &1.786$\times 10^5$ & \multirow{ 2}{*}{3.977 } & 0.366 & 0.062 \\
&&HLLW25 & 1.788$\times 10^5$  & &0.360  & 0.135 \\
\midrule
\bottomrule
\end{tabular}\\
\end{sc}
\end{small}
\end{center}
\end{table*}

%

%

\bibliography{paper}
\bibliographystyle{plain}
\newpage

\onecolumn
\clearpage

\newpage
\appendix

\section{An illustrative example for the obstacle of Inequality \eqref{ineq: robust error}}
\label{sec: obstacle of traditional error analysis}

Recall that we partition the input dataset $P\subset \R$ into disjoint buckets $\{B_1,\ldots,B_T\}$, constructs a representative point $\mu(B_i)$ with weight $|B_i|$ for each bucket and takes their union as a coreset $S$ of $P$ for \kzCd.
Also, recall that prior work \cite{huang2022near,huang2023coresets} use Inequality \eqref{ineq: robust error} for error analysis, i.e., for any center $c\in \R$ and any tuple of outlier numbers $(m_1, \ldots, m_T)$ with $\sum_{i\in [T]} m_i=m$,
\begin{equation}
\label{ineq:appendix_traditional_error}
\sum_{i\in [T]} |\cost^{(m_i)}(B_i,c)-(|B_i|-m_i)\dist(\mu(B_i),c)| < \eps \cdot \cost^{(m)}(P,c).
\end{equation}
In this section, we provide an example in which this inequality only holds if $|S| = \Omega(m)$.

Construct the dataset $P$ as follows: for $i=1,\ldots,n-m$, $p_i = \frac{i}{n}$; for $n-m+1 \leq i \leq n$, $p_i = n^{3(i-n+m)}$.
{
We show that the Inequality (\ref{ineq:appendix_traditional_error}) only holds if each $p_j$ forms its own isolated bucket.}
Assume, for the sake of contradiction, that there exists a bucket $B_q=\{p_i,\ldots,p_j\}$ such that $|B_q|>1$ and $j>n-m$. 

\paragraph{Case 1:} If $i \leq n-m$, let $c=0$. 
Then we have that the inlier set with respect to $c$ is $P_I^{(c)}=\{p_1,\ldots,p_{n-m}\}$. 
Thus, 
\[
\cost^{(m)}(P,c) = \cost(P_I^{(c)},c) = \sum_{t=1}^{n-m} \frac{t}{n} < n.
\]
Let $m_1 = \ldots = m_{q-1} = 0$, $m_q = j - n + m$ and $m_t = |B_t|$ for $q+1\leq t\leq T$.
We have
\begin{align*}
\sum_{i\in [T]} |\cost^{(m_i)}&(B_i,c)-(|B_i|-m_i)\dist(\mu(B_i),c)| \\
= & \quad |\cost^{(m_q)}(B_q,c)-(|B_q|-m_q)\dist(\mu(B_q),c)| \\
>&\quad (|B_q|-m_q)\cdot \mu(B_q) - n  \qquad\qquad(\cost^{(m_q)}(B_q,c)\leq \cost^{(m)}(P,c)<n)\\ 
 \geq&\quad \mu(B_q) - n \qquad\qquad\qquad\qquad\qquad\qquad\qquad (|B_q| = j-i+1, i\leq n-m)\\ 
 =&\quad \frac{\sum_{p \in B_q} p}{|B_q|} - n \\
 >&\quad n^2-n \\
 >&\quad \eps \cdot \cost^{(m)}(P,c), \qquad\qquad\qquad\qquad\qquad\qquad\qquad\qquad(\cost^{(m)}(P,c)<n)
\end{align*}
which leads to a contradiction with Inequality (\ref{ineq:appendix_traditional_error}).

\paragraph{Case 2:} If $i>n-m$, let $c=p_i$. 
Then $P_I^{(c)}=\{p_{i-n+m+1},\ldots,p_i\}$.
In this case, we have
\begin{equation*}
\cost^{(m)}(P,c) = \sum_{p \in P_I^{(c)}}  d(p,p_i) \leq (n-m)\cdot n^{3(i-n+m)}.
\end{equation*}

Note that $|B_q \cap P_I^{(c)}| = 1$, which means $|B_q| - m_q = 1$, then we have
\begin{align*}
&\quad\sum_{i\in [T]} |\cost^{(m_i)}(B_i,c)-(|B_i|-m_i)\dist(\mu(B_i),c)| \\
\geq &  \quad |\cost^{(m_q)}(B_q,c)-(|B_q|-m_q)\dist(\mu(B_q),c)| \\
=&\quad \dist(\mu(B_q),c) &(\cost^{(m_q)}(B_q,c)=\dist(p_i,c)=0) \\
=&\quad \frac{\sum_{p \in B_q} p}{|B_q|} - p_i \\
>&\quad n^{3(i-n-m)}\cdot(n^2-1)\\
 >&\quad \eps \cdot \cost^{(m)}(P,c), &(\cost^{(m)}(P,c)\leq (n-m)\cdot n^{3(i-n+m)})
\end{align*}
which leads to a contradiction with Inequality (\ref{ineq:appendix_traditional_error}).
Overall, Inequality (\ref{ineq:appendix_traditional_error}) does not hold if $\{p_j\}$ does not form a separated bucket.

{
\paragraph{Non-component-wise analysis breaks the obstacle.} 
We then show how to adapt our new non-component-wise analysis to this example, which allows a more careful bucket decomposition.
The key is that in our analysis, the induced error of the aforementioned bucket $B_q=\{p_i,\ldots, p_j\}$ with $j>n-m$ is 0, due to allowing the misaligned outlier numbers in each bucket. Below illustrate the above two cases.}

{
\paragraph{Case 1:} If $i \leq n-m$, $c=0$, then only the bucket $B_q$ may induce an error. Since $j>n-m$, we have $p_j\ge n^3$, thus
\begin{align*}
    \mu(B_q)=\frac{\sum_{p\in B_q}p}{|B_q|}> \frac{n^3}{n}=n^2>\max_{p\in P_I^{(c)}}\dist(p,c).
\end{align*}
Therefore, $B_q$ will be totally regarded as an outlier which induces $0$ error, thus $\cost^{(m)}(S,c)=\cost^{(m)}(P,c)$.}

{
\paragraph{Case 2:} If $i>n-m$,  $c=p_i$, then $P_I^{(c)}=\{p_{i-n+m+1}, \ldots, p_i\}$. Let the bucket $B_L=\{p_l,\ldots, p_r\}$ with $l\le i-n+m+1$, $r\ge i-n+m+1$ denote the leftmost bucket containing at least one inlier. Therefore, only the buckets $B_L$ and $B_q$ may induce an error.
For $B_q$, we have $\dist(\mu(B_q),c)>(n^2-1)\dist(p_1,c)$, thus $B_q$ induces $0$ error.
{
For $B_L$, the induced error is bounded by $\eps n \cdot n^{3(i-n-m)} \le \eps \cdot \cost^{(m)}(P,c)$, since, in our framework, the size of each bucket in $P-P_M$ is restricted to at most $O(\eps n)$.}
In summary, the total error is bounded by $\eps \cdot \cost^{(m)}(P,c)$.
}

Thus, non-component-wise analysis significantly reduces bucket-wise errors in this example, which is crucial for proving $S$ is a coreset.

\section{Proof of Theorem \ref{theorem: lower bound for kmedian}: coreset lower bound for robust geometric median}
\label{proof of m/n-m lower bound}

We first construct a bad instance $P:=\{p_1,\ldots, p_n\}\subset \R$, where $ p_i=i $ for $i\in [m]$ and $p_j=x$ for $m<j\le n$, $x\rightarrow \infty$.

W.l.o.g, we assume $n-m$ is an even number and $n-m\le (m-1)/2$.
Suppose $(S,w)$ is an $\eps$-coreset of size $|S|<\frac{m}{2(n-m)+1}$ for robust geometric median on $P$. Define $S_I:=S\cap \{p_{m+1},\ldots, p_n\}$ and $S_O:=S\cap \{p_1,\ldots, p_m\}$. There exists $2(n-m)$ consecutive points $p_{i+1},...,p_{i+2(n-m)}$ not in $S$ for some $i\in [m-2(n-m)]$. Let the center $c=i+(n-m)+\frac{1}{2}$. Then we have 
\begin{equation}\label{ineq: nl_2}
\begin{aligned}
\cost^{(m)}(P,c) & = \quad 2((\frac{1}{2})+(\frac{1}{2}+1) +\ldots+(\frac{1}{2}+\frac{n-m}{2}-1) )\\
		&\le \quad  \frac{(n-m)^2}{2}.
\end{aligned}
\end{equation}

We claim that $w(S_O)+w(S_I)-m \ge (1-\eps)\cdot (n-m)$. 
Fix a center $c'=x+1$. If $w(S_O)> m$, we have $\cost^{(m)}(P,c')=n-m$ and $\cost^{(m)}(S,c')\rightarrow \infty$, leading to an unbounded error. Therefore we only need to consider the case $w(S_O)\le m$ and our claim can be verified by as follow: 
\begin{align}\label{ineq: nl_1}
\frac{\cost^{(m)}(S,c')}{\cost^{(m)}(P,c')}=\frac{w(S_O)-m+w(S_I)}{n-m}\rightarrow (1\pm \eps).
\end{align}
By this claim, we have
\begin{align*}
\cost^{(m)}(S,c) &\ge \quad (n-m+0.5)\cdot (w(S_O)+w(S_I)-m)  \\
	&\ge \quad (n-m+0.5)\cdot (1-\eps)\cdot (n-m)  &(\text{Inequality (\ref{ineq: nl_1})})\\
	&> \quad (1+\eps)\cdot (\frac{(n-m)^2}{2})  &(\text{$0<\eps<0.5$})\\
	&\ge \quad(1+\eps)\cdot \cost^{(m)}(P,c) &(\text{Inequality (\ref{ineq: nl_2})}).
\end{align*}

In summary, we have $\cost^{(m)}(S,c)\ge (1+\eps) \cdot \cost^{(m)}(P,c)$, which contradicts the definition of $\eps$-coreset. We conclude that, each $\eps$-coreset of $P$ is of size $\Omega(\frac{m}{n-m})$.

\end{document}